\newtheorem{corollary}{Corollary}
\newtheorem{lemma}{Lemma}
\newtheorem{proposition}{Proposition}
\newtheorem{definition}{Definition}
\newtheorem{claim}{Claim}
\theoremstyle{plain}
\newtheorem{example}{Example}
\definecolor{DarkGreen}{rgb}{0.1,0.5,0.1}
\definecolor{DarkRed}{rgb}{0.5,0.1,0.1}
\definecolor{DarkBlue}{rgb}{0.1,0.1,0.5}
\definecolor{Gray}{rgb}{0.2,0.2,0.2}
\newcommand{\bR}{\mathbb{R}}
\newcommand{\bS}{\mathbb{S}}
\newcommand{\p}{p}
\renewcommand{\u}{u}
\newcommand{\UMatrix}{\mathbf{U}}
\newcommand{\Set}{\mathcal{S}}
\newcommand{\Setm}{\mathcal{S}_{>0}}
\newcommand{\Ball}{\mathcal{B}}
\newcommand{\Ballm}{\mathcal{B}_{>0}}
\newcommand{\D}{\mathcal{D}}
\newcommand{\Dm}{\mathcal{D}_{> 0}}
\newcommand{\Profit}{\mathcal{P}}
\newcommand{\ProfitEq}{\mathcal{P}^{\textrm{eq}}}
\newcommand{\FMax}{F^{\text{max}}}
\DeclareMathOperator*{\argmax}{arg\,max}
\newcommand{\argmin}{{\rm argmin}}
\newcommand{\Indicator}{1}
\newcommand{\Genre}{\mathrm{Genre}}
\definecolor{mydarkblue}{rgb}{0,0.08,0.45}
\title{Supply-Side Equilibria in Recommender Systems}
\newcommand{\printfnsymbol}[1]{%
  \textsuperscript{\@fnsymbol{#1}}%
}
\author[1]{Meena Jagadeesan}
\author[2]{Nikhil Garg}
\author[1]{Jacob Steinhardt}
\affil[1]{University of California, Berkeley}
\affil[2]{Cornell Tech}
\date{}                     
\begin{document}

\maketitle

\begin{abstract}
Algorithmic recommender systems such as Spotify and Netflix affect not only consumer behavior but also \emph{producer incentives}. Producers seek to create content that will be shown by the recommendation algorithm, which can impact both the diversity and quality of their content. In this work, we investigate the resulting supply-side equilibria in personalized content recommender systems. We model users and content as $D$-dimensional vectors, the recommendation algorithm as showing each user the content with highest dot product, and producers as maximizing the number of users who are recommended their content minus the cost of production. Two key features of our model are that the producer decision space is \textit{multi-dimensional} and the user base is \textit{heterogeneous}, which contrasts with classical low-dimensional models. 

Multi-dimensionality and heterogeneity create the potential for \textit{specialization}, where different producers create different types of content at equilibrium. Using a duality argument, we derive necessary and sufficient conditions for whether specialization occurs: these conditions depend on the extent to which users are heterogeneous and to which producers can perform well on all dimensions at once without incurring a high cost. Then, we characterize the distribution of content at equilibrium in concrete settings with two populations of users. Lastly, we show that specialization can enable producers to achieve \textit{positive profit at equilibrium}, which means that specialization can reduce the competitiveness of the marketplace. At a conceptual level, our analysis of supply-side competition takes a step towards elucidating how personalized recommendations shape the marketplace of digital goods, and towards understanding what new phenomena arise in multi-dimensional competitive settings.
\end{abstract}

\section{Introduction}

Algorithmic recommender systems have disrupted the production of digital goods such as movies, music, and news. In the music industry, artists have changed the length and structure of  songs in response to Spotify's algorithm and payment structure \citep{hodgson_2021}. In the movie industry, personalization has led to low-budget films catering to specific audiences \citep{expmagarticle}, in some cases constructing data-driven ``taste communities'' \citep{vulturearticle}. Across industries, recommender systems shape how producers decide what content to create, influencing the supply side of the digital goods market. This raises the questions: 
\textit{What factors drive and influence the supply-side marketplace?} \textit{What content will be produced at equilibrium?} 

Intuitively, supply-side effects are induced by the multi-sided interaction between producers, the recommendation algorithm, and users. Users tend to follow recommendations when deciding what content to consume \citep{ursu2018}---thus, recommendations influence how many users consume each digital good and impact the profit (or utility) generated by each content producer. As a result, content producers shape their content to maximize appearance in recommendations; this creates competition between the producers, which can be modeled as a game. However, understanding such producer-side effects has been difficult, both empirically and theoretically. This is a pressing problem, as these gaps in understanding have hindered the regulation of digital marketplaces \citep{stiger19}. 

At a high level, there are two primary challenges that complicate theoretical analyses of these supply-side effects. (1) Digital goods such as movies have many attributes and thus must be embedded in a \textit{multi-dimensional} continuous space, leading to a  large producer action space. This multi-dimensionality is a departure from traditional economic models of price and spatial competition. (2) A core aspect of such marketplaces is the potential for specialization: that is, different producers may produce different items at equilibrium. Incentives to specialize depend on the level of \textit{heterogeneity of user preferences} and the cost structure for producing goods (whether it is more expensive to produce items that are good in multiple dimensions). As a result, supply-side equilibria have the potential to exhibit rich economic phenomena, but pose a challenge to both modeling and analysis.

We introduce a simple game-theoretic model for supply-side competition in personalized recommender systems. Our model captures the multi-dimensional space of producer decisions, rich structures of production costs, and general configurations of users. Users and digital goods are represented as $D$-dimensional vectors in $\mathbb{R}_{\ge 0}^D$, and the inferred user value of a digital good $\p$ for a user with vector $u$  is equal to the inner product $\langle u, \p \rangle$. The platform has $N \ge 1$ users and $P \ge 2$ producers: each user 
$i \in [N]$ is associated with a fixed vector $u_i \in \mathbb{R}_{\ge 0}^D$, and each producer $j \in [P]$ \textit{chooses} a single digital good $\p_j \in \mathbb{R}_{\ge 0}^D$ to create. The recommendation algorithm is personalized and shows each user the good with maximum inferred user value for them, so user $i$ is recommended the digital good created by producer $j^*(i) = \argmax_{1 \le j \le P} \langle u_i, \p_j \rangle$. The goal of a producer to maximize their profit, which is equal to the number of users who are recommended their content minus the (one-time) cost of producing the content. We consider producer cost functions of the form $c(p) := \|p\|^{\beta}$, where $\|\cdot\|$ is an arbitrary norm and the exponent $\beta$ is at least $1$. Our model can be viewed as high-dimensional variant of a competitive facility location game \citep{location2005} as we describe Section \ref{sec:relatedwork}.

\begin{figure}
    \centering
    \includegraphics[scale=0.21]{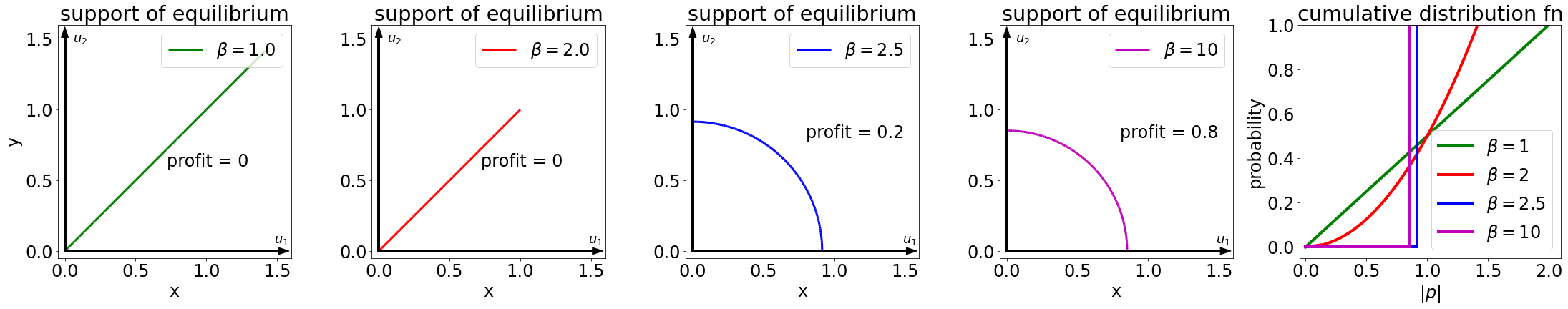}
    \caption{A symmetric equilibrium for different settings of $\beta$, for 2 users located at the standard basis vectors $e_1$ and $e_2$, $P = 2$ producers, and producer cost function $c(\p) = \|p\|_2^{\beta}$. The first 4 plots show the support of the equilibrium $\mu$. As $\beta$ increases, there is a {phase transition} from a single-genre equilibrium to an equilibrium with infinitely many genres (Theorem \ref{thm:phasetransitionformal}). This illustrates how the cost function influences whether or not specialization occurs. The profit  also transitions from zero to positive, demonstrating how specialization reduces the competitiveness of the marketplace  (Propositions \ref{thm:utility}-\ref{prop:zeroutilitysinglegenre}). The last plot shows the cumulative distribution function of $\|p\|$ where $p \sim \mu$, which is a step function for the multi-genre equilibria: all equilibria thus exhibit either pure vertical differentiation or pure horizontal differentiation.} 
    \label{fig:equilibria}
\end{figure}

In this model, producers face a complex choice of what content $\p$ to create. To understand this choice better, let's first focus on a single user $u$. A producer can increase their chance of winning $u$ with two levers: (1) improving the content's \textit{quality} (vector norm $\|\p\|$) or (2) aligning the content's \textit{genre} (direction $p / \|p\|$) with the user vector $u$. As to how these levers impact the chance of winning other users, improving quality simultaneously improves the producer's chance of winning every user; however, aligning the genre with one user can worsen the alignment of the genre with other users. This creates tradeoffs between alignment with different users, which producers must balance when selecting the genre of their content: producers may choose a niche genre that perfectly caters to a specific user or subgroups of users, or choose a generic genre that somewhat caters to all of the users. 

To ground our investigation of these complex producer choices, we focus on one particular economic phenomena---\textit{the potential for specialization}---in this work. Specialization, which occurs when different producers create different genres of content at equilibrium, has several economic consequences. For example, whether specialization occurs, as well as the form that specialization takes, determines the diversity of content available on the platform. Moreover, specialization influences the competitiveness of the marketplace by reducing the amount of competition in each genre. This raises the questions:
\begin{quote}
  \textit{Under what conditions does specialization occur at  equilibrium? 
  What form does specialization take? What is its impact on market competitiveness?} 
\end{quote}

Before mathematically studying these questions, we need to specify the equilibrium concept and formalize specialization. We focus on symmetric mixed Nash equilibria, which we show are guaranteed to exist in Section \ref{sec:equilibrium-def}. These symmetric equilibria can be represented as a distribution $\mu$ over $\mathbb{R}_{\ge 0}^D$ and are thus more tractable than general asymmetric equilibria. Although we focus on symmetric equilibria, we can nonetheless capture specialization---which is an asymmetric concept---in terms of the support of the equilibrium distribution $\mu$. We say that \textit{specialization} occurs at an equilibrium $\mu$ if and only if the support of $\mu$ has more than one \textit{genre} (direction).\footnote{See Section \ref{sec:specializationdef} for a mathematical definition of specialization.} The particular form of specialization exhibited by $\mu$ is further captured by the number and set of genres in the support of $\mu$. See Figure \ref{fig:equilibria} for a depiction of markets with a single-genre equilibrium and markets with a multi-genre equilibrium.

With this formalization, we investigate specialization and its consequences on the supply-side market. We analyze how the specialization exhibited at equilibrium varies with user vector geometry ($u_1, \ldots, u_N$) and producer cost function parameters ($\|\cdot \|$ and $\beta$). Our main results provide insight into each of the questions from above: we characterize when specialization occurs, analyze the form that specialization takes, and investigate the impact of specialization on market competitiveness. 

\paragraph{Characterization of when specialization occurs.} We first provide a tight geometric characterization of when a marketplace has a single-genre equilibria versus has all multi-genre equilibria (Theorem \ref{thm:singlegenre}). Interestingly, the occurrence of specialization depends on the geometry of the users as well as the cost function parameters, but does \emph{not} depend on the number of producers $P$. For example, in the concrete instance depicted in Figure \ref{fig:equilibria}, single-genre equilibria exist exactly when $\beta \le 2$. Conceptually, larger $\beta$ make producer costs more superlinear, which eventually discentivizes producers from attempting to perform well on all dimensions at once. 

In Section \ref{sec:specialization}, we show several corollaries of Theorem \ref{thm:singlegenre} that elucidate the role of $\beta$ in concrete instances and characterize the direction of single-genre equilibria. We also provide an empirical analysis using the MovieLens-100K dataset \citep{movielens} that offers additional qualitative intuition for our theoretical results (Figure \ref{fig:empirical-analysis}). The empirical analysis also explicitly connects our model to recommender systems performing nonnegative matrix factorization: the embedding dimension $D$ corresponds the number of factors used in matrix factorization, and the user vectors and content vectors correspond to the embeddings learned by matrix factorization. 

\paragraph{Form of specialization.} 
For further economic insight, we focus on the concrete setting of two equally sized populations of users with cost function $c(\p) = \|\p\|_2^{\beta}$. We first show that all equilibria must have either one or infinitely many genres (Theorem \ref{thm:phasetransitionformal}).
Producers thus do not simply randomize between genres aligned with the two user vectors; instead, producers randomize across infinitely many genres of content that balance the preferences of the two populations in different ways. In several examples, the equilibrium spans all possible genres (e.g. see 
Figure \ref{fig:equilibria} and Figure \ref{fig:equilibriaP}).  

We also recover equilibria in an infinite-producer limit for any 2 user vectors (Theorem \ref{thm:infinitegenre}; see Figure \ref{fig:infinite}). Interestingly, these equilibria have two genres: thus, even though two-genre equilibria do not exist for finite $P$ by Theorem \ref{thm:phasetransitionformal}, they turn out to re-emerge in the limit. The resulting equilibrium distribution also has complex structure, e.g., the support consists of countably infinite disjoint line segments.

\paragraph{Impact of specialization on market competitiveness.} Finally, we study how specialization affects the equilibrium profit level of producers, which provides insight into market competitiveness. We show that producers can achieve positive profit at a multi-genre equilibrium (\Cref{thm:utility}). The marketplace can therefore exhibit monopolistic behavior; the intuition is that specialization reduces competition along each genre of content. 
We confirm this intuition by showing that without specialization (i.e. at single-genre equilibria), the producer profit is always zero 
(Proposition \ref{prop:zeroutilitysinglegenre}). 
This analysis of equilibrium profit establishes a distinction between single- and multi-genre equilibria, which parallels classical distinctions between markets with homogeneous goods and markets with differentiated goods.\footnote{See \cite{ProductDifferentiation} for a textbook treatment.} Our results thus formalize how the supply-side market of a recommender system can resemble a market with homogeneous goods or with differentiated goods, 
depending on whether or not specialization occurs.

\begin{figure}
    \centering
    \includegraphics[scale=0.21]{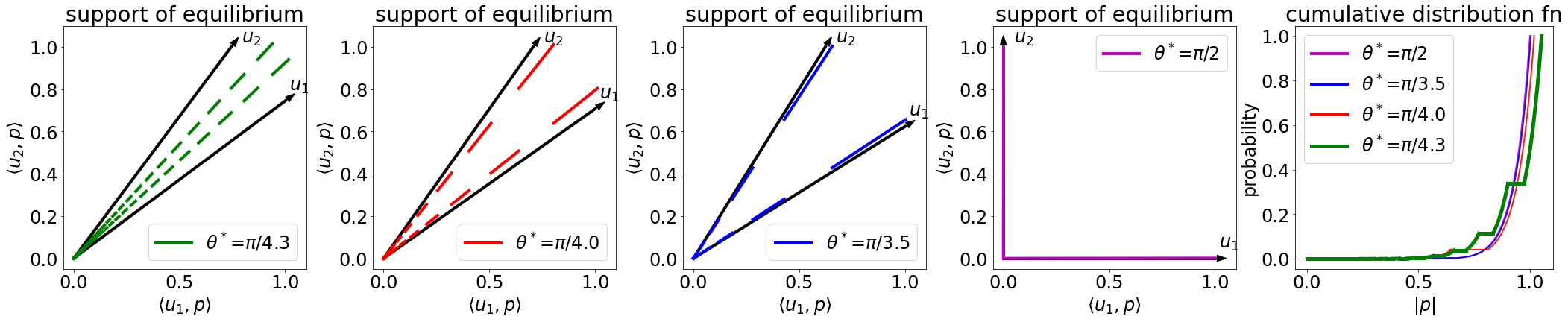}
    \caption{A symmetric equilibrium for different settings of $\theta^*$, for 2 users located at $u_1$ and $u_2$ such that $\theta^* = \cos^{-1}\left(\frac{\langle u_1, u_2\rangle}{\|u_1\| \|u_2\|} \right)$, for producer cost function $c(\p) = \|p\|_2^{\beta}$ with $\beta = 7$, and for $P = \infty$ producers (see Theorem \ref{thm:infinitegenre}). The first 4 plots show the support of the equilibrium in a reparameterized space: note that the the x-axis is $\langle u_1, \p \rangle$ and the y-axis is $\langle u_2, \p \rangle$, i.e., the inferred user values for good $\p$. These equilibria have 2 genres: thus, although two-genre equilibria do not exist for any finite $P$ (Theorem \ref{thm:phasetransitionformal}), they do exist in the infinite-producer limit. The last plot shows the cumulative distribution function of the conditional quality distribution (i.e. the distribution of the maximum quality along a genre). The support consists of countably infinite disjoint intervals, with the property that at most one of the genres achieves a given utility for a given user.} 
    \label{fig:infinite}
\end{figure}

\paragraph{Technical tools.} En route to our results, we develop technical tools to analyze the complex, multi-dimensional behavior of producers. We highlight two tools here which may be of broader interest. 
\begin{enumerate}
\item To analyze when specialization occurs, we draw a connection to minimax theory in optimization. In particular, we show that the existence of a single-genre equilibrium is equivalent to strong duality holding for a certain optimization program that we define (Lemma \ref{lemma:optimizationprogram}). This allows us to leverage techniques from optimization theory to provide a necessary and sufficient condition for genre formation  (Theorem \ref{thm:singlegenre}).
\item To analyze the properties of equilibria in concrete instances, we provide a decoupling lemma in terms of the equilibrium's support and its one-dimensional marginals (Lemma \ref{claim:necessarysuff}). This produces one-dimensional functional equations that make solving for the underlying equilibrium more tractable. We apply this decoupling lemma to analyze the form of specialization in the concrete setting of two equally sized populations of users with cost function $c(\p) = \|\p\|_2^{\beta}$. 
\end{enumerate}
Other technical ideas underlying our results include formalizing the formation of genres---which intuitively captures heterogeneity across producers---in terms of the support of a symmetric equilibrium distribution and applying the technology of discontinuous games \cite{reny99} to establish the existence of symmetric mixed equilibria.

\paragraph{Summary of our results.} Our simple model yields a nuanced picture of supply-side equilibria in recommender systems. Our results provide insight into specialization and its implications, and en route to proving these results, we develop a technical toolkit to analyzing the multi-dimensional behavior of producers. More broadly, our model and results open the door to investigating how recommender systems shape the diversity and quality of content created by producers, and we outline several directions for future work in Section \ref{sec:openquestions}.

\subsection{Related Work}\label{sec:relatedwork}

Our work is related to research on \textit{societal effects in recommender systems}, \textit{models of competition in economics and operations research}, and \textit{strategic effects induced by algorithmic decisions}.

\paragraph{Supply-side effects of recommender systems.} A line of work in the machine learning literature has studied supply-side effects from a theoretical perspective, but existing models do not capture the infinite, multi-dimensional decision space of producers. \citet{BT18} study supply-side effects with a focus on mitigating strategic effects by content producers; \citet{BRT20}, building on \citet{BTK17}, also studied supply-side equilibria with a focus on convergence of learning dynamics for producers. The main difference from our work is that producers in these models choose a topic from a \textit{finite} set of options; in contrast, our model captures the {infinite, multi-dimensional} producer decision space that drives the emergence of genres. Moreover, we focus on the structure of equilibria rather than the convergence of learning. 

In concurrent and independent work, \citet{concurrent} study a related model for supply-side competition in recommender systems where producers choose content embeddings in $\bR^D$. One main difference is that, rather than having a cost on producer content, they constrain producer vectors to the $\ell_2$ unit ball (this corresponds to our model when $\beta \rightarrow \infty$ and the norm is the $\ell_2$-norm, although the limit behaves differently than finite $\beta$). Additionally, \citeauthor{concurrent} incorporate a softmax decision rule to capture exploration and user non-determinism, whereas we focus entirely on hardmax recommendations. Thus, our model focuses on the role of producer costs while \citeauthor{concurrent}'s focuses on the role of the recommender environment. At a technical level, \citeauthor{concurrent} study the existence of different types of equilibria and the use of behaviour models for auditing, whereas we analyze the economic phenomena exhibited by symmetric mixed strategy Nash equilibria, with a focus on specialization.

Other work has studied the emergence of filter bubbles \citep{FGR16}, the ability of users to reach different content \citep{DRR20}, the shaping of user preferences \citep{ABCZ13}, and stereotyping \citep{GKJG21}.

\paragraph{Models of competition in microeconomics and operations research.} Our model and research questions relate to classical models of competition in economic theory; however, particular aspects of recommender systems---high-dimensionality of digital goods, rich structure of producer costs, and user geometry---are not captured by these classical models. For example, in \textit{price competition}, producers set a \textit{price}, but do not decide what good to produce (e.g.~Bertrand competition, see \citep{bertrand} for a textbook treatment). Price is a one-dimensional quantity, but producer decisions in our model are multi-dimensional.

Another line of work on \textit{product selection} has investigated how producers choose goods (i.e., content) at equilibrium (see \cite{ProductDifferentiation} for a textbook treatment). For example, in \textit{competitive facility (spatial) location models} (see \citet{location2005} for a survey), producers choose a direction in a low-dimensional space (e.g., $\bR^1$ in \citep{H29, DGT79} and $\bS^1$ in \citep{S79}), and users typically receive utility based on the negative of the Euclidean distance. In contrast, producers in our model jointly select the direction and magnitude of their content, and users receive utility based on inner product. Since some variants of spatial location models additionally allow producers to set prices, it may be tempting to draw an analogy between the quality $\|p\|$ in our model and the price in these models. However, this analogy breaks down because production costs in our model can be highly nonlinear in the quality $\|\p\|$ (i.e. when the cost function exponent $\beta$ is greater than $1$). In fact, this nonlinear structure creates tradeoffs between excelling in different dimension; these tradeoffs underpin our specialization results (Theorem~\ref{thm:singlegenre}).

Other related work has investigated \textit{supply function equilibria} (e.g. \citep{G81}), where the producer chooses a function from quantity to prices, rather than what content to produce, and the \textit{pure characteristics model} (e.g. \citep{B94}), where attributes of users and producers are also embedded in $\bR^D$ like in our model, but which focuses on demand estimation for a fixed set of content, rather than analyzing the content that arises at equilibrium in the marketplace. Recent work in economics has \citet{B94} to allow for endogenous product choice (e.g. \citep{W18})  and also studied specialization (e.g. \citep{V08, PY22}), though with different modeling choices than our work. 

\paragraph{Strategic classification.} A line of work of \textit{strategic classification} \citep{bruckner12pred, hardt16strat} has studied how algorithmic decisions induce participants to strategically change their features to improve their outcomes, but with different assumptions on participant behavior. In particular, the models for participant behavior in this line of work (e.g. \citet{ kleinberg19improvement, jag21alt, ghalme2021strategic}) generally do not capture competition between participants. One exception is \citet{LGB21}, where participants compete to appear higher in a single ranked list; in contrast, the participants in our model simultaneously compete for users with \textit{heterogeneous} preferences.

\section{Model and Preliminaries}\label{sec:model}

We introduce a game-theoretic model for supply-side competition in recommender systems. Consider a platform with $N \ge 1$ heterogeneous users who are offered personalized recommendations and $P \ge 2$ producers who strategically decide what digital good to create.

\paragraph{Embeddings of users and digital goods.}
Each user $i$ is associated with a $D$-dimensional embedding $\u_i$ that captures their preferences. We assume that $u_i \in \bR_{\geq 0}^D \setminus \left\{\vec{0} \right\}$---i.e., the coordinates of each embedding are nonnegative and each embedding is nonzero. While user vectors are fixed, producers \textit{choose} what content to create. Each producer $j$ creates a single digital good, which is associated with a content vector $\p_j \in \bR_{\geq 0}^D$. The inferred user value of good $\p$ for user $u$ is $\langle u, p \rangle$.

\paragraph{Personalized recommendations.}
After the producers decide what content to create, the platform offers personalized recommendations to each user. We consider a stylized model where the platform has complete knowledge of the user and content vectors. The platform recommends to each user the content with the maximal inferred user value for them, assigning them to the producer who created this content. Mathematically, the platform assigns a user $\u$ to the producer $j^*$, where $j^*(\u;\;\p_{1:P}) = \argmax_{1 \le j \le P} \langle \u, \p_j\rangle$. If there are ties, the platform sets $j^*(\u;\;\p_{1:P})$ to be a producer chosen uniformly at random from the argmax. 

\paragraph{Producer cost function.} Each producer faces a \textit{fixed (one-time) cost} for producing content $\p$, which depends on the magnitude of $\p$. Since the good is digital and thus cheap to replicate, the production cost does not scale with the number of users. 
We assume that the cost function $c(\p)$ takes the form $\|\p\|^{\beta}$, where $\|\cdot\|$ is any norm and the exponent $\beta$ is at least $1$. The magnitude $\|\p\|$ captures the \textit{quality} of the content: in particular, if a producer chooses content $\lambda \p$, they win at least as many users as if they choose $\lambda' \p$ for $\lambda' < \lambda$. (This relies on the fact that all vectors are in the positive orthant.) The norm and $\beta$ together encode the cost of producing a content vector $v$, and reflect cost tradeoffs for excelling in different dimensions (for example, producing a movie that is both a drama and a comedy). Large $\beta$, for instance, means that this cost grows superlinearly. In Section \ref{sec:specialization}, we will see that these tradeoffs capture the extent to which producers are incentivized to specialize.

\paragraph{Producer profit.} A producer receives profit equal to the number of users who are recommended their content minus the cost of producing the content. The profit of producer $j$ is equal to:
\begin{equation}
\label{eq:utility}
 \Profit(\p_j; \p_{-j}) =  \mathbb{E}\Big[ \Big(\sum_{i=1}^n \Indicator[j^*(u_i;\; \p_{1:P}) = j]\Big) - \|\p_j\|^{\beta}  \Big],   
\end{equation}
where 
$\p_{-j} = [\p_1, \ldots, \p_{j-1}, \p_{j+1}, \ldots, \p_P]$ denotes the content produced by all of the other producers and
where the expectation comes from the randomness over platform recommendations in the case of ties.

\subsection{Equilibrium concept and existence of equilibrium}
\label{sec:equilibrium-def}

We study the Nash equilibria of the game between producers. In particular, each producer $j$ chooses a (random) strategy over content, given by a probability measure $\mu_j$ over the content embedding space $\bR_{\geq 0}^D$. The strategies $(\mu_1, \ldots, \mu_P)$ form a \textit{Nash equilibrium} if no producer---given the strategies of other producers---can chose a different strategy where they achieve higher expected profit: that is, for every $j \in [P]$ and every $\p_j \in \text{supp}(\mu_j)$, it holds that $\p_j \in \argmax_{\p \in \mathbb{R}_{\ge 0}^D} \mathbb{E}_{\p_{-j} \sim \mu_{-j}}[\Profit(\p; \p_{-j})]$. A salient feature of our model is that there are  discontinuities in the producer utility function in equation \eqref{eq:utility}, since the function $\argmax_{1 \le i \le P} \langle u_i, \p_j \rangle$ changes discontinuously with the producer vectors $\p_j$. Due to these discontinuities, pure strategy equilibria do not exist.\footnote{A Nash equilibrium $(\mu_{1}, \mu_2, \ldots, \mu_P)$ is a \textit{pure strategy equilibrium} if each $\mu_j$ contains only one vector in its support; otherwise, it is a \textit{mixed strategy equilibrium}.} 
\begin{restatable}{proposition}{pure}
\label{prop:pure}
For any set of users and any $\beta \ge 1$, a pure strategy equilibrium does not exist.
\end{restatable}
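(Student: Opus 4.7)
The plan is to assume for contradiction that a pure strategy profile $(p_1, \ldots, p_P)$ is a Nash equilibrium, and exhibit a profitable deviation for some producer. I would proceed in two steps: first show that no user can be tied among the producers achieving the maximum inner product, then use a quality-reducing deviation to derive the contradiction.

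\textbf{Step 1: no ties at equilibrium.} Fix a user $i$, set $m := \max_{j'} \langle u_i, p_{j'}\rangle$, and let $J := \{j' : \langle u_i, p_{j'}\rangle = m\}$. Suppose for contradiction that $|J| \ge 2$. If $m > 0$, pick any $j \in J$ (so $p_j \ne 0$) and consider the deviation $p_j' = (1+\epsilon)\, p_j$ for small $\epsilon > 0$. Then $\langle u_i, p_j'\rangle = (1+\epsilon) m > m$ strictly exceeds every other producer's value on user $i$, while $\langle u_{i'}, p_j'\rangle \ge \langle u_{i'}, p_j\rangle$ for every other user $i'$, so producer $j$'s expected number of won users rises by at least $1 - 1/|J| \ge 1/2$. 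The cost increases by $((1+\epsilon)^\beta - 1)\|p_j\|^\beta = O(\epsilon)$, which is dominated for small enough $\epsilon$. If instead $m = 0$, then $\langle u_i, p_{j'}\rangle = 0$ for every $j'$, so $|J| = P$. Fix any $j$ and consider $p_j' = p_j + \epsilon u_i$, which stays in $\bR_{\ge 0}^D$ because $u_i \ge 0$. Then $\langle u_i, p_j'\rangle = \epsilon \|u_i\|^2 > 0$ so $j$ strictly wins user $i$; nonnegativity of $u_i$ gives $\langle u_{i'}, p_j'\rangle \ge \langle u_{i'}, p_j\rangle$ for every other $i'$ so no prior winnings are lost; and continuity of the norm gives $\|p_j'\|^\beta \to \|p_j\|^\beta$ as $\epsilon \to 0^+$, whereas the gain in winnings is at least $1 - 1/P \ge 1/2$. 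In either case $p_j$ is not a best response, a contradiction.

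\textbf{Step 2: contradiction from the no-tie case.} Since $N \ge 1$ and $u_1 \ne \vec 0$, not all $p_j$ can be zero (otherwise every user ties at value $0$, contradicting Step 1), so some $p_j \ne 0$. Let $S_j$ be the set of users that producer $j$ strictly wins. Since the strict inequalities $\langle u_i, p_j\rangle > \langle u_i, p_{j'}\rangle$ for $i \in S_j$, $j' \ne j$, and $\langle u_i, p_j\rangle < \langle u_i, p_{j^*(i)}\rangle$ for $i \notin S_j$ are all preserved under sufficiently small perturbations of $p_j$, there exists $\delta_0 > 0$ such that for every $\delta \in (0, \delta_0)$ the deviation $(1-\delta) p_j$ wins exactly $S_j$. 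The cost strictly decreases from $\|p_j\|^\beta$ to $(1-\delta)^\beta \|p_j\|^\beta$, so $j$'s profit strictly increases, contradicting the equilibrium assumption.

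The main obstacle I anticipate is the tie-at-zero sub-case in Step 1: when $m = 0$ the multiplicative scaling deviation has no effect, so one must resort to the additive perturbation $p_j + \epsilon u_i$, whose cost change cannot be expressed as $O(\epsilon)$ in closed form for a general norm. Handling this cleanly requires invoking only continuity of $\|\cdot\|^\beta$ and leaning on the nonnegative-orthant assumption to ensure the perturbed vector is feasible and never loses a previously-won user. Everything else in the argument reduces to standard continuity of inner products.
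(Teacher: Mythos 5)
Your proposal is correct and follows essentially the same two-case structure as the paper's proof: rule out ties via a small perturbation that exploits the discontinuity of the win probability (the paper uses the additive deviation $\p_j + \epsilon u_i$ for all ties, where you additionally use a multiplicative blow-up when the tied value is positive), and then in the tie-free case shrink a nonzero $\p_j$ to $(1-\delta)\p_j$ to strictly reduce cost while preserving the won set. The only cosmetic difference is that you select any producer with $\p_j \neq \vec{0}$ rather than one who wins a user, which works equally well.
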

\noindent The intuition is that if two producers are tied, then a producer can increase their utility by infinitesimally increasing the magnitude of their content.

Since pure strategy equilibria do not exist, we must turn to mixed strategy equilibria. Using the technology of equilibria in discontinuous games \citep{reny99}, we show that a mixed strategy equilibrium exists. In fact, because of the symmetries in the producer utility functions, we can actually show that a
\textit{symmetric} mixed strategy equilibrium  (i.e. an equilibrium where $\mu_1 = \ldots = \mu_P$) exists.  
\begin{restatable}{proposition}{existence}
\label{prop:existence}
For any set of users and any $\beta \ge 1$, a symmetric mixed equilibrium exists. 
\end{restatable}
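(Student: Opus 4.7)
The strategy is to invoke the existence theory for discontinuous games due to \citet{reny99} in its symmetric form, after reducing to a compact action set. Since any content $p$ with $\|p\|^\beta > N$ yields strictly negative profit while the choice $p = \vec 0$ guarantees profit at least $0$, no best response ever places mass on $\{p : \|p\|^\beta > N\}$, so we may restrict attention to the compact, metrizable set $K := \{p \in \bR_{\geq 0}^D : \|p\|^\beta \leq N\}$. On $K$ the payoff $\Profit$ is bounded, measurable, and symmetric under permutations of $p_{-j}$. Passing to the mixed extension $\Delta(K)$ equipped with the weak-$*$ topology, a symmetric mixed equilibrium is exactly a fixed point of the symmetric best-response correspondence
\[
\Phi(\mu) \;=\; \argmax_{\nu \in \Delta(K)} \E_{p \sim \nu,\; p_{-j} \sim \mu^{\otimes (P-1)}} \bigl[\Profit(p; p_{-j})\bigr].
\]

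Existence of such a fixed point will follow from a symmetric version of Reny's theorem, whose hypotheses are payoff security of the mixed extension and reciprocal upper semicontinuity of the aggregate payoff. Reciprocal upper semicontinuity is immediate: after integrating out the uniform tie-breaking, $\sum_{j=1}^P \Profit(p_j; p_{-j}) = N - \sum_{j=1}^P \|p_j\|^\beta$, which is continuous in $(p_1, \ldots, p_P)$. For payoff security, the only discontinuities of $\Profit(p; p_{-j})$ occur at \emph{tied} configurations where $\langle u_i, p \rangle = \langle u_i, p_k \rangle$ for some user $i$ and some opponent $k$. At any such point, the deviating producer can secure at least the limit-superior payoff (up to arbitrarily small cost) by replacing $p$ with $(1 + \delta) p$ for small $\delta > 0$: this strictly increases $\langle u_i, p \rangle$ at every user where that inner product is positive, breaking every non-trivial tie in the producer's favor while only raising the cost by $O(\delta)$. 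This perturbation is robust to small weak-$*$ perturbations of the opponents' distribution, since the set of opponent realizations that tie with $(1+\delta) p$ on some user lies at positive weak-$*$ distance from those tying with $p$.

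The main obstacle is formalizing payoff security uniformly across the producer's mixed strategy $\nu$ and all weak-$*$ neighborhoods of the target opponent profile. The plan is to define the secured strategy $\nu_\delta$ as the pushforward of $\nu$ under the scaling map $p \mapsto (1+\delta) p$, then bound the expected payoff loss by controlling (i) the added cost, which is uniformly $O(\delta)$ because $\nu$ is supported on the compact set $K$, and (ii) the measure of opponent profiles inducing new ties, which can be made arbitrarily small by choosing $\delta$ so that the rescaled support of $\nu$ avoids a weak-$*$ neighborhood of the finitely many tie-inducing hyperplanes $\{p' : \langle u_i, p' \rangle = (1+\delta)\langle u_i, p \rangle\}$. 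Once payoff security is established, the symmetric form of Reny's theorem applies to $\Phi$ and produces the desired symmetric mixed Nash equilibrium, completing the proof.
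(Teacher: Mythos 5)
Your overall strategy is the same as the paper's: restrict to a compact action set, pass to the mixed extension with the weak-$*$ topology, and invoke Reny's existence theory for symmetric discontinuous games \citep{reny99}, using continuity of the aggregate/diagonal payoff together with a slightly perturbed deviation whose winning events have enough margin to survive small perturbations of the opponents' strategy. The one structural difference is that the paper verifies diagonal better-reply security directly at each non-equilibrium diagonal profile, while you verify payoff security plus reciprocal upper semicontinuity; that decomposition is legitimate in principle, and your observation that $\sum_{j=1}^P \Profit(p_j;p_{-j}) = N - \sum_{j=1}^P \|p_j\|^{\beta}$ is continuous is exactly the paper's continuity step.

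However, your securing map $p \mapsto (1+\delta)p$ does not establish payoff security as claimed. Radial scaling only breaks ties at users $i$ with $\langle u_i, p\rangle > 0$; if $\langle u_i, p\rangle = 0$ the tie is at zero and scaling leaves it untouched, so an arbitrarily small weak-$*$ perturbation of the opponents' distribution (shifting mass to $p' + \epsilon' u_i$) steals user $i$ outright. Concretely, with $P=2$, nine users at $e_1$, one user at $e_2$, and $\mu$ an atom at $t e_2$, the diagonal payoff is $5 - t^{\beta}$, but the scaled strategy secures only about $1 - ((1+\delta)t)^{\beta}$ against nearby $\mu'$, so payoff security fails at this profile. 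The paper's securing deviation instead perturbs in the directions of the $u_i$, giving every relevant inner product a strict margin $\epsilon^*\|u_i\|_2$ over the opponents' values, which handles zero-ties; your argument needs the same fix (e.g., secure with $(1+\delta)p + \delta' \sum_i u_i$). Relatedly, your claim that the rescaled support avoids "a weak-$*$ neighborhood of the finitely many tie-inducing hyperplanes" conflates distances in the action space with distances between measures, and the tie sets depend on the opponents' realized actions rather than being finitely many fixed hyperplanes. The correct statement, which the paper makes via the Prohorov metric, is that the winning event for the secured action contains a uniform $\epsilon$-enlargement in $\ell_2$ of the winning event for $p$, so its probability under any $\mu'$ in a Prohorov $\epsilon$-ball around $\mu$ drops by at most $\epsilon$. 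With those two repairs your proof goes through along essentially the paper's lines.
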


Interestingly, symmetric mixed equilibria must exhibit significant randomness across different content embeddings. (Note that every symmetric equilibrium must exhibits some randomization, since pure strategy equilibria do not exist.) In particular, we show that a symmetric mixed equilibrium cannot contain point masses.
\begin{proposition}
\label{prop:atom}
For any set of users and any $\beta \ge 1$, every symmetric mixed equilibrium is atomless.
\end{proposition}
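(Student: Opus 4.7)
The plan is to argue by contradiction: assume $\mu$ is a symmetric mixed equilibrium with an atom at some $p^* \in \bR_{\ge 0}^D$ of mass $q > 0$, and exhibit a strictly profitable deviation from $p^*$, contradicting that every point in $\mathrm{supp}(\mu)$ is a best response. The guiding observation is the event $A$ that \emph{all} $P-1$ opponents play $p^*$ exactly, which by independence of the (symmetric) strategies has probability at least $q^{P-1} > 0$. On $A$ every user is tied across all $P$ producers at $p^*$, so playing $p^*$ yields only an $N/P$ share via tie-breaking, while any deviation $p'$ satisfying $\langle u_i, p'\rangle > \langle u_i, p^*\rangle$ for every $i$ wins all $N$ users outright---a uniformly bounded-below user-count gain of $q^{P-1}\, N(P-1)/P$ against which a vanishing cost increase can be balanced.

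If $p^* \ne \vec{0}$, the natural deviation is $p^\epsilon := (1+\epsilon)\, p^*$. For any realization $(p_1, \ldots, p_{P-1})$ and user $i$, a case split on whether $p^*$ strictly wins, ties for the max, or strictly loses at user $i$ shows that $p^\epsilon$ wins user $i$ with (tie-broken) probability weakly greater than $p^*$ does: strict wins persist since $\langle u_i, p^\epsilon\rangle > \langle u_i, p^*\rangle$, ties at the max are broken in favor of $p^\epsilon$, and strict losses for $p^*$ are not worsened. On event $A$ this improvement is $(P-1)/P$ per user, so the expected user gain is at least $q^{P-1} N(P-1)/P$ uniformly in $\epsilon$, while the cost increase $((1+\epsilon)^\beta - 1)\|p^*\|^\beta$ tends to $0$ as $\epsilon \to 0^+$; choosing $\epsilon$ small enough contradicts that $p^*$ is a best response.

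If $p^* = \vec{0}$, scaling by $(1+\epsilon)$ returns $\vec{0}$, so I instead deviate to $\epsilon v$ with $v := \sum_i u_i \in \bR_{\ge 0}^D \setminus \{\vec{0}\}$, which satisfies $\langle u_i, v\rangle \ge \|u_i\|^2 > 0$ for every user. The same realization-wise domination holds---$\vec{0}$ can only win user $i$ when every opponent also has zero inner product with $u_i$, in which case $\epsilon v$ strictly wins---and on event $A$ the deviation jumps from the $N/P$ shared users to all $N$ users outright. The user gain is again at least $q^{P-1} N(P-1)/P$ uniformly in $\epsilon$ while the cost $\epsilon^\beta \|v\|^\beta \to 0$. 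The main subtlety in both cases is verifying the pointwise domination \emph{per realization and per user}, so that the atom-driven gain on $A$ is not cancelled by possible losses on other events; once that is in place, the contradiction is immediate.
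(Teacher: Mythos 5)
Your overall strategy is exactly the paper's: condition on the event that all $P-1$ opponents land on the atom (probability at least $q^{P-1}$), check realization-wise weak domination so that gains elsewhere cannot be cancelled, and trade the resulting uniformly positive user gain against a vanishing cost increase. The one genuine gap is in your choice of deviation for a nonzero atom. The scaling $p^\epsilon=(1+\epsilon)p^*$ only breaks a tie at user $i$ when $\langle u_i,p^*\rangle>0$; if $\langle u_i,p^*\rangle=0$ then $\langle u_i,p^\epsilon\rangle=0$ as well and the tie persists, so your claimed gain of $q^{P-1}N(P-1)/P$ on the event $A$ is not justified. In the extreme case where $p^*\in\bR_{\ge 0}^D\setminus\{\vec 0\}$ is orthogonal to \emph{every} user (possible for $D\ge 2$, e.g.\ all users at $e_1$ and $p^*=e_2$), the deviation gains nothing on $A$ and strictly increases the cost, so the argument collapses; ruling out such an atom would require a separate argument (e.g.\ that $\vec 0$ dominates it), which you do not give.

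The fix is small and is what the paper does: deviate additively in a user direction, $p^*+\epsilon u_1$, which satisfies $\langle u_1,p^*+\epsilon u_1\rangle=\langle u_1,p^*\rangle+\epsilon\|u_1\|_2^2>\langle u_1,p^*\rangle$ unconditionally, so on the event $A$ you are guaranteed a strict gain of $(P-1)/P$ for user $1$ (the paper is content with this single-user gain rather than all $N$ users), while the realization-wise domination and the continuity of the cost in $\epsilon$ go through exactly as in your argument. Your treatment of the $p^*=\vec 0$ case via $\epsilon\sum_i u_i$ is essentially this additive deviation and is fine; applying it uniformly to all atoms would close the gap.
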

\noindent Proposition \ref{prop:atom} implies that a symmetric mixed equilibrium has \textit{infinite support}. The randomness can come from randomness over \textit{quality} $\|\p\|$ as well as randomness over \textit{genres} $p/\|p\|$.

We take the symmetric mixed equilibria of this game as the main object of our study, since they are both tractable to analyze and rich enough to capture asymmetric solution concepts such as specialization. In terms of tractability, a symmetric mixed equilibrium (unlike an asymmetric equilibrium) can be represented as a \textit{single} distribution $\mu$. Despite this simplicity, we can still study specialization---which is an asymmetric concept---within the family of symmetric equilibria as we formalize in Section \ref{sec:specializationdef}.

\subsection{Warmup: Homogeneous Users}\label{sec:1d}

To gain intuition for the structure of $\mu$, let's focus on a simple one-dimensional setting with one user. We show that the equilibria take the following form (see Figure \ref{fig:1d}):

\begin{wrapfigure}[15]{r}{0.5\textwidth}
\vspace{-0.8cm}

\centering
    \includegraphics[scale=0.35]{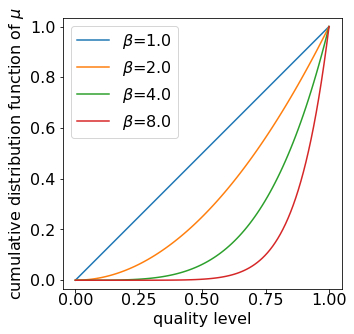}

        \caption{Cumulative distribution function (cdf) of the symmetric equilibrium $\mu$ for 1-dimensional setup (Example \ref{example:1d}) with $P = 2$ producers. The equilibrium $\mu$ interpolates from a uniform distribution to a point mass as the exponent $\beta$ increases.}
 
            \label{fig:1d}
\end{wrapfigure}

\begin{example}[1-dimensional setup]
\label{example:1d} 
Let $D = 1$, and suppose that there is a single user $\u_1 =1$. Suppose the cost function is $c(\p) = |\p|^{\beta}$. The unique symmetric mixed equilibrium $\mu$ is supported on the full interval $[0, 1]$ and has cumulative distribution function $F(\p) = \left(\p /N\right)^{\beta / (P-1)}$. We defer the derivation to Appendix \ref{sec:derivationexample1d}.
\end{example}
Since $D=1$ in Example \ref{example:1d}, content is specified by a single value $\p \in \mathbb{R}^{\ge 0}$. Since the user will be assigned to the content with the highest value of $\p$, we can interpret $\p$ as the \textit{quality} of the content. For a producer, setting $\p$ to be larger increases the likelihood of being assigned to users, at the expense of a greater cost of production.

The equilibrium changes substantially with the parameters $\beta$ and $P$. First, for any fixed $P$, the equilibrium distribution for higher values of $\beta$ stochastically dominates the equilibrium distribution for lower values of $\beta$ (see Figure \ref{fig:1d}). The intuition is that increasing $\beta$ lowers production costs for content with a given quality, so producers must produce higher quality content at equilibrium. Similarly, for any fixed value of $\beta$, the equilibrium distribution for lower values of $P$ stochastically dominates the equilibrium distribution for higher values of $P$. This is because when more producers enter the market, any given producer is less likely to win users (i.e. a producer only wins a user with probability $1/P$ if all producers choose the same vector), so they cannot expend as high of a production cost.

We next translate these insights about the equilibria for one-dimensional marketplaces to higher-dimensional marketplaces with a population of \textit{homogeneous} users. If all users are embedded at the same vector $u \in \mathbb{R}_{\ge 0}^D$, then the producer's decision about what direction of content to choose is trivial: they would choose a direction in $\argmax_{\|p \| = 1 } \langle p, u \rangle$. As a result, the producer's decision again boils down to a one-dimensional decision: choosing the quality $\|p\|$ of the content. 
\begin{corollary}
\label{cor:onepopulation}
Suppose that there is a single population of $N$ users, all of whose embeddings are at the same vector $u$. Then, there is a symmetric mixed Nash equilibrium $\mu$ supported on $\left\{ q p^* \mid q \in [0, N^{\frac{1}{\beta}}] \right\}$ where $p^* \in \argmax_{\|p \| = 1 } \langle p, u \rangle$. The cumulative distribution function of $q = \|p\| \sim \mu$ is $F(q) = \left(q/N\right)^{\beta / (P-1)}$. 
\end{corollary}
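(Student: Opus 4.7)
The plan is to reduce the $D$-dimensional game with $N$ users all embedded at $u$ to the one-dimensional game of Example \ref{example:1d}, scaled by the population size. The core observation is that, when every user sits at the same embedding $u$, a producer's probability of being recommended to any user depends on their content vector $p$ only through the scalar $\langle u, p\rangle$, while the cost depends on $p$ only through $\|p\|$. For any $p$ with $\|p\| = q$, the vector $p/q$ is a $\|\cdot\|$-unit vector, so by definition of $p^*$ we have $\langle u, p\rangle = q\,\langle u, p/q\rangle \le q\,\langle u, p^*\rangle = \langle u, qp^*\rangle$. Thus replacing $p$ by $qp^*$ weakly increases the probability of winning each user at identical cost, which motivates constructing a symmetric equilibrium $\mu$ whose entire support lies on the ray $\{qp^* : q \ge 0\}$.

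Restricting to this ray, I would reparametrize by quality $q = \|p\|$. Suppose every other producer draws quality from an atomless cdf $G$, and a deviating producer plays $p = qp^*$. Then the event that the producer strictly beats each competitor is equivalent to having the larger quality, so (by independence across the $P-1$ competitors) the producer wins all $N$ users with probability $G(q)^{P-1}$ and wins nobody otherwise. The expected profit reduces to
\begin{equation*}
N\,G(q)^{P-1} - q^\beta,
\end{equation*}
which is exactly the scalar game of Example \ref{example:1d} with the win payoff rescaled from $1$ to $N$.

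Next I would determine $G$ and its support by the standard indifference argument. Proposition \ref{prop:atom} rules out atoms, so $G$ is continuous, and the equilibrium profit is constant on the support; since $q=0$ lies in the support and yields profit $0$ (no cost, no wins), the constant must equal $0$. Solving $N\,G(q)^{P-1} = q^\beta$ on the support, together with the boundary condition $G(q^*)=1$, pins down the top of the support at $q^* = N^{1/\beta}$ and gives the cdf claimed in the statement (after the appropriate $N$-dependent rescaling of the cdf in Example \ref{example:1d}).

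The final step is verification: given that all other producers play this $\mu$, I would show that any $p \in \mathbb{R}_{\ge 0}^D$ yields expected profit at most $0$, with equality precisely on the support of $\mu$. For fixed $q=\|p\|$, the inequality $\langle u, p\rangle \le \langle u, qp^*\rangle$ derived above, combined with atomlessness (so ties against competitors have probability zero), shows that the profit at $p$ is bounded by the profit at $qp^*$; and playing on the ray gives profit $\le 0$ by the scalar analysis, with equality exactly on $[0, N^{1/\beta}]$. The main (mild) obstacle is this last comparison: one must justify that the pointwise inequality on inner products translates into a weak inequality on winning probabilities, which follows because the ordering of $\langle u, \cdot\rangle$ alone determines the recommendation and the atomless structure of $\mu$ makes the boundary (tie) event irrelevant.
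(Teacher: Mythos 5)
Your proposal is correct and follows essentially the same route as the paper: the paper reduces the homogeneous-user case to the one-dimensional game of Example \ref{example:1d} (formally, via Lemma \ref{lemma:cdf} and the single-genre machinery), and your domination argument $\langle u, p\rangle \le \langle u, \|p\|\,p^*\rangle$ plus the zero-profit indifference condition $N\,G(q)^{P-1} = q^{\beta}$ is exactly that reduction, made explicit. One small note: your derivation yields $F(q) = \left(q^{\beta}/N\right)^{1/(P-1)}$, which agrees with Lemma \ref{lemma:cdf} and with the stated support $[0, N^{1/\beta}]$; the expression $\left(q/N\right)^{\beta/(P-1)}$ in the corollary statement appears to be a typo in the paper rather than an error on your part.
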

\noindent  Corollary \ref{cor:onepopulation} relies on the fact that when users are homogeneous, there is no tension between catering to one user and catering to other users. 

\subsection{Specialization and the formation of genres}\label{sec:specializationdef}
In contrast, when users are heterogeneous, there are inherent tensions between catering to one user and catering to other users. 
As a result, the producer make nontrivial choices not only about the quality of the content (Section \ref{sec:1d}), but also the \textit{genre} of content as reflected by its direction in $\bR^d$. This can lead to \textit{specialization}, which is when  different producers create goods tailored to different users;  alternatively, all producers might still produce the same genre of content at equilibrium and thus only exhibit differentiation on the axis of quality.

To formalize specialization, we need to disentangle two forms of differentiation: (1) differentiation along direction (genre), and (2) differentiation along magnitude (quality). We define specialization as differentiation along genres, and not as differentiation along quality. To focus on the former, we define \textit{genres} as the set of \textit{directions} that arise at a symmetric mixed Nash equilibrium $\mu$: 
\begin{equation}
\label{eq:genredef}
 \Genre(\mu) := \Big\{\frac{\p}{\|\p\|} \mid \p \in \text{supp}(\mu) \Big\},  
\end{equation}
where we normalize by $\|p\|$ to separate out the quality (norm) from the genre (direction). The set of genres $ \Genre(\mu)$ captures the set of content that may arise on the platform in some realization of randomness of the producers' strategies. When an equilibrium has a single genre, all producers cater to an average user, and only a single type of content appears  on the platform. On the other hand, when an equilibrium has multiple genres, many types of digital content are likely to appear on the platform.

We thus say that \textit{specialization} occurs at an equilibrium $\mu$ if and only if $\mu$ has more than one \textit{genre} (i.e., if and only if $|\Genre(\mu)| > 1$). When specialization does occur at $\mu$, the form of specialization is further captured by the number of genres $|\Genre(\mu)|$ and other properties of the set of genres $\Genre(\mu)$. Note that we define specialization in terms of the \textit{support} of a symmetric mixed equilibrium distribution. In this definition, we implicitly interpret the randomness in the producer strategies as differentiation between producers; this formalization of specialization obviates the need to reason about asymmetric equilibria, thus making the model much more tractable to analyze.

\subsection{Model discussion}\label{subsec:modeldiscussion} 

Our model is one of the simplest possible that studies specialization in the supply-side marketplace. In particular, although many classical models\footnote{See \citet{ProductDifferentiation} for a textbook treatment.} (e.g.~spatial location models with specific user distributions and costs based on the Euclidean distance) permit closed-form equilibria, they elide important aspects of supply-side markets---such as the multi-dimensionality of producer decisions, the joint selection of genre and quality, and the structure of producer costs---which significantly influence the form that specialization takes. Our model incorporates these aspects at the cost of not having general closed-form equilibria; we nonetheless develop technical tools to study specialization without relying on closed-form solutions (while also obtaining closed forms in special cases). On the other side of the spectrum, we do not aim to provide a fully general model of product selection, production, and pricing. Instead, our model adds assumptions specific to recommender systems that provide sufficient structure to derive precise properties of specialization.

Our formalization of user preferences and the producer decision space is motivated by distinguishing aspects of content recommender systems. First, the infinite, high-dimensional content embedding space captures that digital goods can't be cleanly clustered into categories, but rather, are often mixtures of different dimensions (e.g. a movie can be both a drama and a comedy). Furthermore, the bilinear (dot product) form of inferred user values is motivated by standard recommendation algorithms: for example, matrix factorization assumes that the inferred user values are inner products between user vectors and content attributes vectors \citep{KBV09}. We explicitly connect our model to matrix factorization in our empirical analysis in Section \ref{subsec:empirical}.

Our assumptions on the structure of producer costs allow us to study specialization, while retaining mathematical tractability. The family of producer cost functions is stylized, but flexible, in that it accommodates arbitrary powers of arbitrary norms and it can capture both specialization and homogenization (Theorem~\ref{thm:singlegenre}). The assumption that all producers share the same cost function is also simplifying, but, potentially surprisingly, still allows us to study specialization. In particular, specialization occurs in a rich class of marketplaces  (Corollary~\ref{cor:beta}), \textit{despite} the fact that producers have symmetric utility functions; we anticipate that the tendency towards specialization would only be amplified if producers could have different cost functions.

We hope that the simplicity of our model, and its ability to capture specialization, make it a useful starting point to further study the impact of recommender systems on production; we highlight some potential directions in \Cref{sec:openquestions}.

\section{When does specialization occur?}\label{sec:specialization}

In order to investigate whether specialization occurs in a given marketplace, we investigate when the set of genres $\Genre(\mu)$ of an equilibrium $\mu$ contains more than one direction. We distinguish between two regimes of marketplaces based on whether or not a single-genre equilibrium exists: 
\begin{enumerate}
    \item A marketplace is in the \textit{single-genre regime} if there exists an equilibrium $\mu$ such that $|\Genre(\mu)| = 1$. All producers thus create content of the same genre.
    \item A marketplace is in the \textit{multi-genre regime} if all equilibria $\mu$ satisfy $|\Genre(\mu)| > 1$. Producers thus necessarily differentiate in the genre of content that they produce. 
\end{enumerate}
To understand these regimes, we ask: \textit{what conditions on the user vectors $u_1, \ldots, u_N$ and the cost function parameters $\|\cdot \|$ and $\beta$ determine which regime the marketplace is in?}

 In Section \ref{subsec:results}, we give necessary and sufficient conditions for all equilibria to have multiple genres (Theorem \ref{thm:singlegenre}). In Section \ref{subsec:corollaries}, we show several corollaries of Theorem \ref{thm:singlegenre}. In Section \ref{subsec:location}, we show that the location of the single-genre equilibrium (in cases where it exists) maximizes the Nash social welfare. In Section \ref{subsec:empirical}, we provide an empirical analysis using the MovieLens-100K dataset \citep{movielens} that provides additional intuition for our theoretical results.  

\subsection{Characterization of single-genre and multi-genre regimes} \label{subsec:results}

We first provide a tight geometric characterization of when a marketplace is in the single-genre regime versus in the multi-genre regime. More formally, let $\UMatrix = [\u_1; \cdots; \u_N]$ be the $N \times D$ matrix of user vectors, and let $\Set$ denote the image of the unit ball under $\UMatrix$: 
\begin{equation}
  \Set := \left\{\UMatrix\p \mid \|\p\| \le 1, \p \in \bR_{\ge 0}^D  \right\}  
\end{equation}
Each element of $\Set$ is an $N$-dimensional vector, which represents the inferred user values for some unit-norm producer $\p$. Additionally, let $\Set^{\beta}$ be the image of $\Set$ under coordinate-wise powers, i.e.~if $(z_1, \ldots, z_N) \in \Set$ then $(z_1^{\beta}, \ldots, z_N^{\beta}) \in \Set^{\beta}$.
We show that genres emerge when $\Set^{\beta}$ is sufficiently different from its convex hull $\bar{\Set}^{\beta}$:
\begin{restatable}{theorem}{singlegenre}
\label{thm:singlegenre} 
Let $\UMatrix := [\u_1; \cdots; \u_N]$, let $\Set$ be $\left\{\UMatrix\p \mid \|\p\| \le 1, \p \in \bR_{\ge 0}^D  \right\}$, and let $\Set^{\beta}$ be the image of $\Set$ under coordinate-wise powers. Then, there is a symmetric equilibrium $\mu$ with $|\Genre(\mu)| = 1$ if and only if 
\begin{equation}
\label{eq:maxcondition}
  \max_{y \in \Set^{\beta}} \prod_{i=1}^{N} y_i  = \max_{y \in \bar{\Set}^{\beta}} \prod_{i=1}^{N} y_i.
\end{equation}
Otherwise, all symmetric equilibria have multiple genres. Moreover, if \eqref{eq:maxcondition} holds for some $\beta$, it also holds for every $\beta' \leq \beta$. 
\end{restatable}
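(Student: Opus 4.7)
The plan is to leverage Lemma~\ref{lemma:optimizationprogram}, which recasts existence of a single-genre equilibrium as strong duality for a specific optimization program. The natural candidate, motivated by Corollary~\ref{cor:onepopulation} (where the single user pins down the symmetric genre), is the Nash-social-welfare maximization $\max_{p\ge 0,\,\|p\|\le 1} \prod_i \langle u_i,p\rangle^\beta$, which in value space becomes $\max_{y\in\Set^\beta}\prod_i y_i$; the convex relaxation is $\max_{y\in\bar{\Set}^\beta}\prod_i y_i$, and strong duality (i.e.\ \eqref{eq:maxcondition}) should correspond exactly to the existence of a single-genre equilibrium. The monotonicity in $\beta$ will come from a short log-convexity argument; the main technical work is identifying the two optimization problems with the equilibrium conditions.

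To make the identification, I first characterize the structure of a single-genre equilibrium. By Proposition~\ref{prop:atom} and the reasoning of Corollary~\ref{cor:onepopulation}, such an equilibrium along unit direction $\hat p^*$ must randomize the quality with cdf $F(q)=(q^\beta/N)^{1/(P-1)}$ on $[0,N^{1/\beta}]$ and yield zero expected profit. The utility of deviating to $r\hat q$ evaluates to $U(r)=\sum_i \min\bigl(1,(r\langle u_i,\hat q\rangle/\langle u_i,\hat p^*\rangle)^\beta/N\bigr)-r^\beta$; analyzing $U$ as a piecewise function of $r^\beta$ (and noting that the derivative only becomes more negative as more users saturate) shows $\sup_r U(r)\le 0$ if and only if
\[
\sum_{i=1}^N \left(\frac{\langle u_i,\hat q\rangle}{\langle u_i,\hat p^*\rangle}\right)^\beta \le N \quad\text{for every unit } \hat q \in \bR_{\ge 0}^D.
\]

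Next I reformulate this geometrically. Set $z_i^*:=\langle u_i,\hat p^*\rangle$ and $\lambda_i:=1/(N(z_i^*)^\beta)$; the no-deviation inequality is then equivalent to $\sum_i \lambda_i y_i \le 1$ for all $y\in\Set^\beta$, or equivalently for all $y\in\bar{\Set}^\beta$. The AM--GM/Lagrange optimality condition for $\max\prod_i y_i$ subject to $\sum_i\lambda_i y_i\le 1$ places the tangent hyperplane at $y^*:=((z_i^*)^\beta)_i$, so $y^*$ is the max of $\prod_i y_i$ over $\bar{\Set}^\beta$; since $y^*\in\Set^\beta$, this yields \eqref{eq:maxcondition}. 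Conversely, if \eqref{eq:maxcondition} holds, I pick a maximizer $y^*\in\Set^\beta$ (which exists by compactness), read off a corresponding unit direction $\hat p^*$ from $y^*=((U\hat p^*)_i^\beta)_i$, and use the supporting hyperplane at $y^*$ to certify the no-deviation inequality, producing a single-genre equilibrium along $\hat p^*$. When \eqref{eq:maxcondition} fails, no $\hat p^*$ passes the hyperplane test, ruling out single-genre equilibria and forcing every equilibrium to be multi-genre.

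For monotonicity, let $M:=\max_{z\in\Set}\prod_i z_i$, so $\max_{y\in\Set^\beta}\prod_i y_i=M^\beta$ for every $\beta$; condition \eqref{eq:maxcondition} thus asserts $\prod_i \sum_k\lambda_k(z_i^{(k)})^\beta\le M^\beta$ for every convex combination $\sum_k\lambda_k(z^{(k)})^\beta\in\bar{\Set}^\beta$. Fix such $(\lambda_k,z^{(k)})$ and set $g(s):=\sum_i\log\sum_k\lambda_k e^{s\log z_i^{(k)}}$, a sum of log moment generating functions and therefore convex in $s$. Since $g(0)=0$ and $g(\beta)\le\beta\log M$, convexity on $[0,\beta]$ gives $g(\beta')\le\beta'\log M$ for every $\beta'\in[1,\beta]$, which is \eqref{eq:maxcondition} at $\beta'$. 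The main obstacle will be rigorously deriving the no-deviation inequality: the utility $U(r)$ is piecewise in $r^\beta$, and degenerate configurations where $\langle u_i,\hat p^*\rangle=0$ for some $i$ must be handled by a limiting argument or by restricting to the supported subset of users. Once the hyperplane characterization is established, the identification with \eqref{eq:maxcondition} via AM--GM is a clean repackaging of Lagrangian duality.
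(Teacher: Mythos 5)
Your proposal is correct and follows essentially the same route as the paper: characterize the single-genre quality cdf as $F(q)=(q^{\beta}/N)^{1/(P-1)}$ with zero profit, reduce existence to the no-deviation inequality $\sum_{i=1}^N\bigl(\langle u_i,\hat q\rangle/\langle u_i,\hat p^*\rangle\bigr)^{\beta}\le N$ (a linear constraint on $\Set^{\beta}$, hence equivalently on $\bar{\Set}^{\beta}$), and convert this to \eqref{eq:maxcondition} by AM--GM in one direction and the first-order/supporting-hyperplane condition at the maximizer of $\sum_i\log y_i$ over $\bar{\Set}^{\beta}$ in the other. The only differences are organizational: your direct hyperplane certificates bypass the paper's explicit invocation of Sion's minimax theorem (which is needed for the strong-duality statement of Lemma \ref{lemma:optimizationprogram} but not for the theorem itself), and your log-moment-generating-function convexity argument for the monotonicity in $\beta$ is an equivalent repackaging of the paper's power-mean inequality.
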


\begin{figure}
    \centering
    \includegraphics[scale=0.25]{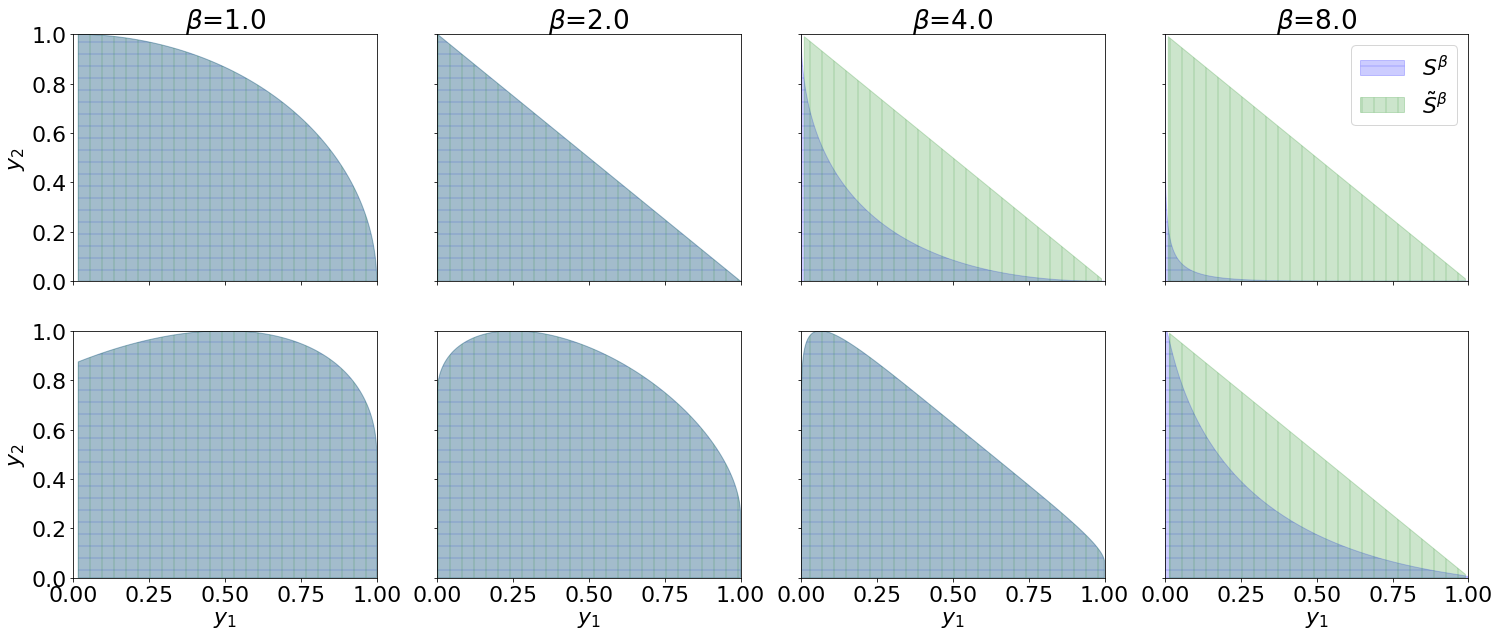}
    \caption{The sets $\Set^{\beta}$ and $\bar{\Set}^{\beta}$ for two configurations of user vectors (rows) and settings of $\beta$ (columns). The user vectors are $[1,0], [0,1]$ (top, same as Figure \ref{fig:equilibria}) and $[1,0], [0.5, 0.87]$ (bottom). $\Set^{\beta}$ transitions from convex to non-convex as $\beta$ increases, though the transition point depends on the user vectors. When $\Set^{\beta}$ is convex, a single vector $\p$ can more easily satisfy both users at low cost.}
    \label{fig:sbeta}
\end{figure}

Theorem \ref{thm:singlegenre} relates the existence of a single-genre equilibrium to the convexity of the set $\Set^{\beta}$. As a special case, the condition in Theorem~\ref{thm:singlegenre} always holds if $\Set^{\beta}$ is convex, but is strictly speaking weaker than convexity. Interestingly, the condition depends on the geometry of the user embeddings and the cost function but \emph{not} on the number of producers. Intuitively, convexity of $\Set^{\beta}$ relates to the ease with which a vector $\p$ can satisfy all users simultaneously, at low cost---each dimension of $\Set$ corresponds to a user's utility. In Figure \ref{fig:sbeta}, we display the sets $\Set^{\beta}$ and $\bar{\Set}^{\beta}$ for different configurations of user vectors and different settings of $\beta$. 

Theorem \ref{thm:singlegenre} further shows that the boundary between the single-genre and multi-genre regimes can be represented by a \textit{threshold} defined as follows
\[\beta^* := \sup \left\{\beta \ge 1 \mid \max_{y \in \Set^{\beta}} \prod_{i=1}^{N} y_i  = \max_{y \in \bar{\Set}^{\beta}} \prod_{i=1}^{N} y_i \right\}. \]
where single-genre equilibria exist exactly when $\beta \leq \beta^*$. Conceptually, larger $\beta$ make producer costs more superlinear, which eventually discentivizes producers from attempting to perform well on all dimensions at once.

\paragraph{Proof techniques for Theorem \ref{thm:singlegenre}.} Since the single-genre equilibrium does not admit a straightforward closed-form solution, we must implicitly reason about its existence when proving  Theorem \ref{thm:singlegenre}. To do so, we draw a connection to minimax theory in optimization. Our main lemma shows that the existence of a single-genre equilibrium is equivalent to strong duality holding for the following minmax problem: 
\begin{restatable}[Informal]{lemma}{optimization}
\label{lemma:optimizationprogram}
There exists a symmetric equilibrium $\mu$ with $|\Genre(\mu)| = 1$ if and only if: 
\begin{equation}
    \label{eq:optimizationprogrammain}
    \inf_{y \in \Set^{\beta}} \left(\sup_{y' \in \Set^{\beta}}   \sum_{i=1}^N  \frac{y'_i}{y_i} \right) = \sup_{y' \in \Set^{\beta}} \left(\inf_{y \in \Set^{\beta}} \sum_{i=1}^N  \frac{y'_i}{y_i}\right).
\end{equation}
\end{restatable}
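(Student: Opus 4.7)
My plan is to reduce the existence of a single-genre equilibrium to an explicit no-deviation condition, recognize that condition as a partial saddle-point property of the function $f(y,y') := \sum_{i=1}^N y'_i / y_i$ on $\Set^\beta \times \Set^\beta$, and then close the gap to full strong duality by identifying the equilibrium location with the Nash social welfare maximizer.

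I would first parametrize any candidate single-genre equilibrium by its direction $\hat p \in \bR_{\ge 0}^D$ with $\|\hat p\| = 1$. Restricting the game to the ray $\{q \hat p : q \ge 0\}$ collapses it to the one-dimensional setting of Section~\ref{sec:1d} with $N$ users, so Corollary~\ref{cor:onepopulation} pins down the quality distribution (CDF $F(q) = (q^\beta / N)^{1/(P-1)}$ on $[0, N^{1/\beta}]$) and forces every producer to earn zero profit. The only remaining requirement for equilibrium is that no producer can profitably deviate to any $p' \in \bR_{\ge 0}^D$. For a unit-norm $p'$ scaled by $\lambda \ge 0$, the deviation payoff equals $\sum_i \min(1, \lambda^\beta q_i^\beta / N) - \lambda^\beta$ where $q_i := \langle u_i, p'\rangle / \langle u_i, \hat p\rangle$; its derivative in $\lambda$ is $\beta \lambda^{\beta-1}\bigl(\sum_{i \text{ unsaturated}} q_i^\beta / N - 1\bigr)$, which is non-positive whenever $\sum_i q_i^\beta \le N$. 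Thus the no-deviation condition simplifies to $\sum_i \bigl(\langle u_i, p'\rangle / \langle u_i, \hat p\rangle\bigr)^\beta \le N$ for every unit-norm $p' \in \bR_{\ge 0}^D$.

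Setting $y := (\langle u_i, \hat p\rangle^\beta)_i$ and $y' := (\langle u_i, p'\rangle^\beta)_i$, which together sweep out $\Set^\beta$ as $\hat p, p'$ range over unit vectors, a single-genre equilibrium exists iff there is $y \in \Set^\beta$ with $\sup_{y' \in \Set^\beta} f(y, y') \le N$. The diagonal identity $f(y,y) = N$ yields two plug-in bounds: $\inf_y \sup_{y'} f \ge N$ (set $y' = y$) and $\sup_{y'} \inf_y f \le N$ (set $y = y'$). Combined with weak duality, single-genre existence is equivalent to $\inf_y \sup_{y'} f = N$, while strong duality requires the further identity $\sup_{y'} \inf_y f = N$.

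The main obstacle is the forward direction: given a single-genre equilibrium $y^*$, show also $\inf_{y \in \Set^\beta} \sum_i y^*_i / y_i \ge N$. I would prove this by identifying $y^*$ with a Nash social welfare maximizer of $\prod_i y_i$ on $\Set^\beta$, by observing that the first-order conditions for ``$y^*$ maximizes the linear functional $y' \mapsto \sum y'_i / y^*_i$ on $\Set^\beta$'' and ``$y^*$ maximizes $\prod_i y_i$ on $\Set^\beta$'' coincide (both force $\sum_i u_i / \langle u_i, \hat p\rangle$ to be proportional to a subgradient of $\|\cdot\|$ at $\hat p$). Given that identification, AM-GM yields $\sum_i y^*_i / y_i \ge N \bigl(\prod y^*_i / \prod y_i\bigr)^{1/N} \ge N$ for every $y \in \Set^\beta$, so plugging $y' = y^*$ into the outer sup gives $\sup_{y'} \inf_y f \ge N$, closing the forward direction. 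The converse is immediate: strong duality combined with the plug-in bounds forces both iterated values to equal $N$; compactness of $\Set^\beta$ then delivers a minimizer $y^*$ of the outer infimum with $\sup_{y' \in \Set^\beta} f(y^*, y') = N$, which by the first two steps determines a single-genre equilibrium.
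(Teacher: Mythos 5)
Your proposal follows essentially the same route as the paper: the reduction of a single-genre equilibrium in direction $\hat p$ to the one-dimensional quality distribution $F(q)=(q^{\beta}/N)^{1/(P-1)}$ with zero profit is the paper's Lemma~\ref{lemma:cdf}; the no-deviation condition $\sup_{y'\in\Set^{\beta}}\sum_i y'_i/y_i\le N$ is exactly Lemma~\ref{lemma:optsingledirection} (including the observation that the scaled deviation payoff is monotone in $\lambda$ once one discards the saturation at $1$); and the final bookkeeping via the diagonal value $f(y,y)=N$, weak duality, and AM--GM with the product maximizer as the witness $y'$ is the content of Lemmas~\ref{lemma:conditiongen}, \ref{lemma:supinf}, and \ref{lemma:optimizationprogramrestated}.

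The one step whose justification as written would not survive scrutiny is the identification of the equilibrium point $y^*$ with the Nash-welfare maximizer ``by observing that the first-order conditions coincide.'' The set $\Set^{\beta}$ is in general non-convex, so two problems sharing a stationarity condition at $y^*$ does not let you promote $y^*$ to a \emph{global} maximizer of $\prod_i y_i$. The implication you need is nonetheless true, and the correct one-line argument is concavity of $\log$: from $\sup_{y'\in\Set^{\beta}}\sum_i y'_i/y^*_i\le N$ one gets, for every $y\in\Set^{\beta}$, $\sum_i\log y_i\le\sum_i\log y^*_i+\sum_i (y_i-y^*_i)/y^*_i\le\sum_i\log y^*_i$, i.e.\ global optimality of the linearization implies global optimality of the concave objective. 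The paper avoids the issue entirely with Lemma~\ref{lemma:supinf}, which shows $\sup_{y'}\inf_{y}\sum_i y'_i/y_i=N$ \emph{unconditionally} for any subset of the positive orthant, by taking the product (near-)maximizer directly as the witness $y'$ without ever relating it to an equilibrium; that version also spares you the attainment discussion (your appeal to compactness of $\Set^{\beta}$ for the outer infimum needs the extra remark that the objective blows up on the boundary where some $y_i=0$, since that boundary must be excised). With either fix the argument is complete.
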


To prove Theorem \ref{thm:singlegenre} from Lemma \ref{lemma:optimizationprogram}, we analyze when strong duality holds. Note that while the objective in \eqref{eq:optimizationprogrammain} is convex in $y$ and linear (concave) in $y'$, the constraints on $y$ and $y'$ through the set $\Set^{\beta}$ can be non-convex. It turns out that we can eliminate the non-convexity in the constraint on $y$ for free, by reparameterizing to the space of content vectors $p \in \mathbb{R}_{\ge 0}^D$ with unit norm. On the other hand, to handle the non-convexity in the constraint on $y'$, we need to convexify the optimization program by replacing $\Set^{\beta}$ with its convex hull $\bar{\Set}^{\beta}$. By Sion's min-max theorem, we can flip $\sup$ and $\inf$ in this convexified version of the left-hand side of \eqref{eq:optimizationprogrammain}. The remaining technical step is to relate the resulting expression to the right-hand side of \eqref{eq:optimizationprogrammain}, which we defer to Appendix \ref{sec:proofthm}.

To prove Lemma \ref{lemma:optimizationprogram}, we first characterize the cumulative distribution function of quality at a single-genre equilibria as $F(q) \propto q^{\beta}$ (Lemma \ref{lemma:cdf}). Then we show that $y$ corresponds to an equilibrium direction if and only if $\sup_{y' \in \Set^{\beta}}  \sum_{i=1}^N  \frac{y'_i}{y_i} \le N$, which means that there exists an equilibrium direction if and only if the left-hand side of \eqref{eq:optimizationprogrammain} is at most $N$. We also show that the dual the right-hand side of \eqref{eq:optimizationprogrammain} is always equal to $N$, which allows us to prove Lemma \ref{lemma:optimizationprogram}. 

\subsection{Corollaries of Theorem \ref{thm:singlegenre}}\label{subsec:corollaries}

To further understand the condition in equation \eqref{eq:maxcondition}, we consider a series of special cases that provide intuition for when single-genre equilibria exist (proofs deferred to Section \ref{sec:proofcor}). First, let us consider $\beta = 1$, in which case the cost function is a norm. Then $\Set^{1} = \Set$ is convex, so a single-genre equilibrium always exists.
\begin{restatable}{corollary}{betaone}
\label{cor:beta1}
The threshold $\beta^*$ is always at least $1$. That is, if $\beta = 1$, there exists a single-genre equilibrium. 
\end{restatable}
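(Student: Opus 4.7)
The plan is to invoke Theorem \ref{thm:singlegenre} at $\beta = 1$ and verify the condition \eqref{eq:maxcondition} directly. A clean sufficient condition for \eqref{eq:maxcondition} is that $\Set^{\beta}$ is convex, since then $\Set^{\beta} = \bar{\Set}^{\beta}$ and the two maxima coincide trivially. So the task reduces to checking that $\Set^{1}$ is convex.

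First I would observe that $\Set^{1} = \Set$, because coordinate-wise exponentiation by $1$ is the identity map. Next, the unit ball $\{\p : \|\p\| \le 1\}$ of any norm is convex (by the triangle inequality and positive homogeneity), and intersecting with the convex cone $\bR_{\ge 0}^D$ preserves convexity, so $B := \{\p \in \bR_{\ge 0}^D : \|\p\| \le 1\}$ is convex. Then $\Set = \{\UMatrix \p : \p \in B\}$ is the image of the convex set $B$ under the linear map $\p \mapsto \UMatrix \p$, and hence is convex. Therefore $\Set^{1} = \bar{\Set}^{1}$, the condition \eqref{eq:maxcondition} holds trivially at $\beta = 1$, and Theorem \ref{thm:singlegenre} produces a single-genre equilibrium, which by the definition of $\beta^{*}$ gives $\beta^{*} \ge 1$. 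There is no real obstacle here: the argument is essentially a one-line unpacking of the fact that norm balls are convex and linear maps preserve convexity, which is precisely why $\beta = 1$ serves as a natural starting point before the nonlinear $\beta > 1$ regime makes $\Set^{\beta}$ potentially non-convex (as illustrated in Figure \ref{fig:sbeta}).
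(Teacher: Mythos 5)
Your proposal is correct and follows essentially the same route as the paper's proof: both arguments note that $\Set^{1} = \Set$ is the linear image of the convex set $\{\p \in \bR_{\ge 0}^D : \|\p\| \le 1\}$, hence convex, so $\Set^{1} = \bar{\Set}^{1}$ and the condition \eqref{eq:maxcondition} holds trivially, after which Theorem \ref{thm:singlegenre} yields the single-genre equilibrium. No gaps.
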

\noindent The economic intuition behind Corollary \ref{cor:beta1} is that norms incentivize averaging rather than specialization.

We next take a closer look at how the choice of norm affects the emergence of genres. For cost functions $c(p) = \|p\|_q^{\beta}$, we show that $\beta^* \ge q$ for any set of user vectors, with equality achieved at the standard basis vectors. 
\begin{restatable}{corollary}{betap}
\label{cor:betap}
Let the cost function be $c(\p) = \|\p\|_{q}^{\beta}$. For any set of user vectors, it holds that $\beta^* \ge q$. If the user vectors are equal to the standard basis vectors $\left\{e_1, \ldots, e_D \right\}$, then $\beta^*$ is equal to $q$. 
\end{restatable}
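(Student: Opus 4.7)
The plan is to apply Theorem~\ref{thm:singlegenre} and verify condition~\eqref{eq:maxcondition} at $\beta = q$, which immediately gives $\beta^* \ge q$. The inequality $\max_{\Set^q} \prod_i y_i \le \max_{\bar{\Set}^q} \prod_i y_i$ is automatic from $\Set^q \subseteq \bar{\Set}^q$. For the reverse, rather than reason about the maximizer directly, I would prove the stronger coordinate-wise statement: every $y \in \bar{\Set}^q$ is dominated entrywise by some $y' \in \Set^q$, so that the monotonicity of $\prod_i y_i$ on $\bR_{\ge 0}^N$ yields the matching inequality for the maxima.

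Given such a $y = \sum_k \lambda_k y^{(k)}$ with $y^{(k)}_i = \langle u_i, p^{(k)}\rangle^q$, $\|p^{(k)}\|_q \le 1$, and $p^{(k)} \in \bR_{\ge 0}^D$, I would construct the candidate dominator $p$ coordinate-wise by $p_j := \bigl(\sum_k \lambda_k (p^{(k)}_j)^q\bigr)^{1/q}$. Swapping the order of summation in $\|p\|_q^q$ gives $\|p\|_q^q = \sum_k \lambda_k \|p^{(k)}\|_q^q \le 1$, so $p$ is feasible. To obtain $\langle u_i, p\rangle^q \ge y_i$, define $a_{jk} := \lambda_k^{1/q} u_{ij} p^{(k)}_j$ and pull the nonnegative $u_{ij}$ inside the $q$-th power: this rewrites $\langle u_i, p\rangle = \sum_j \bigl(\sum_k a_{jk}^q\bigr)^{1/q}$ and $y_i^{1/q} = \bigl(\sum_k (\sum_j a_{jk})^q\bigr)^{1/q}$. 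The inequality $\langle u_i, p\rangle \ge y_i^{1/q}$ is then exactly Minkowski's inequality for the $\ell_q$-norm applied to the $D$ rows of the matrix $(a_{jk})$, and raising both sides to the $q$-th power gives coordinate-wise domination of $y$ by the point $y' = (\langle u_i, p\rangle^q)_{i=1}^N \in \Set^q$.

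For the sharper claim $\beta^* = q$ at the standard basis $\{e_1, \ldots, e_D\}$ (so $N = D$), the substitution $z_j = p_j^q$ identifies $\Set^q$ with the simplex $\{z \in \bR_{\ge 0}^D : \sum_j z_j \le 1\}$, which is convex---so \eqref{eq:maxcondition} holds at $\beta = q$ without appeal to the general argument. To show $\beta^* \le q$, I would compute both sides of \eqref{eq:maxcondition} explicitly for $\beta > q$: the same substitution gives $\max_{\Set^\beta} \prod_i y_i = \max_z \prod_j z_j^{\beta/q} = D^{-D\beta/q}$ by AM-GM on the simplex, whereas $\bar{\Set}^\beta$ contains the convex hull of $\{e_1, \ldots, e_D\} \subseteq \Set^\beta$, on which $\prod_i y_i$ attains the value $D^{-D}$ at the centroid. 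For $\beta > q$ and $D \ge 2$, $D^{-D} > D^{-D\beta/q}$, so \eqref{eq:maxcondition} fails and $\beta^* \le q$.

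The main technical step is the first-half construction and the Minkowski application; the key insight is that the coordinate-wise $q$-th-power average $p_j = (\sum_k \lambda_k (p^{(k)}_j)^q)^{1/q}$ is exactly the choice that makes both $\ell_q$-norm feasibility (via Fubini) and the inner-product lower bound (via Minkowski) fall out cleanly. Once this construction is identified, the rest of the argument is routine $\ell_q$-norm manipulation together with AM-GM on the simplex for the tightness half.
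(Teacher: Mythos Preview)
Your proposal is correct and follows essentially the same approach as the paper: the construction $p_j = \bigl(\sum_k \lambda_k (p^{(k)}_j)^q\bigr)^{1/q}$, the feasibility check via swapping sums, and the Minkowski/triangle-inequality step are identical in substance (the paper phrases the last step as the triangle inequality for the weighted norm $\|z\|_\lambda = (\sum_k \lambda_k z_k^q)^{1/q}$, which is exactly your Minkowski application). Your tightness argument for the standard basis is also the same AM-GM versus centroid-of-simplex comparison; your explicit value $D^{-D\beta/q}$ is in fact correct and fixes a typo in the paper's write-up.
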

\noindent Corollary \ref{cor:betap} illustrates that the threshold $\beta^*$ relates closely to the convexity of the cost function and whether the cost function is superlinear. In particular, the cost function must be sufficiently nonconvex for all equilibria to be multi-genre. For example, for the $\ell_{\infty}$-norm, where producers only pay for the highest magnitude coordinate, it is never possible to incentivize specialization: there exists a single-genre equilibrium regardless of $\beta$. On the other hand, for norms where costs aggregate nontrivially across dimensions, specialization is possible. 

In addition to the choice of norm, the  geometry of the user vectors also influences whether multiple genres emerge. To illustrate this, we first show that in a concrete market instance with 2 equally sized populations of users, the threshold depends on the cosine similarity between the two user vectors: 
\begin{restatable}{corollary}{twousers}
\label{cor:2users}
Suppose that there are $N$ users split equally between two linearly independently vectors $u_1, u_2 \in \mathbb{R}_{\ge 0}^D$, and let $\theta^* := \cos^{-1}\left( \frac{\langle u_1, u_2\rangle}{\|u_1\|_2 \|u_2\|} \right)$. Let the cost function be $c(\p) = \|\p\|_{2}^{\beta}$. Then it holds that: 
\[\beta^*  = \frac{2}{1 - \cos(\theta^*)}. \]
\end{restatable}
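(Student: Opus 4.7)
The plan is to apply Theorem \ref{thm:singlegenre} and reduce both sides of condition \eqref{eq:maxcondition} to a one-dimensional optimization. Because the $N$ user vectors take only two distinct values $u_1, u_2$ (with multiplicity $N/2$ each), every $y \in \Set$ has the form $(a, \ldots, a, b, \ldots, b)$ where $(a, b) := (\langle u_1, p\rangle, \langle u_2, p\rangle)$, so both $\Set^{\beta}$ and $\bar{\Set}^{\beta}$ lie on a two-dimensional affine subspace on which $\prod_{i=1}^{N} y_i = z_1^{N/2} z_2^{N/2}$ for $(z_1, z_2) := (a^{\beta}, b^{\beta})$. Rescaling so $\|u_1\|_2 = \|u_2\|_2 = 1$ (which only rescales the cost function and does not affect \eqref{eq:maxcondition}), I would parameterize the Pareto frontier of $S_0 := \{(\langle u_1, p\rangle, \langle u_2, p\rangle) : p \in \bR_{\geq 0}^D,\; \|p\|_2 \leq 1\}$ as $(a,b) = (\cos\phi, \cos(\phi - \theta^*))$ for $\phi \in [0, \theta^*]$; nonnegativity of $p$ in $\bR^D$ is automatic here because the Lagrange-optimal $p$ for the problems of interest is a nonnegative combination of $u_1$ and $u_2$. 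A short Lagrange calculation yields $\max_{(a,b) \in S_0} ab = \cos^2(\theta^*/2)$, so the left-hand side of \eqref{eq:maxcondition} equals $\cos^{N\beta}(\theta^*/2)$.

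For the right-hand side, I would exploit the symmetry of $T^{\beta} := \{(a^{\beta}, b^{\beta}) : (a,b) \in S_0\}$ under $(z_1, z_2) \leftrightarrow (z_2, z_1)$: AM-GM forces a maximizer of $z_1^{N/2} z_2^{N/2}$ on $\bar{T}^{\beta}$ onto the diagonal, and averaging any $(z_1, z_2) \in T^{\beta}$ with its reflection shows $z^* := \max\{z : (z,z) \in \bar{T}^{\beta}\} = \tfrac{1}{2}\max_{\phi \in [0, \theta^*]} g(\phi)$, where $g(\phi) := \cos^{\beta}\phi + \cos^{\beta}(\phi - \theta^*)$. Thus \eqref{eq:maxcondition} is equivalent to
\[
  \max_{\phi \in [0, \theta^*]} g(\phi) \;=\; g(\theta^*/2) \;=\; 2\cos^{\beta}(\theta^*/2).
\]
Symmetry of $g$ about $\theta^*/2$ gives $g'(\theta^*/2) = 0$, and a direct computation yields
\[
  g''(\theta^*/2) \;=\; 2\beta\cos^{\beta - 2}(\theta^*/2)\bigl[(\beta - 1)\sin^2(\theta^*/2) - \cos^2(\theta^*/2)\bigr],
\]
which is nonpositive exactly when $\beta \leq 1 + \cot^2(\theta^*/2) = \csc^2(\theta^*/2) = 2/(1 - \cos\theta^*)$. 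For $\beta$ strictly above this value, $\phi = \theta^*/2$ is a strict local minimum of $g$, so the displayed maximization condition fails and hence $\beta^* \leq 2/(1 - \cos\theta^*)$.

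The main obstacle is the reverse bound: showing that $\phi = \theta^*/2$ is the \emph{global} maximizer of $g$ for every $\beta \leq 2/(1 - \cos\theta^*)$. I would tackle this via the Lagrange equation for interior critical points of $a^{\beta} + b^{\beta}$ on the ellipse $a^2 - 2ab\cos\theta^* + b^2 = \sin^2\theta^*$: setting $r := a/b$, this reduces to
\[
  \rho(r) \;:=\; r^{\beta - 1}(1 - r\cos\theta^*) - (r - \cos\theta^*) \;=\; 0,
\]
with $\rho(1) = 0$ corresponding to the symmetric point. Using $\rho''(r) = (\beta - 1)r^{\beta - 3}[(\beta - 2) - \beta r\cos\theta^*]$, the derivative $\rho'$ attains its maximum on $(0, \infty)$ at $r_2 := (\beta - 2)/(\beta\cos\theta^*)$, and a short calculation gives $\rho'(r_2) = r_2^{\beta - 2} - 1$. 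Since $r_2 \leq 1$ precisely when $\beta \leq 2/(1 - \cos\theta^*)$, the derivative $\rho'$ is nonpositive on $(0, \infty)$ in this regime, so $\rho$ is non-increasing and $r = 1$ is its only positive real root. Because $r = a/b$ is a strict monotone bijection between the Pareto frontier and $[\cos\theta^*, 1/\cos\theta^*]$, this shows $\phi = \theta^*/2$ is the unique interior critical point of $g$; combined with the second-order analysis making it a local maximum and a brief endpoint check $g(\theta^*/2) \geq g(0) = 1 + \cos^{\beta}\theta^*$, it is the global maximum. This establishes $\beta^* = 2/(1 - \cos\theta^*)$.
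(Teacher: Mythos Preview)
Your proof is correct and takes a genuinely different route from the paper's. The paper does not work through condition \eqref{eq:maxcondition} directly; instead, for the upper bound $\beta^* \le 2/(1-\cos\theta^*)$ it invokes the second-order structural machinery developed in Section~\ref{sec:specialcases} (Lemma~\ref{lemma:regionscolor}, via Lemma~\ref{claim:necessarysuff}), arguing that for $\beta$ above threshold any single-genre support line in the $(z_1,z_2)$-plane would be forced to have nonpositive slope, a contradiction. For the lower bound the paper verifies the candidate single-genre distribution directly, which reduces to precisely your inequality $\max_{\phi} g(\phi) = g(\theta^*/2)$ --- but there the paper simply asserts this (``we observe that $\cos^\beta(\theta)+\cos^\beta(\theta^*-\theta)$ is maximized at $\theta = \theta^*/2$'') without justification. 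Your $\rho(r)$ analysis actually establishes this inequality and pins down exactly when it fails, so your argument is both more self-contained (it needs only Theorem~\ref{thm:singlegenre}, not the Section~\ref{sec:specialcases} machinery) and more complete on this step.

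Two minor points. First, your claim that $\rho'$ attains its maximum on $(0,\infty)$ at $r_2 = (\beta-2)/(\beta\cos\theta^*)$ tacitly assumes $\beta > 2$; for $\beta \le 2$ one has $r_2 \le 0$. This range is already handled by Corollary~\ref{cor:betap} (which gives $\beta^* \ge 2$ for the $\ell_2$-norm), so you can simply restrict the $\rho$-analysis to $\beta \ge 2$. Second, the endpoint check you mention is in fact automatic: once $r=1$ is the unique positive root of $\rho$, the map $\phi \mapsto r(\phi) = \cos\phi/\cos(\theta^*-\phi)$ is a strictly monotone bijection from $[0,\theta^*]$ onto $[\cos\theta^*, 1/\cos\theta^*]$, so $g'$ has a unique zero at $\phi = \theta^*/2$; since $g'(0) = \beta\cos^{\beta-1}\theta^*\sin\theta^* > 0$, $g$ is increasing on $[0,\theta^*/2]$ and decreasing on $[\theta^*/2,\theta^*]$, and the global maximum is at $\theta^*/2$ without any separate comparison to the endpoints.
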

\noindent Corollary \ref{cor:2users} demonstrates the threshold $\beta^*$ {increases} as the angle $\theta^*$ between the users decreases (i.e. as the users become closer together), because it is easier to simultaneously cater to all users. In particular, $\beta^*$ interpolates from $2$ when the users are orthogonal to $\infty$ when the users point in the same direction.

Finally, we consider general configurations of users and cost functions, and we upper bound $\beta^*$:
\begin{restatable}{corollary}{betageneral}
\label{cor:beta}
Let $\|\cdot\|_{*}$ denote the dual norm of $\|\cdot\|$, defined to be $\|\p\|_{*} = \max_{\|\p\| = 1, \p \in \bR_{\ge 0}^D} \langle q, \p \rangle$. Let $Z:= \|\sum_{n=1}^{N} \frac{u_n}{\|u_n\|_{*}} \|_{*}$. Then, 
\begin{equation}
\label{eq:betacondition}
\beta^* \le \frac{\log (N)}{\log(N) - \log(Z)}.
\end{equation}
\end{restatable}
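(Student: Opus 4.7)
} The plan is to invoke Theorem \ref{thm:singlegenre}, which says a single-genre equilibrium exists if and only if $\max_{y \in \Set^{\beta}} \prod_i y_i = \max_{y \in \bar\Set^{\beta}} \prod_i y_i$. Thus, to show $\beta^{*}\le \log N/(\log N-\log Z)$, it suffices to show that whenever $\beta$ strictly exceeds this quantity, a strict gap opens between the two quantities: I will upper bound the left-hand side and lower bound the right-hand side and compare.

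\medskip

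For the upper bound on $\max_{y\in\Set^{\beta}}\prod_i y_i$, I would rewrite it as $\max_{p\ge 0,\,\|p\|\le 1}\prod_{i=1}^{N}\langle u_i,p\rangle^{\beta}$. The trick is to factor out the dual-norm scaling from each user and then apply AM--GM to the renormalized quantities:
\[
\prod_{i=1}^{N}\langle u_i,p\rangle \;=\;\Bigl(\prod_{i=1}^{N}\|u_i\|_{*}\Bigr)\prod_{i=1}^{N}\frac{\langle u_i,p\rangle}{\|u_i\|_{*}}
\;\le\;\Bigl(\prod_{i=1}^{N}\|u_i\|_{*}\Bigr)\Bigl(\tfrac{1}{N}\bigl\langle \textstyle\sum_i u_i/\|u_i\|_{*},\,p\bigr\rangle\Bigr)^{N}.
\]
Since $\sum_i u_i/\|u_i\|_{*}$ lies in $\bR_{\ge 0}^{D}$, the inner product is bounded by $Z\|p\|\le Z$ using the definition of the dual norm on the nonnegative orthant. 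Raising to the $\beta$-th power yields $\max_{y\in\Set^{\beta}}\prod_i y_i \le (Z/N)^{N\beta}\prod_i \|u_i\|_{*}^{\beta}$.

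\medskip

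For the lower bound on $\max_{y\in\bar\Set^{\beta}}\prod_i y_i$, I would exhibit a specific point in $\bar\Set^{\beta}$. For each $n$, let $p_n\in\bR_{\ge 0}^{D}$ with $\|p_n\|=1$ attain $\langle u_n,p_n\rangle=\|u_n\|_{*}$, and set $y^{(n)}:=Up_n\in\Set$. Then $(y^{(n)})^{\beta}\in\Set^{\beta}$ with $n$-th coordinate exactly $\|u_n\|_{*}^{\beta}$ and all other coordinates nonnegative. The convex combination $\bar y := \frac{1}{N}\sum_{n=1}^{N}(y^{(n)})^{\beta}$ therefore lies in $\bar\Set^{\beta}$ and satisfies $\bar y_i \ge \tfrac{1}{N}\|u_i\|_{*}^{\beta}$ coordinatewise, giving $\max_{y\in\bar\Set^{\beta}}\prod_i y_i \ge N^{-N}\prod_i\|u_i\|_{*}^{\beta}$.

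\medskip

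Combining the two bounds, a single-genre equilibrium forces $N^{-N}\prod_i\|u_i\|_{*}^{\beta}\le (Z/N)^{N\beta}\prod_i\|u_i\|_{*}^{\beta}$, i.e.\ $N^{-N}\le (Z/N)^{N\beta}$, which (since $Z\le N$ always holds) rearranges to $\beta(\log N-\log Z)\le \log N$. Contrapositively, any $\beta>\log N/(\log N-\log Z)$ makes both sides of \eqref{eq:maxcondition} unequal, so $\beta^{*}\le \log N/(\log N-\log Z)$ by definition of $\beta^{*}$. The main subtlety I anticipate is making sure the dual-norm bound $\langle v,p\rangle\le \|v\|_{*}\|p\|$ applies in the nonnegative-orthant-restricted setting of the paper's definition, but this is immediate because both $v=\sum_i u_i/\|u_i\|_{*}$ and $p$ are nonnegative. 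The bound is vacuous precisely when $Z=N$, i.e.\ when the normalized user vectors $u_n/\|u_n\|_{*}$ are collinear in the sense that $\|\sum_n u_n/\|u_n\|_{*}\|_{*}$ saturates the triangle inequality, which matches the intuition that highly aligned users make specialization unnecessary.
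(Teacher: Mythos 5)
Your proposal is correct and follows essentially the same route as the paper's proof: an AM--GM upper bound on $\max_{y\in\Set^{\beta}}\prod_i y_i$ via the dual-norm bound on $\langle \sum_i u_i/\|u_i\|_{*},\,p\rangle$, and a lower bound on $\max_{y\in\bar{\Set}^{\beta}}\prod_i y_i$ from the uniform convex combination of the per-user dual-norm maximizers. The only difference is cosmetic: the paper normalizes the users to unit dual norm at the outset, whereas you carry the factor $\prod_i\|u_i\|_{*}^{\beta}$ through both bounds and cancel it at the end.
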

\noindent In equation \eqref{eq:betacondition}, the upper bound on the threshold $\beta^*$ increases as $Z$ increases. As an example, consider the cost function $c(\p)=\|\p\|_2^{\beta}$. We see that if the user vectors point in the same direction, then $Z = N$ and the right-hand side of \eqref{eq:betacondition} is $\infty$. On the other hand, if $u_1,\ldots, u_n$ are orthogonal, then $Z = \sqrt{N}$ and the right-side of \eqref{eq:betacondition} is 2, which exactly matches the bound in Corollary \ref{cor:betap}. In fact, for \textit{random} vectors $u_1, \ldots, u_N$ drawn from a truncated gaussian distribution, we see that $Z = \tilde{O}(\sqrt{N})$ in expectation, in which case the right-hand side of \eqref{eq:betacondition} is close to 2 as long as $N$ is large. Thus, for many (but not all) choices of user vectors, even small values of $\beta$ are enough to induce multiple genres. In Section \ref{subsec:empirical}, we compute the right-hand side of \eqref{eq:betacondition} on user embeddings generated from the MovieLens dataset for different cost functions.

\subsection{Location of single-genre equilibrium}\label{subsec:location}

We next study {where} the single-genre equilibrium is located, in cases where it exists. As a consequence of the proof of Theorem \ref{thm:singlegenre}, we can show that the location of the single-genre equilibrium maximizes the \textit{Nash social welfare} \cite{Nash50} of the users. 
\begin{restatable}{corollary}{singlegenrestructure}
\label{cor:singlegenrestructure}
If there exists $\mu$ with $|\Genre(\mu)| = 1$, then the corresponding producer direction maximizes Nash social welfare of the users:
\begin{equation}
\label{eq:direction}
 \Genre(\mu) = \argmax_{\|\p\|=1 \mid \p \in \bR_{\ge 0}^D} \sum_{i=1}^N \log(\langle \p, u_i\rangle).    
\end{equation}
\end{restatable}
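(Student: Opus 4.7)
The plan is to leverage the characterization of single-genre equilibrium directions established en route to Theorem~\ref{thm:singlegenre} (via Lemma~\ref{lemma:optimizationprogram}), and then deduce the Nash social welfare property by a short AM--GM argument. The whole proof should fit in a few lines once this characterization is in hand.

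First, I would invoke the characterization described in the proof sketch of Lemma~\ref{lemma:optimizationprogram}: a unit-norm direction $p^* \in \bR_{\ge 0}^D$ supports a single-genre equilibrium if and only if, setting $y^*_i := \langle u_i, p^* \rangle^{\beta}$, one has $\sup_{y' \in \Set^{\beta}} \sum_{i=1}^N y'_i/y^*_i \le N$. Unpacking the definition of $\Set^{\beta}$, this translates into the feasibility inequality
\[
\sum_{i=1}^N \left( \frac{\langle u_i, p \rangle}{\langle u_i, p^* \rangle} \right)^{\beta} \;\le\; N \qquad \text{for every } p \in \bR_{\ge 0}^D \text{ with } \|p\| \le 1.
\]

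Second, I would apply the AM--GM inequality to the $N$ nonnegative quantities $a_i(p) := \bigl(\langle u_i, p \rangle / \langle u_i, p^* \rangle\bigr)^{\beta}$: their arithmetic mean is at most $1$ by the displayed bound, so their geometric mean is at most $1$ as well. Taking a $\beta$-th root and then a logarithm yields $\sum_i \log \langle u_i, p \rangle \le \sum_i \log \langle u_i, p^* \rangle$ for every feasible $p$. Since $\sum_i \log \langle u_i, p \rangle$ is strictly increasing under the rescaling $p \mapsto \lambda p$ with $\lambda > 1$, its maximum over $\{p : \|p\| \le 1,\, p \ge 0\}$ is attained on the boundary $\|p\| = 1$, which matches the constraint set in \eqref{eq:direction}. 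This gives $p^* \in \argmax_{\|p\|=1,\, p \in \bR_{\ge 0}^D} \sum_{i=1}^N \log \langle u_i, p \rangle$.

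The only substantive step is the first, and it is not really an obstacle because the required ``iff'' is already packaged inside the proof of Theorem~\ref{thm:singlegenre}; the conceptual content of the corollary is that the per-direction feasibility bound, which a priori is only a no-profitable-deviation condition for producers, happens to be exactly what AM--GM converts into Nash social welfare optimality. An alternative but messier route would be to match the KKT conditions at $p^*$ for the two optimization problems and use concavity of the log-sum objective on the positive orthant to promote stationarity to global optimality; the AM--GM route avoids any worry about the non-convex sphere constraint.
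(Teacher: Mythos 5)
Your proposal is correct and follows essentially the same route as the paper: the paper also invokes the per-direction feasibility condition $\sup_{y' \in \Set^{\beta}} \sum_{i=1}^N y'_i / \langle \p^*, u_i\rangle^{\beta} \le N$ (its Lemma~\ref{lemma:optsingledirection}), applies AM--GM to conclude $\prod_i \langle \p^*, u_i\rangle^{\beta} \ge \sup_{y' \in \Set^{\beta}} \prod_i y'_i$, and then passes from the ball to the sphere exactly as you do.
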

\noindent Corollary \ref{cor:singlegenrestructure} demonstrates that the single-genre equilibrium directions maximizes the Nash social welfare \cite{Nash50} for users. Interestingly, this measure of welfare for \textit{users} is implicitly maximized by \textit{producers} competing with each other in the marketplace. Properties of the Nash social welfare are thus inherited by single-genre equilibria. In particular, since the Nash social welfare corresponds to the logarithm of the geometric mean of the inferred user values, the Nash social welfare strikes a compromise between fairness (balancing inferred user values of different users) and efficiency (the sum of the inferred user values achieved across all users)---this means that the single-genre equilibria exhibit the same tradeoff between fairness and efficiency.

We note that this welfare result relies on the assumption that all producers choose the same direction of content. In particular, at multi-genre equilibria, the Nash social welfare could be even higher due to specialization leading to personalization. On the other hand, the reduced amount of competition at multi-genre equilibria may end up lowering the quality of goods. We defer an in-depth analysis of the welfare implications of supply-side competition to future work.

\subsection{Empirical analysis on the MovieLens dataset}\label{subsec:empirical}

We provide an empirical analysis of supply-side equilibria using the MovieLens-100K dataset and recommendations based on nonnegative matrix factorization (NMF). In particular, we compute the single-genre equilibrium direction for different cost functions as well as estimates of $\beta^*$ (i.e., the threshold where specialization starts to occur) for different values of the dimension $D$. These experiments provide qualitative insights that offer additional intuition for our theoretical results. 

We focus on the rich family of cost functions $c_{q, \alpha, \beta}(\p) = \|\left[p_1 \cdot \alpha_1, \ldots, p_D \cdot \alpha_D\right]\|_q^{\beta}$ parameterized by weights $\alpha \in \mathbb{R}_{\geq 0}^D$, parameter $q \ge 1$, and cost function exponent $\beta \ge 1$. The weights $\alpha \in \mathbb{R}_{\geq 0}^D$ capture asymmetries in the costs of different dimensions (a higher value of $\alpha_i$ means that dimension $i$ is more costly). The parameters $q$ and $\beta$ together capture the tradeoffs between improving along a single dimension versus simultaneously improving among many dimensions. To isolate the impact of each parameter, we either fix $q = 2$ and vary $\alpha$ (and $\beta$) or we fix $\alpha= [1, 1, \ldots, 1]$ and vary $q$ (and $\beta$).

\begin{figure}[t]
  \centering
  \begin{subfigure}{0.32\textwidth}
    \includegraphics[width=\linewidth]{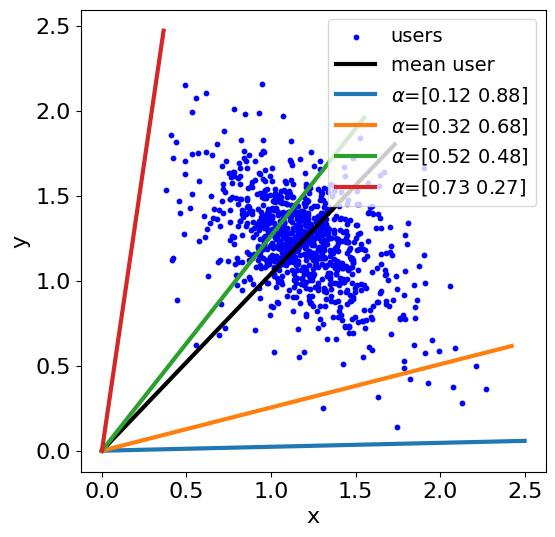}
    \caption{$p^*$ for $D=2$ and $q = 2$}
    \label{fig:dim2-costs}
  \end{subfigure}\hfill
  \begin{subfigure}{0.32\textwidth}
    \includegraphics[width=\linewidth]{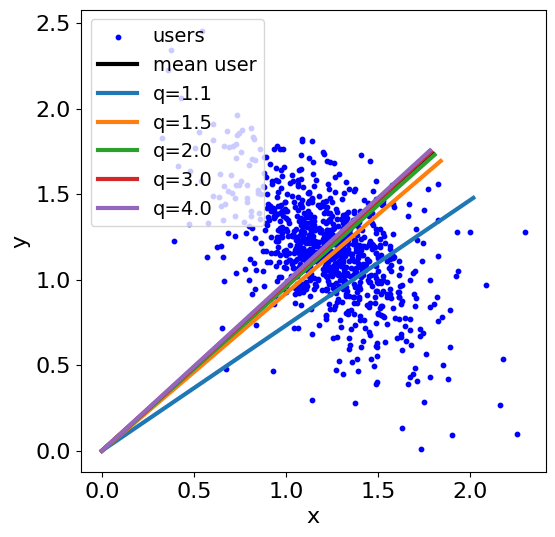}
    \caption{$p^*$ for $D=2$ and $\alpha = [1, 1]$}
    \label{fig:dim2-q}
  \end{subfigure}
    \begin{subfigure}{0.32\textwidth}
    \includegraphics[width=\linewidth]{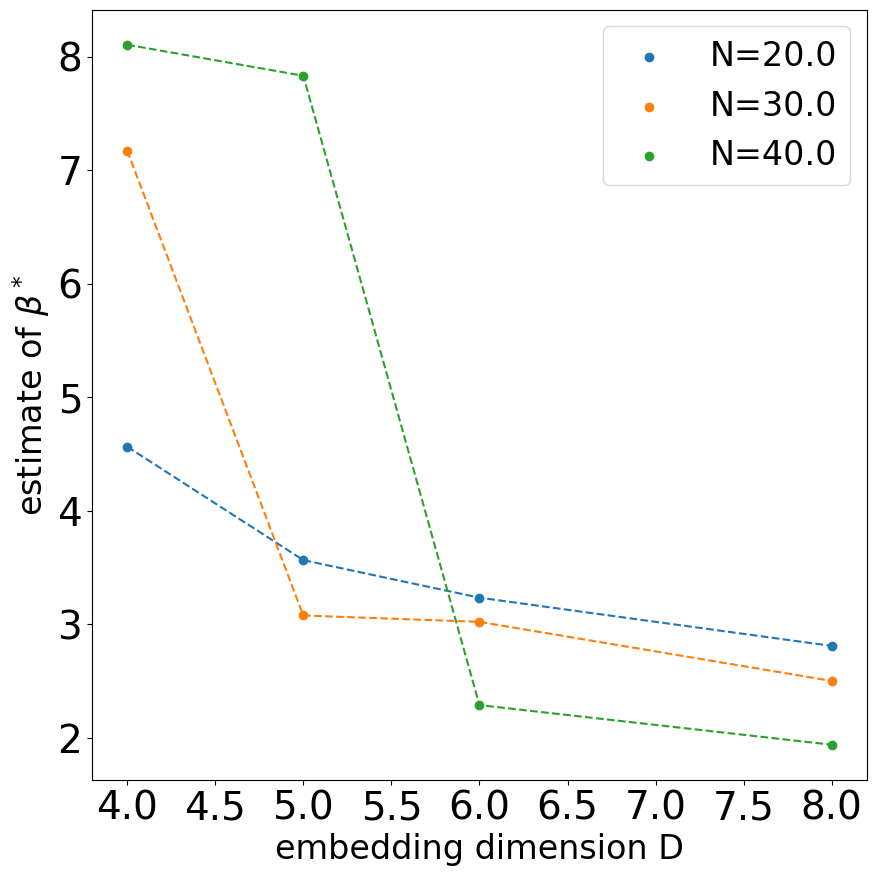}
    \caption{$\beta_e$ for $\alpha = [ 1, \ldots, 1]$ and $q = 2$}
    \label{fig:beta-e}
  \end{subfigure}
\begin{subfigure}{0.32\textwidth}\hfill
    \includegraphics[width=\linewidth]{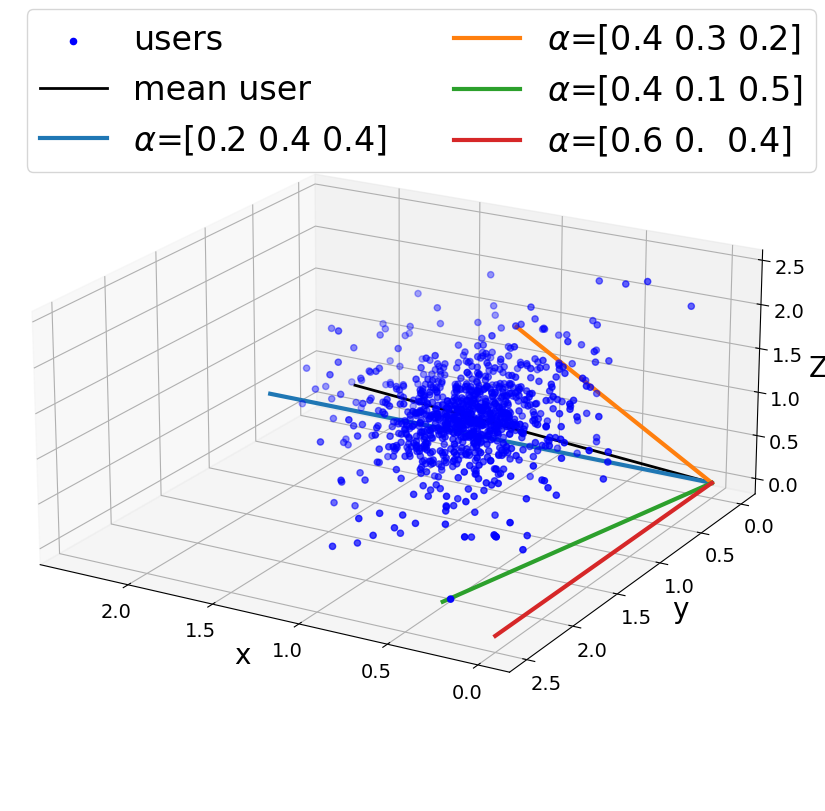}
      \caption{$p^*$ for $D=3$ and $q = 2$}
    \label{fig:dim3-costs}
  \end{subfigure}\hfill
  \begin{subfigure}{0.32\textwidth}
    \includegraphics[width=\linewidth]{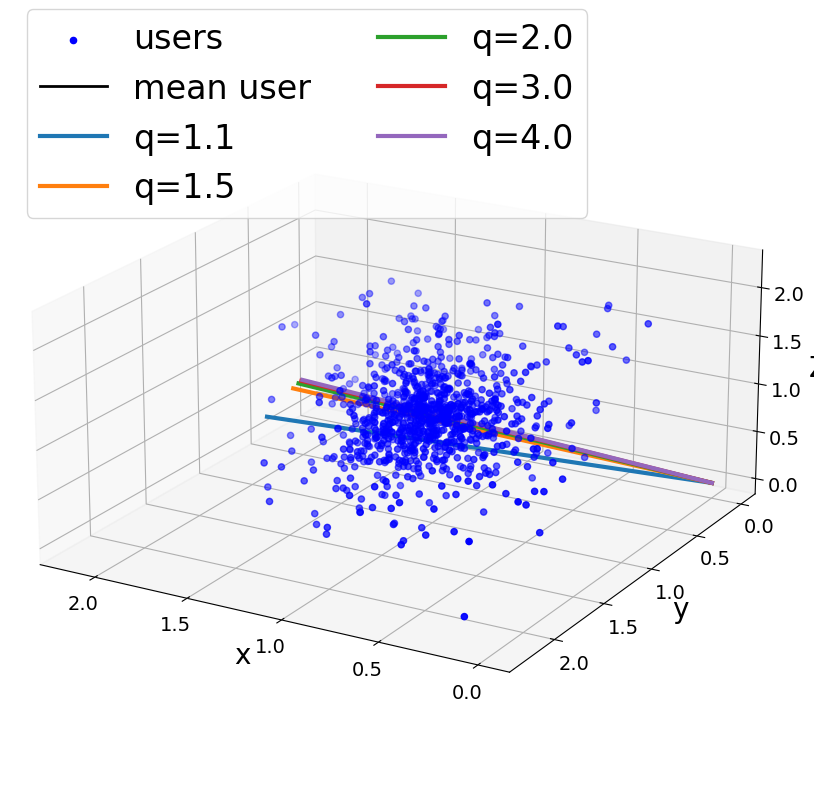}
      \caption{$p^*$ for $D=3$ and $\alpha = [1,1,1]$}
    \label{fig:dim3-q}
  \end{subfigure}\hfill
  \begin{subfigure}{0.32\textwidth}
    \includegraphics[width=\linewidth]{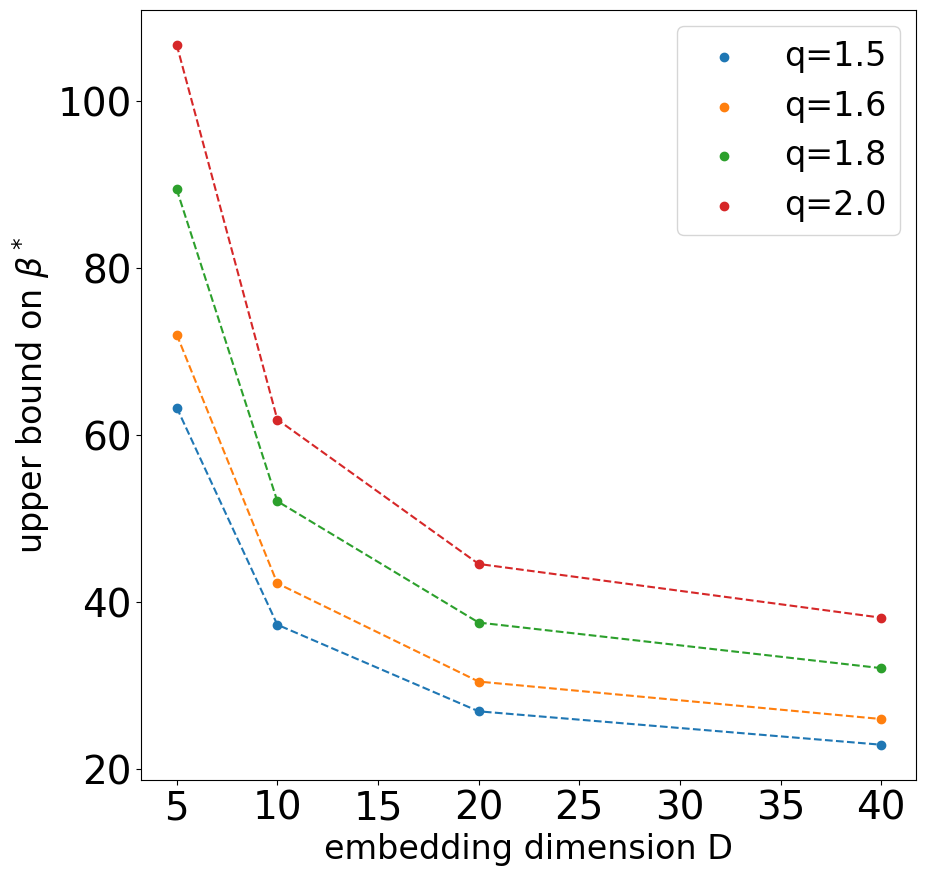}
    \caption{$\beta_u$ for $\alpha = [ 1, \ldots, 1]$}
    \label{fig:beta-u}
  \end{subfigure}
  \caption{Empirical analysis of supply-side equilibria on the MovieLens-100K dataset. \textit{Plots (a), (b), (d), (e):} Single-genre equilibrium direction $p^*$ (computed using Corollary \ref{cor:singlegenrestructure}) for different cost function weights $\alpha \in \mathbb{R}_{\ge 0}^D$ and parameters $q \ge 1$ as well as for different dimensions $D \ge 1$. Interestingly, the single-genre equilibrium direction is generally not aligned with the arithmetic mean and places a higher weight on cheaper dimensions. \textit{Plots (c) and (f):} Estimates $\beta_e$ and $\beta_u$ of the threshold $\beta^*$ where specialization starts to occur for different values of norm parameter $q$ and the number of users $N$. Observe that higher values of $D$ make specialization more likely to occur. }
  \label{fig:empirical-analysis}
\end{figure}

\paragraph{Setup.} The MovieLens 100K dataset consists of 943 users, 1682 movies, and 100,000 ratings \citep{movielens}. For $D \in \left\{2,3, 4, 5, 6, 8, 10, 20, 40\right\}$, we obtain $D$-dimensional user embeddings by running NMF (with \(D\) factors) using the scikit-surprise library.  We calculate the single-genre equilibrium genre $\p^* = \argmax_{\|\p\|=1 \mid \p \in \bR_{\ge 0}^D} \sum_{i=1}^N \log(\langle \p, u_i\rangle)$ (Corollary \ref{cor:singlegenrestructure}) by solving the optimization program, using the cvxpy library for $q = 2$ and projected gradient descent for $q \neq 2$. We calculate the upper bound $\beta_u := \frac{\log (N)}{\log(N) - \log\left(\|\sum_{n=1}^{N} \frac{u_n}{\|u_n\|_{*}} \|_{*} \right)} \ge \beta^*$ from Corollary \ref{cor:beta}. We calculate another estimate $\beta_e$ by binary searching and estimating whether \eqref{eq:maxcondition} holds at each candidate value $\beta$ as follows: we estimate $\max_{y \in \Set^{\beta}} \prod_{i=1}^N y_i$ using the cvxpy library and we estimate $\max_{y \in \bar{\Set}^{\beta}} \prod_{i=1}^N y_i$ by taking $\bar{S}^{\beta}$ to be the convex hull of randomly drawn points. For computational reasons, when computing the estimate $\beta_e$, we consider a restricted dataset consisting of $N \in \left\{20, 30, 40 \right\}$ randomly chosen users and focus on $q = 2$. See Appendix \ref{appendix:empiricalsetup} for details of the empirical setup.\footnote{The code is available at \url{https://github.com/mjagadeesan/supply-side-equilibria} .}

\paragraph{Single-genre equilibrium direction $p^*$.} Figures \ref{fig:dim2-costs}, \ref{fig:dim2-q}, \ref{fig:dim3-costs}, and \ref{fig:dim3-q} show the direction of the single-genre equilibrium $p^*$ across different cost functions. These plots uncover several properties of the genre $p^*$. First, the genre generally does not coincide with the arithmetic mean of the users.  Moreover, the genre varies significantly with the weights $\alpha$. In particular, the magnitude of the dimension $\p_i$ is higher if $\alpha_i$ is \textit{lower}, which aligns with the intuition that producers invest more in cheaper dimensions. In contrast, the genre turns out to not change significantly with the norm parameter. Altogether, these insights illustrate that how the genre can be influenced by specific aspects of producer costs.

\paragraph{Threshold $\beta^*$ where specialization starts to occur.} Figures \ref{fig:beta-e} and \ref{fig:beta-u} show the value of $\beta_e$ and $\beta_u$ across different values of $D$, $q$, and $N$. As the dimension \(D\) increases, the estimate $\beta_e$ and the upper bound $\beta_u$ both generally decrease, indicating that specialization is more likely to occur. The intuition is that \(D\) amplifies the heterogeneity of user embeddings, subsequently increasing the likelihood of specialization. This insight has an interesting consequence for platform design: \textit{the platform can  influence the level of specialization by tuning the number of factors $D$ used in matrix factorization}. Producer costs also impact whether specialization occurs: as the norm $q$ increases, the value of $\beta_u$ increases and specialization is less likely to occur.

\section{Equilibrium structure for two equally sized populations of users}\label{sec:specialcases}

We next investigate the form of specialization exhibited by multi-genre equilibria, focusing on the case of {two equally sized populations} and producer cost functions given by powers of the $\ell_2$ norm. More formally, there are $N$ users split equally between two linearly independently vectors $u_1, u_2 \in \mathbb{R}_{\ge 0}^D$, and the the cost function is $c(\p) = \|p\|_2^{\beta}$. We establish \textit{structural properties} of the equilibria (see Section \ref{sec:structural}). We next concretely compute the equilibria $\mu$ in several special instances that permit \textit{closed-form solutions} (see Section \ref{sec:closedformsb}-\ref{sec:closedforminfinite}). We then provide an overview of proof techniques, which involves developing machinery to characterize these equilibria (see Section \ref{sec:proofoverview}).

\subsection{Structural properties of equilibria}\label{sec:structural}

We first establish properties about the \textit{support} of the equilibrium distributions $\mu$. First, we show that the support of cannot contain an $\epsilon$-ball for any $\epsilon$ and is thus 1-dimensional.
\begin{restatable}{proposition}{supportrestriction}
\label{prop:supportrestriction}
Suppose that there are $N$ users split equally between two linearly independently vectors $u_1, u_2 \in \mathbb{R}_{\ge 0}^2$, and let $\theta^* := \cos^{-1}\left( \frac{\langle u_1, u_2\rangle}{\|u_1\|_2 \|u_2\|} \right)$ be the angle between the user vectors. Let the cost function be $c(\p) = \|p\|_2^{\beta}$, and let $P \ge 2$. Let $\mu$ be a symmetric Nash equilibrium such that the distributions $\langle u_1, \p\rangle$ and $\langle u_2, \p\rangle$ over $\mathbb{R}_{\ge 0}$ are absolutely continuous. As long as $\beta \neq 2$ or $\theta^* \neq \pi /2$, the support of $\mu$ does not contain an $\ell_2$-ball of radius $\epsilon$ for any $\epsilon > 0$.\footnote{The case of $\beta = 2$ and $\theta^* = \pi/2$ is degenerate and permits a range of possible equilibria.} 
\end{restatable}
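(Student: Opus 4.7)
The plan is to exploit the equilibrium indifference condition on the hypothesised ball. With two equally sized populations and atomless, absolutely continuous marginals $F_i$ for $\langle u_i,p\rangle$, tie probabilities vanish and the expected profit of a deviation to $p$ has the clean closed form
\[
\Pi(p) \;=\; \tfrac{N}{2}\,F_1(\langle u_1,p\rangle)^{P-1} \;+\; \tfrac{N}{2}\,F_2(\langle u_2,p\rangle)^{P-1} \;-\; \|p\|_2^{\beta}.
\]
At a symmetric equilibrium, every point in the support is a best response, so $\Pi$ equals a common constant $C$ on the support, and, by continuity, on its closure. If the support contains some $\ell_2$-ball $B(p_0,\epsilon)$, I shrink it so that it lies in the interior of $\mathbb R_{\geq 0}^2$ and is bounded away from the origin, and I then have $\Pi \equiv C$ on an open set.

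Since $u_1,u_2$ are linearly independent, the linear map $p\mapsto z=(\langle u_1,p\rangle,\langle u_2,p\rangle)$ is a bijection $\mathbb R^2\to\mathbb R^2$; let $M$ be its inverse. The indifference equation becomes, on an open image set $\widetilde B$,
\[
g_1(z_1) + g_2(z_2) \;=\; C + Q(z)^{\beta/2}, \qquad g_i(t) := \tfrac{N}{2} F_i(t)^{P-1},
\]
where $Q(z) := \|Mz\|_2^2 = a z_1^2 + 2b z_1 z_2 + c z_2^2$ is a positive-definite quadratic form (so $a,c>0$ and $ac>b^2$). Because the left-hand side is a sum of univariate functions, its mixed partial $\partial_{z_1}\partial_{z_2}$ vanishes in the distributional sense; equivalently, fixing $z_2$ shows $g_1$ is as smooth as $Q^{\beta/2}$ on the $z_1$-projection, and likewise for $g_2$, so the mixed partial vanishes classically. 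Consequently $\partial_{z_1}\partial_{z_2} Q(z)^{\beta/2} \equiv 0$ on $\widetilde B$.

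A direct computation yields
\[
\partial_{z_1}\partial_{z_2} Q^{\beta/2}
= \beta\, Q^{\beta/2-2}\Bigl[(\beta-2)(az_1+bz_2)(bz_1+cz_2) + b\, Q(z)\Bigr],
\]
and expanding the bracket gives the polynomial
\[
ab(\beta-1)\, z_1^2 \;+\; \bigl[(\beta-2)ac + \beta b^2\bigr] z_1 z_2 \;+\; bc(\beta-1)\, z_2^2,
\]
which must vanish identically on an open set, hence coefficient-by-coefficient. I then case-split: if $\beta=1$ the middle coefficient forces $b^2=ac$, contradicting positive-definiteness; if $\beta\neq 1$ the outer coefficients force $b=0$ (since $a,c>0$), and then the middle coefficient $(\beta-2)ac=0$ forces $\beta=2$. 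Finally, with $U=[u_1^\top;u_2^\top]$ and $M=U^{-1}$, one has $M^\top M=(UU^\top)^{-1}$, so $b=\langle m_1,m_2\rangle=0$ iff $UU^\top$ is diagonal iff $\langle u_1,u_2\rangle=0$ iff $\theta^*=\pi/2$. Thus an $\epsilon$-ball in the support forces exactly the excluded degenerate configuration $(\beta,\theta^*)=(2,\pi/2)$, giving the desired contradiction.

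The main obstacle I expect is the regularity bookkeeping: the CDFs $F_i$ are only assumed absolutely continuous, not $C^2$, so the mixed-partial step must be justified either by the distributional argument above or by the separability trick (reading off smoothness of $g_1,g_2$ from the smooth right-hand side after fixing one coordinate). Small technical nuisances---keeping the ball inside the positive orthant and away from the origin (where $Q^{\beta/2}$ loses smoothness), and checking that the indifference identity extends from $\mu$-almost-every point to all of the closed support---are handled by continuity of $\Pi$ and shrinking $\epsilon$. Once these are in place, the computation above is purely algebraic and yields the conclusion directly.
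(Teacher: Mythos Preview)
Your proposal is correct and follows essentially the same route as the paper: reparameterize to the inferred-value coordinates $z=(\langle u_1,p\rangle,\langle u_2,p\rangle)$, use equilibrium indifference on the ball to force the mixed partial $\partial_{z_1}\partial_{z_2}c_{\mathbf U}(z)$ to vanish on an open set, and then show this is impossible unless $(\beta,\theta^*)=(2,\pi/2)$. The only cosmetic difference is in the last step: the paper parameterizes $z=[r\cos\theta,\,r\cos(\theta^*-\theta)]$ and shows the mixed partial has the sign of $\tfrac{\beta-2}{\beta}\cos(\theta^*-2\theta)-\cos(\theta^*)$, which cannot vanish on an interval of $\theta$; you instead expand the bracket as a quadratic in $(z_1,z_2)$ and kill the coefficients, which is a bit more self-contained and avoids the trigonometric bookkeeping.
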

\noindent Proposition \ref{prop:supportrestriction} demonstrates that the support of $\mu$ must be a union of 1-dimensional curves. In the single-genre regime, the support is always a line segment through the origin. In the multi-genre regime, however, the support can be curves with different shapes (see Figure \ref{fig:equilibriaP} for specific examples). We will later characterize where these curves are increasing or decreasing in terms of the location of the curve, the angle $\theta^* = \cos^{-1}\left(\frac{\langle u_1, u_2\rangle}{\|u_1\| \|u_2\|} \right)$, and the cost function parameter $\beta$ (Lemma \ref{lemma:regionscolor}).

We next show that all equilibria must have either one or infinitely many genres, dictated by whether $\beta$ is above or below the critical value $\beta^*$ (see Figure~\ref{fig:equilibria}): 
\begin{restatable}{theorem}{phasetransitionformal}
\label{thm:phasetransitionformal}
Suppose that there are $N$ users split equally between two linearly independently vectors $u_1, u_2 \in \mathbb{R}_{\ge 0}^D$, and let $\theta^* := \cos^{-1}\left( \frac{\langle u_1, u_2\rangle}{\|u_1\|_2 \|u_2\|} \right)$ be the angle between the user vectors. Let the cost function be $c(\p) = \|p\|_2^{\beta}$. Let $\mu$ be a a distribution on $\mathbb{R}^d$ such that the distributions $\langle u_1, \p\rangle$ and $\langle u_2, \p\rangle$ over $\mathbb{R}_{\ge 0}$ over $\mathbb{R}_{\ge 0}$ for $p \sim \mu$ are absolutely continuous and twice continuously differentiable within their supports. There are two regimes based on $\beta$ and $\theta^*$: 
\begin{enumerate}
    \item If $\beta < \beta^* = \frac{2}{1 - \cos(\theta^*)}$ and if $\mu$ is a symmetric mixed equilibrium, then $\mu$ satisfies $|\Genre(\mu)| = 1$. 
    \item If $\beta > \beta^* = \frac{2}{1 - \cos(\theta^*)}$, if $|\Genre(\mu)| < \infty$, and if the conditional distribution of $\|p\|$ along each genre is continuously differentiable, then $\mu$ is not an equilibrium. 
\end{enumerate}
\end{restatable}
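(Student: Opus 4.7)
The plan is to reduce the equilibrium conditions to one-dimensional functional equations for the user-value marginals
\[
F_i(t) := \Pr_{p\sim\mu}[\langle u_i, p\rangle \le t], \qquad i = 1, 2,
\]
in the spirit of the decoupling idea of Lemma~\ref{claim:necessarysuff}. The expected profit of playing $p$ against $P-1$ i.i.d.\ competitors drawn from $\mu$ is
\[
\pi(p) \;=\; \tfrac{N}{2}F_1(\langle u_1, p\rangle)^{P-1} + \tfrac{N}{2}F_2(\langle u_2, p\rangle)^{P-1} - \|p\|_2^{\beta}.
\]
At equilibrium $\pi$ is constant on $\text{supp}(\mu)$ and never exceeds this constant elsewhere. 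Since $\text{supp}(\mu)$ is $1$-dimensional by Proposition~\ref{prop:supportrestriction}, indifference along each genre $g$ (with $\|g\|_2=1$) yields the ODE
\[
\tfrac{N(P-1)}{2}\!\left[\langle u_1,g\rangle F_1^{P-2}F_1'\bigl(\lambda\langle u_1,g\rangle\bigr) + \langle u_2,g\rangle F_2^{P-2}F_2'\bigl(\lambda\langle u_2,g\rangle\bigr)\right] \;=\; \beta \lambda^{\beta-1},
\]
valid on the quality range of $g$. This is the backbone of both parts.

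For Part 1 (below threshold), I would combine this indifference ODE with the minimax/duality characterization from Theorem~\ref{thm:singlegenre} and Corollary~\ref{cor:2users}. When $\beta < \beta^*$, the program $\max_{y \in \Set^{\beta}} \prod_i y_i$ is attained at a \emph{unique} direction that also uniquely maximizes the Nash social welfare (Corollary~\ref{cor:singlegenrestructure}), and the optimum on $\Set^\beta$ coincides with that on the convex hull $\bar{\Set}^\beta$ with strict slack away from this maximizer. If $\mu$ were a multi-genre equilibrium, the indifference condition would force every genre in the support to be a best response against the induced $F_1, F_2$, hence to be an optimizer of the same Nash-welfare-type program; strict uniqueness of that maximizer then yields a contradiction.

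For Part 2 (above threshold), I would argue by contradiction. Suppose $\mu$ has $2 \le k := |\Genre(\mu)| < \infty$ genres $g_1, \dots, g_k$ with weights $w_1, \dots, w_k$ and $C^1$ conditional quality CDFs $G_1, \dots, G_k$. Writing $F_\ell(t) = \sum_{i=1}^k w_i G_i(t/\langle u_\ell, g_i\rangle)$ makes $F_1, F_2$ piecewise-$C^1$ with breakpoints at the endpoints of each genre's quality range. The indifference ODE along each $g_i$ must hold on an open interval, and at any interval boundary where a genre \emph{enters} or \emph{leaves} the support the one-sided derivatives of $\pi$ must vanish from inside and be nonpositive from outside. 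I would exploit this at the boundaries of the quality ranges, where a term in the ODE discontinuously switches on/off, to extract algebraic conditions on $(\langle u_1, g_i\rangle, \langle u_2, g_i\rangle)$. Summing these boundary relations across the $k$ genres, I expect the $\prod_i y_i$-maximization of Theorem~\ref{thm:singlegenre} to re-emerge as a constraint on $\beta$, forcing $\beta \le 2/(1-\cos\theta^*) = \beta^*$ and contradicting the hypothesis.

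The main obstacle is Part 2: ruling out \emph{all} finite multi-genre equilibria, rather than just the symmetric two-genre case, since the global coupling between $F_1, F_2$ and the local ODE along each genre must be tracked simultaneously, and the piecewise structure of the CDFs admits many combinatorial configurations of how the genre-ranges along $u_1$ and $u_2$ can overlap. The cleanest route I foresee is an endpoint-by-endpoint argument showing that consistency of the boundary conditions on all quality intervals forces either the insertion of a new genre (so $k$ cannot be finite, driving us to the infinite-genre regime of Theorem~\ref{thm:infinitegenre}) or the inequality $\beta \le \beta^*$.
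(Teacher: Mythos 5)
Your overall framing---reducing to the one-dimensional marginals $F_1,F_2$ and imposing indifference along each genre---matches the paper's strategy (Lemma~\ref{claim:necessarysuff}), but both parts of your argument have genuine gaps. For Part~1, the step ``every genre in the support must optimize the same Nash-welfare-type program'' does not follow. The Nash-welfare characterization (Corollary~\ref{cor:singlegenrestructure}) is derived under the assumption that the equilibrium is single-genre, because only then does Lemma~\ref{lemma:cdf} pin down the marginals as $H_i(z)\propto z^{\beta}$, which is what makes the best-response condition homogeneous and lets the product/duality argument of Theorem~\ref{thm:singlegenre} kick in. At a hypothetical multi-genre equilibrium the induced $H_1,H_2$ are unknown, so each support point is a best response to distributions you have not yet determined, and no Nash-welfare program emerges. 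The paper's actual route is different: it uses the \emph{second-order} condition to show that for $\beta<\beta^*$ the cross-partial $\partial^2 c_{\UMatrix}/\partial z_1\partial z_2$ is strictly negative everywhere, which forces the support (in the reparameterized space $z=\UMatrix p$) to be the graph of a strictly increasing $C^1$ function $g$; combining the first-order condition with the change-of-variables relation between the marginals then yields the ODE $g'(w)=\bigl(w-g(w)\cos\theta^*\bigr)/\bigl(g(w)-w\cos\theta^*\bigr)$, whose only admissible solution is $g(w)=w$, i.e.\ a single genre. Your proposal never engages with the second-order condition, which is where the hypothesis $\beta<\beta^*$ actually enters.

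For Part~2, your plan is directionally reasonable but stops well short of a proof, and the contradiction you anticipate is not the one that actually closes the argument. The paper first shows that the supports of $H_1$ and $H_2$ must be full intervals $[0,z_i^{\max}]$, then uses the first-order condition together with a connectedness argument (a path argument on a graph whose vertices are genres, joined when their quality ranges overlap on the $z_1$-axis) to show that \emph{all} genres satisfy the same pair of algebraic identities $\sin(\theta^*-\theta)/\cos^{\beta-1}(\theta)=\mathrm{const}$ and $\sin(\theta)/\cos^{\beta-1}(\theta^*-\theta)=\mathrm{const}$; combined with the second-order condition this forces exactly two genres, one on each side of $\theta^*/2$. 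The final contradiction is then \emph{not} that $\beta\le\beta^*$: it is that the two-genre functional equations (your mixture identities for $F_1,F_2$, differentiated) force one of the conditional quality densities to be negative somewhere, via casework on which genre attains the larger value of $z_1^{\max}$ and $z_2^{\max}$. Your ``sum the boundary relations and recover the product-maximization constraint'' step is the part that would need to be invented from scratch, and as stated there is no reason to expect it to produce the inequality $\beta\le\beta^*$; the obstruction to finite-genre equilibria above threshold is a nonexistence-of-solutions phenomenon for the coupled functional equations, not a violation of the duality condition of Theorem~\ref{thm:singlegenre}.
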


 Theorem \ref{thm:phasetransitionformal} provides a tight characterization of when specialization occurs in a marketplace: specialization occurs \textit{if and only if} $\beta$ is above $\beta^*$ (subject to some mild continuity conditions). The threshold $\beta^*$ can thus be interpreted as a \textit{phase transition} at which the equilibrium transitions from single-genre to infinitely many genres (see Figure \ref{fig:equilibria}). More specifically, the first part of Theorem \ref{thm:phasetransitionformal} strengthens Theorem \ref{thm:singlegenre} to show that \emph{all} equilibria are single-genre when $\beta < \beta^*$, which means that producers are \textit{never} incentivized to specialize in this regime. The equality condition $\beta = \beta^*$ captures the transition point where both single-genre and multi-genre equilibria can exist. 
 
 In the multi-genre regime where $\beta \ge \beta^*$, 
 Theorem~\ref{thm:phasetransitionformal} shows that 
 producers do not fully personalize content to either of the two users $u_1$ and $u_2$, or even choose between finitely many types of content. Rather, producers choose infinitely many types of content that balance the preferences of the two populations in different ways. The lack of coordination between producers---as captured by a symmetric mixed Nash equilibrium---is what drives this result. Producers do not know exactly what content other producers will create in a given realization of the randomness, which results in a diversity of content on the platform.

\subsection{Closed-form equilibria for the standard basis vectors}\label{sec:closedformsb} 
We next compute the equilibria in the special case of user vectors located at the standard basis vectors, and we analyze the form of specialization that the equilibria exhibit. For ease of notation, for the remainder of the section, we assume these populations each consist of a \textit{single} user (these results can be easily adapted to the case of $N/2$ users in each population).

Interestingly, all of these multi-genre equilibria exhibit the following \textit{relaxation of pure horizontal differentiation}: producers can differentiate along genre, but the genre of content fully specifies the content's quality. More specifically, for any genre $\p^* \in \Genre(\mu)$, the set $\Genre(\mu) \cap \left\{ q \cdot \p^* \mid q \in \mathbb{R}^{\ge 0}\right\}$ contains exactly one single element.\footnote{Pure horizontal differentiation is not satisfied, since content in different genres may not have the same quality (see Figure \ref{fig:equilibriaP}).} This stands in contrast to single-genre equilibria, which by definition exhibit pure vertical differentiation.\footnote{Pure vertical differentiation is when producers only differentiate along quality, not along direction.}

We first explicitly compute the equilibria in the case of $P = 2$ producers (see Figure \ref{fig:equilibria}). 
\begin{restatable}{proposition}{Ptwo}
\label{prop:P2}
Suppose that there are 2 users located at the standard basis vectors $e_1, e_2 \in \mathbb{R}^2$, and the cost function is $c(\p) = \|p\|_2^{\beta}$. For $P = 2$ and $\beta \ge \beta^* = 2$, there is an equilibrium $\mu$ supported on the quarter-circle of radius $(2 \beta^{-1})^{1/\beta}$, where the angle $\theta \in [0, \pi/2]$ has density $f(\theta) = 2 \cos(\theta) \sin(\theta)$. 
\end{restatable}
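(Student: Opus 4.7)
The plan is to directly verify the equilibrium conditions for the candidate distribution $\mu$: (i) every point on the claimed support yields the same expected profit against an opponent drawing from $\mu$, and (ii) no deviation to $p\in\mathbb{R}_{\ge 0}^2$ strictly improves on this profit. Because $f$ is continuous, $\mu$ is atomless, so ties occur with probability zero and the tie-breaking randomization in \eqref{eq:utility} can be ignored.

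For part (i), fix $p=r(\cos\theta,\sin\theta)$ with $r=(2/\beta)^{1/\beta}$ and let the opponent draw $\theta'\sim f$ at the same radius. The producer wins user~1 iff $r\cos\theta>r\cos\theta'$, i.e.\ $\theta'>\theta$, which occurs with probability $\int_{\theta}^{\pi/2}\sin(2\theta')\,d\theta'=\cos^2\theta$; symmetrically the probability of winning user~2 is $\sin^2\theta$. These sum to $1$, so the expected profit is $1-r^{\beta}=1-2/\beta$, constant in $\theta$. In fact any continuous angular density would yield total winning probability $1$ on the support, because along the arc the two users' preferences are perfectly anti-correlated; the specific choice $f(\theta)=\sin(2\theta)$, equivalently $F(\theta)=\sin^2\theta$, is pinned down by the off-support first-order conditions examined in (ii).

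For part (ii), write an arbitrary deviation as $p=q(\cos\theta,\sin\theta)$ with $q\ge 0$ and $\theta\in[0,\pi/2]$ and split by whether $q\cos\theta$ and $q\sin\theta$ fall below or above $r$. In the \emph{interior case} $q\max(\cos\theta,\sin\theta)\le r$, the $F$-formulas collapse cleanly to a total winning probability $(q/r)^2\cos^2\theta+(q/r)^2\sin^2\theta=(q/r)^2$, so the profit $(q/r)^2-q^{\beta}$ is maximized at $q=r$ by the first-order condition $2r/r^2=\beta r^{\beta-1}$---this is precisely how $r=(2/\beta)^{1/\beta}$ was chosen. In the \emph{one-side-saturated case}, say $q\cos\theta>r\ge q\sin\theta$, the probability of winning user~1 is identically~$1$ and the profit becomes $1+(q/r)^2\sin^2\theta-q^{\beta}$. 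In the \emph{both-sides-saturated case} the profit is $2-q^{\beta}$, decreasing in $q$ and hence bounded by its value on the boundary with the one-saturated case.

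The main obstacle is bounding the saturated profits by $1-2/\beta$. Writing $s=\sin^2\theta$ and $c=\cos^2\theta=1-s$ (WLOG $s\le 1/2$) and using $r^{\beta}=2/\beta$, a short calculation shows that the critical point of the one-saturated profit lies inside the interior region (because $sc^{(\beta-2)/2}\le 1$ automatically holds for $\beta\ge 2$), so the one-saturated profit is decreasing in $q$ and attains its maximum at the boundary $q=r/\sqrt{c}$. Substituting, the desired inequality reduces to $h(s):=c^{-\beta/2}-\beta s/(2c)-1\ge 0$, which follows from $h(0)=0$ together with $h'(s)=(\beta/2)c^{-2}(c^{1-\beta/2}-1)\ge 0$ for $\beta\ge 2$, since $c\le 1$ and $1-\beta/2\le 0$. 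The both-sides-saturated case reduces similarly to $s^{-\beta/2}\ge \beta/2+1$ for $s\le 1/2$, which follows from $2^{\beta/2}\ge \beta/2+1$ for $\beta\ge 2$. These elementary inequalities, combined with the first-order optimality in the interior case, complete the verification.
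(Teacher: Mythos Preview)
Your argument is correct and follows the same verification strategy as the paper: compute the win probabilities against $\mu$ (which are $\min(1,(q\cos\theta/r)^2)$ and $\min(1,(q\sin\theta/r)^2)$, exactly the paper's $H_1,H_2$) and show that the resulting profit is globally maximized on the quarter-circle at the value $1-2/\beta$.

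The only real difference is in how the global bound is established. You split into interior/one-saturated/both-saturated cases and push each through elementary inequalities ($h(s)\ge 0$ and $2^{\beta/2}\ge \beta/2+1$). The paper instead invokes its decoupling Lemma~\ref{claim:necessarysuff} and then, in Lemma~\ref{lemma:standardbasismax}, simply drops the caps via $\min\!\big((2/\beta)^{-2/\beta}z_i^2,1\big)\le (2/\beta)^{-2/\beta}z_i^2$, summing to $(2/\beta)^{-2/\beta}(z_1^2+z_2^2)-(z_1^2+z_2^2)^{\beta/2}$, which reduces to a one-variable maximization in $c=z_1^2+z_2^2$ with optimum $c=(2/\beta)^{2/\beta}$ and value $1-2/\beta$. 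That upper-bounding trick sidesteps your entire saturated-case analysis; conversely, your direct case split makes explicit exactly where $\beta\ge 2$ is used and does not rely on the reparameterized machinery.
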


Proposition \ref{prop:P2} demonstrates the support of the equilibrium distribution is a quarter circle with radius $(2 \beta^{-1})^{1/\beta}$. This equilibrium exhibits pure horizontal differentation (as well as the relaxation of pure horizontal differentiation that we described above). Since all $(x,y)$ in the support have the same radius, producers always expend the same cost regardless of the realization of randomness in their strategy. Since $c(\p) = \|p\|_2^{\beta}$, producers pay a cost of $2 \beta^{-1}$. The cost of production therefore goes to $0$ as $\beta \rightarrow \infty$. This enables producers achieving \textit{positive profit} at equilibrium (see Corollary \ref{cor:2usersprofit}) as we describe in more detail in Section \ref{sec:profit}.

\begin{figure}
    \centering
    \includegraphics[scale=0.21]{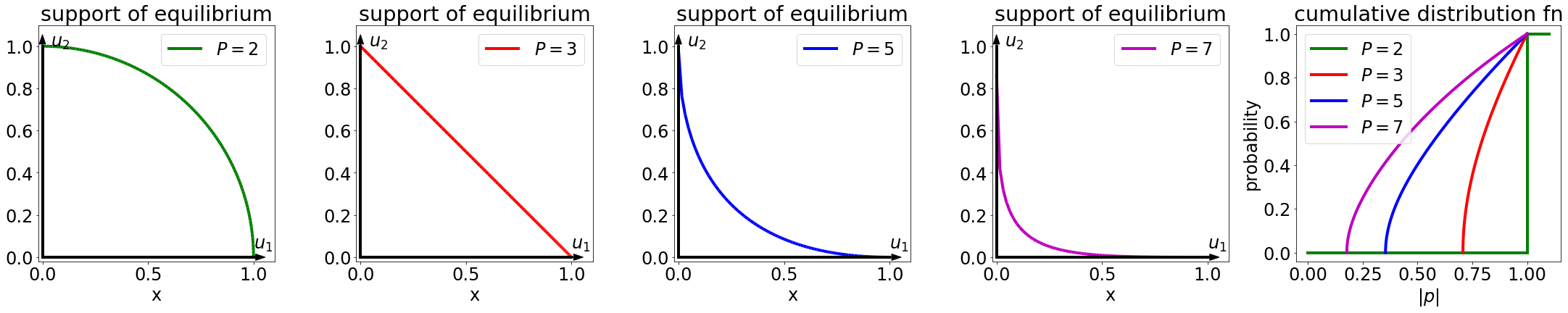}
    \caption{A symmetric equilibrium for different number of producers $P$, for 2 users located at the standard basis vectors $e_1$ and $e_2$, for producer cost function $c(\p) = \|p\|_2^{\beta}$ with $\beta = 2$ (see Proposition \ref{prop:finiteP}). The first 4 plots show the support of an equilibrium $\mu$.  As $P$ increases, the support goes from concave, to a line segment, to convex. The last plot shows the cumulative distribution function of $\|p\|$ for $p \sim \mu$. The distribution for lower $P$ stochastically dominates the distribution for higher values of $P$. All of these equilibria either exhibit pure vertical differentiation or a relaxed form of horizontal differentiation where the genre fully specifies the content's quality (but not pure horizontal differentiation, which would require that quality is constant across genres). } 
    \label{fig:equilibriaP}
\end{figure}

We next vary the number of producers $P$ while fixing $\beta = 2$ (see Figure \ref{fig:equilibriaP}). \begin{restatable}{proposition}{finiteP}
\label{prop:finiteP}
Suppose that there are 2 users located at the standard basis vectors $e_1, e_2 \in \mathbb{R}^2$, with cost function $c(\p) = \|p\|_2^{\beta}$. For $\beta = 2$, there is a multi-genre equilibrium $\mu$ with support equal to
\begin{equation}
\big\{\big(x, (1 -  x^{\frac{2}{P-1}})^{\frac{P-1}{2}}\big) \mid x \in [0,1] \big\},
\end{equation}
and where the distribution of $x$ has cdf equal to $\min(1, x^{2/(P-1)})$. 
\end{restatable}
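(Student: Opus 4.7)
The plan is to verify directly that the proposed $\mu$---a distribution supported on the one-dimensional curve $\{(x, (1-x^{2/(P-1)})^{(P-1)/2}) : x \in [0,1]\}$ whose first-coordinate marginal has cdf $F(z) = z^{2/(P-1)}$---is a symmetric mixed Nash equilibrium when all other $P-1$ producers play $\mu$. The key observation is that, because the two users' taste vectors $e_1$ and $e_2$ are orthogonal, the winner for user $i$ depends \emph{only} on the $i$-th coordinates of the producers' content vectors. So only the one-dimensional marginals of $\mu$ determine the best-response payoffs; the joint structure (that the support lives on a specific curve) is irrelevant for the equilibrium check.

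First, I would compute the marginal distribution of the second coordinate $y$ under $\mu$. Set $g(x) := (1 - x^{2/(P-1)})^{(P-1)/2}$, which is strictly decreasing on $[0,1]$. A direct substitution shows $g \circ g = \mathrm{id}$, so $g$ is an involution on $[0,1]$ and $g^{-1} = g$. Then
\[
\Pr[y \leq t] \;=\; \Pr[x \geq g(t)] \;=\; 1 - F(g(t)) \;=\; 1 - \bigl(1 - t^{2/(P-1)}\bigr) \;=\; t^{2/(P-1)},
\]
so both coordinate marginals equal $F(z) = z^{2/(P-1)}$ on $[0,1]$.

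Second, I would compute the expected profit of a producer deviating to an arbitrary $(x', y') \in \mathbb{R}_{\geq 0}^2$ against $P-1$ opponents drawn i.i.d.\ from $\mu$. Since $F$ is continuous, ties have probability zero, so the probability of winning user 1 equals $F(x')^{P-1} = \min(1, x'^2)$, and analogously $\min(1, y'^2)$ for user 2. The expected profit is therefore
\[
\Profit(x', y') \;=\; \min(1, x'^2) + \min(1, y'^2) - (x'^2 + y'^2),
\]
which equals $0$ everywhere on $[0,1]^2$ and is strictly negative outside. Since $\mathrm{supp}(\mu) \subseteq [0,1]^2$, every point in the support achieves the maximum attainable profit of $0$; hence $\mu$ is a symmetric mixed Nash equilibrium. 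The multi-genre property is immediate because $(1,0)$ and $(0,1)$ both lie in the support.

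There is no serious obstacle---the proposition reduces to a structured verification, and the only mildly delicate step is recognizing that $g$ is an involution so that the second marginal coincides with the first. What is perhaps conceptually notable is that the specific curve is only one of a continuum of valid equilibria in this orthogonal-user setting (any joint distribution on $[0,1]^2$ with marginals $F$ would work); the proposition simply records the one whose one-dimensional support---consistent with Proposition~\ref{prop:supportrestriction}---gives the horizontally differentiated family pictured in Figure~\ref{fig:equilibriaP}.
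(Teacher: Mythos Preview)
Your proof is correct and follows essentially the same approach as the paper's: compute both coordinate marginals (showing each has cdf $z^{2/(P-1)}$, so that $H_i(z)=z^2$), then verify that the resulting deviation profit $\min(1,x'^2)+\min(1,y'^2)-(x'^2+y'^2)$ is maximized at value $0$ on the support. Your use of the involution property of $g$ is a clean shortcut for the second-marginal computation that the paper carries out more directly, and your closing remark about non-uniqueness correctly identifies the degeneracy the paper flags in the footnote to Proposition~\ref{prop:supportrestriction}.
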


Proposition \ref{prop:finiteP} demonstrates that for different values of $P$, the support of the equilibrium $\mu$ follows different curves connecting $[1,0]$ and $[0,1]$. Note that these equilibria exhibit the relaxation of pure horizontal differentiation that we described earlier. Moreover, the curve is concave for $P = 2$, a line segment for $P = 3$, and convex for all $P \ge 4$. Indeed, as $P$ increases, the support converges to the union of the two coordinate axes.

\subsection{Closed-form equilibria in an infinite-producer limit}\label{sec:closedforminfinite}

Motivated by the support collapsing onto the standard basis vectors for $P \rightarrow \infty$ in Proposition \ref{prop:finiteP}, we investigate equilibria in a ``limiting marketplace'' where $P \to \infty$. In the infinite-producer limit, we show that a \emph{two-genre} equilibrium exists, regardless of the geometry of the 2 user vectors, and we characterize the equilibrium distribution $\mu$ (see Figure \ref{fig:infinite}). Interestingly, these equilibria do not exhibit pure vertical differentiation or (the relaxation of) pure horizontal differentiation.

Formalizing the infinite-producer limit is subtle: the distribution of any single producer approaches a point mass at $0$, but the distribution of the \textit{winning} producer turns out to be non-degenerate. To get intuition for this, let's revisit the one-dimensional setup of Example \ref{example:1d}. The cumulative distribution function $F(p) = \left(p/N\right)^{\beta/(P-1)}$ of a single producer as $P \rightarrow \infty$ approaches $F(p) = 1$ for any $p > 0$---this corresponds to a point mass at $0$.\footnote{The intuition is that the expected number of users that the producer wins at a symmetric equilibrium is $N / P$, which approaches $0$ in the limit; thus, the production cost that a producer can afford to expend must approach $0$ in the limit.} On the other hand, the cumulative distribution function of the \textit{winning} producer $\FMax(p) = \left(p/N\right)^{\beta P /(P-1)}$ approaches $\left(p/N\right)^{\beta}$, which is a well-defined function. 

When we formalize the infinite-producer limit for $N \ge 1$ users, we leverage the intuition that the distribution function of the winning producer is non-degenerate. In particular, we specify infinite-producer equilibria in terms of three properties---the \textit{genres}, the \textit{conditional quality distributions for each genre} (i.e.~the distribution of the maximum quality $\|p\|$ along a genre, conditional on all of the producers choosing that genre), and the \textit{weights} (i.e. the probability that a producer chooses each genre). We defer a formal treatment to Definition \ref{def:infinitegenre} in Section \ref{sec:formalization}. 

In the infinite-producer limit, we show the following 2-genre distribution is an equilibrium. For ease of notation, we again assume these populations each consist of a \textit{single} user (these results can be easily adapted to the case of $N/2$ users in each population).
\begin{restatable}{theorem}{infinitegenre}[Informal version of Theorem \ref{thm:infinitegenreformal}]
\label{thm:infinitegenre}
Suppose that there are 2 users located at two linearly independently vectors $u_1, u_2 \in \mathbb{R}_{\ge 0}^D$, let $\theta^* := \cos^{-1}\Big( \frac{\langle u_1, u_2\rangle}{\|u_1\|_2 \|u_2\|} \Big)$ be the angle between them.  Suppose we have cost function $c(\p) = \|p\|_2^{\beta}$, $\beta > \beta^* = \frac{2}{1 - \cos(\theta^*)}$, and $P = \infty$ producers. Then, there exists an equilibrium with two genres:
\[\left\{[\cos(\theta^G + \theta_{\text{min}}), \sin(\theta^G + \theta_{\text{min}})], [\cos(\theta^* - \theta^G + \theta_{\text{min}}), \sin(\theta^* - \theta^G + \theta_{\text{min}})] \right\} \]
where $\theta^G := \argmax_{\theta \le \theta^*/2} \left(\cos^{\beta}(\theta) + \cos^{\beta}(\theta^* - \theta)\right)$ and $\theta_{\text{min}} := \min\big(\cos^{-1}\big(\frac{\langle u_1, e_1\rangle}{\|u_1\|} \big) , \cos^{-1}\big(\frac{\langle u_2, e_1\rangle}{\|u_2\|} \big) \big)$. 

For each genre, the conditional quality distribution (i.e.~the distribution of the maximum quality $\|p\|$ along a genre, conditional on all of the producers choosing that genre) has cdf given by a countably-infinite piecewise function, where each piece is either constant or grows proportionally to $\|p\|^{2\beta}$.
\end{restatable}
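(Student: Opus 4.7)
My plan begins by noting that both the cost $\|p\|_2^\beta$ and the user utilities $\langle u_i, p\rangle$ depend only on the projection of $p$ onto $\mathrm{span}(u_1,u_2)$: any component of $p$ orthogonal to this plane strictly increases cost without improving alignment with either user. I would therefore reduce the problem to a two-dimensional one, parameterizing each content vector by an angle $\theta$ and a magnitude $q = \|p\|_2$, in a coordinate system where $u_1$ sits at angle $\theta_{\min}$ from the first axis and $u_2$ at angle $\theta_{\min} + \theta^*$ (so that both lie in the positive orthant). The candidate two-genre profile in the theorem then lives entirely in this plane.

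\textbf{Choosing the genre angles.} I would propose the symmetric ansatz: two genres at angles $\theta_{\min} + \theta$ and $\theta_{\min} + \theta^* - \theta$ relative to the first axis, each with weight $1/2$. Under this ansatz, the two users are treated symmetrically by the producer population: user $1$ is at angle $\theta$ from genre $1$ and $\theta^* - \theta$ from genre $2$, and vice versa for user $2$. Conditional on choosing a genre, a single producer's ``projection budget'' for the two users is proportional to $\cos(\theta)$ and $\cos(\theta^*-\theta)$ respectively. Because production costs scale as $q^\beta$, the equilibrium's total expected willingness-to-spend along a genre turns out to be governed by $\cos^\beta(\theta) + \cos^\beta(\theta^*-\theta)$; maximizing this over $\theta \leq \theta^*/2$ (the symmetric half) gives exactly the $\theta^G$ in the statement, and the hypothesis $\beta > \beta^*$ is precisely what makes the maximizer lie strictly below $\theta^*/2$ (so that the two genres are genuinely distinct, consistent with Theorem \ref{thm:singlegenre} and Corollary \ref{cor:2users}).

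\textbf{Constructing the conditional quality distributions.} With the genres fixed, I would solve for the cdf $F^{\max}_k(q)$ of the winning quality along each genre $k \in \{1,2\}$ by imposing the infinite-producer indifference condition: every $(k,q)$ in the support yields the same expected profit, namely (sum over users $i$ of the probability of winning user $i$) minus $q^\beta$. The piecewise structure emerges naturally as follows. At low quality along genre $1$, the producer's projection onto $u_2$ is worse than a generic genre-$2$ producer's, so the producer can only hope to win user $1$; the indifference equation forces $F^{\max}_1(q) \propto q^\beta$ on this initial interval. Once $q$ is large enough that the genre-$1$ producer also dominates typical genre-$2$ producers for user $2$, the scaling switches to $q^{2\beta}$ (two users competed for). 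As $q$ continues to grow, the competitor pool alternates between regimes of ``genre-$1$ only'' and ``both genres'', producing the countably-infinite piecewise-constant/$q^{2\beta}$ structure announced in the theorem. I would solve the resulting recursion to obtain explicit thresholds and check that they concatenate into a valid cdf on $[0,\infty)$ via a telescoping argument.

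\textbf{Verification and main obstacle.} To finish, I must verify that no producer has a profitable deviation: (a) within each genre the indifference holds by construction; (b) deviations to intermediate qualities inside a gap of the piecewise cdf yield strictly lower profit (this follows because $F^{\max}_k$ is constant on such gaps, so the winning probability is unchanged while $q^\beta$ strictly increases); and (c) deviations to a third genre must be ruled out. I expect (c) to be the main obstacle: it requires showing that $\theta^G$ is a \emph{global} maximizer of the induced profit functional over all unit vectors $p \in \mathbb{R}_{\geq 0}^D$, not merely a local one. I would attack this using the symmetry of the configuration across the angle bisector between $u_1$ and $u_2$, together with the duality characterization from Lemma \ref{lemma:optimizationprogram} to control the profit of hypothetical off-axis deviations via $\mathcal{S}^\beta$ and its convex hull. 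A secondary, mostly bookkeeping, obstacle is verifying the consistency of the infinite piecewise construction, which I would handle by induction on the pieces and an explicit computation of the total mass.
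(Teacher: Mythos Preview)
Your high-level plan (symmetric two-genre ansatz, pick the angle by maximizing $\cos^\beta(\theta)+\cos^\beta(\theta^*-\theta)$, then solve for the quality cdf, then verify) matches the paper's, but the mechanism you propose for constructing $F^{\max}$ is wrong, and as a result your predicted piecewise structure is off.

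The crucial point you are missing is what happens to the ``winning probability'' $H_i$ in the $P=\infty$ limit. For finite $P$ and weights $\alpha_g$, one has $H_i(z_i)=\big(\sum_g \alpha_g F_g(z_i/\langle u_i,d_g\rangle)\big)^{P-1}$; taking $P\to\infty$ turns this arithmetic mean into a \emph{geometric} mean, so that
\[
H_i(z_i)\;=\;\prod_{g} \big(F^{\max}_g(z_i/\langle u_i,d_g\rangle)\big)^{\alpha_g}.
\]
With the symmetric ansatz $\alpha_1=\alpha_2=\tfrac12$ and $F^{\max}_1=F^{\max}_2=:F^{\max}$, the indifference condition does not give ``$F^{\max}\propto q^\beta$ when only one user is in play'' at all; instead it forces the single multiplicative functional equation
\[
\sqrt{F^{\max}(q)\,F^{\max}(qC_2)}\;=\;\min\big(1,\,c\,q^\beta\big),\qquad C_2:=\frac{\cos(\theta^*-\theta^G)}{\cos(\theta^G)}<1,
\]
and the alternating structure comes from solving this recursion downward from the boundary condition $F^{\max}=1$ at the top: on the top interval $F^{\max}=1$, so the next interval must carry $F^{\max}\propto q^{2\beta}$; feeding that back forces the following interval to be constant; and so on. There is therefore no $q^\beta$ piece anywhere---every piece is either constant or $\propto q^{2\beta}$, exactly as stated. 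Your ``win one user versus two users'' picture is a finite-$P$ intuition that does not survive the limit.

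You also misidentify the main obstacle. Once you know $H_i(z_i)=\min(1,c\,z_i^\beta)$, ruling out third-genre deviations is a two-line computation: for a producer at direction $\theta$ and radius $r$ the profit is $c\,r^\beta\big(\cos^\beta\theta+\cos^\beta(\theta^*-\theta)\big)-r^\beta$, which is $\le 0$ for every $\theta$ precisely because $\theta^G$ maximizes the bracket and satisfies $c\big(\cos^\beta\theta^G+\cos^\beta(\theta^*-\theta^G)\big)=1$. No appeal to Lemma~\ref{lemma:optimizationprogram} is needed. The genuinely delicate step is instead verifying that the piecewise $F^{\max}$ you write down actually solves the geometric-mean functional equation on every interval, which the paper checks by direct casework on the dyadic decomposition $q\in C_1^{1/\beta}C_2^{m}[C_2,1]$.
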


Theorem \ref{thm:infinitegenre} reveals that finite-genre equilibria (that have more than one genre) re-emerge in the limit as $P \rightarrow \infty$, although they do not exist for any finite $P$ (see Figure \ref{fig:infinite}). For users located at the standard basis vectors, Theorem \ref{thm:infinitegenre} formalizes the intuition from Proposition \ref{prop:finiteP} that the equilibrium converges to a distribution supported on the standard basis vectors. This means that at $P = \infty$, producers either entirely personalize their content to the first user or entirely personalize their content to the second user, but do not try to appeal to both users at the same time. 

Interestingly, the set of genres is \textit{not} equal to the set of two users unless users are orthogonal. As shown in Figure \ref{fig:infinite}, the two-genres are located within the interior of the convex cone formed by the two users. This means that producers always attempt to cater their content to both users at the same time, although they either place a greater weight on one user or the other user, depending on which genre they choose. The location of these two genres changes for different values of $\beta$. When $\beta$ approaches the single-genre threshold, the genres both collapse onto the single-genre direction $\theta^* / 2$. On the other hand, when $\beta$ approaches $\infty$, the genres converge to the two users. 

Finally, the support of the equilibrium distribution consists of countably infinite disjoint line segments with interesting economic interpretations. First, observe that the cdf of the conditional quality distributions of each genre (see the last panel of Figure \ref{fig:infinite}) has gaps in its support: it is a countably-infinite piecewise function, where each piece is either constant or grows proportionally to $q^{2\beta}$.
The level of ``bumpiness'' of the cdf decreases as $\theta^*$ increases: for the limiting case of $\theta = \pi/2$, it converges to the smooth function $\FMax(q) = q^{2\beta}$. Moreover, the regions of zero density of each of the two genres are actually staggered, so that at most one of the genres can achieves a given utility for a given user. In particular, for each user $u_i$, it never holds that $\langle u_i, \p\rangle = \langle u_i, \p' \rangle$ for $\p \neq \p'$: that is, the utility level fully specifies the genre of the content. The closed-form expression of the density (see Theorem \ref{thm:infinitegenreformal}) formally establishes these properties. 

\subsection{Overview of proof techniques}\label{sec:proofoverview}

To prove our results in this section, our first step is establish a useful characterization of equilibria that enables us to separately account for the geometry of the users and the number of producers. This takes the form of necessary and sufficient conditions that decouple in terms of two quantities: a set of \textit{marginal distributions} $H_i$,
and the \textit{support} $S \subseteq \mathbb{R}_{\ge 0}^N$. 
\begin{restatable}{lemma}{necessarysuff}
\label{claim:necessarysuff}
Let $\UMatrix = [u_1; u_2; \ldots; u_N]$ be the $N \times D$ matrix of users vectors. Given a set $S \subseteq \mathbb{R}_{\ge 0}^N$ and distributions $H_1, \ldots, H_N$ over $\mathbb{R}_{\ge 0}$, suppose that the following conditions hold:   
\begin{description}
    \item{(C1)} Every $z^* \in S$ is a maximizer of the equation:
    \begin{equation}
\label{eq:reparam}
 \max_{z \in \mathbb{R}_{\ge 0}^D} \sum_{i=1}^N H_i(z_i) - c_{\UMatrix}(z),
\end{equation}
where $c_{\UMatrix}(z) := \min \left\{c(\p) \mid \p \in \mathbb{R}_{\ge 0}^D, \UMatrix p=z  \right\}$.
    \item{(C2)} There exists a random variable $Z$ with support $S$, such that the marginal distribution $Z_i$ has cdf equal to $H_i(z)^{1/(P-1)}$.
    \item{(C3)} $Z$ is distributed as $\UMatrix Y$ with $Y \sim \mu$, for some distribution $\mu$ over $\mathbb{R}_{\ge 0}^D$.  
\end{description} 
Then, the distribution $\mu$ from (C3) is a symmetric mixed Nash equilibrium. Moreover, every symmetric mixed Nash equilibrium $\mu$ is associated with some $(H_1, \ldots, H_N, S)$ that satisfy (C1)-(C3). 
\end{restatable}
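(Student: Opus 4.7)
The plan is to reparametrize the producer's decision from a content vector $p \in \bR_{\ge 0}^D$ to its vector of inferred user values $z = \UMatrix p \in \bR_{\ge 0}^N$. In these coordinates, the expected profit decouples across users through one-dimensional marginal cdfs, and the best response over all content vectors is exactly captured by the optimization in \eqref{eq:reparam}. Concretely, I would first compute the expected profit $\Pi(p)$ of a producer who plays $p$ while the other $P-1$ producers draw i.i.d.\ from a symmetric strategy $\mu$: letting $G_i$ be the cdf of $\langle u_i, Y\rangle$ for $Y \sim \mu$ and setting $H_i := G_i^{P-1}$, atomlessness (Proposition~\ref{prop:atom}) ensures ties have probability zero, so this producer wins user $i$ with probability exactly $H_i(\langle u_i, p\rangle)$ and $\Pi(p) = \sum_{i=1}^N H_i(\langle u_i, p\rangle) - c(p)$. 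Holding $z = \UMatrix p$ fixed and minimizing $c(p)$ over the fiber $\UMatrix^{-1}(z) \cap \bR_{\ge 0}^D$ yields $\sup_{p:\,\UMatrix p = z} \Pi(p) = \sum_i H_i(z_i) - c_{\UMatrix}(z)$, matching the objective in \eqref{eq:reparam}. Hence a global maximizer of $\Pi$ is a pair $(p,z)$ in which $z = \UMatrix p$ maximizes \eqref{eq:reparam} and $p$ realizes the minimum cost $c_{\UMatrix}(z)$ within its fiber.

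For the sufficient direction, I would take $(H_1, \ldots, H_N, S)$ and $\mu$ satisfying (C1)--(C3), and without loss of generality choose $\mu$ to concentrate on fiberwise cost-minimizers (replacing any $Y$ by a minimum-cost pre-image of $\UMatrix Y$ preserves the law of $Z = \UMatrix Y$ and therefore leaves (C2) and (C3) intact). Then (C2) recovers the definition $H_i = G_i^{P-1}$ from above, and for any $p \in \text{supp}(\mu)$, (C3) places $\UMatrix p$ in $S$, so (C1) says that $\UMatrix p$ maximizes \eqref{eq:reparam}; combined with $c(p) = c_{\UMatrix}(\UMatrix p)$, the characterization from the previous paragraph gives $p \in \argmax_{p'} \Pi(p')$, which is exactly the Nash condition for a symmetric equilibrium.

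For the necessary direction, given a symmetric equilibrium $\mu$ I would define $H_i, Z, S$ exactly as above, so that (C2) and (C3) are immediate from the construction. To verify (C1), fix any $z \in S$ and pick some $p \in \text{supp}(\mu)$ with $\UMatrix p = z$. The equilibrium condition forces $\Pi(p) = \max \Pi$; if $c(p) > c_{\UMatrix}(z)$, a strictly cheaper pre-image of $z$ would yield a higher profit, contradicting the equilibrium condition, so $c(p) = c_{\UMatrix}(z)$ and $\sum_i H_i(z_i) - c_{\UMatrix}(z) = \max \Pi$. Since $\sum_i H_i(z'_i) - c_{\UMatrix}(z') \le \max \Pi$ holds for every $z'$ by the first paragraph, $z$ indeed maximizes \eqref{eq:reparam}. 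The main obstacle throughout is the interplay between the fiber structure of $\UMatrix$ and cost minimization: (C3) alone does not pin down how $\mu$ distributes mass within a fiber, so the forward direction requires an explicit restriction to fiberwise cost-minimizers, while the reverse direction must \emph{extract} this property from the Nash condition itself.
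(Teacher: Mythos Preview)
Your proposal is correct and follows essentially the same reparameterization approach as the paper: both recast the producer's decision in terms of the vector of inferred user values $z = \UMatrix p$, identify $H_i$ as the cdf of the maximum of the other $P-1$ producers' values for user $i$, and reduce the Nash condition to the statement that every $z$ in the support maximizes $\sum_i H_i(z_i) - c_{\UMatrix}(z)$. You are more explicit than the paper about the fiber structure of $\UMatrix$ and the need for $\mu$ to concentrate on fiberwise cost-minimizers, a point the paper's brief sketch elides (and which is moot in its applications, where the two user vectors are linearly independent so each fiber is a singleton).
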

\noindent In Lemma \ref{claim:necessarysuff}, the set $S$ captures the support of the realized inferred user values $[\langle u_1, \p \rangle, \ldots, \langle u_N, \p \rangle]$ for $\p \sim \mu$. The distribution $H_i$ captures the distribution of the maximum inferred user value $\max_{1 \le j \le P -1} \langle u_i, \p_j\rangle$ for user $u_i$.

The conditions in Lemma \ref{claim:necessarysuff} help us identify and analyze the equilibria in concrete instantations, including in the 2 user vector setting that we focus on in this section. 
\begin{itemize}
    \item (C1) places conditions on $H_1$, $H_2$, and $S$ in terms of the induced cost function $c_{\UMatrix}$. We use the first-order and second-order conditions of equation \eqref{eq:reparam} at $z = [z_1, z_2]$ to determine the necessary densities $h_1(z_1)$ and $h_2(z_2)$ of $H_1$ and $H_2$ for $z$ to be in the support $S$.
    \item (C2) restricts the relationship between $H_1$, $H_2$, and $S$ for a given value of $P$, which we instantiate in two different ways, depending on whether the support is a single curve or whether the distribution $\mu$ has finitely many genres. 
    \item (C3) holds essentially without loss of generality when $u_1$ and $u_2$ are linearly independent. 
\end{itemize} 
The proofs of our results in this section boil down to leveraging these conditions.

\section{Impact of specialization on market competitiveness}\label{sec:profit}

Having studied the phenomenon of specialization, we next study its economic consequences on the resulting marketplace of digital goods. We show that producers can achieve \textit{positive profit at equilibrium}, even though producers are competing with each other.

More formally, we can quantify the producer profit at equilibrium as follows. At a symmetric equilibria $\mu$, all producers receive the same expected profit given by: 
\begin{equation}
   \ProfitEq(\mu) := \mathbb{E}_{\p_1, \ldots, \p_P \sim \mu} [\Profit(\p_1; \p_{-1})] =  \mathbb{E} \Big[ \Big(\sum_{i=1}^N \Indicator[j^*(\u_i;\;\p_{1:P}) = 1]\Big) - \|\p_1\|^{\beta}  \Big],
\end{equation}
where expectation  in the last term is taken over $\p_1, \ldots, \p_P \sim \mu$ as well as randomness in recommendations. Intuitively, the equilibrium profit of a marketplace provides insight about market competitiveness. Zero profit suggests that competition has driven producers to expend their full cost budget on improving product quality. Positive profit, on the other hand, suggests that the market is not yet fully saturated and new producers have incentive to enter the marketplace.

We show a sufficient condition for positive profit in terms of the user geometry and the cost function, and we show this result relies on the equilibrium exhibiting specialization (see Section \ref{subsec:profit}). Using this analysis of profit, we draw a connection between recommender systems and markets with homogeneous/differentiated goods and discuss implications for market saturation (see Section \ref{subsec:econconsequences}).

\subsection{Positive equilibrium profit}\label{subsec:profit}

To gain intuition, let us revisit two users located at the standard basis vectors and cost function $c(\p) = \|p\|_2^{\beta}$. We can obtain the following characterization of profit. 
\begin{corollary}
\label{cor:2usersprofit}
Suppose that there are 2 users located at the standard basis vectors $e_1, e_2 \in \mathbb{R}^2$, and the cost function is $c(\p) = \|p\|_2^{\beta}$. For $P = 2$ and $\beta \ge \beta^* = 2$, there is an equilibrium $\mu$ where $\ProfitEq(\mu) = 1 - \frac{2}{\beta}$.
\end{corollary}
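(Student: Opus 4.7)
The proposal is to combine the explicit equilibrium from Proposition \ref{prop:P2} with a simple symmetry calculation. Because $\mu$ is supported on the quarter-circle of radius $(2\beta^{-1})^{1/\beta}$, every realization $p \sim \mu$ has $\|p\|_2^{\beta} = 2/\beta$ deterministically, so the expected production cost contribution to $\mathcal{P}^{\mathrm{eq}}(\mu)$ equals $2/\beta$. Hence the whole task reduces to showing that, at this equilibrium, each producer wins exactly $1$ user in expectation, so that $\mathcal{P}^{\mathrm{eq}}(\mu) = 1 - 2/\beta$.

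For the number-of-users term, I would argue as follows. Every user is assigned to some producer almost surely, since the marginal distributions of $\langle e_i, p\rangle$ are absolutely continuous (by the explicit density $f(\theta) = 2\cos\theta\sin\theta$, the pushforward under $\theta \mapsto (2\beta^{-1})^{1/\beta}\cos\theta$ is absolutely continuous, and similarly in the other coordinate), so ties between independent samples $p_1, p_2 \sim \mu$ occur with probability zero. Therefore, with probability one, $\sum_{j=1}^{P} \sum_{i=1}^{N} \mathbf{1}[j^*(u_i; p_{1:P}) = j] = N = 2$. Taking expectations and using the symmetry of the equilibrium $\mu$ (the joint law of $(p_1, p_2)$ is invariant under permuting producers), each producer wins $N/P = 1$ user in expectation.

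Combining the two pieces gives $\mathcal{P}^{\mathrm{eq}}(\mu) = 1 - 2/\beta$, which is the claim. The only step that even requires argument is the ``no ties'' observation needed to conclude that each user is won exactly once; everything else follows from the structure already established in Proposition \ref{prop:P2} and the symmetry of a symmetric mixed equilibrium. There is no serious obstacle — the main thing to be careful about is that the cost $\|p\|^{\beta}$ is \emph{constant} on the support of $\mu$, which is a special feature of this equilibrium (the quarter-circle) and is what makes the final expression so clean.
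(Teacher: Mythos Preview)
Your proposal is correct and matches the paper's (implicit) argument: the paper observes right after Proposition~\ref{prop:P2} that the cost is constant $2/\beta$ on the support, and elsewhere (in the proof of Proposition~\ref{prop:existence}) records the general fact $\ProfitEq(\mu)=N/P-\int\|p\|^{\beta}\,d\mu$ at any symmetric profile, which with $N=P=2$ gives $1-2/\beta$. One small remark: the ``no ties'' step is not actually needed, since even with ties each user is assigned to exactly one producer (uniformly at random among the argmax), so $\sum_j \mathbf 1[j^*(u_i)=j]=1$ deterministically and the $N/P$ symmetry computation goes through regardless.
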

Corollary \ref{cor:2usersprofit} shows that there exist equilibria that exhibit strictly positive profit for any $\beta \ge 2$. The intuition is that (after sampling the randomness in $\mu$), different producers often produce different genres of content. This reduces the amount of competition along any single genre. Producers are thus no longer forced to maximize quality, enabling them to generate a strictly positive profit. 

We generalize this finding to sets of many users and producers and to arbitrary norms. In particular, we provide the following sufficient condition under which the profit at equilibrium is strictly positive. 
\begin{restatable}{proposition}{utility}
\label{thm:utility}
Suppose that 
\begin{equation}
\label{eq:positiveprofit}
\max_{\|\p\| \le 1} \min_{1 \le i \le N} \Big\langle \p, \frac{u_i}{||u_i||}\Big\rangle < N^{-P/\beta}.    
\end{equation}
Then for any symmetric equilibrium $\mu$, the profit $\ProfitEq(\mu)$ is strictly positive. 
\end{restatable}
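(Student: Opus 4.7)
The plan is to exhibit a deviation strategy with strictly positive expected profit; by the no-deviation property of any symmetric equilibrium, this immediately forces $\ProfitEq(\mu) > 0$. By symmetry of the mixed equilibrium each producer wins $N/P$ users in expectation, giving the convenient identity $\ProfitEq(\mu) = N/P - \E_{\p\sim\mu}[\|\p\|^\beta]$.

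Let $\p^*$ be a unit-norm maximizer in the definition of $\alpha$, so $\langle u_i, \p^*\rangle \ge \alpha\|u_i\|$ for every $i$ (I read $\|u_i\|$ in the hypothesis as the dual of the cost norm, so that the H\"older-type bound $\langle u_i,\p\rangle \le \|u_i\|\|\p\|$ holds and $\alpha \in [0,1]$). The central deviation to analyze is $\p^{\mathrm{dev}} = \lambda\p^*$: since $\langle u_i, \lambda\p^*\rangle \ge \lambda\alpha\|u_i\|$ and $\langle u_i, \p_j\rangle \le \|u_i\|\|\p_j\|$, a deviator wins every user $i$ whenever every rival $j$ satisfies $\|\p_j\| < \lambda\alpha$. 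Writing $G$ for the cdf of $\|\p\|$ under $\mu$, this gives
\[\ProfitEq(\mu) \ge N \cdot G(\lambda\alpha)^{P-1} - \lambda^\beta \quad \text{for every } \lambda \ge 0.\]
A second useful family of constraints comes from the \emph{specialized} deviations $q\p_i^*$, where $\p_i^*$ is a unit vector achieving the dual norm at $u_i$: the no-deviation inequality there upper-bounds each marginal cdf $F_i$ of $\langle u_i,\p\rangle$ by $F_i(t)^{P-1} \le \ProfitEq(\mu) + (t/\|u_i\|)^\beta$, controlling the tails of $\mu$.

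The final and most delicate step is to calibrate $\lambda$ --- possibly combined with the specialized family --- so the central bound is strictly positive under the hypothesis $\alpha < N^{-P/\beta}$. The natural attempt $\lambda = q^*/\alpha$ with $q^* := \sup\{\|\p\|:\p \in \mathrm{supp}(\mu)\}$ gives $G(\lambda\alpha) = 1$ by atomlessness of $\mu$ (Proposition \ref{prop:atom}), reducing positivity to $(q^*)^\beta < N\alpha^\beta$. The main obstacle is sharpening the trivial bound $(q^*)^\beta \le N$ (from $\mathrm{wins}(\p) \le N$) to this strict bound: I expect this to be achieved by combining the exact identity $\mathrm{wins}(\p) = \ProfitEq(\mu) + (q^*)^\beta$ at a support point $\p = q^*\hat\p$ --- where the max-min property of $\alpha$ guarantees a user $i^*$ with $\langle u_{i^*},\p\rangle \le q^*\alpha\|u_{i^*}\|$ and hence a specialized-family upper bound on $F_{i^*}$ at that point --- with the hypothesis $\alpha^\beta < N^{-P}$. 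The specific exponent $P/\beta$ is what one expects to emerge from matching the $P-1$ power of $G$ against the $\beta$ power of the cost in this calibration.
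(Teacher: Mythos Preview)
Your central deviation $\lambda p^*$ is pointed in the wrong direction, and this is a genuine gap. The hypothesis says $\alpha$ is \emph{small}: no single direction is simultaneously good for all users. So trying to beat every rival on every user via the max--min direction $p^*$ forces $\lambda$ huge (you need $\lambda\alpha \ge q^*$), and the positivity condition becomes $(q^*)^\beta < N\alpha^\beta < N^{1-P}$ --- orders of magnitude stronger than the only available bound $(q^*)^\beta \le N$. Your attempted repair (picking one support point, finding a user $i^*$ it is bad for, then invoking the specialized-family bound on $F_{i^*}$) at best gives $(q^*)^\beta \le (q^*\alpha)^\beta + (N-1)$, which is again only $(q^*)^\beta \lesssim N$. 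The calibration cannot close.

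The paper's argument flips the logic: instead of one deviation trying to win \emph{all} users, it targets a \emph{single} user $i^*$ chosen by pigeonhole. Every unit direction is ``worst'' for some user (achieving value $\le Q$ there), so clustering the unit sphere by that worst user and taking the cluster with $\mu$-mass $\ge 1/N$ gives an event of probability $\ge (1/N)^{P-1}$ on which \emph{every} rival gives user $i^*$ value at most $Q q^*$. Deviating to roughly $Qq^*\cdot u_{i^*}/\|u_{i^*}\|$ then wins user $i^*$ on that event at cost $\approx (Qq^*)^\beta \le Q^\beta N$, and positivity reduces exactly to $(1/N)^{P-1} > Q^\beta N$, i.e.\ $Q < N^{-P/\beta}$. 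The coarse bound $(q^*)^\beta \le N$ that you already have is all that is needed once the deviation direction is user-specific rather than central; the pigeonhole step is the missing idea.
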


Proposition \ref{thm:utility} provides insight into how the geometry of the users and structure of producer costs impact whether producers can achieve positive profit. To interpret Proposition \ref{thm:utility}, let us examine the quantity $Q := \max_{\|\p\| \le 1} \min_{i=1}^N  \langle \p, \frac{u_i}{||u_i||} \rangle$ that appears on the left-hand side of \eqref{eq:positiveprofit}. Intuitively, $Q$ captures how easy it is to produce content that appeals simultaneously to all users. It is larger when the users are close together and smaller when they are spread out. For any set of vectors we see that $Q \le 1$, with strict inequality if the set of vectors is non-degenerate. The right-hand side of \eqref{eq:positiveprofit}, on the other hand, goes to $1$ as $\beta \rightarrow \infty$. Thus, for any non-degenerate set of users, if $\beta$ is sufficiently large, the condition in Proposition~\ref{thm:utility} is met and producer profit is strictly positive. The value of $\beta$ at which positive profit is guaranteed by Proposition \ref{thm:utility} decreases as the user vectors become more spread out.

Although Proposition \ref{thm:utility} does not explicitly consider specialization, we show that specialization is nonetheless central to achieving positive profit at equilibrium.  To illustrate this, we show that at a single-genre equilibrium, the profit is zero whenever there are at least $P \ge 2$ producers.
\begin{restatable}{proposition}{zeroutilitysinglegenre}
\label{prop:zeroutilitysinglegenre}
If $\mu$ is a single-genre equilibrium, then the profit $\ProfitEq(\mu)$ is equal to $0$. 
\end{restatable}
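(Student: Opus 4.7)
The plan is to reduce the single-genre equilibrium to a one-dimensional competition and then use the indifference condition together with atomlessness.

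Let $\mu$ be a single-genre equilibrium with $\Genre(\mu) = \{p^*\}$ for some unit vector $p^* \in \mathbb{R}_{\ge 0}^D$. Every producer draws content of the form $q p^*$ with $q = \|p\| \ge 0$, so for any two producers $j,k$ and any user $u_i$, the comparison $\langle u_i, p_j \rangle$ vs.\ $\langle u_i, p_k \rangle$ is determined purely by whether $\|p_j\| > \|p_k\|$ (using $\langle u_i, p^*\rangle \ge 0$). Hence the producer of maximum quality wins \emph{all} $N$ users simultaneously. Letting $F$ denote the cdf of $q = \|p\|$ under $\mu$, the expected profit of quality $q$ against the other $P-1$ producers sampling from $\mu$ is
\[
g(q) \;=\; N \cdot F(q)^{P-1} - q^{\beta},
\]
where I use Proposition~\ref{prop:atom} to conclude $\mu$ is atomless, so ties occur with probability zero and $F$ is continuous.

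The mixed equilibrium condition implies $g(q)$ is constant on $\mathrm{supp}(\mu)$, with common value $\ProfitEq(\mu)$. Let $q_0 := \inf\{q : q \in \mathrm{supp}(\mu)\} \ge 0$. Since $F$ is continuous and $F(q) = 0$ for all $q < q_0$, I get $F(q_0) = 0$, and evaluating at the boundary point of the support (which can be approached within $\mathrm{supp}(\mu)$) yields
\[
\ProfitEq(\mu) \;=\; N \cdot F(q_0)^{P-1} - q_0^{\beta} \;=\; -q_0^{\beta} \;\le\; 0.
\]

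For the reverse inequality, I note that the pure strategy $p = 0$ is always available as a deviation; it has cost $\|0\|^\beta = 0$ and wins a nonnegative number of users, so its expected profit against any strategies of the other producers is nonnegative. Hence $\ProfitEq(\mu) \ge 0$, and combining with the previous display forces $q_0 = 0$ and $\ProfitEq(\mu) = 0$, completing the proof.

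The main subtleties (rather than obstacles) are two technical points that must be checked rigorously: first, that ``indifference on the support'' is valid at the infimum $q_0$ (handled by approaching $q_0$ from within $\mathrm{supp}(\mu)$ and using continuity of both $F$ and $q \mapsto q^\beta$), and second, that the deviation $p = 0$ is indeed feasible and yields nonnegative profit regardless of the opponents' play. Everything else is a direct consequence of the reduction to the one-dimensional setting that a single-genre support forces.
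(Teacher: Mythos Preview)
Your proof is essentially correct and follows the same core idea as the paper's: show that $0$ lies in the support of the quality distribution and evaluate the equilibrium profit there. The paper reaches this by invoking Lemma~\ref{lemma:cdf}, which gives the explicit cdf $F(r) = (r^\beta/N)^{1/(P-1)}$ and hence $0 \in \mathrm{supp}(\mu)$ directly; your route is slightly more self-contained, deducing $q_0 = 0$ from the sandwich $\ProfitEq(\mu) = -q_0^{\beta} \le 0 \le \ProfitEq(\mu)$ without needing the closed form.

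One small gap: the formula $g(q) = N\,F(q)^{P-1} - q^{\beta}$ requires $\langle u_i, p^*\rangle > 0$ for \emph{every} $i$, not merely $\ge 0$. If $\langle u_i, p^*\rangle = 0$ for some $i$, all producers tie for that user regardless of quality, so the highest-quality producer does not win all $N$ users and your expression for $g$ is off by an additive $N_0/P$ term (where $N_0$ is the number of such users). Strict positivity is exactly Lemma~\ref{lemma:nonzero} in the paper; you should cite it or reprove it (the argument is short: if $\langle u_i, p^*\rangle = 0$, any producer can perturb to $p + \epsilon u_i$ and strictly increase profit, contradicting equilibrium).
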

This draws a distinction between profit in the single-genre regime (where there is no specialization) and the multi-genre regime (where there is specialization).

\subsection{Economic consequences}\label{subsec:econconsequences}
We describe two interesting economic consequences of our analysis of equilibrium profit. 
\paragraph{Connection to markets with homogeneous and heterogeneous goods}\label{subsec:connectiongenreformation}

The distinction  between equilibrium profit in the single- and multi-genre equilibria parallels the classical distinctions in economics between markets with homogeneous goods and markets with differentiated goods (see \citep{bertrand} for a textbook treatment).

Single-genre equilibria resemble markets with homogeneous goods where firms compete on price. If a firm sets their price above the zero profit level, they can be undercut by other firms and lose their users. The possibility of undercutting drives the profit to zero at equilibrium. Similarly, in the market that we study, when there is no specialization, producers all compete along the same direction, which drives profit to zero. The analogy is not exact: in our model, producers play a distribution of quality and thus might be out-competed in a given realization. 

Multi-genre equilibria resemble markets with differentiated goods. In these markets, product differentiation reduces competition between firms, since firms compete for different users. This leads to local monopolies where firms can set prices above the zero profit level. Similarly, in the market that we study, specialization by producers leads to product differentiation and thus induces monopolistic behavior where the profit is positive. More specifically, specialization limits competition within each genre and can enable producers to set the quality of their goods below the zero profit level. 

Our results formalize how the supply-side market of a recommender system can resemble a market with homogeneous goods or a market with differentiated goods, depending on whether specialization occurs. An empirical analysis could quantify where on this spectrum a given recommender system is located, and regulatory policy could seek to shift a recommender system towards one of the regimes.

\paragraph{When is a marketplace saturated?}\label{subsec:saturation}
Our results provide insight about the number of producers needed for a market to be saturated and fully competitive. Theorem \ref{thm:utility} reveals that the marketplace of digital goods may need far more than 2 producers in order to be saturated. Nonetheless, the equilibrium profit does approach $0$ as the number of producers in the marketplace goes to $\infty$: this is because the cumulative profit of all producers is at most $N$ and producers achieve the same profit, so $\ProfitEq(\mu) \le N / P$. Perfect competition is therefore recovered in the infinite-producer limit.

\section{Discussion and Future Directions} 
\label{sec:openquestions}

We presented a model for supply-side competition in recommender systems. The rich structure of production costs and the heterogeneity of users enable us to capture marketplaces that exhibit a wide range of forms of specialization. Our main results characterize when specialization occurs, analyze the form of specialization, and show that specialization can reduce market competitiveness. More broadly, we hope that our work serves as a starting point to  investigate how recommendations shape the supply-side market of digital goods, and we propose several directions for future work.

One direction for future work is to further examine the economic consequences of specialization. Several of our results take a step towards this goal: Corollary \ref{cor:singlegenrestructure} illustrates that single-genre equilibria occur at the direction that maximizes the Nash user welfare, and Proposition \ref{thm:utility} shows that specialization can lead to positive producer profit. These results leave open the question of how the welfare of users and producers relate to one another. 
Characterizing the welfare at equilibrium would elucidate whether specialization helps producers at the expense of users or helps all market participants. 

Another direction for future work is to further characterize the equilibrium structure. Our analysis in Section \ref{sec:structural} provides insight into the equilibrium structure in the case of two homogeneous users: we showed that finite genre equilibria do not exist outside of the single-genre regime (Theorem \ref{thm:phasetransitionformal}), and we provided closed-form expression for the equilibria in special cases (Propositions \ref{prop:P2}-\ref{prop:finiteP} and Theorem \ref{thm:infinitegenre}). It would be interesting to extend these insights to general configurations of users. 

Finally, we hope that future work extends our model to incorporate additional aspects of content recommender systems. For example, although we focus on perfect recommendations that match each user to their favorite content, we envision that this assumption could be relaxed in several ways: e.g., the platform may have imperfect information about users, users may not always follow platform recommendations, and producers may learn their best-responses over repeated interactions with the platform. Moreover, although we assume that producers earn fixed per-user revenue, this assumption could be relaxed to let producers set prices. 

Addressing these questions would further elucidate the market effects induced by supply-side competition, and inform our understanding of the societal effects of recommender systems.

\section{Acknowledgments}
We would like to thank Jean-Stanislas Denain, Frances Ding, Erik Jones, Quitzé Valenzuela-Stookey, and Ruiqi Zhong for helpful comments on the paper. MJ acknowledges support from the Paul and Daisy Soros Fellowship and Open Phil AI Fellowship.

\printbibliography

\appendix

\section{Details of the empirical setup in Section \ref{subsec:empirical}}\label{appendix:empiricalsetup}

The code can be found at \url{https://github.com/mjagadeesan/supply-side-equilibria}. 

\paragraph{Dataset information.} We use the MovieLens-100K dataset which consists of $N = 943$ users, 1682 movies, and 100,000 ratings \citep{movielens}. We imported the dataset using the \texttt{scikit-surprise} library. 

\paragraph{Calculation of user embeddings.} For $D \in \left\{2,3, 5, 10, 50\right\}$, we obtain $D$-dimensional user embeddings by running NMF (with \(D\) factors). In particular, we ran NMF using the \texttt{scikit-surprise} library on the full MovieLens-100K dataset with the default hyperparameters.

\paragraph{Calculation of single-genre equilibrium $p^*$.} We calculate the single-genre equilibrium genre $\p^* = \argmax_{\|\p\|=1 \mid \p \in \bR_{\ge 0}^D} \sum_{i=1}^N \log(\langle \p, u_i\rangle)$. We write $p^*$ as
\[ \p^* = \argmax_{\|\p\|\le 1 \mid \p \in \bR_{\ge 0}^D} \sum_{i=1}^N \log(\langle \p, u_i\rangle)\]
and solve the resulting optimization program. For $q = 2$, we directly use the \texttt{cvxpy} library with the default hyperparameters. For $q \neq 2$, we run projected gradient descent with learning rate $1.0$ for 100 iterations where $\p$ is initialized as a standard normal clamped so all the coordinates are at least $1$. The projection step onto $\|\p\|\le 1 \mid \p \in \bR_{\ge 0}^D$ uses the \texttt{cvxpy} library with the default hyperparameters. 

\paragraph{Calculation of $\beta_u$.} We directly calculate $\beta_u$ according to the following formula:
\[\frac{\log (N)}{\log(N) - \log\left(\|\sum_{n=1}^{N} \frac{u_n}{\|u_n\|_{*}} \|_{*} \right)}.\]

\paragraph{Calculation of $\beta_e$.}  For this part, we first compute a restricted dataset with $N$ randomly chosen users for computational tractability. We estimate $\beta_e$ by binary searching with the lower bound initialized to $1$ and the upper bound initialized to $\beta_u$, under the gap between the lower and upper bounds is $\le \epsilon = 0.05$. For each value $\beta$, we estimate whether the condition in \eqref{eq:maxcondition} holds as follows. To compare the left-hand side $\max_{y \in \Set^{\beta}} \prod_{i=1}^N y_i$ to the the right-hand side $\max_{y \in \bar{\Set}^{\beta}} \prod_{i=1}^N y_i$, we first compute $\tilde{p}^* = \text{argmax}_{p \in \mathbb{R}_{\ge 0}^D, \|p\| = 1} \prod_{i=1}^N \langle u_i, p\rangle$, we directly use the \texttt{cvxpy} library with the default hyperparameters. We then repeat the following procedure $T = 50$ times, which will correspond to estimating the convex hull $\bar{\Set}^{\beta}$ with $T = 50$ different draws of randomly chosen vectors. In each trial $1 \le t \le 50$, we draw $m-1 = 75$ unit $q$-norm random vectors in $\tilde{p}_1, \ldots, \tilde{p}_{m-1} \in \mathbb{R}_{\ge 0}^D$ by randomly sampling multivariate gaussians, taking the absolute value of the coordinates, and normalizing to have $q$-norm equal to 1. We let $\tilde{p}_m = \tilde{p}^*$ be the argmax computed previously. Then, for $1 \le j \le m$, we compute the vectors 
\[\tilde{y}_j = \left[\frac{\langle u_1, \tilde{p}_j\rangle}{\langle u_1, \tilde{p}^*_j\rangle}, \ldots, \frac{\langle u_N, \tilde{p}_j \rangle}{\langle u_N, \tilde{p}^*_j\rangle}\right].\] We then evaluate whether there exists $w \in \mathbb{R}_{\ge 0}^m, \sum_{j=1}^m w_j = 1$ such that $\sum_{j=1}^m \log(w_j \tilde{y}_j) \ge \tau$ using \texttt{cvxpy} library with the default hyperparameters except for the tolerance which is set to $10^{-7}$.  (The hyperparameter $\tau$ is equal to $0.85 \cdot 10^{-6}$ if $N = 20$, $1.5 \cdot 10^{-6}$ if $N = 30$, and $1.9 \cdot 10^{-6}$ if $N = 50$.) If the optimization program is feasible, we say that the trial passed; otherwise the trial failed. If the trial passes for \textit{any} of the $T = 50$ trials, we interpret the condition in \eqref{eq:maxcondition} as holding for that value of $\beta$.

\section{Proofs for Section \ref{sec:model}}

\subsection{Proof of Proposition \ref{prop:pure}}\label{sec:proofpure}

We restate and prove Proposition \ref{prop:pure}.
\pure*
\begin{proof}[Proof of Proposition \ref{prop:pure}]
Assume for sake of contradiction that the solution $\p_1, \ldots, \p_P$ is a pure strategy equilibrium. We divide into two cases based on whether there are ties. The cases are: (1) there exist $1 \le j' \neq j \le P$ and $i$ such that $\langle \p_j, u_i \rangle = \langle \p_{j'}, u_i\rangle$, (2) there does not exist $j, j'$ and $i$ such that $\langle \p_j, u_i \rangle = \langle \p_{j'}, u_i\rangle$.

\paragraph{Case 1: there exist $1 \le j' \neq j \le P$ and $i$ such that $\langle \p_j, u_i \rangle = \langle \p_{j'}, u_i\rangle$.} Let producer $j$ and producer $j'$ be such that $\langle \p_j, u_i \rangle = \langle \p_{j'}, u_i\rangle$.  The idea is that the producer $j$ can leverage the discontinuity in their profit function \eqref{eq:utility} at $\p_j$. In particular, consider the vector $\p_j + \epsilon u_i$. The number of users that they receive as $\epsilon \rightarrow_{+} 0$ is \textit{strictly greater} than at $\p_j$. The cost, on the other hand, is continuous in $\epsilon$. This demonstrates that there exists $\epsilon > 0$ such that:
\[\Profit(\p_j + \epsilon u_i; \p_{-j}) > \Profit(\p_j; \p_{-j})\] as desired. This is a contradiction. 

\paragraph{Case 2: there does not exist $j, j'$ and $i$ such that $\langle \p_j, u_i \rangle = \langle \p_{j'}, u_i\rangle$.} Since the sum of the expected number of users won by all of the producers is $N$, there exists a producer who wins a nonzero number of users in expectation. Let $j$ be such a producer. Using the assumption that there are no ties (i.e. there does not exist $j'$ and $i$ such that $\langle \p_j, u_i \rangle = \langle \p_{j'}, u_i\rangle$), we know that producer $j$ wins the following set of users: 
\[\mathcal{N}_j := \left\{1 \le i \le N \mid \langle \p_j, u_i \rangle > \langle \p_{j'}, u_i\rangle \forall j' \neq j\right\}. \] We see that $\mathcal{N}_j$ is nonempty by the assumption that producer $j$ wins a nonzero number of users in expectation. We now leverage that the profit function of producer $j$ is continuous at $\p_j$. There exists $\epsilon > 0$ such that $\langle \p_j (1 - \epsilon), u_i \rangle > \langle \p_{j'}, u_i \rangle$ for all $j' \neq j$ and all $i \in \mathcal{N}_j$, so that:
\[\Profit(\p_j (1 - \epsilon); \p_{-j}) > \Profit(\p_j; \p_{-j})\] as desired. This is a contradiction. 

\end{proof}

\subsection{Proof of Proposition \ref{prop:existence}}\label{sec:proofexistence}

We restate and prove Proposition \ref{prop:existence}.
\existence*
\begin{proof}[Proof of Proposition \ref{prop:existence}]
We apply a standard existence result of symmetric, mixed strategy equilibria in discontinuous games (see Corollary 5.3 of \citep{reny99}). We adopt the terminology of that paper and refer the reader to \citep{reny99} for a formal definition of the conditions. Note that the game is symmetric by assumption, since the producers have symmetric utility functions. It suffices to show that: (1) the producer action space is convex and compact and (2) the game is diagonally better-reply secure. 

\paragraph{Producer action space is convex and compact.} In the current game, the producer action space is not compact. However,  we show that we can define a slightly modified game, where the producer action space is convex and compact, without changing the equilibrium of the game. For the remainder of the proof, we analyze this modified game. 

In particular, each producer must receive at least $0$ profit at equilibrium since $\mathcal{P}(\vec{0}; p_{-1}) \ge 0$ regardless of the actions $p_{-1}$ taken by other producers. If a producer chooses $\p$ such that $\|\p \| > N^{1/\beta}$, then their utility will be strictly negative. Thus, we can restrict to $\left\{ \p \in \bR_{\geq 0}^D \mid \|\p\| \le 2 N^{1/\beta}\right\}$ which is a convex compact set. We add a factor of $2$ slack to guarantee that any best-response by a producer will be in the \textit{interior} of the action space and not on the boundary. 

\paragraph{Establishing diagonal better reply security.} First, we show  the payoff function $\Profit(\mu; [\mu, \ldots, \mu])$ (where $\mu$ is a distribution over the producer action space) is continuous in $\mu$. Here we slightly abuse notation since $\Profit$ is technically defined over pure strategies in \eqref{eq:utility}. We implicitly extend the definition to mixed strategies by considering expected profit. Using the fact that each producer receives a $1/P$ fraction of users in expectation at a symmetric solution, we see that:
\[\Profit (\mu; [\mu, \ldots, \mu]) =  \frac{N}{P} - \int \|\p\|^{\beta} d\mu.\] Since the underlying topology on the set of distributions $\mu$ is the weak* topology, this implies continuity of the payoff.

Now, we construct, for each relevant payoff in the closure of the graph of the game's diagonal payoff function, an action that diagonal payoff secures that payoff. More formally, let $(\mu^*, \alpha^*)$ be in the closure of the graph of the game's diagonal payoff function, and suppose that $(\mu^*, \ldots, \mu^*)$ is not an equilibrium. It suffices to show that a producer can secure a payoff of $\alpha > \alpha^*$ along the diagonal at $(\mu^*, \ldots, \mu^*)$. We construct $\mu^{\text{sec}}$ that secures a payoff of $\alpha > \alpha^*$ along the diagonal at $(\mu^*, \ldots, \mu^*)$. 

Recall that $\alpha^* = \Profit(\mu^*, \ldots, \mu^*)$ by the continuity of the payoff function shown above. Since $(\mu^*, \ldots, \mu^*)$ is not an equilibrium, there exists $\p \in \left\{ \p' \in \bR_{\geq 0}^D \mid \|\p'\| \le N^{1/\beta}\right\}$ such that 
\[\Profit(\p; [\mu^*, \ldots, \mu^*]) > \Profit(\mu^*; [\mu^*, \ldots, \mu^*]) = \alpha^*.\] Since we ultimately want to show that $\p$ achieves high profit in an open neighborhood of $\mu^*$, we need to strengthen the above statement. We can achieve by this by appropriately perturbing $p$. First, we can perturb $p$ to $\tilde{p}$ such that for each $1 \le i \le N$, the distribution $\langle p', u_i\rangle$ where $p' \sim \mu^*$ does not have a point mass at $\langle \tilde{p}, u_i\rangle$, and such that:
\[ \Profit(\tilde{\p}; [\mu^*, \ldots, \mu^*]) = \sum_{i=1}^n \left(\mathbb{P}_{p' \sim \mu^*} [\langle \tilde{p}, u_i\rangle > \langle p', u_i\rangle]\right)^{P-1} - c(\tilde{p}) > \alpha^*.\]
Now, we construct $p^{\text{sec}}$ as a perturbation of $\tilde{p}$ to add $\epsilon$ slack to the constraint $\langle \tilde{p}, u_i\rangle > \langle p', u_i\rangle$. In particular, we observe that there exists $\epsilon^* > 0$ and $p^{\text{sec}} \in \bR_{\geq 0}^D$ such that 
\begin{equation}
\label{eq:construction} 
  \Profit(p^{\text{sec}}; [\mu^*, \ldots, \mu^*]) \ge \sum_{i=1}^n \left(\mathbb{P}_{p' \sim \mu^*} [\langle p^{\text{sec}}, u_i\rangle > \langle p', u_i\rangle + \epsilon^* \|u_i\|_2 ]\right)^{P-1} - c(p^{\text{sec}})  > \alpha^*.  
\end{equation}

We claim that $\mu^{\text{sec}}$ taken to be the point mass at $\p^{\text{sec}}$ will secure a payoff of 
\[\alpha = \frac{\sum_{i=1}^n \left(\mathbb{P}_{p' \sim \mu^*} [\langle p^{\text{sec}}, u_i\rangle > \langle p', u_i\rangle + \epsilon^* \|u_i\|_2 ]\right)^{P-1} - c(p^{\text{sec}}) + \alpha^*}{2} > \alpha^* \] along the diagonal at $(\mu^*, \ldots, \mu^*)$. For  each $1 \le i \le N$, we define the event $A_i$ to be:
\[A_i = \left\{p' \mid \langle p^{\text{sec}}, u_i \rangle > \langle p', u_i\rangle \right\}\]
and
define the event $A_i^{\epsilon}$ as:
\[A_i^{\epsilon} = \left\{p' \mid \langle p^{\text{sec}}, u_i \rangle > \langle p', u_i\rangle + \epsilon \|u_i\|_2 \right\} .\]
In this notation, we can rewrite equation \eqref{eq:construction} as:
\[ \Profit(p^{\text{sec}}; [\mu^*, \ldots, \mu^*]) \ge \sum_{i=1}^n \left(\mu^*(A_i^{\epsilon^*})\right)^{P-1} - c(p^{\text{sec}})  > \alpha^*\]
and $\alpha$ as:
\[\alpha = \frac{\sum_{i=1}^n \left(\mu^*(A_i^{\epsilon^*})\right)^{P-1} - c(p^{\text{sec}}) + \alpha^*}{2} > \alpha^* \]

Consider the metric on $\mathbb{R}_{\ge 0}^D$ given by the $\ell_2$ norm. For $\epsilon > 0$ let $B_{\epsilon}(\mu^*)$ denote the $\epsilon$-ball with respect to the Prohorov metric; using the definition of the weak* topology, we see that $B_{\epsilon}(\mu^*)$ is an open set with respect to the weak* topology. 
For every $p' \in A_i^{\epsilon}$, we see that $A_i$ contains the open neighborhood $B_{\epsilon}(p')$ with respect to the $\ell_2$ norm. 
By the definition of the Prohorov metric, we know that for all $\mu' \in B_{\epsilon}(\mu^*)$, it holds that
\[\mu'(A_i) \ge \mu^*(A_i^{\epsilon}) - \epsilon\] 
This implies that
\[ \Profit(p^{\text{sec}}; [\mu', \ldots, \mu']) \ge \sum_{i=1}^n \left(\mu'(A_i)\right)^{P-1} - c(p^{\text{sec}}) \ge \sum_{i=1}^n \left(\mu^*(A_i^{\epsilon}) - \epsilon \right)^{P-1} - c(p^{\text{sec}}).  \]
Putting this all together, we see that:
\[\Profit(p^{\text{sec}}; [\mu', \ldots, \mu']) \ge \left(\sum_{i=1}^n \left(\mu^*(A_i^{\epsilon})\right)^{P-1} - c(p^{\text{sec}})\right) -  \sum_{i=1}^n \left( \underbrace{\left(\mu^*(A_i^{\epsilon})\right)^{P-1} -  \left(\mu^*(A_i^{\epsilon}) - \epsilon \right)^{P-1}}_{(A)} \right). \]
Using that (A) goes to $0$ as $\epsilon$ goes to $0$, we see that for sufficiently small $\epsilon$, it holds that:
\[\sum_{i=1}^n \left( \left(\mu^*(A_i^{\epsilon})\right)^{P-1} -  \left(\mu^*(A_i^{\epsilon}) - \epsilon \right)^{P-1}\right) \le \frac{\Profit(\p^{\text{sec}}; [\mu^*, \ldots, \mu^*]) - \alpha^*}{3}.\]
As long as $\epsilon$ is also less than $\epsilon^*$, this means that: 
\begin{align*}
  \Profit(p^{\text{sec}}; [\mu', \ldots, \mu']) &\ge \left(\sum_{i=1}^n \left(\mu^*(A_i^{\epsilon})\right)^{P-1} - c(p^{\text{sec}})\right) - \frac{\sum_{i=1}^n \left(\mu^*(A_i^{\epsilon})\right)^{P-1} - c(p^{\text{sec}}) - \alpha^*}{3} \\
  &= \frac{2 \cdot \left(\sum_{i=1}^n \left(\mu^*(A_i^{\epsilon})\right)^{P-1} - c(p^{\text{sec}})\right) + \alpha^*}{3}  \\
  &\ge \frac{2 \cdot \left(\sum_{i=1}^n \left(\mu^*(A_i^{\epsilon^*})\right)^{P-1} - c(p^{\text{sec}})\right) + \alpha^*}{3}  \\
  &> \alpha  
\end{align*}
for all $\mu' \in B_{\epsilon}(\mu^*)$, as desired. 
\end{proof}

\subsection{Proof of Proposition \ref{prop:atom}}

In this proof, we consider the payoff function $\Profit(\mu_1; [\mu_2, \ldots, \mu_P])$ (where $\mu$ is a distribution over the producer action space) defined to be the expected profit attained if a producer plays $\mu_1$ when other producers play $\mu_2, \ldots, \mu_P$. Strictly speaking, this is an abuse of notation since $\Profit$ is technically defined over pure strategies in \eqref{eq:utility}. We implicitly extend the definition to mixed strategies by considering \textit{expected} profit. 

\begin{proof}[Proof of Proposition \ref{prop:atom}]
Let $\mu$ be a symmetric equilibrium, and assume for sake of contradiction that there is an atom at $\p \in \mathbb{R}^d$ with probability mass $\alpha > 0$. It suffices to construct a vector $\p'$ that achieves profit 
\[\Profit(\p'; [\mu, \ldots, \mu]) > \Profit(\vec{0}; [\mu, \ldots, \mu]) = \Profit(\mu; [\mu, \ldots, \mu]).\]

Consider the vector $\p' = \p + \epsilon u_1$ for some $\epsilon >0$. For any given realization of actions by other producers, and for any given user, the vector $\p'$ never wins the user with lower probability than the vector $\p$. We construct an event and a user where the vector $\p'$ wins the user with strictly higher probability than the vector $\p$. Let $E$ be the event that all of the other producers choose the $\p$ vector. This event happens with probability $\alpha^{P-1}$. Conditioned on $E$, the vector $\p'$ wins user $u_1$; on the other hand, the vector $\p$ wins user $u_1$ with probability $1/P$. Since the cost function is continuous in $\epsilon$, there exists $\epsilon$ such that $\Profit(\p; [\mu, \ldots, \mu]) > \Profit(\vec{0}; [\mu, \ldots, \mu]) = \Profit(\mu; [\mu, \ldots, \mu]) $. This is a contradiction.

\end{proof}

\subsection{Derivation of Example \ref{example:1d}}\label{sec:derivationexample1d}

To see that the cumulative distribution function is $F(p) = \min(1, p^{\beta / P-1})$, we use the fact that every equilibrium is by definition a single-genre equilibrium in 1 dimension and apply Lemma \ref{lemma:cdf}. 

\section{Proofs for Section \ref{sec:specialization}} 

In Section \ref{sec:proofthm}, we prove Theorem \ref{thm:singlegenre}, and in Section \ref{sec:proofcor}, we prove the corollaries of Theorem \ref{thm:singlegenre} in Section \ref{sec:specialization} (with the exception of Corollary \ref{cor:2users}, whose proof we defer to Section \ref{sec:proofsstructural}). 

\subsection{Proof of Theorem \ref{thm:singlegenre}}\label{sec:proofthm}

To prove Theorem \ref{thm:singlegenre}, our main technical ingredient is  Lemma \ref{lemma:optimizationprogram}, which shows that the existence of a single-genre equilibrium boils down to a minimax theorem and is restated below. 
\optimization*

Before diving into the proof of Lemma \ref{lemma:optimizationprogram} and Theorem \ref{thm:singlegenre}, we describe an intermediate result will be useful in the proof of Lemma \ref{lemma:optimizationprogram}. Suppose that there exists an equilibrium $\mu$ such that $\Genre(\mu)= \left\{\p^*\right\}$ contains a single direction. Then $\mu$ is fully determined by the distribution over quality $\|\p\|$ where $\p \sim \mu$; therefore, let $F$ denote the cdf of $\|\p \|$ for $\p \sim \mu$. We can derive a closed-form expression for $F$; in fact, we show that it is identical to the cdf of the 1-dimensional setup in Example \ref{example:1d}. 
\begin{restatable}{lemma}{cdf}
\label{lemma:cdf}
Suppose that $\mu$ is a symmetric equilibrium such that $\Genre(\mu)$ contains a single vector. Let $F$ be the cdf of the distribution over $\|\p\|$ where $\p \sim \mu$. Then, it holds that:
\begin{equation}
F(r) = \min\left(1, \left(\frac{r^{\beta}}{N}\right)^{1/(P-1)}\right). 
\end{equation}
\end{restatable}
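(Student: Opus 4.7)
The plan is to reduce the problem to a one-dimensional competition along the single genre direction, and then solve for the equilibrium cdf using standard tools from all-pay-style auction analysis.

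First, I would fix a unit vector $\p^* \in \Genre(\mu)$ (so every $\p \in \text{supp}(\mu)$ has the form $\p = q\p^*$ for some $q \geq 0$) and denote by $F$ the cdf of $q = \|\p\|$ under $\mu$. Since every user value is $\langle u_i, q\p^*\rangle = q\langle u_i, \p^*\rangle$ and all $\langle u_i, \p^*\rangle > 0$, the producer with the largest realized $q$ wins \emph{all} $N$ users simultaneously. Together with the fact that $\mu$ is atomless (Proposition \ref{prop:atom}), so $F$ is continuous and ties occur with probability zero, this gives the clean expression for the expected profit of a deviation to $q\p^*$ against $P-1$ opponents drawn from $\mu$:
\begin{equation*}
\Profit(q\p^*;\;\mu^{\otimes(P-1)}) \;=\; N \cdot F(q)^{P-1} \;-\; q^{\beta}.
\end{equation*}

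Next, I would identify the common equilibrium profit. By the Nash condition, the expression above is some constant $\pi$ for every $q$ in the support of $F$. A producer can always deviate to $q=0$, earning profit $N \cdot F(0)^{P-1}$, which equals $0$ by atomlessness; hence $\pi \geq 0$. Conversely, let $q_0 := \inf\text{supp}(F)$; by continuity of $F$, $F(q_0) = 0$, so $\pi = -q_0^{\beta} \leq 0$. Combining, $\pi = 0$ and $q_0 = 0$. Plugging $\pi = 0$ into the profit expression gives $F(q) = (q^{\beta}/N)^{1/(P-1)}$ for every $q \in \text{supp}(F)$.

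Finally, I would argue the support has no gaps, so the formula extends to all of $[0, N^{1/\beta}]$. If there were a gap $(a,b)$ with $a,b \in \text{supp}(F)$, then $F$ is constant on $[a,b]$, so $\Profit(b\p^*) - \Profit(a\p^*) = -(b^{\beta} - a^{\beta}) < 0$, contradicting that both endpoints yield profit $\pi = 0$. Hence $\text{supp}(F) = [0, q_{\max}]$ with $q_{\max}$ determined by $F(q_{\max}) = 1$, giving $q_{\max} = N^{1/\beta}$. For $q > N^{1/\beta}$, $F(q) = 1$, yielding the formula $F(r) = \min(1, (r^{\beta}/N)^{1/(P-1)})$.

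I do not anticipate a hard step: the main subtlety is handling the boundary behavior at $q = 0$ to pin down that the equilibrium profit is $\pi = 0$, which is why the atomlessness result from Proposition \ref{prop:atom} is invoked at the start, and why the deviation to $q=0$ provides the lower bound on $\pi$. The gap-elimination argument is a standard monotonicity check using the fact that production cost is strictly increasing in $q$.
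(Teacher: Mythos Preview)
Your argument is essentially the same as the paper's: reduce to a one-dimensional winner-take-all contest along the fixed direction $\p^*$, pin down the equilibrium profit as zero via the boundary point at the infimum of the support, solve $NF(q)^{P-1}=q^\beta$ on the support, and eliminate gaps by a monotonicity check on the cost.

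One step you assert without proof deserves attention: the claim that $\langle u_i,\p^*\rangle>0$ for every user $i$. This is not automatic from the model, since users and content live in $\mathbb{R}_{\ge 0}^D$ and could have disjoint coordinate supports. If $\langle u_i,\p^*\rangle=0$ for some $i$, then every producer on the ray through $\p^*$ ties at value $0$ for user $i$, the tie-breaking rule kicks in, and your clean profit expression $NF(q)^{P-1}-q^\beta$ is no longer correct. The paper handles this with a short separate lemma (Lemma~\ref{lemma:nonzero}): if $\langle u_i,\p^*\rangle=0$, a producer could add $\epsilon u_i$ to any $\p'\in\text{supp}(\mu)$ and jump from winning $u_i$ with probability $1/P$ to winning with probability $1$, a discrete gain that survives the infinitesimal cost increase, contradicting equilibrium. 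You should either include that argument or cite it explicitly before writing down the profit formula.
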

The intuition for Lemma \ref{lemma:cdf} is that a single-genre equilibrium essentially reduces the producer's decision to a 1-dimensional space, and so inherits the structure of the 1-dimensional equilibrium.

To formalize the lemmas in this proof sketch,  we will define a set $\Setm$ which deletes all points with a zero coordinate from $\Set$. More formally: 
\[
  \Setm := \left\{\UMatrix\p \mid \|\p\| \le 1, p \in \bR_{\ge 0}^D  \right\} \cap \bR^N_{> 0}.\] For notational convenience, we also define:
  \[\Ball := \left\{ p \in \bR_{\ge 0}^D \mid \|\p\| \le 1 \right\}, \] 
  \[\Ballm := \left\{ p \in \bR_{\ge 0}^D \mid \|\p\| \le 1, \langle \p, u_i\rangle > 0 \forall i \right\},    \] which are both convex sets. We further define:
    \[\D := \left\{p \in \bR_{\ge 0}^D \mid \|p\| = 1\right\} \] and
  \[\Dm := \left\{p \in \bR_{\ge 0}^D \mid \|p\| = 1, \langle p, u_i\rangle >0 \forall i \right\}. \] 
  Note that it follows from definition that:
  \[\Set = \left\{\UMatrix\p \mid \p \in \Ball \right\}   \] 
    \[\Setm = \left\{\UMatrix\p \mid \p \in \Ballm \right\}   \]

The proof will proceed by proving Lemma \ref{lemma:cdf} and Lemma \ref{lemma:optimizationprogram}, and then proving Theorem  \ref{thm:singlegenre} from these lemmas. In Section \ref{appendix:auxiliary}, we prove a useful auxiliary lemma about single-genre equilibria; in Section \ref{section:cdfproof}, we prove Lemma \ref{lemma:cdf}; in Appendix \ref{section:optimizationprogramproof}, we formalize and prove Lemma \ref{lemma:optimizationprogram}; and in  Section \ref{section:mainproof}, we prove Theorem \ref{thm:singlegenre}. 

\subsubsection{Auxiliary lemma}\label{appendix:auxiliary}
We show that at a single-genre equilibrium, it must hold that the direction vector has nonzero inner product with every user. 
\begin{lemma}
\label{lemma:nonzero} 
Suppose that $\mu$ is a symmetric equilibrium such that $\Genre(\mu)$ contains a single vector $\p^*$. Then $p^* \in \text{span}(u_1, \ldots, u_N)$ (which also means that $\langle \p^*, u_i \rangle > 0$ for all $i$.) 
\end{lemma}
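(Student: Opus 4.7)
Plan: I will prove both parts by comparing the equilibrium profit against specific pure-strategy deviations. From Lemma~\ref{lemma:cdf}, the quality cdf is $F(q) = \min(1, (q^\beta/N)^{1/(P-1)})$ on $[0, N^{1/\beta}]$, so $\mu$ is continuous (no atom at $0$) and the equilibrium profit is $0$: evaluating at the support endpoint $q = N^{1/\beta}$, a producer wins every user almost surely and pays cost exactly $N \cdot \|p^*\|^\beta = N$ (recall $p^*$ is a unit vector as a direction).

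For the inner-product claim, I consider the pure-strategy deviation to the origin $\vec 0$. Against $\mu$, competitors play some $qp^*$ with $q > 0$ almost surely, so the deviator loses every user $j$ with $\langle p^*, u_j\rangle > 0$ outright and ties with all other producers on every user $i$ with $\langle p^*, u_i\rangle = 0$, capturing a $1/P$ share of each tied user. The deviation profit is therefore $|\{i : \langle p^*, u_i\rangle = 0\}|/P$, which must be $\le 0$ by the Nash condition (equilibrium profit is $0$); hence $\langle p^*, u_i\rangle > 0$ for every $i$.

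For the span claim, my plan is to show that $p^*$ minimizes $\|p\|$ over $p \in \mathbb{R}_{\ge 0}^D$ subject to $\langle u_i, p\rangle = \langle u_i, p^*\rangle$ for all $i$: any strictly cheaper feasible $\tilde p^*$ would yield a deviation with identical winning probabilities against every user (matched inner products) but strictly smaller production cost, hence strictly positive profit, contradicting equilibrium. Combining this optimality with the first part (so the nonnegativity constraints are inactive on the coordinates that actually drive the inner products) and invoking KKT/subdifferential conditions for this convex norm-minimization problem, I expect to decompose $p^*$ as a linear combination of $u_1, \ldots, u_N$ plus components on coordinates outside $\bigcup_i \mathrm{supp}(u_i)$; the latter contribute cost but no inner product and so must vanish at the optimum, giving $p^* \in \mathrm{span}(u_1, \ldots, u_N)$. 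The main obstacle I foresee is in this second part: the argument must accommodate an arbitrary norm $\|\cdot\|$ --- including non-strictly-convex ones like $\ell_\infty$, where norm-minimizers need not be unique --- which requires a careful subdifferential argument and attention to possibly active nonnegativity constraints on coordinates where $p^*_k = 0$.
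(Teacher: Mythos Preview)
Your argument for $\langle p^*, u_i\rangle > 0$ is circular: you invoke Lemma~\ref{lemma:cdf} to conclude that the equilibrium profit equals $0$, but the paper's proof of Lemma~\ref{lemma:cdf} itself relies on Lemma~\ref{lemma:nonzero} (it needs $\langle p^*, u_i\rangle > 0$ for all $i$ precisely to argue that a producer at $\vec 0$ wins no users, hence that the equilibrium profit is zero). If some $\langle p^*, u_i\rangle = 0$, then every producer ties at $0$ on user $i$ regardless of quality, so the profit at $\vec 0$ is $|\{i:\langle p^*,u_i\rangle = 0\}|/P$; nothing you have established rules out the equilibrium profit being exactly this value, and your comparison yields no contradiction. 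The paper sidesteps this with a direct perturbation that requires no knowledge of the profit level: from any $p'\in\text{supp}(\mu)$, move to $p' + \epsilon\, u_i/\|u_i\|$. This wins user $i$ outright (jumping from a $1/P$ share to probability $1$, since every competitor has inner product $0$ with $u_i$), weakly improves on every other user because $u_i \ge 0$ coordinatewise, and the cost increase vanishes as $\epsilon\to 0$ --- a strict improvement over a support point, contradicting equilibrium.

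On the span claim $p^* \in \text{span}(u_1,\ldots,u_N)$: the paper's proof does not address it at all (only the inner-product conclusion is used downstream), and in fact the claim appears to fail for general norms. With $\|\cdot\| = \|\cdot\|_\infty$, a single user $u_1 = e_1$ in $\mathbb{R}^2$, and any $\beta \ge 1$, one can check via Lemma~\ref{lemma:optsingledirection} that the direction $(1,1)$ supports a single-genre equilibrium, yet $(1,1)\notin\text{span}(e_1)$. Your KKT sketch therefore cannot succeed at the stated generality; the obstacle you anticipated with non-strictly-convex norms is fatal rather than merely technical.
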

\begin{proof}
Assume for sake of contradiction that $\langle p^*, u_i \rangle = 0$ for some $i$. Suppose that $p' \in \text{supp}(\mu)$, and consider the vector $\p' + \epsilon \frac{u_i}{||u_i||}$. We see that $\p' + \epsilon \frac{u_i}{||u_i||}$ wins user $u_i$ with probability 1 whereas $\p'$ wins user $u_i$ with probability $1/P$. The probability that $\p + \epsilon u_i$ wins any other user is also at least the probability that $\p'$ wins $u_i$. By leveraging this discontinuity, we see there exists $\epsilon$ such that $\Profit(\p' + \epsilon \frac{u_i}{||u_i||}; [\mu, \ldots, \mu]) > \Profit(\p'; [\mu, \ldots, \mu]) + (1 - \frac{1}{P})$ which is a contradiction.
\end{proof}

\subsubsection{Proof of Lemma \ref{lemma:cdf}}\label{section:cdfproof}
We restate and prove Lemma \ref{lemma:cdf}. 
\cdf*
\begin{proof}
Next, we show that $F(r) = 0$ only if $r = 0$. Since the distribution $\mu$ is atomless (by Proposition \ref{prop:atom}), we can view the support as a closed set. Let $r_{\text{min}}$ be the minimum magnitude element in the support of $\mu$. Since $\mu$ is atomless, this means that with probability 1, every producer will have magnitude greater than $r_{\text{min}}$. This, coupled with Lemma \ref{lemma:nonzero}, means that the producer the expected number of users achieved at $r_{\text{min}} \p$ is $0$, and $\Profit(r_{\text{min}} \p; [\mu, \ldots \mu]) = -r_{\text{min}}^{\beta}$. However, since $r_{\text{min}} \p \ \in \text{supp}(\mu)$, it must hold that:
\[-r_{\text{min}}^{\beta} = \Profit(r_{\text{min}} \p; [\mu, \ldots, \mu]) \ge \Profit(\vec{0}; [\mu, \ldots, \mu]) \ge 0.\] This means that $r_{\text{min}} = 0$. 

Next, we show that the equilibrium profit at $(\mu, \ldots, \mu)$ is equal to $0$. To see this, suppose that if the producer chooses $\vec{0}$. Since $\mu$ is atomless and since $\langle \p^*, u_i \rangle > 0$ for all $i$ (by Lemma \ref{lemma:nonzero}), we see that if a producer chooses $\vec{0} \in \text{supp}(\mu)$, they receive $0$ users in expectation. This means that $\Profit(\vec{0}; [\mu, \ldots, \mu])  = 0$ as desired.

Next, we show that $F(r) =  \left(\frac{r^{\beta}}{N}\right)^{1/(P-1)}$ for any $r \p^* \in \text{supp}(\mu)$. To show this, notice that the producer must earn the same profit---here, zero profit---for any $\p \in \text{supp}(\mu)$. This means that for any $r \p^* \in \text{supp}(\mu)$, it must hold that $N F(r)^{P-1} - r^{\beta} = 0$. Solving, we see that $F(r) = \left(\frac{r^{\beta}}{N}\right)^{1/(P-1)}$. 

Finally, we show that the support of $F$ is exactly $[0, N^{1/\beta}]$. First, we already showed that $r_{\text{min}} = 0$ which means that $0$ is the minimum magnitude element in the support. Moreover, $r = N^{1/\beta}$ must be the maximum magnitude element in the support since it is the unique value for which $F(r) = 1$. Now, we show that $\text{supp}(F)$ is equal to $[0, N^{1/\beta}]$. Note that the set $\text{supp}(F) \cup [N^{1/\beta}, \infty) \cup (-\infty, 0]$ is a finite union of closed sets and is thus closed. Let $S' := \mathbb{R} \setminus (\text{supp}(F) \cup [N^{1/\beta}, \infty) \cup (-\infty, 0])$; it suffices to prove that $S' = \emptyset$. Assume for sake of contradiction that $S' \neq \emptyset$. Since $S'$ is open, there exists $x \in (0, N^{1/\beta})$ and $\epsilon > 0$ such that $(x, x+\epsilon) \subseteq S'$. Let $r_1 = \inf_{y \in \text{supp}(F), y \le x} y$ and let $r_2 = \sup_{y \in \text{supp}(F), y \ge x + \epsilon} y$. Note that both $r_1$ and $r_2$ are in $\text{supp}(F)$ (since it is closed), and $(r_1, r_2) \cap \text{supp}(F) = \emptyset$. By the structure of $F$, since $F(r_2) > F(r_1)$, this means that the cdf jumps from $F(x)$ to $F(x+\epsilon)$ anyway so there would be atoms (but there are no atoms by Proposition \ref{prop:atom}). This proves that the support is $[0, N^{1/\beta}]$. 

In conclusion, we have shown that $F(r) =  \left(\frac{r^{\beta}}{N}\right)^{1/(P-1)}$ for any $r \in [0, N^{1/\beta}]$. The $\min$ with 1 comes from the fact that $F(r) = 1$ for $r \ge N^{1/\beta}$.
\end{proof}

\subsubsection{Formal Statement and Proof of Lemma \ref{lemma:optimizationprogram}}\label{section:optimizationprogramproof}

We begin with a proof sketch of Lemma \ref{lemma:optimizationprogram}. For $\mu$ to be an equilibrium, no alternative $q$ should do better than  $\p \sim \mu$, which yields the following necessary and sufficient condition after plugging into the profit function \eqref{eq:utility}: 
\begin{equation}
\label{eq:eqbasic}
\sup_{q} \left(\sum_{i=1}^N \frac{1}{N} \left(\frac{\langle q, u_i \rangle}{\langle \p^*, u_i \rangle} \right)^{\beta}  - \|q\|^{\beta} \right) = \mathbb{E}_{\p' \sim \mu} \left[\sum_{i=1}^N \frac{1}{N}  \left(\frac{\langle \p', u_i \rangle}{\langle \p^*, u_i \rangle} \right)^{\beta}  - \|\p'\|^{\beta}\right]    
\end{equation}
The term $\frac{1}{N} \left(\cdot\right)^{\beta}$ is the probability  $\left(F(\cdot)\right)^{P-1}$  that $q$ outperforms the max of $P-1$ samples from $\mu$.

We next change variables according to $y_i = \langle \p^*, u_i\rangle^{\beta}$ and $y'_i = \langle \frac{q}{||q||}, u_i \rangle^{\beta}$ and simplify to see that $\mu$ is an equilibrium if and only if 
$\sup_{y' \in \Set^{\beta}} \sum_{i=1}^n \frac{y'_i}{y_i} = N$. Thus, there exists a single-genre equilibrium if and only if
\begin{equation}
    \label{eq:infsup}
    \inf_{y \in \Set^{\beta}} \sup_{y' \in \Set^{\beta}} \sum_{i=1}^N  \frac{y'_i}{y_i} = N.
\end{equation}
 While the left-hand side of equation \eqref{eq:infsup} is challenging to reason about directly, we show that the dual $\sup_{y' \in \Set^{\beta}} \inf_{y \in \Set^{\beta}} \sum_{i=1}^N  \frac{y'_i}{y_i} $ is in fact equal to $N$. 

With this proof sketch in mind, we are ready to formalize and prove Lemma \ref{lemma:optimizationprogram}. 
\begin{lemma}[Formalization of Lemma \ref{lemma:optimizationprogram}]
\label{lemma:optimizationprogramrestated}
There exists a symmetric equilibrium $\mu$ with $|\Genre(\mu)| = 1$ if and only if: 
\begin{equation}
    \label{eq:optimizationprogramrestated}
\inf_{p^* \in \Ballm} \sup_{y' \in \Set^{\beta}} \sum_{i=1}^N  \frac{y'_i}{(\langle \p^*, u_i \rangle)^{\beta}} = \sup_{y' \in \Set^{\beta}}\inf_{p^* \in \Ballm}  \sum_{i=1}^N  \frac{y'_i}{(\langle \p^*, u_i \rangle)^{\beta}}.
\end{equation}
\end{lemma}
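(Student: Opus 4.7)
The plan is to reduce single-genre equilibrium existence to a scalar inequality in the candidate direction $p^*$, then compute each side of \eqref{eq:optimizationprogramrestated} directly. First, I characterize when a direction $p^* \in \Dm$ gives rise to an equilibrium. Suppose $\mu$ is a symmetric equilibrium with $\Genre(\mu) = \{p^*\}$. By Lemma~\ref{lemma:nonzero}, $p^* \in \Dm$, and by Lemma~\ref{lemma:cdf}, the cdf of $\|p\|$ under $\mu$ is $F(r) = \min(1, (r^\beta/N)^{1/(P-1)})$ and equilibrium profit is $0$. The best-response condition $\Profit(q; [\mu, \ldots, \mu]) \leq 0$ for every $q \in \mathbb{R}_{\geq 0}^D$, after substituting $F$, writing $q = c \tilde{q}$ with $\|\tilde{q}\| = 1$, and cancelling $c^\beta$, reduces (modulo the $\min$-cap, see below) to $\sum_i (\langle \tilde{q}, u_i\rangle/\langle p^*, u_i\rangle)^\beta \leq N$ for all $\tilde{q} \in \Ball$. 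Setting $y'_i = \langle \tilde{q}, u_i\rangle^\beta$, this reads $\sup_{y' \in \Set^\beta} \sum_i y'_i/\langle p^*, u_i\rangle^\beta \leq N$. The converse (this inequality implies that the $\mu$ built via Lemma~\ref{lemma:cdf} is an equilibrium) follows by running the same computation in reverse.

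Denote the left-hand side of \eqref{eq:optimizationprogramrestated} by $V$. Plugging $\tilde{q} = p^*/\|p^*\|$ into the inner sup gives $\sup \geq N/\|p^*\|^\beta \geq N$ for every $p^* \in \Ballm$, so $V \geq N$ unconditionally. Since the inner sup is continuous on $\Ballm$, blows up as some $\langle p^*, u_i\rangle \to 0$, and is monotonically decreasing along rays, the inf is attained at some $p^* \in \Dm$. Combined with the previous step, single-genre equilibria exist if and only if $V = N$.

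The key computation is that the right-hand side $W$ of \eqref{eq:optimizationprogramrestated} always equals $N$. For the upper bound, every $y' \in \Set^\beta$ can be written as $y'_i = \langle \tilde{p}, u_i\rangle^\beta$ for some $\tilde{p} \in \Ball$, and choosing $p^* = \tilde{p}/\|\tilde{p}\|$ (with a limiting argument if $\tilde{p}$ lies on the relative boundary of $\Ballm$) yields value $N\|\tilde{p}\|^\beta \leq N$, so $W \leq N$. For the matching lower bound, take $p^{\mathrm{NSW}} \in \argmax_{p \in \Dm} \prod_i \langle p, u_i\rangle$, set $y^*_i = \langle p^{\mathrm{NSW}}, u_i\rangle^\beta$, and apply AM-GM:
\[
\sum_i \frac{y^*_i}{\langle p^*, u_i\rangle^\beta} \geq N \left(\frac{\prod_i \langle p^{\mathrm{NSW}}, u_i\rangle^\beta}{\prod_i \langle p^*, u_i\rangle^\beta}\right)^{1/N} \geq N
\]
for every $p^* \in \Dm$, by the defining property of $p^{\mathrm{NSW}}$. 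Hence $W \geq N$, and $W = N$. By weak duality $W \leq V$, so $V = W$ is equivalent to $V = N$, which by the previous steps is equivalent to the existence of a single-genre equilibrium.

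The main technical obstacle I anticipate is handling the $\min$-cap in the reduction step: when $\langle q, u_i\rangle > \langle p^*, u_i\rangle N^{1/\beta}$, the $i$-th term of the profit caps at $1$ rather than equalling $(\langle q, u_i\rangle/\langle p^*, u_i\rangle)^\beta/N$, so the reduction to the clean scalar inequality needs a dual-norm argument showing that whenever the cap binds on some coordinate, $\|q\|^\beta$ is already forced to be large enough that profit is nonpositive. Modulo this bookkeeping, the remainder is a clean one-parameter minmax computation that directly yields the stated strong duality.
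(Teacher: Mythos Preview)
Your approach is essentially identical to the paper's: both reduce single-genre existence to the per-direction inequality $\sup_{y'\in\Set^\beta}\sum_i y'_i/\langle p^*,u_i\rangle^\beta \le N$ (the paper's Lemma~\ref{lemma:optsingledirection}), then show the sup--inf side always equals $N$ via the same AM--GM argument with the Nash-social-welfare maximizer (the paper packages this as Lemma~\ref{lemma:supinf}), and conclude by weak duality. On the $\min$-cap you flagged as the main obstacle, the paper's treatment is simpler than your anticipated dual-norm argument: writing $q=r\tilde q$ with $\|\tilde q\|=1$, the profit condition becomes $\sum_i \min(r^{-\beta},a_i)\le 1$ with $a_i = N^{-1}(\langle\tilde q,u_i\rangle/\langle p^*,u_i\rangle)^\beta$, and each summand is monotone nonincreasing in $r$, so the supremum over $r>0$ is achieved in the limit $r\to 0^+$ where no cap binds.
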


It turns out to be more convenient to use a (slightly less intuitive) variant of Lemma \ref{lemma:optimizationprogramrestated} to prove Theorem \ref{thm:singlegenre}. We state and prove Lemma \ref{lemma:conditiongen} below. 
\begin{lemma}
\label{lemma:conditiongen}
There exists a symmetric equilibrium $\mu$ with $|\Genre(\mu)| = 1$ if and only if: 
\begin{equation}
\label{eq:conditioninter}    
\inf_{p^* \in \Ballm} \sup_{y' \in \Set^{\beta}} \sum_{i=1}^N  \frac{y'_i}{(\langle \p^*, u_i \rangle)^{\beta}} \le N.
\end{equation}
\end{lemma}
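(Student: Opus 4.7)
The plan is to reduce existence of a single-genre equilibrium to a pointwise condition on the candidate direction, and then take an infimum. For $p^* \in \Dm$ (unit-norm vectors with $\langle p^*, u_i\rangle > 0$ for all $i$), define
\begin{equation*}
\Phi(p^*) := \sup_{y' \in \Set^\beta}\; \sum_{i=1}^N \frac{y'_i}{\langle p^*, u_i\rangle^\beta}.
\end{equation*}
I will establish the pointwise characterization that a single-genre equilibrium with direction $p^*$ exists iff $\Phi(p^*) \le N$; then existence of \emph{some} single-genre equilibrium is equivalent to $\inf_{p^* \in \Dm}\Phi(p^*) \le N$, which I reconcile with $\inf_{\Ballm}$ in the last step.

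For the forward direction of the pointwise claim, let $\mu$ be a single-genre equilibrium with direction $p^*$, which lies in $\Dm$ by Lemma \ref{lemma:nonzero} after rescaling to $\|p^*\| = 1$. By Lemma \ref{lemma:cdf} the quality cdf is $F(r) = \min\bigl(1, (r^\beta/N)^{1/(P-1)}\bigr)$ and the equilibrium profit is $0$; since $F$ is continuous, $\mu$ is atomless, and the profit of any alternative $q \in \bR_{\ge 0}^D$ is $\sum_i F(\langle q, u_i\rangle/\langle p^*, u_i\rangle)^{P-1} - \|q\|^\beta \le 0$. Taking $q = \lambda p$ with $p \in \D$ and $\lambda > 0$ small enough that every argument of $F$ lies in $[0, N^{1/\beta}]$, one has $F^{P-1}(\cdot) = (\cdot)^\beta/N$ and $\lambda^\beta$ cancels, yielding $\sum_i (\langle p, u_i\rangle/\langle p^*, u_i\rangle)^\beta \le N$; taking the sup over $p \in \D$ (equivalently, over $y' \in \Set^\beta$ via $y'_i = \langle p, u_i\rangle^\beta$, using homogeneity of degree $\beta$) gives $\Phi(p^*) \le N$. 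For the reverse direction, assume $\Phi(p^*) \le N$. The identity $F(r)^{P-1} = \min(1, r^\beta/N) \le r^\beta/N$ yields, for any $q \ne 0$,
\begin{equation*}
\sum_{i=1}^N F\!\left(\tfrac{\langle q, u_i\rangle}{\langle p^*, u_i\rangle}\right)^{P-1} \!\!-\! \|q\|^\beta \;\le\; \tfrac{\|q\|^\beta}{N}\sum_i \left(\tfrac{\langle q/\|q\|, u_i\rangle}{\langle p^*, u_i\rangle}\right)^\beta \!\!-\! \|q\|^\beta \;\le\; \|q\|^\beta\!\left(\tfrac{\Phi(p^*)}{N} - 1\right) \;\le\; 0,
\end{equation*}
while direct computation gives profit $0$ on $\mathrm{supp}(\mu^{p^*}) = \{r p^* : r \in [0, N^{1/\beta}]\}$; hence $\mu^{p^*}$ is an equilibrium.

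To pass from $\Dm$ to $\Ballm$: for $p^* \in \Ballm$ with $\|p^*\| < 1$, rescaling to $p^*/\|p^*\| \in \Dm$ multiplies each denominator in $\Phi$ by $\|p^*\|^{-\beta} \ge 1$, so $\Phi$ weakly decreases; hence $\inf_{\Ballm}\Phi = \inf_{\Dm}\Phi$. Moreover, $\Phi$ is continuous on $\Dm$ and tends to $+\infty$ at the relative boundary (since each $u_i \ne 0$ guarantees some $y' \in \Set^\beta$ with $y'_i > 0$), so this infimum is attained in $\Dm$ whenever it is finite; combined with the pointwise equivalence, this yields the stated biconditional. The main obstacle I anticipate is the reverse direction of the pointwise claim: one must deduce the full profit inequality, which involves the discontinuity coming from the $\min$ in $F^{P-1}$, from the weaker condition $\Phi(p^*) \le N$ that a priori only constrains small $q$. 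The key is the pointwise bound $\min(1, x) \le x$, which linearizes the profit expression and makes the resulting inequality scale out cleanly in $\|q\|$.
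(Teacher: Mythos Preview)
Your proof is correct and follows the same route as the paper: establish the pointwise characterization (the paper's Lemma~\ref{lemma:optsingledirection}) via Lemma~\ref{lemma:cdf} and then take the infimum over directions, with attainment handled via coercivity of $\Phi$ at the boundary of $\Dm$. The only cosmetic difference is in the reverse direction of the pointwise claim, where you use the bound $\min(1,x)\le x$ directly while the paper instead argues that the normalized deviation profit is monotone in the radius and hence maximized as $r\to 0$.
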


The main ingredient in the proof of Lemma \ref{lemma:conditiongen} is the following characterization of a single-genre equilibrium in a given direction. 
\begin{lemma}
\label{lemma:optsingledirection}
There is a symmetric equilibrium $\mu$ with $\Genre(\mu) = \left\{p^*\right\}$ if and only if:
\begin{equation}
\label{eq:optsingle}
  \sup_{y' \in \Set^{\beta}} \sum_{i=1}^N  \frac{y'_i}{(\langle \p^*, u_i \rangle)^{\beta}} \le N. 
\end{equation}
\end{lemma}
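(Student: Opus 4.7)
} The plan is to use Lemma \ref{lemma:cdf} to pin down the explicit form of any candidate single-genre equilibrium, compute the expected profit of an arbitrary deviation $q \in \bR_{\ge 0}^D$ in closed form, and then verify that the resulting inequality $\Profit(q;[\mu,\ldots,\mu])\le 0$ for all $q$ is equivalent to \eqref{eq:optsingle}.

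First, I would fix $p^*$ with $\|p^*\|=1$ and observe that by Lemma \ref{lemma:nonzero} any single-genre equilibrium with direction $p^*$ forces $\langle p^*,u_i\rangle>0$ for all $i$, so we may assume $p^*\in\Dm$ (otherwise both sides of the lemma's equivalence fail). For the forward direction, Lemma \ref{lemma:cdf} says that the cdf of $\|p\|$ for $p\sim\mu$ is $F(r)=\min(1,(r^\beta/N)^{1/(P-1)})$, supported on $[0,N^{1/\beta}]$. Conversely, for the reverse direction I would define a candidate $\mu$ supported on $\{rp^*:r\in[0,N^{1/\beta}]\}$ with this same cdf for $\|p\|$, and then check the equilibrium condition.

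Next, I would compute $\Profit(q;[\mu,\ldots,\mu])$ explicitly. Since every other producer draws $p=rp^*$ with $r\sim F$, the probability $q$ beats one other producer on user $i$ equals $F(\langle q,u_i\rangle/\langle p^*,u_i\rangle)$, so the probability of winning user $i$ outright against the other $P-1$ producers is $F(\langle q,u_i\rangle/\langle p^*,u_i\rangle)^{P-1}=\min\bigl(1,\,(\langle q,u_i\rangle/\langle p^*,u_i\rangle)^\beta/N\bigr)$. Therefore
\begin{equation*}
\Profit(q;[\mu,\ldots,\mu]) \;=\; \sum_{i=1}^N \min\!\left(1,\,\frac{(\langle q,u_i\rangle)^\beta}{N(\langle p^*,u_i\rangle)^\beta}\right) \;-\; \|q\|^\beta.
\end{equation*}
A direct check shows $\Profit(rp^*;[\mu,\ldots,\mu])=0$ for every $r\in[0,N^{1/\beta}]$, matching the zero equilibrium profit from the proof of Lemma \ref{lemma:cdf}. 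Thus $\mu$ is a symmetric Nash equilibrium if and only if the displayed expression is $\le 0$ for every $q\in\bR_{\ge 0}^D$.

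For sufficiency, I would use $\min(1,x)\le x$ to bound
\begin{equation*}
\Profit(q;[\mu,\ldots,\mu]) \;\le\; \|q\|^\beta\!\left(\frac{1}{N}\sum_{i=1}^N \frac{(\langle q/\|q\|,u_i\rangle)^\beta}{(\langle p^*,u_i\rangle)^\beta}-1\right).
\end{equation*}
Since $y':=(\langle q/\|q\|,u_i\rangle^\beta)_{i=1}^N$ lies in $\Set^\beta$ (via the unit-norm vector $q/\|q\|\in\Ball$), the hypothesis \eqref{eq:optsingle} forces the parenthesized expression to be $\le 0$. For necessity, I would test small deviations $q=\epsilon q_0$ with $q_0\in\D$: for $\epsilon$ sufficiently small, every argument to the $\min$ is strictly less than $1$, so the $\min$ is inactive and
\begin{equation*}
\Profit(\epsilon q_0;[\mu,\ldots,\mu]) \;=\; \epsilon^\beta\!\left(\frac{1}{N}\sum_{i=1}^N \frac{(\langle q_0,u_i\rangle)^\beta}{(\langle p^*,u_i\rangle)^\beta}-1\right).
\end{equation*}
Requiring this to be $\le 0$ for every unit-norm $q_0\in\D$ gives exactly $\sum_i y'_i/(\langle p^*,u_i\rangle)^\beta\le N$ for all $y'\in\{\UMatrix p:p\in\D\}^\beta$; since the supremum of this linear functional over $\Set^\beta$ is attained on the image of $\D$ (by monotonicity of $x\mapsto x^\beta$ on $\bR_{\ge 0}$ and of each $\langle\cdot,u_i\rangle$ in the positive orthant), this is precisely \eqref{eq:optsingle}.

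The main obstacle is essentially bookkeeping: handling the $\min$ cleanly in both directions and confirming that the supremum over $\Set^\beta$ (defined via the unit \emph{ball}, not sphere) is attained on the unit sphere so that the small-deviation argument actually recovers the full sup. Neither obstacle is substantial, so the lemma reduces to the two computations above once Lemma \ref{lemma:cdf} and Lemma \ref{lemma:nonzero} are in hand.
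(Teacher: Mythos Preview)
Your proposal is correct and follows essentially the same approach as the paper: both reduce the equilibrium condition to $\sup_q\bigl(\sum_i \min(1,\tfrac{1}{N}(\langle q,u_i\rangle/\langle p^*,u_i\rangle)^\beta)-\|q\|^\beta\bigr)\le 0$ via Lemma~\ref{lemma:cdf}, and then observe that the $\min$ is inactive at small magnitude so the condition collapses to \eqref{eq:optsingle}. The only cosmetic difference is that the paper phrases this as ``the objective is weakly decreasing in $r=\|q\|$, so the sup is attained as $r\to 0$,'' whereas you split it into the bound $\min(1,x)\le x$ for sufficiency and the small-$\epsilon$ test for necessity---these are the same observation unpacked.
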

\begin{proof}
First, by Lemma \ref{lemma:nonzero}, we see that the denominator is nonzero for every term in the sum, so equation \eqref{eq:optsingle} is well-defined.

If $\mu$ is a single-genre equilibrium, then the cdf of the magnitudes follows the form in Lemma \ref{lemma:cdf}. Thus, it suffices to identify necessary and sufficient conditions for that solution (that we call $\mu_{p^*}$) to be a symmetric equilibrium.

The solution $\mu_{p^*}$ is an equilibrium if and only if no alternative $q$ should do better than $\p \sim \mu$. The profit level at $\mu_{p^*}$ is $0$ by the structure of the cdf. Putting this all together, we see a necessary and sufficient for $\mu_{p^*}$ to be an equilibrium is: 
\[
\sup_{ q \in \bR_{\ge 0}^D} \left(\sum_{i=1}^N F\left(\frac{\langle q, u_i \rangle}{\langle \p^*, u_i \rangle} \right)^{P-1} - \|q\|^{\beta} \right) \le 0,  
\]
where the term $\frac{1}{N} \left(\cdot\right)^{\beta}$ is the probability  $\left(F(\cdot)\right)^{P-1}$  that $q$ outperforms the max of $P-1$ samples from $\mu$.
Using the structure of the cdf, we can write this as:
\[
\sup_{ q \in \bR_{\ge 0}^D} \left(\sum_{i=1}^N \min\left(1,\frac{1}{N} \left(\frac{\langle q, u_i \rangle}{\langle \p^*, u_i \rangle} \right)^{\beta}\right) - \|q\|^{\beta} \right) \le 0.    
\]
We can equivalently write this as:
\[
\sup_{ q \in \bR_{\ge 0}^D}  \left(\frac{1}{||q||^{\beta}} \sum_{i=1}^N \min\left(1,\frac{1}{N} \left(\frac{\langle q, u_i \rangle}{\langle \p^*, u_i \rangle} \right)^{\beta}\right) - 1 \right) \le 0,
\]
which we can equivalently write as
\[
\sup_{q\in \D} \sup_{r > 0}  \left(\frac{1}{r^{\beta}} \sum_{i=1}^N \min\left(1,\frac{r^{\beta}}{N} \left(\frac{\langle q, u_i \rangle}{\langle \p^*, u_i \rangle} \right)^{\beta}\right) - 1 \right) \le 0.    
\]

For any direction $q$, if we disregard the first $\min$ with 1, the expression would be constant in $r$. With the minimum, the objective $\left(\frac{1}{r^{\beta}} \sum_{i=1}^N \min\left(1,\frac{1}{N} \left(\frac{\langle q, u_i \rangle}{\langle \p^*, u_i \rangle} \right)^{\beta}\right) - 1 \right)$ is weakly decreasing in $r$. Thus, $\sup_{r > 0}  \left(\frac{1}{r^{\beta}} \sum_{i=1}^N \min\left(1,\frac{1}{N} \left(\frac{\langle q, u_i \rangle}{\langle \p^*, u_i \rangle} \right)^{\beta}\right) - 1 \right)$ is attained as $r \rightarrow 0$. In fact, the maximum is attained at a value $r$ if $r \langle q, u_i \rangle < N^{1/\beta} \langle p^*, u_i \rangle$ for all $i$. This holds for \textit{some} $r > 0$ since $\langle p^*, u_i \rangle > 0$ for all $i$ by Lemma \ref{lemma:nonzero}. Thus we can equivalently formulate the condition as:
\[
\sup_{q\in \D}  \left(\left(\sum_{i=1}^N \frac{1}{N} \left(\frac{\langle q, u_i \rangle}{\langle \p^*, u_i \rangle} \right)^{\beta}\right) - 1 \right) \le 0,
\] 
which we can write as:
\[
\sup_{q\in \D} \sum_{i=1}^N  \left(\frac{\langle q, u_i \rangle}{(\langle \p^*, u_i \rangle)} \right)^{\beta}  \le N.
\] 
This is equivalent to:
\[
\sup_{q \in \Ball} \sum_{i=1}^N  \left(\frac{\langle q, u_i \rangle}{(\langle \p^*, u_i \rangle)} \right)^{\beta}  \le N.
\] 
A change of variables gives us the desired formulation. 
\end{proof}

Now, we can deduce Lemma \ref{lemma:conditiongen}.
\begin{proof}[Proof of Lemma \ref{lemma:conditiongen}]
First, suppose that equation \eqref{eq:conditioninter} does not hold. Then it must be true that:
\[\sup_{y' \in \Set^{\beta}} \sum_{i=1}^N  \frac{y'_i}{(\langle \p^*, u_i \rangle)^{\beta}} > N \] for every direction $p^* \in \Dm$. This means that no direction in $\Dm$ can be a single-genre equilibrium. We can further rule out directions in $\D \setminus \Dm$ by applying Lemma \ref{lemma:nonzero}. 

Now, suppose that equation \eqref{eq:conditioninter} does hold. It is not difficult to see that the optimum 
\[\inf_{p^* \in \Ballm} \sup_{y' \in \Set^{\beta}} \sum_{i=1}^N  \frac{y'_i}{(\langle \p^*, u_i \rangle)^{\beta}}
\]
is attained at some direction $p^* \in \Dm$. Applying Lemma \ref{lemma:optsingledirection}, we see that there exists a single-genre equilibrium in the direction $p^*$. 
\end{proof}

\subsubsection{Proof of Lemma \ref{lemma:optimizationprogramrestated}}
To prove Lemma \ref{lemma:optimizationprogramrestated} from Lemma \ref{lemma:conditiongen}, we require the following additional lemma that helps us analyze the right-hand side of equation \eqref{eq:optimizationprogramrestated}. 
\begin{lemma}
\label{lemma:supinf}
For any set $\mathcal{R} \subseteq \bR_{> 0}^N$, it holds that:
\[\sup_{y' \in \mathcal{R}} \inf_{y \in \mathcal{R}}  \sum_{i=1}^N  \frac{y'_i}{y_i} = N. \] 
\end{lemma}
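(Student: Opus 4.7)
My plan is to show the equality by sandwiching the sup-inf between $N$ and $N$ using two complementary arguments: a trivial upper bound, and a lower bound via AM-GM.

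For the upper bound, I would observe that for every $y' \in \mathcal{R}$ the choice $y = y'$ is admissible in the inner infimum and yields $\sum_{i=1}^N y'_i/y'_i = N$. Hence $\inf_{y \in \mathcal{R}} \sum_i y'_i/y_i \le N$ for each $y' \in \mathcal{R}$, and so $\sup_{y' \in \mathcal{R}} \inf_{y \in \mathcal{R}} \sum_i y'_i/y_i \le N$. This step uses nothing about $\mathcal{R}$ beyond it being non-empty.

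For the lower bound the key move is applying AM-GM to the inner sum:
\[
\sum_{i=1}^N \frac{y'_i}{y_i} \;\ge\; N \Bigl(\prod_{i=1}^N \tfrac{y'_i}{y_i}\Bigr)^{1/N} \;=\; N \cdot \frac{(\prod_i y'_i)^{1/N}}{(\prod_i y_i)^{1/N}}.
\]
The crucial algebraic fact is that the right-hand side factorizes, decoupling the dependence on $y$ from that on $y'$. Setting $M := \sup_{y \in \mathcal{R}} \prod_i y_i$ (finite in every application of this lemma, since it is invoked for $\mathcal{R} = \Set^{\beta}$, which is bounded as the image of a bounded set under continuous maps), the part depending on $y$ is bounded above by $M^{1/N}$, giving a lower bound that is uniform in $y$:
\[
\inf_{y \in \mathcal{R}} \sum_{i=1}^N \frac{y'_i}{y_i} \;\ge\; \frac{N (\prod_i y'_i)^{1/N}}{M^{1/N}}.
\]

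To finish, I would pick a sequence $y'_{(k)} \in \mathcal{R}$ with $\prod_i y'_{(k),i} \to M$ (guaranteed by the definition of supremum; if $M$ is attained, take the sequence constant). For each $k$ the preceding inequality gives $\inf_y \sum_i y'_{(k),i}/y_i \ge N(\prod_i y'_{(k),i}/M)^{1/N}$, which converges to $N$ as $k \to \infty$, so $\sup_{y'} \inf_y \ge N$. Together with the upper bound this yields the claimed equality. The main subtlety is that $M$ need not be attained inside $\mathcal{R}$, which is handled cleanly by this approximation; the essential insight is simply that AM-GM produces a product bound that separates into $y'$- and $y$-factors, allowing the inner infimum and outer supremum to be estimated independently.
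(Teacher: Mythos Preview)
Your proof is correct and essentially identical to the paper's: upper bound via the choice $y = y'$, lower bound via AM-GM followed by choosing $y'$ to (approximately) maximize $\prod_i y_i$. Your explicit handling of the finiteness of $M$ and of the case where the supremum is not attained is, if anything, slightly more careful than the paper's presentation.
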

\begin{proof}
By taking $y' = y$, we see that:
\[\sup_{y' \in \mathcal{R}} \inf_{y \in \mathcal{R}}  \sum_{i=1}^N  \frac{y'_i}{y_i} \le N. \]
To show equality, notice by AM-GM that:
\[\sum_{i=1}^N  \frac{y'_i}{y_i} \ge N \left(\prod_{i=1}^n \frac{y'_i}{y_i}\right)^{1/N} = N \left(\frac{\prod_{i=1}^n y'_i}{\prod_{i=1}^N y_i}\right)^{1/N}. \]
We can take $y' = \argmax_{y'' \in \mathcal{R}} \prod_{i=1}^n y''_i$, and obtain a lower bound of $N$ as desired. (If the $\argmax$ does not exist, then note that if we take $y'$ where $\prod_{i=1}^n y'_i$ is sufficiently close to the optimum $\sup_{y'' \in \mathcal{R}} \prod_{i=1}^n y''_i$, we have that $\inf_{y \in \mathcal{R}}  \left(\frac{\prod_{i=1}^n y'_i}{\prod_{i=1}^N y_i}\right)^{1/N}$ is sufficiently close to $1$ as desired.) 
\end{proof}

Now we are ready to prove Lemma \ref{lemma:optimizationprogramrestated}.
\begin{proof}[Proof of Lemma \ref{lemma:optimizationprogramrestated}]
First, we see that:
\begin{align*}
N &= \sup_{y' \in \Setm^{\beta}} \inf_{y \in \Setm^{\beta}} \sum_{i=1}^N  \frac{y'_i}{y_i} \\
&= \sup_{y' \in \Set^{\beta}} \inf_{y \in \Setm^{\beta}} \sum_{i=1}^N  \frac{y'_i}{y_i} \\
&= \sup_{y' \in \Set^{\beta}}\inf_{p^* \in \Ballm}  \sum_{i=1}^N  \frac{y'_i}{(\langle \p^*, u_i \rangle)^{\beta}},
\end{align*}
where the first equality follows from Lemma \ref{lemma:supinf}.

Now, let's combine this with Lemma \ref{lemma:conditiongen} to see that a necessary and sufficient condition for the existence of a single-genre equilibrium is:
\begin{equation}
\label{eq:conditionintermediate}
 \inf_{p^* \in \Ballm} \sup_{y' \in \Set^{\beta}} \sum_{i=1}^N  \frac{y'_i}{(\langle \p^*, u_i \rangle)^{\beta}} \le \sup_{y' \in \Set^{\beta}}\inf_{p^* \in \Ballm}  \sum_{i=1}^N  \frac{y'_i}{(\langle \p^*, u_i \rangle)^{\beta}}  
\end{equation}
Weak duality tells us that $\inf_{p^* \in \Ballm} \sup_{y' \in \Set^{\beta}} \sum_{i=1}^N  \frac{y'_i}{(\langle \p^*, u_i \rangle)^{\beta}} \ge \sup_{y' \in \Set^{\beta}}\inf_{p^* \in \Ballm}  \sum_{i=1}^N  \frac{y'_i}{(\langle \p^*, u_i \rangle)^{\beta}}$, so equation \eqref{eq:conditionintermediate} is equivalent to:
\[  \inf_{p^* \in \Ballm} \sup_{y' \in \Set^{\beta}} \sum_{i=1}^N  \frac{y'_i}{(\langle \p^*, u_i \rangle)^{\beta}} = \sup_{y' \in \Set^{\beta}}\inf_{p^* \in \Ballm}  \sum_{i=1}^N  \frac{y'_i}{(\langle \p^*, u_i \rangle)^{\beta}}.  \] 
\end{proof}

\subsubsection{Finishing the proof of Theorem \ref{thm:singlegenre}}\label{section:mainproof}

\begin{proof}[Proof of Theorem \ref{thm:singlegenre}]
Recall that by Lemma \ref{lemma:conditiongen}, a single genre equilibrium exists if and only if equation \eqref{eq:conditioninter} is satisfied.

We can rewrite the left-hand side of equation \eqref{eq:conditioninter} as follows:
\[\inf_{\p^* \in \Ballm} \left(\sup_{y' \in \Set^{\beta}}   \sum_{i=1}^N  \frac{y'_i}{\langle p^*, u_i \rangle^{\beta}} \right) =   \inf_{\p^* \in \Ballm} \left(\sup_{y' \in \bar{\Set}^{\beta}}   \sum_{i=1}^N  \frac{y'_i}{\langle p^*, u_i \rangle^{\beta}} \right),\]
since the objective is linear in $y'$. Now, observing that the objective is convex in $p$ and concave in $y'$, we can apply Sion's min-max theorem\footnote{Note that $\bar{\Set}^{\beta}$ is compact and convex and $\Ballm$ is convex (but not compact). We apply the non-compact formulation of Sion's min-max theorem in \citep{H81}.} to see that:
\[ \inf_{\p^* \in \Ballm}\left(\sup_{y' \in \bar{\Set}^{\beta}}   \sum_{i=1}^N  \frac{y'_i}{\langle p^*, u_i \rangle^{\beta}} \right) = \sup_{y' \in \bar{\Set}^{\beta}}   \left(\inf_{\p^* \in \Ballm}   \sum_{i=1}^N  \frac{y'_i}{\langle p^*, u_i \rangle^{\beta}} \right) = \sup_{y' \in \bar{\Set}^{\beta}}   \left(\inf_{y \in \Setm^{\beta}}   \sum_{i=1}^N  \frac{y'_i}{y_i} \right). \]

Thus, we have the following necessary and sufficient condition for a single-genre equilibrium to exist:
\begin{equation}
\label{eq:equality}    
\sup_{y' \in \bar{\Set}^{\beta}}   \left(\inf_{y \in \Setm^{\beta}}   \sum_{i=1}^N  \frac{y'_i}{y_i} \right) \le N.
\end{equation}

First, we show that if \eqref{eq:maxcondition} does not hold, then there does not exist a single-genre equilibrium. Let $y' = \argmax_{y'' \in \bar{\Set}^{\beta}} \prod_{i=1}^n y''_i$. (The maximum exists because $\prod_{i=1}^n y''_i$ is a continuous function and $\bar{\Set}^{\beta}$ is a compact set.) We see that:
\[\sum_{i=1}^N  \frac{y'_i}{y_i} \ge N \left(\frac{\prod_{i=1}^n y'_i}{\prod_{i=1}^n y_i} \right)^{1/N} \ge N \left(\frac{\max_{y'' \in \bar{\Set}^{\beta}} \prod_{i=1}^n y''_i}{\max_{y'' \in \Setm^{\beta}} \prod_{i=1}^n y''_i} \right)^{1/N} = N \left(\frac{\max_{y'' \in \bar{\Set}^{\beta}} \prod_{i=1}^n y''_i}{\max_{y'' \in \Set^{\beta}} \prod_{i=1}^n y''_i} \right)^{1/N} > N, \]
which proves that:
\[\inf_{\p^* \in \Ballm} \left(\sup_{y' \in \Set^{\beta}}   \sum_{i=1}^N  \frac{y'_i}{\langle p^*, u_i \rangle^{\beta}} \right) = \sup_{y' \in \bar{\Set}^{\beta}}   \left(\inf_{y \in \Setm^{\beta}}   \sum_{i=1}^N  \frac{y'_i}{y_i} \right) > N.\]
Thus equation \eqref{eq:equality} is not satisfied and a single-genre equilibrium does not exist as desired. 

Next, we show that if \eqref{eq:maxcondition} holds, then there exists a single-genre equilibrium. Let $y^* = \argmax_{y'' \in \Set^{\beta}} \prod_{i=1}^n y''_i = \argmax_{y'' \in \Set^{\beta}} \sum_{i=1}^n \log(y''_i)$. (The maximum exists because $\prod_{i=1}^n y''_i$ is a continuous function and $\Set^{\beta}$ is a compact set.) By assumption, we see that $y^*$ is also the maximizer over $\bar{S}^{\beta}$. We further see that $y^* \in \Setm^{\beta}$. Using convexity of $\bar{\Set}^{\beta}$, this means that for any $y' \in \bar{\Set}^{\beta}$, it must hold that $\langle y' - y^*,  \nabla \left(\sum_{i=1}^n \log(y^*_i)\right) \rangle \le 0$. We can write this as: 
\[\langle y' - y^*,  \nabla \sum_{i=1}^n \frac{1}{y^*_i} \rangle \le 0.\]
This can be written as:
\[\sum_{i=1}^n \frac{y'_i - y^*_i}{y^*_i} \le 0, \] 
which implies that:
\[\sum_{i=1}^n \frac{y'_i}{y^*_i} \le N. \]
Thus, we have that 
\[\sup_{y' \in \bar{\Set}^{\beta}}   \left(\inf_{y \in \Setm^{\beta}}   \sum_{i=1}^N  \frac{y'_i}{y^*_i} \right) \le N,\]
and thus equation \eqref{eq:equality} is satisfied so a single-genre equilibrium does not exist as desired.

Next, we show that if all equilibria have multiple genres for some $\beta$, then all equilibria have multiple genres for all $\beta' \ge \beta$. $\beta' \le \beta$. Notice that equation \ref{eq:maxcondition} can equivalently be restated as:
\begin{equation}
    \label{eq:restatedmax}
    \max_{y \in \Set} \prod_{i=1}^{N} y_i  = \max_{y \in \bar{\Set}^{\beta}} \left(\prod_{i=1}^{N} y_i\right)^{1/\beta}.
\end{equation}
It thus suffices to show that: 
\[\max_{y \in \bar{\Set}^{\beta}} \left(\prod_{i=1}^{N} y_i\right)^{1/\beta} \le \max_{y \in \bar{\Set}^{\beta'}} \left(\prod_{i=1}^{N} y_i\right)^{1/\beta'} \] for all $\beta' \ge \beta$. To see this, let $y$ denote the maximizer of $\max_{y \in \bar{\Set}^{\beta}} \left(\prod_{i=1}^{N} y_i\right)^{1/\beta}$ (this is achieved since we are taking a maximum of a continuous function over a compact set). By definition, we see that $y$ can be written as a convex combination $\sum_{j=1}^P \lambda_j (x_i^j)^{\beta}$ where $x^1, \ldots, x^P$ denote vectors in $\Set$ and where $\sum_{j=1}^P \lambda_j = 1$. In this notation, we see that:
\[\max_{y \in \bar{\Set}^{\beta}} \left(\prod_{i=1}^{N} y_i\right)^{1/\beta} =  \left(\prod_{i=1}^N \left(\sum_{j=1}^P \lambda_j (x_i^j)^{\beta}\right)^{1/\beta}\right) \]
By taking $y$ to be $\sum_{j=1}^P \lambda_j (x_i^j)^{\beta'}$, we see that:
\[\max_{y \in \bar{\Set}^{\beta'}} \left(\prod_{i=1}^{N} y_i\right)^{1/\beta'} \ge \left(\prod_{i=1}^N \left(\sum_{j=1}^P \lambda_j (x_i^j)^{\beta'}\right)^{1/\beta'}\right).\]
Notice that for any $1 \le i \le N$, it holds that:
\[\left(\sum_{j=1}^P \lambda_j (x_i^j)^{\beta'}\right) = \left(\sum_{j=1}^P \lambda_j ((x_i^j)^{\beta})^{\beta'/\beta}\right) \ge \left(\sum_{j=1}^P \lambda_j ((x_i^j)^{\beta})\right)^{\beta'/\beta}, \]
where the last inequality follows from convexity of $f(c) = c^{\beta'/\beta}$ for $\beta' \ge \beta$. Putting this all together, we see that:
\[\max_{y \in \bar{\Set}^{\beta'}} \left(\prod_{i=1}^{N} y_i\right)^{1/\beta'} \ge \left(\prod_{i=1}^N \left(\sum_{j=1}^P \lambda_j (x_i^j)^{\beta'}\right)^{1/\beta'}\right) \ge \left(\prod_{i=1}^N \left(\sum_{j=1}^P \lambda_j (x_i^j)^{\beta}\right)^{1/\beta}\right) = \max_{y \in \bar{\Set}^{\beta}} \left(\prod_{i=1}^{N} y_i\right)^{1/\beta}\] as desired. 
\end{proof}

\subsection{Proofs of corollaries of Theorem \ref{thm:singlegenre}}\label{sec:proofcor} 

We prove all of the corollaries of Theorem \ref{thm:singlegenre} in Section \ref{subsec:corollaries}, except for Corollary \ref{cor:2users} (proof deferred to Appendix \ref{subsec:proofcor2users}).

First, we prove Corollary \ref{cor:beta1}, restated below. 
\betaone* 
\begin{proof}
When $\beta = 1$, we see that $\Set^{\beta} = \Set^1$ is a linear transformation of a convex set (the unit ball restricted to $\mathbb{R}^D_{\ge 0}$), so it is convex. This means that $\bar{\Set}^{\beta} = \Set^{\beta}$, and so \eqref{eq:maxcondition} is trivially satisfied. By Theorem \ref{thm:singlegenre}, there exists a single-genre equilibrium. 
\end{proof}

Next, we prove Corollary \ref{cor:betap}, restated below. 
\betap*
\begin{proof}
We split the proof into two steps: (1) showing that $\beta^* \ge q$ for any set of user vectors and (2) showing that $\beta^* \le q$ for the standard basis vectors.

\paragraph{Showing that $\beta^* \ge q$ for any set of users.} To show that $\beta^* \ge q$, by Theorem \ref{thm:singlegenre}, it suffices to show that equation \eqref{eq:maxcondition} is satisfied at $\beta = q$. Suppose that the right-hand side of \eqref{eq:maxcondition}:
\[\max_{y \in \bar{S}^{\beta}} \prod_{i=1}^N y_i  \]
is maximized at some $y^* \in \bar{S}^{\beta}$. It suffices to construct $\tilde{y} \in \Set^{\beta}$ such that 
\begin{equation}
\label{eq:goaltildey}
 \prod_{i=1}^N \tilde{y}_i \ge \prod_{i=1}^N y^*_i   
\end{equation}

To construct $\tilde{y}$, we introduce some notation. By the definition of a convex hull, we can write $y^*$ as 
\[y^* = \sum_{k=1}^m \lambda_k y^k, \]
where $y^1, \ldots, y^m \in \Set^{\beta}$ and where 
$\lambda_1, \ldots, \lambda_m \in [0,1]$ are such that $\sum_{k=1}^m \lambda_k = 1$. Let $\p^1, \ldots, \p^m \in \mathbb{R}_{\ge 0}^D$ be such that $\|\p^k\|_q \le 1$ for all $1 \le k \le m$ and $y^k$ is given by the $\beta$-coordinate-wise powers of $\UMatrix \p_k$. Now, we let $y = \UMatrix \tilde{p}$ where the $d$th coordinate of  $\tilde{p}$ is given by: 
\[\tilde{p}_d := \left(\sum_{k=1}^m \lambda_k ((p^k)_d)^q \right)^{1/q}.\]
It follows from definition that:
\[\|\tilde{p}\|_q = \left(\sum_{d=1}^D \sum_{k=1}^m \lambda_k ((p^k)_d)^q  \right)^{1/q} = \left(\sum_{k=1}^m \lambda_k \sum_{d=1}^D ((p^k)_d)^q  \right)^{1/q} \le \left(\sum_{k=1}^m \lambda_k \|p^k\|_q^q  \right)^{1/q} \le 1,  \]
which means that $\tilde{y} \in \Set^{\beta}$. 

The remainder of the proof boils down to showing \eqref{eq:goaltildey}. It suffices to show that for every $1 \le i \le N$, it holds that $\tilde{y}_i \ge y^*_i$. Notice that:
\[y^*_i = \sum_{k=1}^m \lambda_k (y^k)_i =  \sum_{k=1}^m \lambda_k \langle u_i, p^k\rangle^q = \sum_{k=1}^m \lambda_k \left(\sum_{d=1}^D (u_i)_d (p^k)_d \right)^q, \]
and 
\[\tilde{y}_i = \langle u_i, \tilde{p} \rangle^q = \left(\sum_{d=1}^D (u_i)_d \tilde{p}_d \right)^q = \left(\sum_{d=1}^D (u_i)_d \left(\sum_{k=1}^m \lambda_k ((p^k)_d)^q \right)^{1/q}\right)^q. \]
Thus, it suffices to show the following inequality:
\begin{equation}
\label{eq:triangleinequality}
 \sum_{d=1}^D (u_i)_d \left(\sum_{k=1}^m \lambda_k ((p^k)_d)^q \right)^{1/q} \ge \left(\sum_{k=1}^m \lambda_k \left(\sum_{d=1}^D (u_i)_d (p^k)_d\right)^q\right)^{1/q}. 
\end{equation}

The high-level idea is that the proof boils down to the triangle inequality for an appropriately chosen norm over $\mathbb{R}^m$. For $z \in \mathbb{R}^m$, we let: 
\[\|z \|_{\lambda} := \left( \sum_{k=1}^m \lambda_k z^q \right)^{1/q}. \] To see that this is a norm, note that $\left( \sum_{k=1}^m \lambda_k z^q \right)^{1/q} = \left( \sum_{k=1}^m (\lambda_k^{1/q} z)^q \right)^{1/q} $. The norm properties of this function are implied by the norm properties of $\|\cdot\|_q$. By the triangle inequality, we see that:
\begin{align*}
  \sum_{d=1}^D (u_i)_d \left(\sum_{k=1}^m \lambda_k ((p^k)_d)^q \right)^{1/q} &= \sum_{d=1}^D (u_i)_d\|[p^1_d, \ldots, p^m_d] \|_{\lambda}  \\
  &\ge \| \sum_{d=1}^D (u_i)_d [p^1_d, \ldots, p^m_d] \|_{\lambda}  \\
  &= \left(\sum_{k=1}^m \lambda_k \left(\sum_{d=1}^D (u_i)_d (p^k)_d\right)^q\right)^{1/q}
\end{align*}
which implies equation \eqref{eq:triangleinequality}.

\paragraph{Showing that $\beta^* \le q$ for the standard basis vectors.} By Theorem \ref{thm:singlegenre}, it suffices to show, for any $\beta > q$,  that equation \eqref{eq:maxcondition} is not satisfied. First, we compute the left-hand side of  equation \eqref{eq:maxcondition}: 
\[\max_{y \in \Set^{\beta}}\prod_{i=1}^N y_i = \left(\max_{x \in \mathbb{R}_{\ge 0}^D, \|x\|_q = 1} \prod_{i=1}^D x_i\right)^{\beta} = \left(\frac{1}{D}\right)^{\beta / q} < \left(\frac{1}{D}\right).\]
where the last line follows from AM-GM. Now, we compute the right-hand side:
\[\max_{y \in \bar{\Set}^{\beta}}\prod_{i=1}^N y_i.\]
Consider $y^* = \left[\frac{1}{D}, \ldots, \frac{1}{D} \right]$. Notice that $y$ is a convex combination of the standard basis vectors---all of which are in $\Set$ and actually in $\Set^{\beta}$ too---so $y \in \bar{\Set}^{\beta}$. This means that 
\[\max_{y \in \bar{\Set}^{\beta}}\prod_{i=1}^N y_i \ge  \prod_{i=1}^N y^*_i = \left(\frac{1}{D}\right).\]
This proves that:
\[\max_{y \in \Set^{\beta}}\prod_{i=1}^N y_i < \max_{y \in \bar{\Set}^{\beta}}\prod_{i=1}^N y_i, \]
so equation \eqref{eq:maxcondition} is not satisfied as desired. 

\end{proof}

We prove Corollary \ref{cor:beta}, restated below.
\betageneral*
\begin{proof}
WLOG assume that the users to have unit dual norm. By Theorem \ref{thm:singlegenre}, it suffices to show that:
\[\max_{y \in \Set^{\beta}} \prod_{i=1}^{N} y_i  < \max_{y \in \bar{\Set}^{\beta}} \prod_{i=1}^{N} y_i.\]

First, let's lower bound the right-hand side. Consider the point $y = \frac{1}{N} \sum_{i'=1}^N z^{i'}$ where $z^{i'}$ is defined to be the $\beta$-coordinate-wise power of $\UMatrix \left(\argmax_{||p|| =1} \langle p, u_i \rangle \right)$. This means that 
\[y_i \ge \frac{1}{N} z^{i}_i = \frac{1}{N} \left(\max_{||p|| =1} \langle p, u_i \rangle\right)^{\beta} = \frac{||u_i||^{\beta}_{*}}{N} = \frac{1}{N}.\] This means that:
\[\max_{y \in \bar{\Set}^{\beta}} \prod_{i=1}^{N} y_i \ge \frac{1}{N^N}. \] 

Next, let's upper bound the left-hand side. 
By AM-GM, we see that: 
\[\max_{y \in \Set^{\beta}} \prod_{i=1}^{N} y_i = \max_{||p||=1, p \in \bR_{\ge 0}^D} \left(\prod_{i=1}^N \langle p, u_i\rangle\right)^{\beta} \le \left(\frac{\sum_{i=1}^N \langle p, u_i\rangle}{N}\right)^{N \beta} \le \left(\frac{\langle p, \sum_{i=1}^N  u_i\rangle}{N}\right)^{N \beta} \le \frac{\left(\|\sum_{i=1}^N  u_i\|_{*}\right)^{N \beta}}{N^{N \beta}}. \]

Putting this all together, we see that it suffices for:
\[\frac{1}{N^N} > \frac{\left(\|\sum_{i=1}^N  u_i\|_{*}\right)^{N \beta}}{N^{N \beta}},\]
which we can rewrite as:
\[N^{\beta - 1} > \left(\|\sum_{i=1}^N  u_i\|_{*}\right)^{ \beta} \] which we can rewrite as:
\[N^{1 - 1/\beta} > \|\sum_{i=1}^N  u_i\|_{*}. \]
\end{proof}

We prove Corollary \ref{cor:singlegenrestructure}, restated below.
\singlegenrestructure*
\begin{proof}
Corollary \ref{cor:singlegenrestructure} follows as a consequence of the proof of Theorem \ref{thm:singlegenre}. We apply Lemma \ref{lemma:optsingledirection} to see that if $\mu$ is a single-genre equilibrium with $\Genre(\mu) = \left\{p^*\right\}$, then: 
\[ \sup_{y' \in \Set^{\beta}} \sum_{i=1}^N  \frac{y'_i}{(\langle \p^*, u_i \rangle)^{\beta}} \le N. \] 
We see that:
\[N \ge \sup_{y' \in \Set^{\beta}} \frac{y'_i}{(\langle \p^*, u_i \rangle)^{\beta}} \ge   N \sup_{y' \in \Set^{\beta}} \left(\frac{\prod_{i=1}^N y'_i}{\prod_{i=1}^N (\langle \p^*, u_i \rangle)^{\beta}}\right)^{1/N} \ge N   \left(\frac{\sup_{y' \in \Set^{\beta}} \prod_{i=1}^N y'_i}{\prod_{i=1}^N (\langle \p^*, u_i \rangle)^{\beta}}\right)^{1/N}. \]
This implies that:
\[\prod_{i=1}^N y_i = \prod_{i=1}^N (\langle \p^*, u_i \rangle)^{\beta} \ge  \sup_{y' \in \Set^{\beta}} \prod_{i=1}^N y'_i, \]
where $y \in \Set^{\beta}$ is defined so that $y_i = \langle p^*, u_i\rangle^{\beta}$. This implies that:
\[p^* \in \argmax_{||p|| \le 1, p \in \bR_{\ge 0}^D} \sum_{i=1}^N \log(\langle p, u_i\rangle) = \argmax_{||p|| = 1, p \in \bR_{\ge 0}^D} \sum_{i=1}^N \log(\langle p, u_i\rangle) \] 
as desired. 
\end{proof}

\section{Proofs and Details for Section \ref{sec:specialcases}}

In Appendix \ref{sec:proofoverviewdetailed}, we provide an overview of how we leverage Lemma \ref{claim:necessarysuff} to analyze equilibria in the setting of two populations of users. In Appendix \ref{subsec:proofcor2users}, we prove Corollary \ref{cor:2users}. In Appendix \ref{sec:proofsstructural}, we prove the results from Section \ref{sec:structural}, and in Section \ref{sec:proofsclosedformb}, we prove the results from Section \ref{sec:closedformsb}. In Appendix \ref{sec:formalization}, we formalize the infinite-producer limit, which we study in Section \ref{sec:closedforminfinite}, and in  Appendix \ref{sec:proofsclosedforminfinite}, we prove results from Section \ref{sec:closedforminfinite}. In Appendix \ref{sec:proofsaux}, we prove several auxiliary lemmas that we used along the way. 

\subsection{Overview of proof techniques}\label{sec:proofoverviewdetailed}

Before diving into proof techniques, we observe that it suffices to study a simpler setting with \textit{two normalized users} and a rescaled cost function. 
\begin{claim}
\label{claim:equivalence} 
A distribution $\mu$ is an equilibria for a marketplace with 2 populations of users of size $N/2$ located at vectors $u_1$ and $u_2$ and with producer cost function $c(\p) = \|p\|_2^{\beta}$ if and only if $\mu$ is an equilibria for a marketplace with 2 users located at vectors $\frac{u_1}{\|u_1\|}$ and $\frac{u_2}{\|u_2\|}$ and with producer cost function $c(\p) = \frac{2}{N} \|p\|_2^{\beta}$. 
\end{claim}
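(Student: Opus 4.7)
The plan is to show that the producer profit functions in the two marketplaces differ only by multiplication by the positive constant $N/2$, which preserves the set of (symmetric mixed) Nash equilibria. The argument rests on two elementary observations: (i) users within a single population contribute identically to any producer's allocation, and (ii) the argmax determining recommendations is invariant under positive scalar rescaling of user vectors.

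First, I would fix any strategy profile $(p_1, \ldots, p_P)$ and analyze the original marketplace. Since all $N/2$ users in population $i$ share the embedding $u_i$, the set $\argmax_{1 \le j \le P} \langle u_i, p_j \rangle$ is the same for every user in that population. Under the tie-breaking rule (each user independently assigned to a uniformly random producer in the argmax), the \emph{expected} number of users in population $i$ won by producer $j$ equals $(N/2) \cdot q_i(j;\; p_{1:P})$, where $q_i(j;\; p_{1:P})$ denotes the probability that a single hypothetical user at $u_i$ is assigned to producer $j$. Hence
\[
\Profit^{\text{orig}}(p_j; p_{-j}) \;=\; \frac{N}{2}\Big(q_1(j;\;p_{1:P}) + q_2(j;\;p_{1:P})\Big) - \|p_j\|_2^\beta.
\]

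Next, I would use the fact that for any $\lambda > 0$ the set $\argmax_j \langle \lambda u, p_j \rangle$ coincides with $\argmax_j \langle u, p_j\rangle$, so $q_i(j;\; p_{1:P})$ is unchanged if we replace $u_i$ by $u_i / \|u_i\|$. Therefore the profit in the marketplace with two users at $u_1/\|u_1\|$ and $u_2/\|u_2\|$ and cost $(2/N)\|p\|_2^\beta$ is
\[
\Profit^{\text{norm}}(p_j; p_{-j}) \;=\; q_1(j;\;p_{1:P}) + q_2(j;\;p_{1:P}) - \frac{2}{N}\|p_j\|_2^\beta,
\]
and comparing expressions gives $\Profit^{\text{orig}} = (N/2)\cdot \Profit^{\text{norm}}$ as functions of $(p_j; p_{-j})$.

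Finally, since multiplying every producer's payoff by the same positive constant leaves the best-response correspondence unchanged, the symmetric mixed Nash equilibrium condition ``for every $j$ and every $p_j \in \mathrm{supp}(\mu)$, $p_j \in \argmax_p \mathbb{E}_{p_{-j} \sim \mu^{\otimes(P-1)}}[\Profit(p; p_{-j})]$'' holds in the original game if and only if it holds in the normalized game. This yields the claimed equivalence of equilibria. The main subtlety (and the only obstacle worth flagging) is the handling of tie-breaking within a population; the observation above that users are assigned independently makes the ``per-population'' probability $q_i(j;\;p_{1:P})$ well-defined and linear in its $N/2$ contribution, so the rescaling goes through cleanly.
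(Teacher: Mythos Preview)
Your proposal is correct and is exactly the natural argument the paper has in mind: the paper states Claim~\ref{claim:equivalence} without proof, and your observation that the two profit functions satisfy $\Profit^{\text{orig}} = (N/2)\,\Profit^{\text{norm}}$ pointwise---via (i) linearity in the number of users sharing a common embedding and (ii) scale-invariance of the argmax---is precisely what underlies it. There is nothing missing; the tie-breaking subtlety you flag is handled correctly since the expectation in \eqref{eq:utility} is over independent per-user randomization.
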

\noindent Thus, we focus on marketplaces with 2 users located at vectors $u_1$ and $u_2$ such that $\|u_1\| = \|u_2\| = 1$ and with producer cost function $c(\p) = \alpha \|p\|_2^{\beta}$ for $\alpha > 0$.

The proofs in this section boil down to leveraging conditions (C1)-(C3) in Lemma \ref{claim:necessarysuff}, restated below. 
\necessarysuff*

\begin{proof}[Proof of Lemma \ref{claim:necessarysuff}] 
The intuition is that the the set $S$ captures the support of the realized inferred user values $[\langle u_1, \p \rangle, \ldots, \langle u_N, \p \rangle]$ for $\p \sim \mu$ and the distribution $H_i$ captures the distribution of the maximum inferred user value$\max_{1 \le j \le P -1} \langle u_i, \p_j\rangle$ for user $u_i$. 

To formalize this, we reparameterize from content vectors in $\mathbb{R}_{\ge 0}^D$ to realized inferred user values in $\mathbb{R}_{\ge 0}^N$. That is, we transform the content vector $\p \in \mathbb{R}_{\ge 0}^D$ into the vector of realized inferred user values given by $z = \UMatrix p$. This reparameterization allows us to cleanly reason about the number of users that a producer wins: a producer wins a user $u_i$ if and only if they have the highest value in the $i$th coordinate of $z$. In this parametrization, the cost of production can be computed through an induced function $c_{\UMatrix}$ given by $c_{\UMatrix}(z) := \min \left\{c(\p) \mid \p \in \mathbb{R}_{\ge 0}^D, z = \UMatrix p  \right\}$ if $z \in \left\{\UMatrix p \mid \p \in \mathbb{R}_{\ge 0}^D\right\}$. 

In this reparameterization, the producer profit takes a clean form. If producer $1$ chooses $z \in \mathbb{R}^N$, and other producers follow a distribution $\mu_Z$ over $\mathbb{R}^N$, then the expected profit of producer $1$ is:  
\[
 \sum_{i=1}^N H_i(z_i) - c_{\UMatrix}(z),
\]
where $H_i(\cdot)$ is the cumulative distribution function of the {maximum realized inferred user value} over the other $P-1$ producers, i.e.~of the random variable $\max_{2 \le j \le P} (z_j)_i$ where $z_2, \ldots, z_{P} \sim \mu_Z$.

Recall that a distribution $\mu$ corresponds to a symmetric mixed Nash equilibrium if and only if every $z$ in the support $S:= \text{supp}(\mu_Z)$ is a maximizer of equation \eqref{eq:reparam} (where $\mu_Z$ is the distribution over $\UMatrix p$ for $p \sim \mu$). 

\end{proof}

\subsubsection{Leveraging (C1)}\label{sec:C1} 

To leverage (C1), we use the first-order and second-order conditions for $z$ to be a maximizer of equation \eqref{eq:reparam}. In order to obtain useful closed-form expressions, we explicitly compute the induced cost function in terms of the angle $\theta^*$ between the user vectors.
\begin{lemma}
\label{lemma:inducedcost}
Let there be 2 users located at $u_1, u_2 \in \mathbb{R}_{\ge 0}^D$ such that  $\|u_1\| = \|u_2\| = 1$, and let $\theta^* := \cos^{-1}\left( \langle u_1, u_2\rangle \right) > 0$ be the angle between the user vectors. Let the cost function be $c(\p) = \alpha \|p\|_2^{\beta}$ for $\alpha > 0$. For any $z \in \left\{\UMatrix p \mid p \in \mathbb{R}_{\ge 0}^D \right\}$, the induced cost function is given by:
\[c_{\UMatrix}(z) = \alpha \sin^{-\beta}(\theta^*) \left(z_1^2 + z_2^2 - 2 z_1 z_2 \cos(\theta^*) \right)^{\frac{\beta}{2}}.\]
\end{lemma}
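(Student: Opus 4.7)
The plan is to reduce the computation of $c_{\mathbf{U}}(z)$ to an elementary linear-algebraic minimization, exploiting that $c(p) = \alpha \|p\|_2^\beta$ is strictly increasing in $\|p\|_2$ (since $\alpha > 0$ and $\beta \ge 1$). Thus $c_{\mathbf{U}}(z) = \alpha \bigl(\min \|p\|_2\bigr)^\beta$ where the minimum is over $p \in \mathbb{R}_{\ge 0}^D$ with $\mathbf{U} p = z$; the real work is computing this minimum norm.

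First I would compute the minimum-norm $p^*$ satisfying $\mathbf{U} p^* = z$ \emph{without} the nonnegativity constraint. Since $p^*$ is the orthogonal projection of the origin onto the affine subspace $\{p : \mathbf{U} p = z\}$, it lies in $(\ker \mathbf{U})^\perp = \operatorname{span}(u_1, u_2)$, so I can write $p^* = a u_1 + b u_2$ for some scalars $a, b$. Substituting into $\mathbf{U} p^* = z$ and using $\|u_1\|_2 = \|u_2\|_2 = 1$ together with $\langle u_1, u_2\rangle = \cos\theta^*$, the constraint reduces to the $2 \times 2$ linear system
\[a + b \cos\theta^* = z_1, \qquad a \cos\theta^* + b = z_2,\]
whose determinant $1 - \cos^2\theta^* = \sin^2\theta^* > 0$ is nonzero by linear independence. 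Solving gives $a = (z_1 - z_2 \cos\theta^*)/\sin^2\theta^*$ and $b = (z_2 - z_1 \cos\theta^*)/\sin^2\theta^*$. Then I would compute $\|p^*\|_2^2 = a^2 + b^2 + 2 a b \cos\theta^*$ by substitution and simplification, repeatedly applying $1 - \cos^2\theta^* = \sin^2\theta^*$ to collapse the expression to $(z_1^2 + z_2^2 - 2 z_1 z_2 \cos\theta^*)/\sin^2\theta^*$; raising to the $\beta/2$ power and multiplying by $\alpha$ yields the claimed formula.

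The main subtlety, which I expect to be the principal obstacle, is verifying that the unconstrained minimizer $p^* = a u_1 + b u_2$ actually lies in $\mathbb{R}_{\ge 0}^D$, so that the unconstrained minimum matches the constrained one defining $c_{\mathbf{U}}$. For $D = 2$ this is immediate: $\mathbf{U}$ is invertible, so $p^*$ is the unique preimage of $z$ under $\mathbf{U}$, which must coincide with any nonnegative realization $p'$ with $\mathbf{U} p' = z$ guaranteed by the hypothesis $z \in \{\mathbf{U} p : p \in \mathbb{R}_{\ge 0}^D\}$. For general $D$, one would argue separately, using the structure of $u_1, u_2$ in the positive orthant together with the achievability of $z$, that this minimum-norm representative has no negative coordinates.
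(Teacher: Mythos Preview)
Your approach is essentially the same computation as the paper's, just phrased coordinate-free rather than in explicit coordinates. The paper simply declares ``WLOG, let $u_1 = e_1$ and $u_2 = [\cos\theta^*, \sin\theta^*]$'' and then verifies directly that $\|p\|_2^2 = (z_1^2 + z_2^2 - 2z_1 z_2 \cos\theta^*)/\sin^2\theta^*$ for the unique $p$ with $\mathbf{U}p = z$; your version writes $p^* = a u_1 + b u_2$, solves the same $2\times 2$ system, and expands $\|p^*\|_2^2 = a^2 + b^2 + 2ab\cos\theta^*$. The algebra is identical.

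You are more careful than the paper on one point: you explicitly frame $c_{\mathbf{U}}(z)$ as a constrained minimization and flag the nonnegativity of $p^*$ as the residual issue for $D>2$. The paper's WLOG silently collapses to $D=2$, where (as you note) the preimage is unique and the issue disappears. Your instinct that the general-$D$ case needs a separate argument is correct, and in fact the claim can fail there: for instance with $u_1 = (1,1,0)/\sqrt{2}$, $u_2 = (0,1,1)/\sqrt{2}$ and $z = (1,0)$, the min-norm solution $p^* = \tfrac{4}{3}u_1 - \tfrac{2}{3}u_2$ has a negative third coordinate, while the only nonnegative preimage is $(\sqrt{2},0,0)$ with strictly larger norm. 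So the vague ``one would argue separately'' will not go through as stated; but this is a limitation shared with the paper's own proof, which in practice only uses the lemma in the two-dimensional setting (after Claim~\ref{claim:equivalence} and the further WLOG reductions in Section~\ref{sec:proofsstructural}).
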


\paragraph{First-order condition.} The first order condition implies that we can compute the densities $h_1$ and $h_2$ of $H_1$ and $H_2$ in terms of the $c_{\UMatrix}$. The densities $h_1(z_1)$ and $h_2(z_2)$ depend on the gradient $\nabla_z c_{\UMatrix}$ and \textit{both} coordinates $z_1$ and $z_2$. 
\begin{lemma}
\label{lemma:FOC} 
Let there be 2 users located at $u_1, u_2 \in \mathbb{R}_{\ge 0}^D$ such that  $\|u_1\| = \|u_2\| = 1$, and let $\theta^* := \cos^{-1}\left( \langle u_1, u_2\rangle \right) > 0$ be the angle between the user vectors. Let the cost function be $c(\p) = \alpha \|p\|_2^{\beta}$ for $\alpha > 0$. For any $z \in \left\{\UMatrix p \mid p \in \mathbb{R}_{\ge 0}^D \right\}$, the first-order condition of equation \eqref{eq:reparam} can be written as: 
\[\begin{bmatrix}
h_1(z_1) \\
h_2(z_2)
\end{bmatrix} 
= \nabla_{z}(c_{\UMatrix}(z)).\]
More specifically, it holds that: 
\[\begin{bmatrix}
h_1(z_1) \\
h_2(z_2)
\end{bmatrix} = \beta \alpha \sin^{-\beta}(\theta^*) \left(z_1^2 + z_2^2 - 2 z_1 z_2 \cos(\theta^*) \right)^{\frac{\beta}{2} - 1} \begin{bmatrix}
z_1 - z_2 \cos(\theta^*) \\
z_2 - z_1 \cos(\theta^*)
\end{bmatrix},
\]
and if we represent $z = \UMatrix [r\cos(\theta), r \sin(\theta)]$, then it also holds that:
\[\begin{bmatrix}
h_1(z_1) \\
h_2(z_2)
\end{bmatrix}  = \beta \alpha r^{\beta - 1} \begin{bmatrix}
\frac{\sin(\theta^* - \theta)}{\sin(\theta^*)} \\
\frac{\sin(\theta)}{\sin(\theta^*)} 
\end{bmatrix}.   \]
\end{lemma}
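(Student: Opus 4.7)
The plan is to directly apply first-order optimality to the program \eqref{eq:reparam}. The objective $\sum_{i=1}^{2} H_i(z_i) - c_{\UMatrix}(z)$ is separable in its first term, so setting the gradient to zero immediately gives $h_i(z_i) = \partial_i c_{\UMatrix}(z)$ for $i \in \{1,2\}$, which is exactly the matrix equation $[h_1(z_1), h_2(z_2)]^\top = \nabla_z c_{\UMatrix}(z)$ stated in the lemma. This is the only place where the equilibrium condition is invoked; the remainder of the proof is a purely computational unpacking of $\nabla_z c_{\UMatrix}$.

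For the first explicit form, I would import the closed-form expression for $c_{\UMatrix}$ from Lemma \ref{lemma:inducedcost}, namely $c_{\UMatrix}(z) = \alpha \sin^{-\beta}(\theta^*) \, g(z)^{\beta/2}$ with $g(z) := z_1^2 + z_2^2 - 2 z_1 z_2 \cos(\theta^*)$, and apply the chain rule. Since $\nabla g(z) = 2[z_1 - z_2\cos(\theta^*), \, z_2 - z_1\cos(\theta^*)]^\top$, I obtain $\nabla_z c_{\UMatrix}(z) = \beta \alpha \sin^{-\beta}(\theta^*) g(z)^{\beta/2 - 1} [z_1 - z_2\cos(\theta^*), \, z_2 - z_1\cos(\theta^*)]^\top$, which matches the first expression in the lemma.

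For the polar form, note that only the projection of $p$ onto $\mathrm{span}(u_1, u_2)$ affects both the inferred values $\langle u_i, p \rangle$ and the cost $\|p\|_2$, so without loss of generality I can reduce to $D = 2$ and choose an orthonormal basis in which $u_1 = (1,0)$ and $u_2 = (\cos\theta^*, \sin\theta^*)$. Writing $p = r(\cos\theta, \sin\theta)$ then yields $z_1 = r\cos(\theta)$ and $z_2 = r\cos(\theta^* - \theta)$. The plan is then to simplify both the scalar factor $g(z)^{\beta/2 - 1}$ and the vector $[z_1 - z_2\cos\theta^*, \, z_2 - z_1\cos\theta^*]^\top$ with standard trigonometric identities. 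In particular, expanding $\cos(\theta^* - \theta)$ and using $\sin^2 + \cos^2 = 1$ shows that $g(z) = r^2 \sin^2(\theta^*)$, and a similar expansion gives $z_1 - z_2\cos(\theta^*) = r\sin(\theta^*)\sin(\theta^* - \theta)$ and $z_2 - z_1\cos(\theta^*) = r\sin(\theta^*)\sin(\theta)$. Substituting these into the first formula cancels the $\sin^{-\beta}(\theta^*)$ and produces the advertised $\beta \alpha r^{\beta - 1}[\sin(\theta^* - \theta)/\sin(\theta^*), \, \sin(\theta)/\sin(\theta^*)]^\top$.

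There is no conceptual obstacle here: the lemma is essentially a restatement of the FOC combined with a careful change of coordinates. The only ``hard part'' is bookkeeping the trig identities cleanly, and this is made much easier by the observation that $g(z)$ equals the squared length of $\UMatrix^+ z$ multiplied by $\sin^2(\theta^*)$, which is precisely why the $\sin^{-\beta}$ prefactor is built into $c_{\UMatrix}$ in the first place.
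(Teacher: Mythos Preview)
Your proposal is correct and follows essentially the same approach as the paper: set the gradient of \eqref{eq:reparam} to zero to obtain $h_i(z_i)=\partial_i c_{\UMatrix}(z)$, plug in the closed form from Lemma~\ref{lemma:inducedcost} and differentiate, then substitute $z_1=r\cos\theta$, $z_2=r\cos(\theta^*-\theta)$ and simplify via $g(z)=r^2\sin^2\theta^*$ and the product-to-sum identities. The only slight imprecision is the claim that the orthogonal component of $p$ does not affect $\|p\|_2$; it does, but the minimizer defining $c_{\UMatrix}(z)$ lies in $\mathrm{span}(u_1,u_2)$, so the reduction to $D=2$ is still valid.
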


\paragraph{Second-order condition.} When we also take advantage of the second-order condition, we can identify the ``direction'' that the support must point at $z \in S$ terms of the location of $z$, the cost function parameter $\beta$, and the angle $\theta^*$ between the two populations of users. 
\begin{lemma}
\label{lemma:secondderiv} 
Let there be 2 users located at $u_1, u_2 \in \mathbb{R}_{\ge 0}^D$ such that  $\|u_1\| = \|u_2\| = 1$, and let $\theta^* := \cos^{-1}\left( \langle u_1, u_2\rangle \right) > 0$ be the angle between the user vectors. Let the cost function be $c(\p) = \alpha \|p\|_2^{\beta}$ for $\alpha > 0$. If $z$ is of the form $[r \cos(\theta), r \cos(\theta^* - \theta)]$ for $\theta \in [0, \theta^*]$, then the sign of $\frac{\partial^2  c_{\UMatrix}(z)}{\partial z_1 \partial z_2}$ is equal to the sign of:
\[\frac{\beta - 2}{\beta} \cos(\theta^* - 2 \theta) - \cos(\theta^*).\]
\end{lemma}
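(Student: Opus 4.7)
The plan is to directly compute the cross-partial of the explicit form of $c_{\UMatrix}$ provided by Lemma \ref{lemma:inducedcost} and then specialize to the parametrization $z = [r\cos\theta,\; r\cos(\theta^*-\theta)]$, at which point everything reduces to a trigonometric identity.

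First, I would write $c_{\UMatrix}(z) = \alpha \sin^{-\beta}(\theta^*) \, g(z)^{\beta/2}$ where $g(z) := z_1^2 + z_2^2 - 2 z_1 z_2 \cos(\theta^*)$, so that $\partial_1 g = 2(z_1 - z_2 \cos\theta^*)$, $\partial_2 g = 2(z_2 - z_1 \cos\theta^*)$, and $\partial_1 \partial_2 g = -2\cos\theta^*$. A routine chain-rule computation then gives
\[
\frac{\partial^2 c_{\UMatrix}}{\partial z_1 \partial z_2} \;=\; \alpha \sin^{-\beta}(\theta^*)\,\beta \, g^{\beta/2 - 2}\,\Bigl[\,(\beta-2)(z_1 - z_2 \cos\theta^*)(z_2 - z_1 \cos\theta^*) \;-\; g\,\cos\theta^*\,\Bigr].
\]
Since the prefactor $\alpha \sin^{-\beta}(\theta^*)\,\beta \, g^{\beta/2-2}$ is strictly positive (using $\theta^* \in (0,\pi)$ and $g>0$, which holds whenever $r>0$ since $g$ will turn out to equal $r^2\sin^2\theta^*$), the sign of the cross-partial coincides with the sign of the bracketed expression.

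Next, I would substitute $z_1 = r\cos\theta$ and $z_2 = r\cos(\theta^*-\theta)$ and simplify each factor. Expanding $\cos(\theta^*-\theta) = \cos\theta^*\cos\theta + \sin\theta^*\sin\theta$ and collecting terms yields the clean identities
\[
z_1 - z_2 \cos\theta^* = r\,\sin\theta^*\,\sin(\theta^*-\theta), \qquad z_2 - z_1 \cos\theta^* = r\,\sin\theta^*\,\sin\theta,
\]
\[
g(z) = r^2\sin^2\theta^*,
\]
where the last identity follows either from a direct product-to-sum expansion or, more conceptually, by observing that $g(z)$ is the squared norm of the preimage $p$ scaled by $\sin^2\theta^*$ (cf.\ the form of $c_{\UMatrix}$ in Lemma \ref{lemma:inducedcost}). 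Plugging these in, the bracketed quantity becomes
\[
(\beta-2)\,r^2 \sin^2\theta^*\,\sin\theta \,\sin(\theta^*-\theta) \;-\; r^2 \sin^2\theta^* \cos\theta^*.
\]

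Finally, I would apply the product-to-sum identity $\sin\theta \sin(\theta^*-\theta) = \tfrac{1}{2}\bigl[\cos(\theta^*-2\theta) - \cos\theta^*\bigr]$, which collapses the bracket to $\tfrac{r^2 \sin^2\theta^*}{2}\bigl[(\beta-2)\cos(\theta^*-2\theta) - \beta\cos\theta^*\bigr]$. Pulling out the positive factor $\tfrac{r^2 \sin^2\theta^* \cdot \beta}{2}$ leaves the sign of the cross-partial equal to the sign of $\tfrac{\beta-2}{\beta}\cos(\theta^*-2\theta) - \cos\theta^*$, as claimed. The only real ``obstacle'' is bookkeeping the trigonometric manipulations; there is no analytic subtlety beyond keeping track of signs and applying the product-to-sum identity at the right moment.
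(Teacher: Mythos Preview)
Your proposal is correct and follows essentially the same route as the paper's proof: both compute the cross-partial of $c_{\UMatrix}$ directly from the closed form in Lemma~\ref{lemma:inducedcost}, factor out the positive prefactor to isolate the bracket $(\beta-2)(z_1-z_2\cos\theta^*)(z_2-z_1\cos\theta^*) - g(z)\cos\theta^*$, substitute $z=[r\cos\theta,\,r\cos(\theta^*-\theta)]$ using the identities $z_1-z_2\cos\theta^* = r\sin\theta^*\sin(\theta^*-\theta)$ and $g(z)=r^2\sin^2\theta^*$, and then apply the product-to-sum formula to arrive at the claimed sign expression. Your write-up is, if anything, slightly more streamlined in its bookkeeping than the paper's.
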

\begin{lemma}
\label{lemma:regionscolor}
Let there be 2 users located at $u_1, u_2 \in \mathbb{R}_{\ge 0}^D$ such that  $\|u_1\| = \|u_2\| = 1$, and let $\theta^* := \cos^{-1}\left( \langle u_1, u_2\rangle \right) > 0$ be the angle between the user vectors. Let the cost function be $c(\p) = \alpha \|p\|_2^{\beta}$ for $\alpha > 0$. Suppose that condition (C1) is satisfied for $(H_1, H_2, S)$. If $S$ contains a curve of the form $\left\{(z_1, g(z_1)) \mid x \in I \right\}$ for any open interval $I$ and any differentiable function $g$, then for any $z_1 \in I$, it holds that: 
\[g'(z_1) \cdot \left(\frac{\beta - 2}{\beta} \cos(\theta^* - 2 \theta) - \cos(\theta^*) \right) \le 0.\]
\end{lemma}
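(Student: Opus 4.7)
The plan is to exploit the joint first- and second-order optimality conditions implied by condition (C1). Since every point $z^* = (z_1, g(z_1))$ with $z_1 \in I$ lies in $S$, Lemma \ref{claim:necessarysuff} tells us it is a maximizer of the objective $F(z) := H_1(z_1) + H_2(z_2) - c_{\UMatrix}(z)$. Hence $\nabla F$ vanishes identically along the curve, and by Lemma \ref{lemma:FOC} we obtain the pointwise identities
\[
h_1(z_1) = \partial_1 c_{\UMatrix}(z_1, g(z_1)), \qquad h_2(g(z_1)) = \partial_2 c_{\UMatrix}(z_1, g(z_1))
\]
for every $z_1 \in I$. The closed form in Lemma \ref{lemma:inducedcost} shows the right-hand sides are $C^\infty$ in $z_1$ on the interior of $I$, which is what lets us differentiate $h_1$ and $h_2$ along the curve in the next step (this bypasses any a priori regularity assumption on the densities).

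Next, I will differentiate both identities with respect to $z_1$, using the chain rule with $z_2 = g(z_1)$, to get
\[
h_1'(z_1) = \partial_{11} c_{\UMatrix} + \partial_{12} c_{\UMatrix}\, g'(z_1), \qquad h_2'(g(z_1))\, g'(z_1) = \partial_{12} c_{\UMatrix} + \partial_{22} c_{\UMatrix}\, g'(z_1),
\]
where the partial derivatives are evaluated at $(z_1, g(z_1))$. I would then feed these into the second-order necessary condition at the maximizer: the Hessian $\text{Hess}(F) = \operatorname{diag}\bigl(h_1'(z_1), h_2'(g(z_1))\bigr) - \text{Hess}(c_{\UMatrix})$ must be negative semidefinite. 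A direct substitution, valid whenever $g'(z_1) \neq 0$, collapses the quadratic form to
\[
\delta^\top \text{Hess}(F)\, \delta = \frac{\partial_{12} c_{\UMatrix}}{g'(z_1)}\,\bigl(g'(z_1)\,\delta_1 - \delta_2\bigr)^2,
\]
so negative semidefiniteness forces $\partial_{12} c_{\UMatrix}(z_1, g(z_1)) \cdot g'(z_1) \le 0$. Invoking Lemma \ref{lemma:secondderiv} with the parameterization $z = [r\cos\theta, r\cos(\theta^* - \theta)]$ replaces the sign of $\partial_{12} c_{\UMatrix}$ with the sign of $\frac{\beta-2}{\beta}\cos(\theta^* - 2\theta) - \cos(\theta^*)$, yielding the claimed inequality.

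The main obstacle will be the degenerate case $g'(z_1) = 0$: here the reduction above is singular, and one must argue separately. The clean resolution is to observe that the desired inequality becomes $0 \le 0$ in this case, so it holds trivially; to confirm this is the right reading, I would verify that no additional constraint is lost, by rerunning the Hessian analysis directionally (the second-order condition along the $\delta_2$-axis still gives $h_2'(g(z_1)) - \partial_{22} c_{\UMatrix} \le 0$, which is consistent with but does not contradict the conclusion). A secondary subtlety is justifying differentiability of $h_1, h_2$ at the endpoints of $I$; since the lemma's conclusion is required only on the open interval $I$ and the identities above propagate smoothness from $c_{\UMatrix}$, this is not an issue in the interior. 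With these points handled, the proof is essentially a one-shot computation once Lemmas \ref{lemma:inducedcost}, \ref{lemma:FOC}, and \ref{lemma:secondderiv} are in hand.
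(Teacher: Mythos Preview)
Your proposal is correct and follows essentially the same approach as the paper: differentiate the first-order identities from Lemma~\ref{lemma:FOC} along the curve to express $h_1'$ and $h_2'$ in terms of the Hessian of $c_{\UMatrix}$, substitute into the second-order necessary condition, and reduce via Lemma~\ref{lemma:secondderiv}. The only cosmetic difference is that the paper computes the eigenvalues of the resulting rank-one matrix $\partial_{12}c_{\UMatrix}\begin{pmatrix} g' & -1 \\ -1 & 1/g'\end{pmatrix}$ explicitly, whereas you write out the quadratic form; both yield the same sign constraint.
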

Lemmas \ref{lemma:secondderiv} and \ref{lemma:regionscolor} demonstrate that if $\left(\frac{\beta - 2}{\beta} \cos(\theta^* - 2 \theta) - \cos(\theta^*) \right) > 0$, then the curve $g$ must be decreasing, and if $\left(\frac{\beta - 2}{\beta} \cos(\theta^* - 2 \theta) - \cos(\theta^*) \right) < 0$, then the curve $g$ must be increasing. This characterizes the ``direction'' of the curve in terms of the location $z_1$. 

\subsubsection{Leveraging (C3)}\label{sec:C3} 

For the case of 2 users with cost function $c(\p) = \|\p\|_2^{\beta}$, the condition (C3) always holds, as long as condition (C1) holds. Since the two vectors $u_1$ and $u_2$ are linearly independent, the matrix $\UMatrix$ is invertible, so we can define $\mu$ to be the distribution given by $\UMatrix^{-1} Z$. The only remaining condition comes $\p$ being restricted to $\mathbb{R}_{\ge 0}^D$ rather than $\mathbb{R}^D$. This means that $S$ must be contained in the convex cone generated by $[1, \cos(\theta^*)]$ and $[\cos(\theta^*), 1]$. This restriction on $S$ is already implicitly implied by (C1): it is not difficult to see that all maximizers of \eqref{eq:reparam} will be contained in this convex cone. 

\subsubsection{Leveraging (C2)} \label{sec:C2}
To leverage (C2), we obtain a functional equation that restricts the relationship between $H_1$, $H_2$, and $S$ for a given value of $P$, and we instantiate this in two ways. First, when the support is a curve $(z_1, g(z_1))$, the marginal distributions $Z_1$ and $Z_2$ are related by a change of variables formula given by $Z_2 \sim g(Z_1)$. This translates into a condition on $H_1$ and $H_2$ that depends on the derivative $g'$ and the number of producers $P$. Second, if the equilibrium were to contain finitely many genres, there would be a pair of functional equations relating the cdfs $H_1$ and $H_2$, the distribution over quality within each genre, and the number of producers $P$. We describe each of these settings in more detail below.

\paragraph{Case 1: support is a single curve.} The first case where we instantiate (C2) is when $S$ is equal to $\left\{(z_1, g(z_1)) \mid x \in M\right\}$ where $M$ is a (well-behaved) subset of $\mathbb{R}_{\ge 0}$. Let  $h^*_1$ and $h^*_2$ be the densities of the marginal distributions $Z_1$ and $Z_2$ respectively. Since $Z_2 \sim g(Z_1)$, the change of variables formula implies that the densities $h^*_1$ and $h^*_2$ are related as follows:  
\begin{equation}
\label{eq:densities}
 h^*_1(z_1) = h^*_2(g(z_1)) |g'(z_1)|, 
\end{equation}
In order to use equation \eqref{eq:densities}, we need to translate it into a condition on the distributions $H_1$ and $H_2$. Let $h_1$ and $h_2$ be the densities of $H_1$ and $H_2$ respectively. Then equation \eqref{eq:densities} can reformulated as: 
\begin{equation}
\label{eq:densitiestranslated}
  \frac{h_1(x)}{(H_1(x))^{\frac{P-2}{P-1}}} = \frac{h_2(g(x))}{(H_2(g(x)))^{\frac{P-2}{P-1}}} |g'(x)|.   
\end{equation}
Equation \eqref{eq:densitiestranslated} reveals that the constraint induced by the number of producers $P$ can be messy in general, since it involves both the densities $h_1$ and $h_2$ and the cdfs $H_1$ and $H_2$. Intuitively, these complexities arise because $H^*_i$ and $H_i$ are related by a $(P-1)$th degree polynomial (put differently, the maximum of $P - 1$ i.i.d. draws of a random variable does not generally have a clean structure). Nonetheless, equation \eqref{eq:densitiestranslated} does simplify into a tractable form in special cases. For example, if $P=2$, then the dependence on $H_1$ and $H_2$ vanishes. As another example, if $g$ is \textit{increasing}, then $H_1(x) = H_2(g(x))$ for any $P \ge 2$, so the dependence on $H_1$ and $H_2$ again vanishes. 

\paragraph{Case 2: two-genre equilibria.} 
The second case where we instantiate (C2) is when $S$ is a subset of the union of two lines: that is,
\[S \subseteq \left\{(z_1, c_1 \cdot z_1) \mid z_1 \in \mathbb{R}_{\ge 0} \right\} \cup \left\{(z_1, c_2 \cdot z_1) \mid z_1 \in \mathbb{R}_{\ge 0} \right\},\]
where $\cos(\theta^*) \le c_1, c_2 \le \frac{1}{\cos(\theta^*)}$. Since linear transformations preserve lines through the origin, this means that the support of the distribution $\mu$ of $U^{-1} Z$ is also contained in the union of two lines through the origin: thus $|\Genre(\mu)| \le 2$.

A distribution $Z$ can be entirely specified by the probabilities $\alpha_1 + \alpha_2$ that it places on each of the two lines and the conditional distribution of $Z_1$ along each of the lines (this in particular determines the conditional distribution of $Z_2$ along the lines). More specifically, the probabilities $\alpha_1 + \alpha_2$ will correspond to 
\begin{align*}
  \alpha_1 &:= \mathbb{P}_{Z}[Z \in \left\{(z_1, c_1 \cdot z_1) \mid z_1 \in \mathbb{R}_{\ge 0} \right\}] \\
   \alpha_2 &:= \mathbb{P}_{Z}[Z \in \left\{(z_1, c_2 \cdot z_1) \mid z_1 \in \mathbb{R}_{\ge 0} \right\}],
\end{align*}
and $F_1$ and $F_2$ will correspond to the cdfs of the conditional distributions
\begin{align*} 
   F_1 &\sim Z_1 \mid Z \in \left\{(z_1, c_1 \cdot z_1\right\} \\
   F_2 &\sim Z_1 \mid Z \in \left\{(z_1, c_2 \cdot z_1\right\}
\end{align*}
respectively. The (unique) distribution $Z$ associated with $\alpha_1, \alpha_2, F_1, F_2$ satisfies (C2) if and only if the following pairs of functional equations are satisfied: 
\begin{equation}
\label{eq:functionalequation}
    \left(\alpha_1 F_1(z_1) + \alpha_2 F_2(z_1) \right) = (H_1(z_1))^{\frac{1}{P-1}} \text {  and   }
    \left(\alpha_1 F_1(c_1^{-1}z_2) + \alpha_2 F_2(c_2^{-1}z_2) \right) = (H_2(z_2))^{\frac{1}{P-1}}.
\end{equation}
The functional equations can be solved to determine if there is a valid solution.

\subsection{Proof of Corollary \ref{cor:2users}}\label{subsec:proofcor2users}

We prove Corollary \ref{cor:2users}, restated below:
\twousers*
\begin{proof}
By Claim \ref{claim:equivalence}, we can assume that there are 2 normalized users $\|u_1\| = \|u_2\|$. We further assume WLOG that $u_1 = e_1$. 

We claim that if there is a single-genre equilibrium, it must be in the direction of $[\cos(\theta^*/2), \sin(\theta^*/2)]$. By Corollary \ref{cor:singlegenrestructure}, if there is a single-genre equilibrium in a direction $\p$, then it must maximize $\log(\langle p, u_1\rangle) + \log(\langle p, u_1\rangle)$. Let's let $p = [\cos(\theta), \sin(\theta)]$. Then, we see that:
\[\log(\langle p, u_1\rangle) + \log(\langle p, u_2\rangle) = \log(\cos(\theta)) + \log(\cos(\theta^* - \theta)) = \log\left(\frac{\cos(\theta^*) + \cos(\theta^* - 2 \theta)}{2} \right),\]
which is uniquely maximized at $\theta = \theta^* / 2$ as desired. 

We first show that $\beta^* \le \frac{2}{1- \cos(\theta^*)}$. Assume for sake of contradiction that there is a single-genre equilibrium. The above argument shows that it must be in the direction of $[\cos(\theta^*/2), \sin(\theta^*/2)]$. By Lemma \ref{lemma:cdf}, we know that the support of the equilibrium distribution is a line segment. If $\beta > \frac{2}{1- \cos(\theta^*)}$, we see that 
\[\frac{\beta-2}{\beta} \cos(\theta^* - 2\theta) - \cos(\theta^*) = 1 - \frac{2}{\beta} - \cos(\theta^*) < 0. \]
By Lemma \ref{claim:necessarysuff} and Lemma \ref{lemma:regionscolor}, we see that the single-genre line $(z, g(z))$ must have $g'(z_1) \le 0$ in its support, which is a contradiction.

We next show that $\beta^* \le \frac{2}{1- \cos(\theta^*)}$. It suffices to show that the single-genre distribution in the direction of $[\cos(\theta^*/2), \sin(\theta^*/2)]$ with cdf given by $F(q) = \left(\frac{q^{\beta}}{2} \right)^{1/(P-1)}$. We apply Claim \ref{claim:necessarysuff}; it suffices to verify condition (C1). Notice that 
\[H_1(w) = H_2(w) = \left(\frac{w^{\beta}}{2 \cos^{\beta}(\theta^*/2)} \right).\]
Thus, equation \eqref{eq:reparam} can be written as:
\[\max_{z} \left(\min(1, \frac{z_1^{\beta}}{2 \cos^{\beta}(\theta^*/2)}) + \min(1, \frac{z_2^{\beta}}{2 \cos^{\beta}(\theta^*/2)}) - c_{\UMatrix}(z) \right). \] 
It suffices to show that that for all $z$, it holds that:
\[z_1^{\beta} + z_2^{\beta} - 2 \cos^{\beta}(\theta^* /2) \left(\frac{z_1^2 + z_2^2 -2z_1z_2\cos(\theta^*)}{\sin^2(\theta^*)}  \right)^{\beta} \le 0. \] 
Let $z = [r \cos(\theta), r \cos(\theta^* - \theta)]$. Then this reduces to:
\[\cos^{\beta}(\theta) + \cos^{\beta}(\theta^* - \theta) \le 2 \cos^{\beta}(\theta^* /2) \le 0. \] 
We observe that $\cos^{\beta}(\theta) + \cos^{\beta}(\theta^* - \theta)$ is maximized at $\theta = \theta^* /2$, which proves the desired statement. 

\end{proof}

\subsection{Proofs for Section \ref{sec:structural}}\label{sec:proofsstructural}

We prove Proposition \ref{prop:supportrestriction}, restated below:
\supportrestriction*
\begin{proof}[Proof of Proposition \ref{prop:supportrestriction}]
Assume for sake of contradiction that the support of $\mu$ contains an $\ell_2$-ball of radius $\epsilon_1 > 0$. We apply Lemma \ref{claim:necessarysuff} and show that condition (C1) is violated. Since $\mu$ contains a ball of $\epsilon_1$-radius ball, we know that the distribution $Z$ over $\UMatrix \p$ over $\p \sim \mu$ contains an $\ell_2$ ball of radius $\epsilon_2 > 0$. Let this ball be $B$. Notice that $Z_1$ and $Z_2$ are absolutely continuous by assumption, $Z_1$ and $Z_2$ have bounded support, and the function $m \mapsto m^{P-1}$ is Lipschitz on any bounded interval: this means that $H_1$ and $H_2$ are also absolutely continuous. This means that densities exist a.e. For $(z_1, z_2) \in B$, we can apply the first-order condition in Lemma \ref{lemma:FOC} to obtain that: 
\[h_1(z_1) = \frac{\partial c_{\UMatrix}(z)}{\partial z_1}\]
We see that this needs to be satisfied for $z = [z_1, m]$ where $m\in (z_2 - \epsilon', z_2 + \epsilon')$. This means that the mapping $m \mapsto \frac{\partial c_{\UMatrix}([z_1, m])}{\partial z_1}$ needs to be a constant on $m\in (z_2 - \epsilon', z_2 + \epsilon')$. This means that the derivative of this mapping with respect to $z_2$ needs to be $0$, so: 
\begin{equation}
\label{eq:secondderivzero}
  \frac{\partial^2 c_{\UMatrix}([z_1, z_2])}{\partial z_1 \partial z_2} = 0  
\end{equation}
for all $z \in B$.  

We apply Lemma \ref{lemma:secondderiv} to show that equation \eqref{eq:secondderivzero} cannot be zero on all of $B$. For all $z$ that satisfy equation \eqref{eq:secondderivzero}, Lemma \ref{lemma:secondderiv} implies if we represent $z$ as $\UMatrix [r \cos(\theta), r \sin(\theta)]$, then 
\[\frac{\beta - 2}{\beta} \cos(\theta^* - 2 \theta) = \cos(\theta^*).\]
If equation  \eqref{eq:secondderivzero} holds for all $z \in B$, then it must hold at all $\theta$ within some nonempty interval. This is a contradiction as long as $\beta \neq 2$ or $\theta^* \neq \pi/2$.

For the special case where $\beta = 2$ and $\theta^* = \pi/2$, 
\end{proof}

We next prove Theorem \ref{thm:phasetransitionformal}, restated below:
\phasetransitionformal*
We split into two propositions: together, these propositions directly imply Theorem \ref{thm:phasetransitionformal}. 

\begin{restatable}{proposition}{uniqueness}
\label{prop:uniqueness}
Consider the setup in Theorem \ref{thm:phasetransitionformal}. If $\beta < \beta^* = \frac{2}{1 - \cos(\theta^*)}$ and $\mu$ is a symmetric mixed equilibrium, then $\mu$ satisfies $|\Genre(\mu)| = 1$. 
\end{restatable}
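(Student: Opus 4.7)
The plan is to combine the decoupling lemma (Lemma~\ref{claim:necessarysuff}) with the slope-sign analysis of Lemma~\ref{lemma:regionscolor} to show that, for $\beta < \beta^*$, every smooth piece of any equilibrium's support must lie on a single ray through the origin. I first reduce to two dimensions by working with $Z = Up \in \mathbb{R}_{\ge 0}^2$; Proposition~\ref{prop:supportrestriction} (applicable since $\beta < \beta^*$ excludes the degenerate pair $(\beta,\theta^*)=(2,\pi/2)$) then forces the support $S$ of $Z$ to be 1-dimensional. On any smooth local piece $(z_1,g(z_1))$ at polar angle $\theta$, Lemma~\ref{lemma:regionscolor} gives $g'(z_1)\cdot B(\theta)\le 0$ with $B(\theta) = \tfrac{\beta-2}{\beta}\cos(\theta^*-2\theta) - \cos\theta^*$. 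A case split on whether $\beta\le 2$ or $2<\beta<\beta^*$ shows that $\beta<\beta^*$ forces $B(\theta) < 0$ for all $\theta\in(0,\theta^*)$, hence $g' \ge 0$; the absolute continuity of the marginal $\langle u_2,p\rangle$ rules out flat regions of $g$ (which would create an atom in $Z_2$), so $g$ is strictly increasing on its domain.

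Strict monotonicity of $g$ together with condition (C2) forces $H_1(z_1)=H_2(g(z_1))$, hence $h_1(z_1)=h_2(g(z_1))\,g'(z_1)$. Substituting the explicit FOC densities from Lemma~\ref{lemma:FOC} turns this into the homogeneous first-order ODE $g'(z_1) = (z_1 - g(z_1)\cos\theta^*)/(g(z_1) - z_1\cos\theta^*)$. Setting $v = g(z_1)/z_1$ separates variables and integrates (using $1\mp\cos\theta^* = 2\sin^2(\theta^*/2)$ and $2\cos^2(\theta^*/2)$) to the one-parameter family
\[ z_1 \;=\; C\,(1-v)^{-\sin^2(\theta^*/2)}\,(1+v)^{-\cos^2(\theta^*/2)}. \]
The singular solution $v\equiv 1$ is exactly the single-genre ray at $\theta = \theta^*/2$; every non-singular solution has $z_1 \to \infty$ as $v\to 1$, so on any bounded support it is confined to a $v$-interval bounded away from $1$ (in particular, bounded away from the origin in $z_1$).

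To conclude, I need to exclude non-singular branches. This is the main obstacle: the FOC used to derive the ODE is only first-order, whereas condition (C1) of Lemma~\ref{claim:necessarysuff} demands that each support point be a global maximizer of $\Pi(z) = \sum_i H_i(z_i) - c_U(z)$. My plan is to impose the second-order condition at the critical manifold: compute the normal-direction Hessian of $\Pi$ along a candidate non-singular branch (after using the derived ODE to eliminate $h_1,h_2$ in terms of the geometry via Lemma~\ref{lemma:FOC}) and show that, precisely when $\beta < \beta^*$, the induced Hessian fails to be negative semi-definite at some point of the branch, contradicting (C1). A naive first-order deviation--replacing a support vector by one aligned to $p^* = [\cos(\theta^*/2),\sin(\theta^*/2)]$--turns out by direct calculation to only match the equilibrium cost at the branch endpoint $v = v_{\max}$ rather than strictly improve on it, so the second-order route seems to be the cleaner approach. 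Finite- and infinite-genre supports are handled by applying this same analysis to each smooth piece of the support.
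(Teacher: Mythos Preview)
Your approach tracks the paper's proof closely through the derivation of the ODE: reduce via Lemma~\ref{claim:necessarysuff}, use the sign analysis (your $B(\theta)<0$ is exactly the paper's computation that $\partial^2 c_{\UMatrix}/\partial z_1\partial z_2<0$ when $\beta<\beta^*$), conclude $g$ is increasing, then use (C2) plus Lemma~\ref{lemma:FOC} to get the same separable ODE in $v=g(z_1)/z_1$. Up to here you are essentially reproducing the paper's Steps~1--2, though the paper also uses the strict monotonicity of $z_2\mapsto \partial c_{\UMatrix}/\partial z_1$ to argue globally that the support is the graph of a single function $g$, whereas you only work on ``local pieces'' and never patch them together.

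The genuine gap is your final step. You correctly identify that excluding the non-singular ODE branches is the crux, but you only offer a plan (a second-order Hessian calculation along the branch) and note that a naive first-order deviation does not work. The paper closes this much more cheaply with a \emph{symmetry argument}: any non-singular solution must satisfy $v(w)\neq 1$ on its domain, and from the explicit implicit solution this forces $v<1$, i.e.\ $g(w)<w$, throughout. But by the symmetry of the two-user setup, $g^{-1}$ (expressing $z_1$ as a function of $z_2$) satisfies the identical ODE, so the same reasoning gives $g^{-1}(w)<w$, i.e.\ $g(w)>w$---a contradiction. No second-order analysis is needed. Your own observation that non-singular branches are bounded away from $z_1=0$ could also be parlayed into a contradiction (the paper's Step~1 includes a ``no gaps'' argument forcing $0\in\operatorname{supp}(H_1)$), but you do not pursue that route either.
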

\begin{restatable}{proposition}{finitegenre}
\label{prop:finitegenre}
Consider the setup in Theorem \ref{thm:phasetransitionformal}. If $\beta > \beta^* = \frac{2}{1 - \cos(\theta^*)}$, if $|\Genre(\mu)| < \infty$, and if the conditional distribution of $\|p\|$ along each genre is continuous differentiable, then $\mu$ is not an equilibrium.
\end{restatable}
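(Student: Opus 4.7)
The plan is to suppose for contradiction that $\mu$ is a symmetric mixed equilibrium with $|\Genre(\mu)| = k$ where $1 \le k < \infty$ and $\beta > \beta^*$, and derive inconsistencies via Lemma \ref{claim:necessarysuff}. Using Claim \ref{claim:equivalence}, I reduce to two normalized users at angles $0$ and $\theta^*$ with cost $c(p) = \alpha\|p\|_2^{\beta}$. The case $k = 1$ is immediate from Corollary \ref{cor:2users}, which states that single-genre equilibria exist iff $\beta \le \beta^*$; the rest of the argument targets $k \ge 2$.

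For $k \ge 2$, let the distinct genre angles be $0 \le \theta_1 < \cdots < \theta_k \le \theta^*$. By (C1) combined with the (C3) analysis of Section \ref{sec:C3}, the associated set $S$ is a union of $k$ rays $\ell_1,\ldots,\ell_k$ through the origin in $\mathbb{R}^2_{\ge 0}$ with positive slopes $c_i = \cos(\theta^* - \theta_i)/\cos\theta_i$. Applying Lemma \ref{lemma:FOC} along each ray yields the closed forms $h_1(z_1) = \beta\alpha\phi(\theta_i)\, z_1^{\beta-1}$ and $h_2(z_2) = \beta\alpha\psi(\theta_i)\, z_2^{\beta-1}$, where $\phi(\theta) = \sin(\theta^*-\theta)/[\cos^{\beta-1}\theta\,\sin\theta^*]$ and $\psi(\theta) = \phi(\theta^*-\theta)$. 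Integrating, $H_1$ is of the form $\alpha\phi(\theta_i)\, z_1^{\beta} + \mathrm{const}_i$ on each ray's $z_1$-projection. Moreover, Lemma \ref{lemma:regionscolor} applied to each positive-slope ray forces $\cos(\theta^* - 2\theta_i) \le \tfrac{\beta}{\beta-2}\cos\theta^* < 1$, where strictness holds because $\beta > \beta^*$, so in particular $\theta_i \ne \theta^*/2$ for every $i$.

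I then apply (C2) in the functional-equation form of Section \ref{sec:C2} generalized to arbitrary $k$: writing $G_1$ for the cdf of $Z_1$, one has $G_1(z_1) = H_1(z_1)^{1/(P-1)} = \sum_i \alpha_i F_i(z_1)$, where $\alpha_i$ is the probability mass on genre $i$ and $F_i$ is the conditional $z_1$-cdf along $\ell_i$. Ordering rays by minimum $z_1$-coordinate and proceeding interval-by-interval, the smallest-$z_1$ interval has exactly one active ray, so the power-law form of $H_1$ dictated by (C1) pins down $\alpha_1 F_1$ there; propagating this matching through successive intervals while enforcing continuity of $G_1 = H_1^{1/(P-1)}$ produces overdetermined constraints. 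The continuous-differentiability hypothesis on the conditional quality distributions justifies the density calculus throughout, and the parallel analysis applied to $h_2$ via $\psi$ supplies a symmetric set of constraints.

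The main obstacle is the interaction between the piecewise-power-law structure of $H_1$ and the nonlinear transformation $G_1 = H_1^{1/(P-1)}$: the $(P{-}1)$-th root of a piecewise power function does not in general split as a sum $\sum_i \alpha_i F_i$ with each $F_i$ supported on its prescribed interval, unless the intercept constants align very specifically. My expectation is that matching $G_1$ and $G_1'$ at the boundary between the $z_1$-projections of $\ell_i$ and $\ell_{i+1}$ forces $\phi(\theta_i) = \phi(\theta_{i+1})$, while the analogous matching for $h_2$ forces $\psi(\theta_i) = \psi(\theta_{i+1})$; a monotonicity analysis of the pair $(\phi,\psi)$ on $[0,\theta^*]$ in the regime $\beta > \beta^*$ should then imply $\theta_i = \theta_{i+1}$, contradicting distinctness of the genres. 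Iterating across $i$ completes the contradiction.
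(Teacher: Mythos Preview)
Your proposal has the right opening moves but contains a genuine gap at the decisive step. The constraints $\phi(\theta_i)=\phi(\theta_j)$ and $\psi(\theta_i)=\psi(\theta_j)$ do follow from (C1) (the paper obtains them via a connectedness argument on the overlapping $z_1$-projections, not by boundary matching---note that every ray through the origin projects to an interval starting at $0$, so your ``smallest-$z_1$ interval has exactly one active ray'' picture is not accurate). However, your expectation that the pair $(\phi,\psi)$ is injective on $[0,\theta^*]$ is \emph{false} precisely in the regime $\beta>\beta^*$. Indeed, for $\theta^G:=\argmax_{\theta}\bigl(\cos^\beta\theta+\cos^\beta(\theta^*-\theta)\bigr)$ one has $\theta^G<\theta^*/2$, and the first-order condition for $\theta^G$ gives $\sin\theta^G\cos^{\beta-1}\theta^G=\sin(\theta^*-\theta^G)\cos^{\beta-1}(\theta^*-\theta^G)$, which is exactly $\phi(\theta^G)=\phi(\theta^*-\theta^G)$ and (by the symmetry $\psi(\theta)=\phi(\theta^*-\theta)$) also $\psi(\theta^G)=\psi(\theta^*-\theta^G)$. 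So your argument can only conclude $k\le 2$, not $k=1$.

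The paper's proof splits accordingly: Step~1 uses the $\phi,\psi$ matching (plus Lemma~\ref{lemma:regionscolor} to discard one spurious root) to reduce to exactly two genres $\theta_1<\theta^*/2<\theta_2$. Step~2 is the part your plan is missing: with two genres one has $H_1(z_1)=c_1 z_1^\beta$ and $H_2(z_2)=c_2 z_2^\beta$ globally, and (C2) becomes the pair of functional equations $\alpha_1 F_1(z_1/\cos\theta_1)+\alpha_2 F_2(z_1/\cos\theta_2)=c_1^{1/(P-1)}z_1^{\beta/(P-1)}$ (and the analogue in $z_2$). Differentiating and doing casework on which genre attains $z_1^{\max}$ and $z_2^{\max}$, the paper shows these equations force a strict inequality that contradicts $c_1/c_2=\tan(\theta^*-\theta_2)/\tan\theta_2\cdot(\cos(\theta^*-\theta_2)/\cos\theta_2)^\beta$. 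This inequality step is where the finiteness of $P$ actually enters, and it is not recoverable from any monotonicity of $(\phi,\psi)$; you would need to supply an analogous contradiction for the two-genre case.
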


To prove Proposition \ref{prop:uniqueness}, we leverage the machinery given by Lemma \ref{claim:necessarysuff} as follows. Condition (C1) helps us show that the support $S$ can be specified by $(w, g(w))$ for an increasing function $w$: in particular, Lemma \ref{lemma:FOC} enables us to show that $S$ must be one-to-one, and Lemma \ref{lemma:regionscolor} enables us to pin down the sign of $g'$. Using condition (C2), which simplifies since $g$ is increasing, we show a functional equation in terms of $g$ that has a unique solution at the single-genre equilibrium. We formalize this below. 

\begin{proof}[Proof of Proposition \ref{prop:uniqueness}]
By Claim \ref{claim:equivalence}, it suffices to focus on the case of 2 normalized users. By Lemma \ref{claim:necessarysuff}, it suffices to study $(H_1, H_2, S)$ that satisfy (C1), (C2), and (C3). 

Let $\text{supp}(H_1) = I_1$ and let $\text{supp}(H_2) = I_2$. Note that since the distributions are twice continuously differentiable, we know that the densities $h_1$ and $h_2$ exist and are continuously differentiable a.e on $I_1$ and $I_2$ respectively. We break the proof into several steps. 

\paragraph{Step 1: there exists a one-to-one function $g$ such that $S = \left\{(w, g(w)) \mid  w \in I_1 \right\}$ and where $g$ is continuously differentiable and strictly increasing.} We first show that $\frac{\partial^2 c_{\UMatrix}(z)}{\partial z_1 \partial z_2} < 0$ everywhere. By Lemma \ref{lemma:secondderiv}, it suffices to show that $\frac{\beta - 2}{\beta} \cos(\theta^* - 2 \theta) - \cos(\theta^*) < 0$. To see this, notice that 
\[\frac{\beta - 2}{\beta} \cos(\theta^* - 2 \theta) - \cos(\theta^*) < 0 \le \frac{\beta - 2}{\beta} - \cos(\theta^*) = 1 - \cos(\theta^*) - \frac{2}{\beta} < 0 \] 
because $\beta < \frac{2}{1 - \cos(\theta^*)}$. 

We now show that the support $S$ is equal to $\left\{(w, g(w)) \mid w \in I_1 \right\}$ for some one-to-one function $g: I_1 \rightarrow I_2$. To show this, it suffices to show that the support does contain both $(z_1, z_2)$ and $(z_1, z'_2)$ for $z_2 \neq z'_2$ (and, analogously, the support does not contain both $(z'_1, z_2)$ and $(z_1, z_2)$ for $z_1 \neq z'_1$). Notice that for any fixed value of $z_1$, the function $z_2 \mapsto \frac{\partial c_{\UMatrix}([z_1, z_2])}{\partial z_1}$ is strictly decreasing. If $(z_1, z_2)$ and $(z_1, z'_2)$ are both in the support, then by Lemma \ref{lemma:FOC}, it must be true that:
\[h_1(z_1) = \frac{\partial c_{\UMatrix}([z_1, z_2])}{\partial z_1} = \frac{\partial c_{\UMatrix}([z_1, z'_2])}{\partial z_1}. \] However, since $z_2 \mapsto \frac{\partial c_{\UMatrix}([z_1, z_2])}{\partial z_1}$ is strictly decreasing, this means that $z_2 = z'_2$ as desired.

We can thus implicitly define the function $g$ by the (unique) value such that:
\[Q(w, g(w)) - h_1(w) = 0 \]
where 
\[Q(z_1, z_2) := \frac{\partial c_{\UMatrix}([z_1, z_2])}{\partial z_1}.\] Uniqueness follows from the fact that $Q$ is a strictly decreasing function in its second argument, since $\frac{\partial Q(w, g(w))}{\partial z_2} = \frac{\partial^2 c_{\UMatrix}([w, g(w)])}{\partial z_1 \partial z_2} < 0$ as we showed previously. Since $h_1(w)$ is continuously differentiable and since:
\[\frac{\partial Q(w, g(w))}{\partial z_2} \neq 0 \] for $w \in I_1$, we can apply the implicit function theorem to see that $g(w)$ is continuously differentiable for $w \in I_1$. 

We next show that $g$ is increasing on $I_1$. Within the interior of $I_1$, by Lemma \ref{lemma:regionscolor} along with the fact that $\frac{\beta - 2}{\beta} \cos(\theta^* - 2 \theta) - \cos(\theta^*) < 0$ everywhere, we see that $g$ is a strictly increasing function on each contiguous portion of $I_1$. It thus suffices to show $I_1$ is an interval and that there are no gaps. If there is a gap, there must be a gap for both $z_1$ \textit{and} $z_2$ at the same point $z$ since the support is on-to-one and closed. However, if $z$ is right above the gap, the producer would obtain higher utility by choosing $(1-\epsilon)z$ for sufficiently small $\epsilon$ to ensure that $(1-\epsilon)z$ is within the gap on both coordinates. This means that $I_1$ is an interval, which proves $g$ is an increasing function. 

\paragraph{Step 2: differential equation.} We show that 
\begin{equation}
\label{eq:diffeqsinglegenre}
g'(w) g(w) - g'(w) w \cos(\theta^*) = w - g(w) \cos(\theta^*),
\end{equation}
for all $w \in \text{supp}(H_1)$.

First, we derive the the condition that we described in equation \eqref{eq:densitiestranslated} and further simplify it using that $g$ is increasing. Let $H^*_1(w) = H_1(w)^{\frac{1}{P-1}}$ and $H^*_2(w) = H_2(w)^{\frac{1}{P-1}}$. The densities $h^*_1$ and $h^*_2$ take the following form: 
\begin{align*}
  h^*_1(w) &= (H^*_1)'(w) = \frac{1}{P-1} h_1(w) H_1(w)^{-\frac{P-2}{P-1}} \\
  h^*_2(w) &= (H^*_2)'(w) = \frac{1}{P-1} h_2(w) H_2(w)^{-\frac{P-2}{P-1}}.
\end{align*}
In order for there to exist a distribution $\mu$ that satisfies condition (C2), it must hold that $H^*_1(w) = H^*_2(g(w))$ because $g$ is increasing. (This also means that $H_1(w) = H_2(g(w))$.) This means that $h^*_1(w) = h^*_2(g(w)) g'(w)$ and $H_1(w) = H_2(g(w))$. Plugging this into the above expressions for $h^*_1$ and $h^*_2$, this means that:
\[h_1(w) = (P-1)   h^*_1(w) H_1(w)^{\frac{P-2}{P-1}} = (P-1) g'(w) h^*_2(g(w)) H_2(g(w))^{\frac{P-2}{P-1}} = h_2(w) g'(w). \]
This means that 
\[g'(w) = \frac{h_1(w)}{h_2(w)}  = \frac{w - g(w) \cos(\theta^*)}{g(w) - w \cos(\theta^*)},\]
where the last line follows from Lemma \ref{lemma:FOC}. 
This gives us the desired differential equation. 

\paragraph{Step 3: solving the differential equation.} 
We claim that the only valid solution to the differential equation \eqref{eq:diffeqsinglegenre} is $g(w) = w$. To see this, let $f(w) = \frac{g(w)}{w}$. This means that $w f(w) = g(w)$ and thus $f(w) + wf'(w) = g'(w)$. Plugging this into equation \eqref{eq:diffeqsinglegenre} and simplifying we obtain a separable differential equation. The solutions to this differential equation are $f(w) = 1$ and the following:
\[f^*_K(w) = K - \log(w) = \frac{1}{2} \left((1+ \cos(\theta^*)) \log \left(1+ f(w) \right) - (1+ \cos(\theta^*)) \log \left(1 - f(w) \right) \right) \] for some constant $K$. Notice that for $f^*_K$ to even be well-defined, we know that $f^*_K(w) < 1$ everywhere.  

Assume for sake of contradiction that there exists an equilibrium with support given by $\left\{(w, g(w)) \mid w \in I \right\}$ for $g(w) \neq w$. Then we know that  $g(w) = f^*_K(w) \cdot x$ for some $K$. In order for this solution to even be well-defined, it would imply that $f^*_K(w) < 1$ everywhere. This implies that $g(w) < w$, for all $w \in I_1$. However, we know that the function $g^{-1}$ must satisfy the differential equation too (and $g^{-1}(w) \neq w$), so by an analogous argument, we know that $g^{-1}(w) < w$ for all $w \in I_2$, which means that $w < g(w)$. This is a contradiction.   

We can thus conclude that since $g(x) = x$, we have that $|\Genre(\mu)| = 1$ as desired. 

\end{proof}

To prove Proposition \ref{prop:finitegenre}, we also leverage the machinery in Lemma \ref{claim:necessarysuff}. We use Lemma \ref{lemma:FOC} to rule out all finite-genre equilibria except for two-genre equilibria. We can show that $H_1(w)$ and $H_2(w)$ grow proportionally to $w^{\beta}$. Then, we can implement this knowledge of $H_1$ and $H_2$ into the finite genre formulation of condition (C2) in equation \eqref{eq:functionalequation} and show that no solutions to the functional equation exist for finite $P$. We formalize this below. 
\begin{proof}[Proof of Proposition \ref{prop:finitegenre}]
By Claim \ref{claim:equivalence}, it suffices to focus on the case of 2 normalized users. We further assume WLOG that $u_1 = e_1$ and $u_2 = [\cos(\theta^*), \sin(\theta^*)]$. Since $\beta > \frac{2}{1-\cos(\theta^*)}$, we know by Corollary \ref{cor:2users} that there is no single-genre equilibrium. Assume for sake of contradiction that there exists a \textit{finite}-genre equilibrium $\mu$ with $|\Genre(\mu)| \ge 2$. By Lemma \ref{claim:necessarysuff}, we know that there exists $H_1, H_2$ and $S$ associated with $\mu$ that satisfy (C1)-(C3). Our proof boils down to two steps: 
\begin{itemize}
    \item \textit{Step 1:} We show that $\Genre(\mu) = \left\{\theta_1, \theta_2\right\}$ for some $\theta_1 < \theta^* / 2 < \theta_2$.
    \item \textit{Step 2:} We show that no two-genre distribution $\mu$ exists. 
\end{itemize}

\paragraph{Step 1.} Let us first translate the concept of genres to the reparameterized space. First, we consider the following set:
\[\Genre_Z(S) := \left\{\frac{1}{c_{\UMatrix}(z)} [z_1, z_2] \mid z \in S  \right\}. \]
Since vectors in $\Genre_Z(S)$ are of the form $[ \cos(\theta),  \cos(\theta^* - \theta)]$ by the normaalization by $c_{\UMatrix}(z)$, we can actually define a set of \textit{angles}:  
\[\Genre_\Theta(S) := \left\{\cos^{-1}(z_1) \mid [z_1, z_2] \in \Genre_Z(S)   \right\}. \]
We see that $\theta \in \Genre_\Theta(S)$ if and only if $[ \cos(\theta),  \cos(\theta^* - \theta)] \in \Genre_Z(S)$ if and only if $[\cos(\theta), \sin(\theta)] \in \Genre(\mu)$. Elements of $\Genre_\Theta(S)$ thus exactly corresponds to genres of $\Genre(\mu)$.

We first observe that every $\theta \in \Genre_\Theta(S)$ is in $(0, \theta^*)$. By (C1) of Lemma \ref{claim:necessarysuff}, the set $S$ must be contained in the convex cone of $[1, \cos(\theta^*)]$ and $[\cos(\theta^*, 1]$, which implies that $\theta \in [0, \theta^*]$. It thus suffices to show that $\theta \neq 0$ and $\theta \neq \theta^*$. We show that  $\theta \neq 0$ (the case of $\theta \neq \theta*$ follows from an analogous argument). In this case, we see that there must be some set of the form $\left\{[r, r \cos(\theta^*)] \mid r \in \mathbb{R}_{\ge 0}  \right\}$ that is subset of $S$. If $\theta^* = \pi/2$, then this would mean the distribution given by $H_2$ would have a point mass at $0$, which is clearly not possible at equilibrium. Otherwise, if $\theta^* < \pi/2$, we apply (C1) and Lemma \ref{lemma:FOC}, and we see that $h_2(r\cos(\theta^*)) = 0$. However, this is a contradiction, since there is positive probability mass on some line segment on this genre by assumption. 

Now, we observe that the support of the cdfs $H_1$ and $H_2$ must be bounded \textit{intervals} of the form $[0, z_1^{\text{max}}]$ and $[0, z_2^{\text{max}}]$. First, we show that $\max(\text{supp}(H_1)), \max(\text{supp}(H_1)) < \infty$. By (C1), we see that a producer must achieve nonzero profit (since they always  so $c_{\UMatrix}(z) \le 2$, which means that $z_1, z_2 \le \frac{2}{\alpha}$ as desired. This means that we can set $z_1^{\text{max}} = \max(\text{supp}(H_1))$ and $z_2^{\text{max}} = \max(\text{supp}(H_2))$.  Next, we show that the supports of $H_1$ and $H_2$ contain the full intervals $[0, z_1^{\text{max}}]$ and $[0, z_2^{\text{max}}]$, respectively. Assume for sake of contradiction that the support of $H_1$ does not contain some interval $(x, x+ \epsilon)$ for $\epsilon > 0$ within $[0, z_1^{\text{max}}]$. Let $\epsilon$ be defined so that $z_1 = x+ \epsilon \in \text{supp}(H_1)$. However, this means that there exists $z_2$ such that $[z_1, z_2] \in S$ and, moreover, $[z_1, z_2]$ must be located on a genre $\theta \in (0, \theta^*)$. We can thus reduce $z_1$ and hold $z_2$ fixed, while keeping $H_1(z_1) + H_2(z_2)$ fixed, and reducing the cost $c_{\UMatrix}(z)$, which violates the fact that $[z_1, z_2]$ is a maximizer of \eqref{eq:reparam}. An analogous argument shows that the support of $H_2$ is the full interval $[0, z_2^{\text{max}}]$. 

Next, we show that for $\theta, \theta' \in \Genre_\Theta(S)$, it must hold that 
\begin{equation}
\label{eq:conditions}
  \frac{\sin(\theta^* - \theta)}{\cos^{\beta-1}(\theta)} = \frac{\sin(\theta^* - \theta')}{\cos^{\beta-1}(\theta')} \text{  and   }
 \frac{\sin(\theta)}{\cos^{\beta-1}(\theta^* - \theta)} = \frac{\sin(\theta')}{\cos^{\beta-1}(\theta^* - \theta')}  
\end{equation}
To prove this, suppose that $|\Genre_Z(S)| = G$ and label the genres by the indices $1, \ldots, G$ arbitrarily. For $z_1 \in \text{supp}(H_1)$ let $T(z_1) \subseteq \left\{1, \ldots, G\right\}$ be the set of genres $j$ where there exists $z_2$ such that $(z_1, z_2) \in S$ and $[z_1, z_2]$ points in the direction of $[\cos(\theta_j), \cos(\theta^* - \theta_j)]$. By Lemma \ref{lemma:FOC}, for all $i \in T(z_1)$, it must hold that:
\[
    h_1(z_1) = \beta z_1^{\beta-1} \alpha \cdot \frac{\sin(\theta^* - \theta_{i})}{\sin(\theta^*)} \cdot \frac{1}{\cos(\theta_{i})^{\beta - 1}}.
\]
This means that for $i, i' \in T(z_1)$, it holds that 
\[\frac{\sin(\theta^* - \theta_i)}{\cos^{\beta-1}(\theta_i)} = \frac{\sin(\theta^* - \theta_{i'})}{\cos^{\beta-1}(\theta_{i'})}.\] 

We now generalize this argument to arbitrary genres $\theta, \theta' \in \Genre_\Theta(S)$. Consider $1 \le i, i' \le G$. Even though $\theta_i$ and $\theta_{i'}$ may not be in the same set $T(z_1)$, we show that there must be some ``path'' connecting $\theta_i$ and $\theta_{i'}$. To formalize this, for each genre $1 \le i \le G$, let $S_i = \left\{z_1 \in \text{supp}(H_1) \mid i \in T(z_i)\right\}$. Let's define an undirected graph vertices $[G]$ and an edge $(i_1, i_2)$ if and only if $S_{i_1} \cap S_{i_2} \neq \emptyset$. The argument from the previous paragraph showed that if there an edge between $i$ and $i'$, then $\frac{\sin(\theta^* - \theta_i)}{\cos^{\beta-1}(\theta_i)} = \frac{\sin(\theta^* - \theta_{i'})}{\cos^{\beta-1}(\theta_{i'})}$. Moreover, if there exists a path from $i$ to $i'$ in this graph, then we can chain together equalities along each edge in the path to prove $\frac{\sin(\theta^* - \theta_i)}{\cos^{\beta-1}(\theta_i)} = \frac{\sin(\theta^* - \theta_{i'})}{\cos^{\beta-1}(\theta_{i'})}$. The only remaining case is that there is no path from $i$ to $i'$. However, this would mean that the vertices $[G]$ can be divided into a partition $P_1, \ldots, P_n$ for $n > 1$ such that there is no edge across partitions. Note that $\cup_{1 \le i \le G} S_i = \text{supp}(H_1)$, which we already proved is equal to $[0, z_1^{\text{max}}]$. Thus, this would mean that the disjoint, closed sets $\cup_{i \in P_1} S_i, \ldots , \cup_{i \in P_n} S_i$ have union equal to $[0, z_1^{\text{max}}]$, which is not possible \cite{S1918}. Thus we have shown that $\frac{\sin(\theta^* - \theta_i)}{\cos^{\beta-1}(\theta_i)} = \frac{\sin(\theta^* - \theta_{i'})}{\cos^{\beta-1}(\theta_{i'})}$ for any $1 \le i, i' \le G$ and an analogous argument shows that $\frac{\sin(\theta_i)}{\cos^{\beta-1}(\theta^* - \theta_i)} = \frac{\sin(\theta_{i'})}{\cos^{\beta-1}(\theta^* - \theta_{i'})}$. This proves equation \eqref{eq:conditions}.

We next show that there exist exactly 2 genres given by $\theta_1 < \theta_2$. Using Lemma \ref{lemma:secondderiv}, we see that for any $\theta$, there are at most two values of $\theta' \neq \theta_1$ such that equation \eqref{eq:conditions} can hold. Moreover, by Lemma \ref{lemma:regionscolor}, one of these values lies within the region where $g'$ would have to be negative (which is not possible). Thus, there are at most two genres, and Lemma \ref{lemma:secondderiv} further tells us that they lie on opposite sides of $\theta^* / 2$. 

\paragraph{Step 2.} Condition (C2) gives us functional equations that the distribution $\mu$ must satisfy for $P < \infty$. More specifically, let $F_1$ be the cdf of the magnitude of the genre given by $\theta_1$, and let $F_2$ be the cdf of the magnitude of the genre given by $\theta_2$. Then we obtain the following functional equations:
\begin{align*}
   \left(\alpha_1 F_1\left(\frac{z_1}{\cos(\theta_1)} \right) + \alpha_2 F_2\left(\frac{z_1}{\cos(\theta_2)} \right)\right)^{P-1} = H_1(z_1) \\ 
\left(\alpha_1 F_1\left(\frac{z_2}{\cos(\theta^* - \theta_1)} \right) + \alpha_2 F_2\left(\frac{z_2}{\cos(\theta^* - \theta_2)} \right) \right)^{P-1} = H_2(z_1). 
\end{align*}

For these functional equations to be useful, we need to compute the cdfs $H_1$ and $H_2$. This will involve some notation: as in the previous step, let the genres be $\left\{\theta_1, \theta_2\right\}$ where $\theta_1 < \theta^* / 2 < \theta_2$. Let $r_1^{\text{max}} := \max(\text{supp}(F_1))$ be the maximum value in the support of $F_1$ and let $r_2^{\text{max}} := \max(\text{supp}(F_2))$ be the maximum value in the support of $F_2$. We define:  
\[i_1 := \argmax_{i \in \left\{1, 2\right\}} r_i \cos(\theta_i) \;\;\;\;\;\;\;\; 
i_2 := \argmax_{i \in \left\{1, 2\right\}} r_i \cos(\theta^* - \theta_i)\] which correspond to which genre produces the highest value of $z_1$ and $z_2$ respectively. 

We apply Lemma \ref{lemma:FOC} to see that for all $z_1$ and $z_2$ in the support of $H_1$ and $H_2$, it holds that:
\begin{align*}
    h_1(z_1) &= \beta z_1^{\beta-1} \alpha \cdot \frac{\sin(\theta^* - \theta_{i_1})}{\sin(\theta^*)} \cdot \frac{1}{\cos(\theta_{i_1})^{\beta - 1}} \\
    h_1(z_2) &= \beta z_2^{\beta-1} \alpha \cdot \frac{\sin(\theta_{i_2})}{\sin(\theta^*)} \cdot \frac{1}{\cos(\theta^* - \theta_{i_2})^{\beta - 1}}.
\end{align*}
We can integrate with respect to $z_1$ and $z_2$ to obtain that $H_1(z_1) = c_1 z_1^{\beta}$ and $H_1(z_2) = c_2 z_2^{\beta}$, such that:  
\begin{equation}
\label{eq:c1}
    c_1 = \alpha \cdot \frac{\sin(\theta^* - \theta_{i_1})}{\sin(\theta^*)} \cdot \frac{1}{\cos(\theta_{i_1})^{\beta - 1}} 
    \end{equation}
    \begin{equation}
        \label{eq:c2}
    c_2 = \alpha \cdot \frac{\sin(\theta_{i_2})}{\sin(\theta^*)} \cdot \frac{1}{\cos(\theta^* - \theta_{i_2})^{\beta - 1}}.
\end{equation}
\noindent WLOG assume that $c_1 \ge c_2$ for the remainder of the analysis.

Using this specification of $H_1$ and $H_2$, we can write the functional equations as 
\begin{align*}
\alpha_1 F_1\left(\frac{z_1}{\cos(\theta_1)} \right) + \alpha_2 F_2\left(\frac{z_1}{\cos(\theta_2)} \right) = c_1^{\frac{1}{P-1}} z_1^{\frac{\beta}{P-1}} \\ 
\alpha_1 F_1\left(\frac{z_2}{\cos(\theta^* - \theta_1)} \right) + \alpha_2 F_2\left(\frac{z_2}{\cos(\theta^* - \theta_2)} \right)  = c_2^{\frac{1}{P-1}}  z_2^{\frac{\beta}{P-1}}. 
\end{align*}
By taking a derivative with respect to $z_1$ and $z_2$, we see that for any $z_1$ within the support of $H_1$ and $z_2$ within the support of $H_2$, it holds that: 
\begin{equation}
\label{eq:density1}
\frac{\alpha_1}{\cos(\theta_1)} f_1\left(\frac{z_1}{\cos(\theta_1)} \right) + \frac{\alpha_2}{\cos(\theta_2)} f_2\left(\frac{z_1}{\cos(\theta_2)} \right) = c_1^{\frac{1}{P-1}}  \frac{\beta}{P-1} z_1^{\frac{\beta}{P-1} - 1}.
\end{equation}
\begin{equation}
\label{eq:density2}
\frac{\alpha_1}{\cos(\theta^* - \theta_1)} f_1\left(\frac{z_2}{\cos(\theta^* - \theta_1)} \right) + \frac{\alpha_2}{\cos(\theta^* - \theta_2)} f_2\left(\frac{z_2}{\cos(\theta^* - \theta_2)} \right)  = c_2^{\frac{1}{P-1}}  \frac{\beta}{P-1} z_2^{\frac{\beta}{P-1} - 1}. 
\end{equation}

We prove that these functional equations have no valid solution. To show this, we prove that any solution to equations \eqref{eq:density1} and \eqref{eq:density2} would have negative density somewhere. Where the negative density occurs depends on $i_1$ and $i_2$. 

We thus do casework on $i_1$ and $i_2$. In this analysis, we will use the notation $z_1^{\text{max}}$ to denote $\max(\text{supp}(H_1))$ and $z_2^{\text{max}}$ to denote $\max(\text{supp}(H_2))$. Note that by definition, $z_1^{\text{max}} = r_{i_1} \cos(\theta_{i_1})$ and $z_2^{\text{max}} = r_{i_2} \cos(\theta^* - \theta_{i_2})$. 

First, we reduce the number of cases needed by using the fact that $c_1 \ge c_2$ (which we assumed earlier WLOG). In particular, this turns out to imply that $i_2 \neq 1$. More precisely, we show: 
\begin{equation}
    \label{eq:r1max}
 r_1^{\text{max}} \cos(\theta^* - \theta_1) < r_2^{\text{max}} \cos(\theta^* - \theta_2)   
\end{equation}
To show this, assume for sake of contradiction that $r_1^{\text{max}} \cos(\theta^* - \theta_1) \ge r_2^{\text{max}} \cos(\theta^* - \theta_2)$. Then we'd have that 
\[z_2^{\text{max}} = r_1^{\text{max}} \cos(\theta^* - \theta_1) < r_1^{\text{max}} \cos(\theta_1) \le z_1^{\text{max}}\]
which would imply that $c_1 < c_2$, which is a contradiction.

We thus split into 2 cases based on $i_1$. 
\begin{itemize}
    \item \textbf{Case 1}: $ r_2^{\text{max}} \cos(\theta_2) < r_1^{\text{max}} \cos(\theta_1) $ 
     \item \textbf{Case 2}: $ r_1^{\text{max}} \cos(\theta_1) \le r_2^{\text{max}} \cos(\theta_2) $
\end{itemize}

Let's first handle \textbf{Case 1}. Since $z_1^{\text{max}} = r_1^{\text{max}} \cos(\theta_1) > r_2^{\text{max}} \cos(\theta_2)$, we see that 
\[\frac{z_1^{\text{max}}}{\cos(\theta_2)} > r_2^{\text{max}}\] is not in the support of $F_2$. This means that the density $f_2$ of $F_2$ at $\frac{z_1^{\text{max}}}{\cos(\theta_2)}$ is equal to $0$ and, moreover, there exists $z_1^* < z_1^{\text{max}}$ sufficiently close to $z_1^{\text{max}}$ such that $z_1^*$ is in the support of $H_1$ and $\frac{z_1^*}{\cos(\theta_2)}$ is not in the support of $F_2$. At $z_1^*$, by equation \eqref{eq:density1}, we see that:
\[\frac{\alpha_1}{\cos(\theta_1)} f_1\left(\frac{z_1^*}{\cos(\theta_1)} \right) = \frac{\alpha_1}{\cos(\theta_1)} f_1\left(\frac{z_1^*}{\cos(\theta_1)} \right) + \frac{\alpha_2}{\cos(\theta_2)} f_2\left(\frac{z_1^*}{\cos(\theta_2)} \right) = c_1^{\frac{1}{P-1}}  \frac{\beta}{P-1} (z_1^*)^{\frac{\beta}{P-1} - 1}.\]
Now, let's let $z_2^*$ be such that:
\[z_2^* :=  z_1^* \frac{\cos(\theta^* - \theta_1)}{\cos(\theta_1)}.\]
At $z_2^*$, we see that the left-hand side of equation \eqref{eq:density2} satisfies
\begin{align*}
 &\frac{\alpha_1}{\cos(\theta^* - \theta_1)} f_1\left(\frac{z^*_2}{\cos(\theta^* - \theta_1)} \right) + \frac{\alpha_2}{\cos(\theta^* - \theta_2)} f_2\left(\frac{z^*_2}{\cos(\theta^* - \theta_2)} \right) \\
 &\ge \frac{\alpha_1}{\cos(\theta^* - \theta_1)} f_1\left(\frac{z^*_2}{\cos(\theta^* - \theta_1)} \right)\\
 &= \frac{\cos(\theta_1)} {\cos(\theta^* - \theta_1)} \left(\frac{\alpha_1}{\cos(\theta_1)} f_1\left(\frac{z^*_1}{\cos(\theta_1)} \right)\right)\\
 &= \frac{\cos(\theta_1)} {\cos(\theta^* - \theta_1)} \left(c_1^{\frac{1}{P-1}}  \frac{\beta}{P-1} (z_1^*)^{\frac{\beta}{P-1} - 1}\right) \\
 &=  \frac{\cos(\theta_1)} {\cos(\theta^* - \theta_1)} \left(c_1^{\frac{1}{P-1}}  \frac{\beta}{P-1} \left(z_2^* \frac{\cos(\theta_1)}{\cos(\theta^* - \theta_1)} \right)^{\frac{\beta}{P-1} - 1} \right) \\
 &= c_1^{\frac{1}{P-1}}  (z_2^*)^{\frac{\beta}{P-1} - 1} \frac{\beta}{P-1} \left( \frac{\cos(\theta_1)}{\cos(\theta^* - \theta_1)} \right) ^{\frac{\beta}{P-1}} \\
 &> c_2^{\frac{1}{P-1}}  \frac{\beta}{P-1} (z_2^*)^{\frac{\beta}{P-1} - 1},
\end{align*}
where the last inequality uses that $c_1 \ge c_2$ (which we assumed WLOG earlier) and $\theta_1 < \theta^* / 2$. However, this is a contradiction since \eqref{eq:density2} must hold.  

Let's next handle \textbf{Case 2}. By equation \eqref{eq:r1max}, we know that
$z_2^{\text{max}} = r_2^{\text{max}} \cos(\theta^* - \theta_2) > r_1^{\text{max}} \cos(\theta^* - \theta_1)$, so there exists $z_2$ such that $z_2 \in (r_1^{\text{max}} \cos(\theta^* - \theta_1), z_2^{\text{max}})$. At this value of $z_2$, we see by equation \eqref{eq:density2} that
\[\frac{\alpha_2}{\cos(\theta^* - \theta_2)} f_2\left(\frac{z_2}{\cos(\theta^* - \theta_2)} \right)  = c_2^{\frac{1}{P-1}}  \frac{\beta}{P-1} z_2^{\frac{\beta}{P-1} - 1}.  \]
Since $r_2^{\text{max}} \cos(\theta_2) = \frac{z_2^{\text{max}} \cos(\theta_2)}{\cos(\theta^* - \theta_2)}$, we know that $z_1 = \frac{z_2 \cos(\theta_2)}{\cos(\theta^* - \theta_2)}$ is in the support of $H_1$.
By equation \eqref{eq:density1}, for $z_1 = \frac{z_2 \cos(\theta_2)}{\cos(\theta^* - \theta_2)}$:
\[\frac{\alpha_2}{\cos(\theta_2)} f_2\left(\frac{z_1}{\cos(\theta_2)} \right) \le \frac{\alpha_1}{\cos(\theta_1)} f_1\left(\frac{z_1}{\cos(\theta_1)} \right) + \frac{\alpha_2}{\cos(\theta_2)} f_2\left(\frac{z_1}{\cos(\theta_2)} \right) = c_1^{\frac{1}{P-1}}  \frac{\beta}{P-1} z_1^{\frac{\beta}{P-1} - 1}.\]
Putting this all together, we see that:
\begin{align*}
  c_1^{\frac{1}{P-1}}  \frac{\beta}{P-1} z_1^{\frac{\beta}{P-1} - 1} \ge \frac{\alpha_2}{\cos(\theta_2)} f_2\left(\frac{z_1}{\cos(\theta_2)} \right) &= \frac{\alpha_2 \cos(\theta^* - \theta_2)}{\cos(\theta_2)} f_2\left(\frac{z_2}{\cos(\theta^* - \theta_2)} \right) \\
  &= c_2^{\frac{1}{P-1}} \frac{ \cos(\theta^* - \theta_2)}{\cos(\theta_2)}  \frac{\beta}{P-1} z_2^{\frac{\beta}{P-1} - 1} \\
   &= c_2^{\frac{1}{P-1}} \frac{ \cos(\theta^* - \theta_2)}{\cos(\theta_2)}  \frac{\beta}{P-1} \left(z_1 \frac{\cos(\theta^* - \theta_2)}{\cos(\theta_2)}\right)^{\frac{\beta}{P-1} - 1} \\
  &= c_2^{\frac{1}{P-1}}  \frac{\beta}{P-1} z_1^{\frac{\beta}{P-1} - 1} \left(\frac{ \cos(\theta^* - \theta_2)}{\cos(\theta_2)}\right)^{\frac{\beta}{P-1}}  
\end{align*}
This implies that:
\[\frac{c_1}{c_2} \ge \left(\frac{ \cos(\theta^* - \theta_2)}{\cos(\theta_2)}\right)^{\beta}. \]
However, by equations \eqref{eq:c1} and \eqref{eq:c2}, we also see that: 
\begin{equation}
\label{eq:c1c2condition}
  \frac{c_1}{c_2} = \frac{\sin(\theta^* - \theta_2)}{\sin(\theta_2)} \frac{\cos(\theta^* - \theta_2)^{\beta - 1}}{\cos(\theta_2)^{\beta - 1}} = \frac{\tan(\theta^* - \theta_2)}{\tan(\theta_2)} \frac{\cos(\theta^* - \theta_2)^{\beta}}{\cos(\theta_2)^{\beta}} < \frac{\cos(\theta^* - \theta_2)^{\beta}}{\cos(\theta_2)^{\beta}},
\end{equation}
where the last step uses that $\theta^* - \theta_2 < \theta_2$. This is a contradiction.

\end{proof}

\subsection{Proofs for Section \ref{sec:closedformsb}}\label{sec:proofsclosedformb}

We prove Proposition \ref{prop:P2}, restated below. 
\Ptwo*

Conceptually speaking, the machinery given by Lemma \ref{claim:necessarysuff} enables us to systematically identify the equilibrium in the concrete market instance of Proposition \ref{prop:P2}. Condition (C1) is simple along the quarter circle: by Lemma \ref{lemma:FOC}, the densities $h_1(u)$ and $h_2(v)$ are \textit{proportional} to $u$ and $v$. Since the support of a single curve and $P = 2$, condition (C2) can be simplified to a clean condition on the densities $h_1$ and $h_2$ given by \eqref{eq:densities}. 

To actually prove Proposition \ref{prop:P2}, we only need to \textit{verify} that the equilibrium $\mu$ in Proposition \ref{prop:P2} which is easier. \begin{proof}
By Lemma \ref{claim:necessarysuff}, it suffices to prove that (C1)-(C3) hold for $H_1$, $H_2$, and $S$ associated with the distribution $\mu$ in the statement of the proposition. Conditions (C2) and (C3) follow by construction of $\mu$, so it suffices to prove (C1). 

First, we claim that 
\[H_1(z_1) = \left(\frac{2}{\beta} \right)^{-2/\beta} z_1^2, \text{  and  } H_2(z_2) = \left(\frac{2}{\beta} \right)^{-2/\beta} z_2^2.\]
We show that $H_2(z_2) = \left(\frac{2}{\beta} \right)^{-2/\beta} z_2^2$ (an analogous argument applies to $H_1$). We see that $H_2$ is supported on $\left[0,\left(\frac{2}{\beta} \right)^{1/\beta}\right]$ by construction, so it suffices to show that 
\[h_2(z_2) = 2 \left(\frac{2}{\beta} \right)^{-2/\beta} z_2\] on this interval. Since $z_2 = \left(\frac{2}{\beta} \right)^{1/\beta} \sin(\theta)$, by the change of variables formula for $P = 2$, we see that 
\[h_2(z_2) \left(\frac{2}{\beta} \right)^{1/\beta} \cos(\theta) = f(\theta) = 2 \sin(\theta) \cos(\theta). \]
We can solve and obtain:
\[h_2(z_2) = 2 \left(\frac{2}{\beta} \right)^{-1/\beta} \sin(\theta) =  2 \left(\frac{2}{\beta} \right)^{-2/\beta} z_2, \]
as desired. 

Now, we prove (C1). Applying Lemma \ref{lemma:inducedcost}, we see that:  
\[H_1(z_1) + H_2(z_2) - c_{\UMatrix}(z) =\left(\min\left(\left(\frac{2}{\beta} \right)^{-2/\beta} z_1^2, 1\right) + \min\left(\left(\frac{2}{\beta} \right)^{-2/\beta} z_2^2, 1\right)\right) - (z_1^2 + z_2^2)^{\beta / 2}. \]
Thus, equation \eqref{eq:reparam} can be written as:
\begin{equation}
    \label{eq:standardP2}
    \max_{z_1, z_2 \ge 0} \left(\left(\min\left(\left(\frac{2}{\beta} \right)^{-2/\beta} z_1^2, 1\right) + \min\left(\left(\frac{2}{\beta} \right)^{-2/\beta} z_2^2, 1\right)\right) - (z_1^2 + z_2^2)^{\beta / 2} \right)
\end{equation}
We wish to show equation \eqref{eq:standardP2} is maximized whenever $z \in S$. Since $z_1^2 + z_2^2 = \left(\frac{2}{\beta}\right)^{2/\beta}$ for any $z \in S$, this follows from Lemma \ref{lemma:standardbasismax}.
\end{proof}

We prove Proposition \ref{prop:finiteP}, restated below. 
\finiteP*

Again, the machinery given by Lemma \ref{claim:necessarysuff} enables us to systematically identify the equilibrium in the concrete market instance of Proposition \ref{prop:finiteP}. Since we need to consider $P \neq 2$, the condition (C2) does not take as clean of a form: as shown by \eqref{eq:densitiestranslated}, it depends on both the densities $h_1$ and $h_2$ along with the cdfs $H_1$ and $H_2$. Nonetheless, in the special case of $\beta = 2$, we can compute the cdf in closed-form: Lemma \ref{lemma:FOC} implies that the density $h_1(z_1)$ is entirely specified by $z_1$ and does not depend on $z_2$, so we can integrate over the density to explicitly compute the cdf. We can obtain the equilibria in Proposition \ref{prop:finiteP} as a solution to a differential equation. 

To prove Proposition \ref{prop:P2}, we again only need to \textit{verify} that the equilibrium $\mu$ in Proposition \ref{prop:finiteP} which is easier. \begin{proof}[Proof of Proposition \ref{prop:finiteP}]
By Lemma \ref{claim:necessarysuff}, it suffices to prove that (C1)-(C3) hold for $H_1$, $H_2$, and $S$ for the distribution $\mu$ given in the statement of the proposition. Conditions (C2) and (C3) follow by construction of $\mu$, so it suffices to prove (C1). 

First, we claim that $H_1(z_1) = z_1^2$ and $H_1(z_1) = z_2^2$. We see that since the cdf of $\p_1$ for $p \sim \mu$ is $z_1$, we know that $H_1(z_1) = z_1^2$ by construction. For $z_2$, first we note that the cdf of $\p_2$ for $\p_2 \sim \mu$ is given by:
\[\mathbb{P}_{\p_2 \sim \mu} [\p_2 \le \p_2'] = \mathbb{P}_{\p_1 \sim \mu} \left[\p_1 \ge (1-(\p_2')^{\frac{2}{P-1}})^{\frac{P-1}{2}}\right] = 1 - (1-(\p_2')^{\frac{2}{P-1}}) = (\p_2')^{\frac{2}{P-1}}. \]
By definition, this means that $H_2(z_2) = z_2^2$ as desired. 

Now, we prove (C1). Applying Lemma \ref{lemma:inducedcost}, we see that:  
\[H_1(z_1) + H_2(z_2) - c_{\UMatrix}(z) =\left(\min\left( z_1^2, 1\right) + \min\left(z_2^2, 1\right)\right) - (z_1^2 + z_2^2). \]
Thus, equation \eqref{eq:reparam} can be written as:
\begin{equation}
    \label{eq:standardPfin}
    \max_{z_1, z_2 \ge 0} \left(\min\left(z_1^2, 1\right) + \min\left(\left(z_2^2, 1\right)\right) - (z_1^2 + z_2^2)^{\beta / 2} \right)
\end{equation}
We wish to show equation \eqref{eq:standardPfin} is maximized whenever $z \in S$. Since $z_1^2 + z_2^2 = 1$ for any $z \in S$, this follows from Lemma \ref{lemma:standardbasismax} applied to $\beta = 2$.
\end{proof}

\subsection{Formalization of the infinite-producer limit}\label{sec:formalization}

Since our characterization result (Theorem \ref{thm:infinitegenre}) focuses on finite-genre equilibria,  we restrict our formal definition of the infinite-producer limit to case of finite genres for technical convenience.

We arrive at a formalism by taking a limit of the conditions  in Lemma \ref{claim:necessarysuff} as $P \rightarrow \infty$. Let $\mu$ be a finite-genre distribution over $\mathbb{R}_{\ge 0}^D$. We can specify $\mu$ by the three attributes: the genres $d_1, \ldots, d_G$, the distributions $F_g$ over $\mathbb{R}_{\ge 0}$ corresponding to the distribution of $\|p\|$ for $p$ drawn from $\mu$ conditioned on $\p$ pointing in the direction of $d_g$, and the weights $\alpha_g$ corresponding to the probability that $\p \sim \mu$ points in the direction of $d_g$. In particular, $\mu$ can be described as follows: with probability $\alpha_g$, choose the vector $q_g d_g$ where $q_g$ is drawn from a distribution with cdf $F_g$. We see that the corresponding function $H_i$ from Lemma \ref{claim:necessarysuff} will be equal to:
\[H_i(z_i) = \left(\sum_{g=1}^G \alpha_g F_g\left(\frac{\langle u_i, \p \rangle}{\langle u_i, d_g \rangle} \right) \right)^{P-1}. \]
Note that the conditions (C2) and (C3) are essentially satisfied by construction; condition (C1) requires that 
\[\max_{\p \in \mathbb{R}_{\ge 0}^D} \left(\sum_{i=1}^N H_i(\langle u_i, \p\rangle) - c(\p) \right)\] is maximized for any $\p \in \text{supp}(\mu)$. This can be rewritten as requiring that any $\p^* \in \text{supp}(\mu)$ satisfies:
\[
  \p^* \in \argmax_{\p \in \mathbb{R}_{\ge 0}^D} \left(\sum_{i=1}^N \left(\sum_{g=1}^G \alpha_g F_g\left(\frac{\langle u_i, \p \rangle}{\langle u_i, d_g \rangle} \right) \right)^{P-1} - c(\p) \right).\]
Let's rewrite this in terms of winning producers: more formally, let $\FMax_g(\cdot) = (F_g(\cdot))^{P-1}$ denote the cumulative distribution function of the \textit{maximum} quality in a genre, conditioned on all producers choosing that genre. We call the distributions $\FMax_1, \ldots, \FMax_G$ the \textit{conditional quality distributions}. Then we obtain the following:
\begin{equation}
\label{eq:finitegenrep}
  \p^* \in \argmax_{\p \in \mathbb{R}_{\ge 0}^D} \left(\sum_{i=1}^N \left(\sum_{g=1}^G \alpha_g \left(\FMax_g\left(\frac{\langle u_i, \p \rangle}{\langle u_i, d_g \rangle} \right)\right)^{1/(P-1)} \right)^{P-1} - c(\p) \right).
\end{equation}
Taking a limit as $P \rightarrow \infty$, we see that 
\[\left(\sum_{g=1}^G \alpha_g F_g\left(\frac{\langle u_i, \p \rangle}{\langle u_i, d_g \rangle} \right)^{1/(P-1)}\right)^{P-1} \rightarrow \prod_{g=1}^G \left(F_g\left(\frac{\langle u_i, \p \rangle}{\langle u_i, d_g \rangle} \right)\right)^{\alpha_i}.\]
Thus, equation \ref{eq:finitegenrep} (informally speaking) approaches the following condition in the limit:  
\begin{equation}
\label{eq:infinitegenrecondition}
   \p^* \in \argmax_{\p \in \mathbb{R}_{\ge 0}^D} \left(\sum_{i=1}^N \left(\sum_{g=1}^G \alpha_g F_g\left(\frac{\langle u_i, \p \rangle}{\langle u_i, d_g \rangle} \right)^{1/(P-1)} \right)^{P-1} - c(\p) \right).
\end{equation}

Motivated by equation \ref{eq:infinitegenrecondition}, we specify $\mu$ by  three attributes---the \textit{genres} $d_1, \ldots, d_G$, the \textit{conditional quality} distributions $\FMax_g$ over $\mathbb{R}_{\ge 0}$, and the \textit{weights} $\alpha_g$ corresponding to the probability that $\p \sim \mu$ points in the direction of a given genre---as follows. 
\begin{definition}[Finite-genre equilibria for $P = \infty$]
\label{def:infinitegenre}
Let $u_1, \ldots, u_N \in \mathbb{R}_{\ge 0}^D$ be a set of users and let $c(\p) = \|p\|_2^{\beta}$ be the cost function. A set of \textit{genres} $d_1, \ldots, d_G \in \mathbb{R}_{\ge 0}^D$ such that $\|d_i\|_2 = 1$ for all $1 \le g \le G$, a set of \textit{conditional quality distributions} $F_1, F_2, \ldots, F_G$ over $\mathbb{R}_{\ge 0}$, and a set of weights $\alpha_1, \ldots, \alpha_G \ge 0$ such that $\sum_{g=1}^G \alpha_g = 1$ forms a finite-genre equilibrium if the following condition holds for 
\begin{equation}
\label{eq:infinitecondition}
\p^* \in \argmax_{\p \in \mathbb{R}_{\ge 0}^D} \left(\sum_{i=1}^N \left(\prod_{g=1}^G \left(\FMax_g \left(\frac{\langle u_i, \p \rangle}{\langle u_i, d_g \rangle} \right)\right)^{\alpha_i} \right) - c(\p) \right)    
\end{equation}
for any $\p^* = q_i d_i$ such that $1 \le i \le G$ and $q_i \in \text{supp}(F_i)$. 
\end{definition}

Using the formalization in Definition \ref{def:infinitegenre} of equilibria for $P = \infty$, we investigate the case of two homogeneous populations of users, and we characterize two-genre equilibria. 
\begin{restatable}{theorem}{infinitegenreformal}[Formal version of Theorem \ref{thm:infinitegenre}]
\label{thm:infinitegenreformal}
Suppose that there are 2 users located at two linearly independently vectors $u_1, u_2 \in \mathbb{R}_{\ge 0}^D$, let $\theta^* := \cos^{-1}\Big( \frac{\langle u_1, u_2\rangle}{\|u_1\|_2 \|u_2\|} \Big) < 0$ be the angle between them.  Suppose we have cost function $c(\p) = \|p\|_2^{\beta}$, $\beta > \beta^* = \frac{2}{1 - \cos(\theta^*)}$, and $P = \infty$ producers. Then, the genres $d_1, d_2$, conditional quality distributions $\FMax_1 = \FMax$ and $\FMax_2 = \FMax$, and weights $\alpha_1 = \alpha_2 = 2$ form an equilibrium (as per  Definition \ref{def:infinitegenre}), where
\[\left\{d_1, d_2 \right\} := \left\{[\cos(\theta^G + \theta_{\text{min}}), \sin(\theta^G + \theta_{\text{min}})], [\cos(\theta^* - \theta^G + \theta_{\text{min}}), \sin(\theta^* - \theta^G + \theta_{\text{min}})] \right\} \]
such that $\theta^G := \argmax_{\theta \le \theta^*/2} \left(\cos^{\beta}(\theta) + \cos^{\beta}(\theta^* - \theta)\right)$ and $\theta_{\text{min}} := \min\left(\cos^{-1}\left(\frac{\langle u_1, e_1\rangle}{\|u_1\|} \right) , \cos^{-1}\left(\frac{\langle u_2, e_1\rangle}{\|u_2\|} \right) \right)$, and where 
\[\FMax(q) :=  
\begin{cases}
C_2^{(2n+2) \beta} & \text{ if } q \in C_1^{1/\beta} C_2^{2n+1}  \left[C_2, 1 \right] \text{ for } n \ge 0 \\
 C_1^{-2}C_2^{-2 n\beta} q^{2 \beta}  & \text{ if } q \in C_1^{1/\beta} C_2^{2n}  \left[C_2, 1 \right] \text{ for } n \ge 0 \\
1 & \text{ if } q \ge C_1^{1/\beta},
\end{cases},\]
such that the constants are defined by $C_1  := \frac{\sin(\theta^*) \cos(\theta^G)}{\sin(\theta^* - \theta_G)}$ and $C_2 := \frac{\cos(\theta^* - \theta^G)}{\cos(\theta^G)}$. 
\end{restatable}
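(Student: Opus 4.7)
The plan is to verify that the proposed configuration---genres $d_1, d_2$, identical conditional quality distributions $\FMax$, and equal weights $\alpha_1 = \alpha_2 = \tfrac{1}{2}$---satisfies the equilibrium condition in Definition~\ref{def:infinitegenre}, namely that every $p^* \in \{q d_g : g \in \{1,2\},\ q \in \text{supp}(\FMax)\}$ is a global maximizer of the objective in equation~\eqref{eq:infinitecondition}. First I would reduce to two normalized users via an analog of Claim~\ref{claim:equivalence} and reparameterize from content $p$ to inferred user values $z = \UMatrix p$, using the induced cost $c_{\UMatrix}(z)$ from Lemma~\ref{lemma:inducedcost}. Writing $a := \cos(\theta^G)$ and $b := \cos(\theta^* - \theta^G)$, a point on genre $d_2$ at quality $q$ maps to $z = (qb, qa)$, and the two genres are reflections of each other across the bisector at angle $\theta^*/2 + \theta_{\min}$. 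Because the two conditional distributions and weights coincide and the cost is symmetric under this reflection, it suffices to verify optimality at points $p^* = q d_2$ with $q \in \text{supp}(\FMax)$.

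Next I would derive the piecewise form of $\FMax$ from the first-order condition. Setting the partial derivatives of the objective in~\eqref{eq:infinitecondition} with respect to $z_1$ and $z_2$ to zero at a support point $z = (qb, qa)$ gives equations relating $f^{\text{max}}(z_i/a)$ and $f^{\text{max}}(z_i/b)$ to $\partial_{z_i} c_{\UMatrix}(z)$, which by Lemma~\ref{lemma:FOC} scales as $q^{\beta - 1}$. The proposed support is arranged so that on each interval, exactly one of the two factors $\FMax(z_i/a),\ \FMax(z_i/b)$ is in an active (growing) piece while the other is on a flat piece; this decouples the FOC and forces $f^{\text{max}}(q) \propto q^{2\beta - 1}$, whose antiderivative produces the $q^{2\beta}$ growth in the statement. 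The geometric ratio $C_2 = b/a$ is exactly the ratio of the supports of $z_1$ and $z_2$ generated along genre $d_2$ at a fixed quality, so the alternation of active/flat pieces between the two factors induces the geometric interval structure $C_1^{1/\beta} C_2^n [C_2, 1]$. The constant $C_1$ is then pinned down by the normalization $\FMax(q_{\text{max}}) = 1$ at the top of the support together with the top-end FOC, which is where the defining maximization property of $\theta^G$---first-order optimality of $\theta^G$ in the cost-versus-reach tradeoff---enters.

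The main obstacle is global optimality: the FOC calculation only verifies critical points on the support, and one still has to rule out profitable deviations to $p$ off the two rays. I would split this into three checks. First, deviations to intermediate directions $\theta \in (\theta^G + \theta_{\min},\ \theta^* - \theta^G + \theta_{\min})$ are ruled out using the defining property that $\theta^G$ maximizes $\cos^\beta(\theta) + \cos^\beta(\theta^* - \theta)$ over $\theta \le \theta^*/2$; at the top of the support the two $\FMax$ factors both saturate at $1$, and an intermediate $\theta$ would require strictly greater $\|p\|$ to achieve the same saturation, incurring strictly greater cost. Second, deviations outside the arc $[\theta^G + \theta_{\min},\ \theta^* - \theta^G + \theta_{\min}]$ are ruled out by monotonicity of the objective in the angle coordinate, inherited from the sign of $\partial^2 c_{\UMatrix}/\partial z_1 \partial z_2$ as controlled by Lemma~\ref{lemma:secondderiv}. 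Third, deviations in quality within a genre are ruled out by checking that the staggered pattern of active/flat pieces in $\FMax$ leaves no incentive to jump a gap; this is the delicate step, since it is precisely the alignment of the two genres' gap regions (arranged via the $C_2$-geometric scaling so that at most one factor sits in an active piece at any quality) that ensures the $q^{2\beta}$ gain from a deviation cannot exceed the $q^\beta$ cost by more than it does at a support point.
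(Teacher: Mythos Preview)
Your reduction to normalized users, the $z$-reparameterization, and the symmetry observation are fine. The gap is in global optimality. Your three-case split does not close: the intermediate-direction check only treats the top of the support; the outside-arc check invokes Lemma~\ref{lemma:secondderiv}, which constrains curves in the support via the second-order condition and does not by itself rule out off-support deviations; and the within-genre quality check is left as a ``delicate step'' with no concrete mechanism. With the objective still expressed through the raw piecewise $\FMax$, these cases are genuinely hard to finish.

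The idea you are missing is that the staggered piecewise structure of $\FMax$ is engineered precisely so that the \emph{product} $\FMax\bigl(z_i/\cos\theta^G\bigr)\cdot\FMax\bigl(z_i/\cos(\theta^*-\theta^G)\bigr)$ collapses to the closed form $\min\bigl(1,\,c\,z_i^{\beta}\bigr)^2$, where $c=\sin(\theta^*-\theta^G)/\bigl(\sin\theta^*\cos^{\beta-1}\theta^G\bigr)$. The paper verifies this by casework on which interval each argument falls in: on every piece one factor is on a $q^{2\beta}$ segment and the other on a constant segment, and the constants $C_1,C_2$ are set so the product is always $c^2 z_i^{2\beta}$ below saturation. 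Once $H_i(z_i)=\min(1,c\,z_i^{\beta})$ is established, the objective in~\eqref{eq:infinitecondition} becomes $\min(1,c\,z_1^{\beta})+\min(1,c\,z_2^{\beta})-c_{\UMatrix}(z)$, which at direction $\theta$ and radius $r$ (below saturation) equals $r^{\beta}\bigl(c(\cos^{\beta}\theta+\cos^{\beta}(\theta^*-\theta))-1\bigr)$. The first-order condition defining $\theta^G$ yields the identity $c\bigl(\cos^{\beta}\theta^G+\cos^{\beta}(\theta^*-\theta^G)\bigr)=1$, and its maximizing property gives $\cos^{\beta}\theta+\cos^{\beta}(\theta^*-\theta)\le 1/c$ for all $\theta$. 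Hence the objective is $\le 0$ everywhere and $=0$ exactly on the two rays, dispatching all directions and all qualities simultaneously with no case split. Your FOC reasoning is a reasonable way to \emph{discover} $\FMax$, but for the verification the key computation is the product $H_i$, not the individual factors.
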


\subsection{Proofs for Section \ref{sec:closedforminfinite}}\label{sec:proofsclosedforminfinite}

To recover the equilibrium in the infinite-producer limit, we need to show that there exists a two-genre equilibrium and find this equilibrium. 
We can apply machinery that is conceptually similar to Lemma \ref{claim:necessarysuff} enables us to systematically identify the particular equilibrium within the family of two-genre equilibrium. The first-order condition (Lemma \ref{lemma:FOC}) given by condition (C1) helps identify the location of the genre directions, and this further enables us to compute the cdfs $H_1$ and $H_2$. At this stage, the proof boils down to solving for the conditional quality distributions $F_1$ and $F_2$. We obtain an infinite-producer limit of the functional equations in \eqref{eq:functionalequation} which can be solved directly. 

To actually prove Theorem \ref{thm:infinitegenreformal}, we again only need to \textit{verify} that the equilibrium $\mu$ in Theorem \ref{thm:infinitegenreformal} which is easier.  
\begin{proof}[Proof of Theorem \ref{thm:infinitegenreformal}]
WLOG, we assume that $\|u_1\| = \|u_2\| = 1$. 
It suffices to verify that the genres, conditional quality distributions, and weights satisfy \eqref{eq:infinitecondition}. Motivated by Lemma \ref{claim:necessarysuff}, we define:
\begin{align*}
    H_1(z_1) &= \sqrt{\FMax_1 \left(\frac{z_1}{\langle u_1, d_1\rangle} \right) \FMax_2 \left(\frac{z_1}{\langle u_1, d_2\rangle} \right)} \\
    H_2(z_2) &= \sqrt{\FMax_1 \left(\frac{z_2}{\langle u_1, d_1\rangle} \right) \FMax_2 \left(\frac{z_2}{\langle u_1, d_2\rangle} \right)}.
\end{align*}
We define the support $S$ to be 
\[S:= \left\{[\langle u_1, q d_1 \rangle, \langle u_2, q d_1\rangle] \mid q_1 \in \text{supp}(\FMax_1) \right\} \cup \left\{[\langle u_1, q d_1 \rangle, \langle u_2, q d_1\rangle] \mid q_2 \in \text{supp}(\FMax_2)\right\}.\]
Using this notation, we can rewrite \eqref{eq:infinitecondition} as requiring that:
\begin{equation}
    \label{eq:goal}
    \max_{z} \left( H_1(z_1) + H_2(z_2) - c_{\UMatrix}(z)\right)
\end{equation}
is maximized for every $z \in S$.

First, we show that 
\begin{equation}
    \label{eq:betapowers}
    \sin(\theta^G) \cos^{\beta - 1}(\theta^G) = \sin(\theta^* - \theta^G) \cos^{\beta - 1}(\theta^* - \theta^G)
\end{equation}
This immediately follows from using that $\theta^G \in \argmax_{\theta} \left(\cos^{\beta}(\theta) + \cos^{\beta}(\theta^* - \theta)\right)$ and applying the first-order condition.

For the remainder of the proof, we define:
\[c := \frac{\sin(\theta^* - \theta^G)}{\sin(\theta^*) \cos^{\beta - 1}(\theta^G)} = \frac{\sin(\theta^G)}{\sin(\theta^*) \cos^{\beta - 1}(\theta^* - \theta^G)},\]

\paragraph{Computing $H_1$ and $H_2$.} 
We show that:
\[H_1(z_1) = \min \left(c z_1^{\beta} , 1 \right) \text{   and   } H_2(z_2) = \min \left(1, c z_2^{\beta} \right).\]
We show that 
\begin{equation}
    \label{eq:target1}
  H_1(z_1) = \min \left(c z_1^{\beta} , 1 \right),
\end{equation}
and observe that the expression for $H_2$ follows from an analogous argument. By definition, we see that:
\begin{align*}
    H_1(z_1) &= \sqrt{F_1 \left(\frac{z_1}{\langle u_1, d_1\rangle} \right) F_2 \left(\frac{z_1}{\langle u_1, d_2\rangle} \right)} \\
    &= \sqrt{F \left(\frac{z_1}{\langle u_1, d_1\rangle} \right) F \left(\frac{z_1}{\langle u_1, d_2\rangle} \right)}.
\end{align*}
We know that either (1) $\langle u_1, d_1 \rangle = \langle u_2, d_2 \rangle = \cos(\theta^G)$ and $\langle u_1, d_2 \rangle = \langle u_2, d_1 \rangle = \cos(\theta^* - \theta^G)$, or (2) $\langle u_1, d_2 \rangle = \langle u_2, d_1 \rangle = \cos(\theta^G)$ and $\langle u_1, d_1 \rangle = \langle u_2, d_2 \rangle = \cos(\theta^* - \theta^G)$. WLOG, we assume that (1) holds. This means that:
\begin{align*}
    H_1(z_1) &= \sqrt{\FMax \left(\frac{z_1}{\langle u_1, d_1\rangle} \right) \FMax \left(\frac{z_1}{\langle u_1, d_2\rangle} \right)} \\
    &= \sqrt{\FMax \left(\frac{z_1}{\cos(\theta^G)} \right) \FMax \left(\frac{z_1}{\cos(\theta^* - \theta^G)} \right)}
\end{align*}
Let's reparameterize and let:
\[ q_1 = \frac{z_1}{\cos(\theta^* - \theta^G)}.\] This means that:
\[ H_1(q_1 \cos(\theta^* - \theta^G)) = \sqrt{\FMax(q_1) \FMax \left(q_1 \frac{\cos(\theta^* - \theta^G)}{\cos(\theta^G)} \right)}. \]

Equation \eqref{eq:target1} reduces to
\[\sqrt{\FMax(q_1) \FMax \left(q_1 \frac{\cos(\theta^* - \theta^G)}{\cos(\theta^G)} \right)} = \min\left(1, c \cos(\theta^* - \theta^G)^{\beta} q_1^{\beta} \right). \] which simplifies to 
\[\sqrt{\FMax(q_1) \FMax \left(q_1 \frac{\cos(\theta^* - \theta^G)}{\cos(\theta^G)} \right)} = \min\left(1, \frac{\sin(\theta^G) \cos(\theta^* - \theta^G)}{\sin(\theta^*)} q_1^{\beta} \right) \]
which simplifies to 
\begin{equation}
    \label{eq:target}
    \sqrt{\FMax(q_1) \FMax \left(q_1 C_2 \right)} = \min\left(1, C_3^{-1} q_1^{\beta} \right)
\end{equation}

We verify equation \eqref{eq:target} by doing casework on $q_1$. Note that $C_1^{1/\beta} = C_3^{1/\beta} C_2$. If $q_1 \ge C_3^{1/\beta} C_2^{-1/\beta}$, then we see that $\FMax(q_1) = \FMax \left(q_1 C_2 \right) = 1$ and the equation holds. In fact, if $q_1 \ge C_3^{1/\beta} C_2^{1-1/\beta}$, then we see that $\FMax(q_1) = 1$ and \[\FMax\left(q_1 C_2 \right) = C_3^{-2} C_2^{2 - 2\beta} (q_1 C_2)^{2 \beta} = C_3^{-2} C_2^2 q_1^{2\beta}\], so equation \eqref{eq:target} is satisfied. Otherwise, if $q_1 = C_3^{1/\beta} C_2 C_2^{2n} \gamma$ for $n \ge 0$ and $\gamma \in [C_2, 1]$, then  
\[\FMax(q_1) = C_3^{-2} C_2^{- 2\beta - 2 n \beta} q^{2\beta}\] and 
\[\FMax(q_1 C_2) = C_2^{(2n+2)\beta},\] so:
\[ \sqrt{\FMax(q_1) \FMax\left(q_1 C_2 \right)}  = \sqrt{C_3^{-2} C_2^{- 2\beta - 2 n \beta} C_2^{(2n+2)\beta}} = \sqrt{C_3^{-2} C_2^2 q^{2 \beta}} =  C_3^{-1} C_2 q_1^{\beta}\] as desired. Finally, if $q_1 = C_1^{1/\beta} C_2^{1-1/\beta} C_2^{2n+1} \gamma$ for $n \ge 0$ and $\gamma \in [C_2, 1]$, then 
\[\FMax(q_1) = C_2^{(2n+2)\beta}\] and 
\[\FMax(q_1 C_2) = C_3^{-2} C_2^{- (2n+4)\beta} q^{2\beta},\] so:
\[\sqrt{\FMax(q_1) \FMax \left(q_1 C_2 \right)}  = \sqrt{C_2^{(2n+2)\beta}  C_3^{-2} C_2^{ (2n+4)\beta} q^{2\beta} C_2^{2\beta}} = C_3^{-1}  q^{\beta}. \]
This proves the desired formulas for $H_1$ and an analogous argument applies to $H_2$. 

\paragraph{Showing equation \eqref{eq:goal} is maximized at every $z \in S$.} We need to show that for every $z \in S$, it holds that: 
\[H_1(z_1) + H_2(z_2) - c_{\UMatrix}(z) = \max_{z'} (H_1(z'_1) + H_2(z'_2) - c_{\UMatrix}(z')). \]
Plugging in our expressions above, our goal is to show:
\[\min(1, c z_1^{\beta}) + \min(1, c z_2^{\beta})- c_{\UMatrix} (z) = \max_{z'} (H_1(z'_1) + H_2(z'_2) - c_{\UMatrix}(z'))\]
for every $z \in S$.

We split into two steps: first, we show that 
\begin{equation}
\label{eq:zeroachieved}
 \min(1, c z_1^{\beta}) + \min(1, c z_2^{\beta})- c_{\UMatrix} (z) = 0   
\end{equation}
for every $z \in S$, and next we show that:
\begin{equation}
    \label{eq:zeroprofitinf}
    \max_{z'} (H_1(z'_1) + H_2(z'_2) - c_{\UMatrix}(z')) \le 0.
\end{equation}

To show \eqref{eq:zeroachieved}, let's first consider $[z_1, z_2 ] = [r \cos(\theta^G), r \cos(\theta^G - \theta^*)] \in S$. Then we see that:
\begin{align*}
\min(1, c z_1^{\beta}) + \min(1, c z_2^{\beta})- c_{\UMatrix} (z)  &=  c z_1^{\beta} + c z_2^{\beta} - c_{\UMatrix} (z)  \\
&= r^{\beta} \left(c \cos^{\beta}(\theta^G) + c \cos^{\beta}(\theta^* - \theta^G) - 1 \right)
\end{align*}
Thus, it suffices to show that:
\begin{equation}
\label{eq:goal1}
  \cos^{\beta}(\theta^G) + \cos^{\beta}(\theta^* - \theta^G)  = \frac{1}{c}. 
\end{equation}
We now show equation \eqref{eq:goal1}:
\begin{align*}
\cos^{\beta}(\theta^G) + \cos^{\beta}(\theta^* - \theta^G) &=_{(A)} \frac{\cos(\theta^G) \cos^{\beta - 1}(\theta^* - \theta^G) \sin(\theta^* - \theta^G)}{\sin(\theta^G)} + \cos^{\beta}(\theta^* - \theta^G) \\
&= \frac{\cos^{\beta - 1}(\theta^* - \theta^G) }{\sin(\theta^G)} \left( \cos(\theta^G) \sin(\theta^* - \theta^G) + \cos(\theta^* - \theta^G) \sin(\theta^G) \right) \\
&= \frac{\cos^{\beta - 1}(\theta^* - \theta^G) }{\sin(\theta^G)} \sin(\theta^*) \\
&= \frac{1}{c}.
\end{align*}
where (A) follows from applying equation \eqref{eq:betapowers}. Let's now consider let's first consider $[z_1, z_2 ] = [r \cos(\theta^G - \theta^*), r \cos(\theta^*)] \in S$. Then, we see that  
\[
\min(1, c z_1^{\beta}) + \min(1, c z_2^{\beta})- c_{\UMatrix} (z)  =  c z_1^{\beta} + c z_2^{\beta} - c_{\UMatrix} (z)  \\
= r^{\beta} \left(c \cos^{\beta}(\theta^G) + c \cos^{\beta}(\theta^* - \theta^G) - 1 \right)  = 0,\]
where the last equality follows from equation \eqref{eq:goal1}. This establishes equation \eqref{eq:zeroprofitinf}.

Now, we show equation \eqref{eq:zeroprofitinf}. 
Let's represent $z'$ as $\UMatrix [r'\cos(\theta), r' \sin(\theta)]$. Then this becomes:
\[c (r')^{\beta} \cos^{\beta}(\theta) + c (r')^{\beta} \cos^{\beta}(\theta^* - \theta) \le (r')^{\beta}. \]
Dividing by $r'^{\beta}$, we obtain:
\[
  \cos^{\beta}(\theta) + \cos^{\beta}(\theta^* - \theta) \le \frac{1}{c}.
\]
To show this, observe that: 
\[\cos^{\beta}(\theta) + \cos^{\beta}(\theta^* - \theta) \le  \cos^{\beta}(\theta^G) + \cos^{\beta}(\theta^* - \theta^G) = \frac{1}{c}.\]
where the first inequality follows from the fact that $\theta^G$ is a maximizer of $\cos^{\beta}(\theta) + \cos^{\beta}(\theta^* - \theta)$ by definition, and the second equality follows from equation \eqref{eq:goal1}. This establishes equation \eqref{eq:zeroprofitinf}.

This proves that equation \eqref{eq:goal1} is maximized at every $z \in S$, and thus the conditions of Definition \ref{def:infinitegenre} are satisfied. 
\end{proof}

\subsection{Proofs of auxiliary lemmas}\label{sec:proofsaux}

We state and prove Lemma \ref{lemma:standardbasismax}, a lemma which we used in the proofs of Proposition \ref{prop:finiteP} and Proposition \ref{prop:P2}. 
\begin{lemma}
\label{lemma:standardbasismax}
For any $\beta \ge 2$, the expression
\[\max_{z_1, z_2 \ge 0} \left(\left(\min\left(\left(\frac{2}{\beta} \right)^{-2/\beta} z_1^2, 1\right) + \min\left(\left(\frac{2}{\beta} \right)^{-2/\beta} z_2^2, 1\right)\right) - (z_1^2 + z_2^2)^{\beta / 2}\right)\]
is maximized for any $(z_1, z_2)$ such that $z_1^2 + z_2^2 = \left(\frac{2}{\beta}\right)^{2/\beta}$. 
\end{lemma}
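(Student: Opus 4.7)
\textbf{Proof proposal for Lemma \ref{lemma:standardbasismax}.}

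The plan is to do a short case analysis based on which of the two $\min$'s in the objective are saturated, and to reduce each case to a one-variable concave optimization. Let $C := (2/\beta)^{-2/\beta}$, so that $1/C = (2/\beta)^{2/\beta}$ and $(1/C)^{\beta/2} = 2/\beta$. Writing $a = z_1^2$, $b = z_2^2$, the objective becomes $g(a,b) := \min(Ca,1) + \min(Cb,1) - (a+b)^{\beta/2}$ on $a,b \ge 0$. I will first verify that on the circle $a+b = 1/C$ (which has $a,b \le 1/C$) the value equals $C(a+b) - (a+b)^{\beta/2} = 1 - 2/\beta$, so every such point is a candidate maximizer; it then suffices to show $g(a,b) \le 1 - 2/\beta$ globally.

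First I would handle the regime $a,b \le 1/C$: here $g$ depends only on $s := a+b \in [0, 2/C]$ via $h(s) := Cs - s^{\beta/2}$. For $\beta \ge 2$, $h$ is concave, and a direct computation using $(1/C)^{\beta/2-1} = (2/\beta)C$ gives $h'(1/C) = C - (\beta/2)(2/\beta)C = 0$, so $s = 1/C$ is the unconstrained maximizer and $h(1/C) = 1 - 2/\beta$. Next, in the regime $a > 1/C$, $b \le 1/C$ (and symmetrically), one has $g(a,b) = 1 + Cb - (a+b)^{\beta/2}$, which is strictly decreasing in $a$ for $\beta \ge 2$. Hence the supremum of $g$ over this region is at most its value at the boundary $a = 1/C$, reducing to the previous case.

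The last and slightly more delicate case is $a, b > 1/C$: here $g(a,b) = 2 - (a+b)^{\beta/2}$ with $a+b > 2/C$, so
\[
g(a,b) \;<\; 2 - (2/C)^{\beta/2} \;=\; 2 - \frac{2^{\beta/2+1}}{\beta}.
\]
To conclude, I need the elementary inequality $2^{\beta/2+1} \ge \beta + 2$ for all $\beta \ge 2$, which is the only non-routine step. Writing $\phi(\beta) := 2^{\beta/2+1} - \beta - 2$, one has $\phi(2) = 0$ and $\phi'(\beta) = (\ln 2)\, 2^{\beta/2} - 1$; since $\phi'(2) = 2\ln 2 - 1 > 0$ and $\phi'$ is increasing, $\phi \ge 0$ on $[2,\infty)$. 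This gives $g(a,b) \le 1 - 2/\beta$ in the third region and completes the proof.

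The main obstacle is just this last scalar inequality $2^{\beta/2+1} \ge \beta+2$ on $[2,\infty)$; everything else is concavity of $h(s) = Cs - s^{\beta/2}$ combined with the explicit calibration $(1/C)^{\beta/2} = 2/\beta$, which makes the single-variable critical point land exactly on the claimed circle and makes each boundary reduction collapse back into the first case.
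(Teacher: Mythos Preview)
Your proof is correct. The case analysis is sound, the concavity argument for $h(s)=Cs-s^{\beta/2}$ is right (with the critical point at $s=1/C$ landing exactly on the claimed circle), and the scalar inequality $2^{\beta/2+1}\ge\beta+2$ for $\beta\ge 2$ closes the last case cleanly.

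The paper takes a shorter route: instead of splitting on which $\min$ is saturated, it simply bounds $\min(Ca,1)+\min(Cb,1)\le C(a+b)$ globally, reducing the entire problem to your Case~1 in one stroke. One then just maximizes $h(s)=Cs-s^{\beta/2}$ over $s\ge 0$, which is your concave one-variable problem with optimum $1-2/\beta$ at $s=1/C$. This makes your Cases~2 and~3 (and hence the inequality $2^{\beta/2+1}\ge\beta+2$) unnecessary. Your approach has the minor advantage of being more explicit about what happens in the saturated regions, but the paper's observation that the $\min$'s can be dropped at the cost of only weakening the upper bound is the cleaner move here.
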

\begin{proof}
First, for $z_1, z_2$ such that $z_1^2 + z_2^2 = \left(\frac{2}{\beta}\right)^{2/\beta}$, we have that 
\[\left(\frac{2}{\beta} \right)^{-2/\beta}\left(z_1^2 + z_2^2\right) - (z_1^2 + z_2^2)^{\beta / 2}  = 1 - \frac{2}{\beta}.\] It thus suffices to prove that:
\[\left(\min\left(\left(\frac{2}{\beta} \right)^{-2/\beta} z_1^2, 1\right) + \min\left(\left(\frac{2}{\beta} \right)^{-2/\beta} z_2^2, 1\right)\right) - (z_1^2 + z_2^2)^{\beta / 2}  \le 1 - \frac{2}{\beta} \] for any $z_1, z_2 \ge 0$. It suffices to prove the stronger statement that:
\[\left(\frac{2}{\beta} \right)^{-2/\beta} (z_1^2 + z_2^2) - (z_1^2 + z_2^2)^{\beta / 2}  \le 1 - \frac{2}{\beta} \]
Let $c = z_1^2 + z_2^2$; then we can rewrite the desired condition as:
\[\max_{c \ge 0}\left(\left(\frac{2}{\beta} \right)^{-2/\beta} c^2 - c^{\beta}\right) \le 1 - \frac{2}{\beta}. \] 
A first-order condition tells us for $\beta \ge 2$, that $\left(\frac{2}{\beta} \right)^{-2/\beta} c^2 - c^{\beta}$ is maximized at $c = \left(\frac{2}{\beta}\right)^{1/\beta}$, which proves the desired statement.
\end{proof}

We prove Lemma \ref{lemma:inducedcost}. 
\begin{proof}[Proof of Lemma \ref{lemma:inducedcost}]
It suffices to show that if $z_1 = \langle u_1, \p\rangle$ and $z_2 = \langle u_2, \p\rangle$, then: 
\begin{equation}
\label{eq:producers}
  \|\p\|^2 = \frac{z_1^2 + z_2^2 - 2z_1z_2 \cos(\theta^*)}{\sin^2(\theta^*)}  
\end{equation}
WLOG, let $u_1 = e_1$ and let $u_2 = [\cos(\theta^*), \sin(\theta^*)]$. We see that:
\begin{align*}
\frac{z_1^2 + z_2^2 - 2z_1z_2 \cos(\theta^*)}{\sin^2(\theta^*)} &= \frac{p_1^2 + (p_1 \cos(\theta^*) + p_2 \sin(\theta^*))^2 - 2 p_1 (p_1 \cos(\theta^*) + p_2 \sin(\theta^*))\cos(\theta^*)}{\sin^2(\theta^*)} \\
&= \frac{p_1^2 \sin^2(\theta^*) + p_2^2 \sin^2(\theta^*)}{\sin^2(\theta^*)} \\
&= p_1^2 + p_2^2 \\
&= \|\p\|_2^2,
\end{align*}
which proves equation \eqref{eq:producers}. 
\end{proof}

We prove Lemma \ref{lemma:FOC}. 
\begin{proof}[Proof of Lemma \ref{lemma:FOC}]
Since $\mu$ is a symmetric mixed equilibrium, $z$ must be a maximizer of equation \eqref{eq:reparam}. The equation 
\[\begin{bmatrix}
h_1(z_1) \\
h_2(z_2)
\end{bmatrix} 
= \nabla_{z}(c_{\UMatrix}(z))\] is the first-order condition and thus holds for every $z$ is in the support of $\mu$. 

Next, we show that:
\[\nabla_{z}(c_{\UMatrix}(z)) = \beta \alpha^{\beta} \sin^{-\beta}(\theta^*)   \left(\left(z_1^2 + z_2^2 - 2 z_1 z_2 \cos(\theta^*) \right)^{\frac{\beta}{2} - 1}\right) \begin{bmatrix}
z_1 - z_2 \cos(\theta^*) \\
z_2 - z_1 \cos(\theta^*)
\end{bmatrix}.
\]
By applying Lemma \ref{lemma:inducedcost}, we see that:
\begin{align*}
 \nabla_{z}(c_{\UMatrix}(z)) &= \nabla_{z} \left(\alpha^{\beta} \sin^{-2 \beta}(\theta^*) \left(z_1^2 + z_2^2 - 2 z_1 z_2 \cos(\theta^*) \right)^{\frac{\beta}{2}}\right) \\
 &= \alpha^{\beta} \sin^{-\beta}(\theta^*)  \cdot \nabla_{z} \left(\left(z_1^2 + z_2^2 - 2 z_1 z_2 \cos(\theta^*) \right)^{\frac{\beta}{2}}\right) \\
 &= \beta \alpha^{\beta} \sin^{-\beta}(\theta^*) \left(\left(z_1^2 + z_2^2 - 2 z_1 z_2 \cos(\theta^*) \right)^{\frac{\beta}{2} - 1}\right)  \begin{bmatrix}
z_1 - z_2 \cos(\theta^*) \\
z_2 - z_1 \cos(\theta^*)
\end{bmatrix},
\end{align*}
as desired. 

Finally, we show that 
\[\nabla_{z}(c_{\UMatrix}(z)) = \beta \alpha^{\beta} \sin^{-\beta}(\theta^*) \left(z_1^2 + z_2^2 - 2 z_1 z_2 \cos(\theta^*) \right)^{\frac{\beta}{2} - 1} \begin{bmatrix}
z_1 - z_2 \cos(\theta^*) \\
z_2 - z_1 \cos(\theta^*)
\end{bmatrix}.
\]
We see that:
\begin{align*}
 \nabla_{z}(c_{\UMatrix}(z)) &= \beta \alpha^{\beta} \sin^{-\beta}(\theta^*) \left(\left(z_1^2 + z_2^2 - 2 z_1 z_2 \cos(\theta^*) \right)^{\frac{\beta}{2} - 1}\right)  \begin{bmatrix}
z_1 - z_2 \cos(\theta^*) \\
z_2 - z_1 \cos(\theta^*)
\end{bmatrix} \\
&= \beta \alpha^{\beta} r^{\beta - 2} \begin{bmatrix}
\frac{z_1 - z_2 \cos(\theta^*)}{\sin^2(\theta^*)}\\
\frac{z_2 - z_1 \cos(\theta^*)}{\sin^2(\theta^*)} 
\end{bmatrix}\\
&= \beta \alpha^{\beta} r^{\beta - 1} \begin{bmatrix}
\frac{\cos(\theta) - \cos(\theta^* - \theta) \cos(\theta^*)}{\sin^2(\theta^*)}\\
\frac{\cos(\theta^* - \theta) - \cos(\theta) \cos(\theta^*)}{\sin^2(\theta^*)} 
\end{bmatrix}\\
&= \beta \alpha^{\beta} r^{\beta - 1} \begin{bmatrix}
\frac{\cos(\theta* - (\theta^* - \theta)) - \cos(\theta^* - \theta) \cos(\theta^*)}{\sin^2(\theta^*)}\\
\frac{\sin(\theta^*) \sin(\theta)}{\sin^2(\theta^*)} 
\end{bmatrix}\\
&= \beta \alpha^{\beta} r^{\beta - 1} \begin{bmatrix}
\frac{\sin(\theta^*) \sin(\theta^* - \theta)}{\sin^2(\theta^*)}\\
\frac{\sin(\theta)}{\sin(\theta^*)} 
\end{bmatrix}\\
&= \beta \alpha^{\beta} r^{\beta - 1} \begin{bmatrix}
\frac{\sin(\theta^* - \theta)}{\sin(\theta^*)}\\
\frac{\sin(\theta)}{\sin(\theta^*)} 
\end{bmatrix},
\end{align*}
as desired.

\end{proof}

We prove Lemma \ref{lemma:secondderiv}.
\begin{proof}[Proof of Lemma \ref{lemma:secondderiv}]
By construction, we see that $z \in \left\{\UMatrix p \mid p \in \mathbb{R}_{\ge 0}^D \right\}$. We can apply Lemma \ref{lemma:FOC} to see that 
\begin{align*}
  \frac{\partial^2  c_{\UMatrix}(z)}{\partial z_1 \partial z_2} &= \frac{\partial^2 }{\partial z_1 \partial z_2}\left(\sin^{-2 \beta}(\theta^*) \left(z_1^2 + z_2^2 - 2 z_1 z_2 \cos(\theta^*) \right)^{\frac{\beta}{2}}\right) \\
  &=  \frac{\partial}{ \partial z_2} \left(\beta \alpha^{\beta} \sin^{-\beta}(\theta^*) \left(z_1^2 + z_2^2 - 2 z_1 z_2 \cos(\theta^*) \right)^{\frac{\beta}{2} - 1} (z_1 - z_2 \cos(\theta^*)) \right) \\
&=  \beta \alpha^{\beta} \sin^{-\beta}(\theta^*) \frac{\partial}{ \partial z_2} \left( \left(z_1^2 + z_2^2 - 2 z_1 z_2 \cos(\theta^*) \right)^{\frac{\beta}{2} - 1} (z_1 - z_2 \cos(\theta^*)) \right).
\end{align*}
This is the same sign as:  
\begin{align*}
 & \frac{\partial}{ \partial z_2} \left( \left(z_1^2 + z_2^2 - 2 z_1 z_2 \cos(\theta^*) \right)^{\frac{\beta}{2} - 1} (z_1 - z_2 \cos(\theta^*)) \right)\\
 &=  (\beta - 2) \left(z_1^2 + z_2^2 - 2 z_1 z_2 \cos(\theta^*) \right)^{\frac{\beta}{2} - 2} (z_1 - z_2 \cos(\theta^*)) (z_2 - z_1 \cos(\theta^*)) - \left(z_1^2 + z_2^2 - 2 z_1 z_2 \cos(\theta^*) \right)^{\frac{\beta}{2} - 1} \cos(\theta^*) \\
 &=  \left(z_1^2 + z_2^2 - 2 z_1 z_2 \cos(\theta^*) \right)^{\frac{\beta}{2} - 2} \left((\beta - 2) (z_1 - z_2 \cos(\theta^*)) (z_2 - z_1 \cos(\theta^*)) - \cos(\theta^*) \left(z_1^2 + z_2^2 - 2 z_1 z_2 \cos(\theta^*) \right) \right)
\end{align*}
This is the same sign as:
\[(\beta - 2) (z_1 - z_2 \cos(\theta^*)) (z_2 - z_1 \cos(\theta^*)) - \cos(\theta^*) \left(z_1^2 + z_2^2 - 2 z_1 z_2 \cos(\theta^*) \right). \] 
Let's represent $z$ as $[r \cos(\theta), r \cos(\theta^* - \theta)]$. The above expression is the same sign as:
\begin{align*}
 & (\beta - 2) (\cos(\theta) - \cos(\theta^* - \theta) \cos(\theta^*)) (\cos(\theta^* - \theta) - \cos(\theta) \cos(\theta^*)) - \cos(\theta^*) \sin^2(\theta^*) \\
 &=  (\beta - 2) (\sin(\theta^*) \sin(\theta^* - \theta)) (\sin(\theta) \sin(\theta^*)) - \cos(\theta^*) \sin^2(\theta^*) \\
 &=  \sin^2(\theta^*) \left( (\beta - 2) \sin(\theta^* - \theta) \sin(\theta)- \cos(\theta^*) \right).
\end{align*}
This is the same sign as:
\[(\beta - 2) \sin(\theta^* - \theta) \sin(\theta)- \cos(\theta^*) = (\frac{\beta}{2} - 1) (\cos(\theta^* - 2 \theta) - \cos(\theta^*)) - \cos(\theta^*) = \left(\frac{\beta}{2} - 1\right) (\cos(\theta^* - 2 \theta) - \frac{\beta}{2}  \cos(\theta^*).  \] 
This is the same sign as:
\[\frac{\beta - 2}{\beta} \cos(\theta^* - 2 \theta) - \cos(\theta^*). \] 
\end{proof}

We prove Lemma \ref{lemma:regionscolor}. 
\begin{proof}[Proof of Lemma \ref{lemma:regionscolor}]
By Lemma \ref{lemma:secondderiv}, we see that $\left(\frac{\beta - 2}{\beta} \cos(\theta^* - 2 \theta) - \cos(\theta^*) \right)$ has the same sign as $\frac{\partial^2  c_{\UMatrix}(z)}{\partial z_1 \partial z_2}$. Thus it suffices to show that $g'(z_1) \cdot \frac{\partial^2  c_{\UMatrix}(z)}{\partial z_1 \partial z_2} \le 0$. When $\frac{\partial^2  c_{\UMatrix}(z)}{\partial z_1 \partial z_2}  = 0$, the condition in the proposition statement is trivially satisfied. We thus assume for the remainder of the proof that $\frac{\partial^2  c_{\UMatrix}(z)}{\partial z_1 \partial z_2}  \neq 0$.

The second-order condition for $z$ to be a maximizer of equation \eqref{eq:reparam} is the following: 
\begin{equation}
\label{eq:SOC}
\begin{bmatrix}
 h'_1(z_1) &  0 \\
0 & b & h'_2(z_2)
\end{bmatrix}
- \nabla^2 c_{\UMatrix}(z) 
\preceq 0.   
\end{equation}

Let's apply Lemma \ref{lemma:FOC}, to see that:
\[h_1(x) = \frac{\partial  c_{\UMatrix}([x, g(x)])}{\partial z_1}. \] 
Since this holds in a neighborhood of $z_1$, we see that:
\[h'_1(z_1) = \frac{\partial^2  c_{\UMatrix}(z)}{\partial z_1^2}+ g'(z_1) \frac{\partial^2  c_{\UMatrix}(z)}{\partial z_1 \partial z_2}.\]
An analogous argument, coupled with the inverse function theorem, shows that:
\[h'_2(z_2) = \frac{\partial^2  c_{\UMatrix}(z)}{\partial z_2^2}+ \frac{1}{g'(z_1)} \frac{\partial^2  c_{\UMatrix}(z)}{\partial z_1 \partial z_2}.\]
Plugging this into equation \eqref{eq:SOC}, we obtain:
\begin{align*}
  0 &\succeq \begin{bmatrix}h'_1(z_1) &  0 \\
0 & b & h'_2(z_2)
\end{bmatrix}
- \nabla^2 c_{\UMatrix}(z)  \\
&= \begin{bmatrix} \frac{\partial^2  c_{\UMatrix}(z)}{\partial z_1^2}+ g'(z_1) \frac{\partial^2  c_{\UMatrix}(z)}{\partial z_1 \partial z_2} &  0 \\
0 & \frac{\partial^2  c_{\UMatrix}(z)}{\partial z_2^2}+ \frac{1}{g'(z_1)} \frac{\partial^2  c_{\UMatrix}(z)}{\partial z_1 \partial z_2}
\end{bmatrix}
- \nabla^2 c_{\UMatrix}(z) \\
&= \begin{bmatrix} g'(z_1) \frac{\partial^2  c_{\UMatrix}(z)}{\partial z_1 \partial z_2} &  -\frac{\partial^2  c_{\UMatrix}(z)}{\partial z_1 \partial z_2} \\
-\frac{\partial^2  c_{\UMatrix}(z)}{\partial z_1 \partial z_2} & \frac{1}{g'(z_1)} \frac{\partial^2  c_{\UMatrix}(z)}{\partial z_1 \partial z_2}
\end{bmatrix} \\
&= \frac{\partial^2  c_{\UMatrix}(z)}{\partial z_1 \partial z_2}  \begin{bmatrix} g'(z_1) &  -1 \\
-1 & \frac{1}{g'(z_1)}.
\end{bmatrix}
\end{align*}
When $\frac{\partial^2  c_{\UMatrix}(z)}{\partial z_1 \partial z_2}  = 0$, the condition in the proposition statement is trivially satisfied. Since we've assumed that $\frac{\partial^2  c_{\UMatrix}(z)}{\partial z_1 \partial z_2} \neq 0$, the eigenvectors are $[1, g'(u)]$ which has eigenvalue $0$ and $[-g'(u), 1]$ which has eigenvalue 
\[\frac{(g'(z_1))^2 + 1}{g'(z_1)} \cdot \frac{\partial^2  c_{\UMatrix}(z)}{\partial z_1 \partial z_2}.\] 
The sign of that eigenvalue is equal to the sign of $g'(z_1) \cdot \frac{\partial^2  c_{\UMatrix}(z)}{\partial z_1 \partial z_2} $. Since the matrix must be negative semidefinite, we see that $g'(z_1) \cdot \frac{\partial^2  c_{\UMatrix}(z)}{\partial z_1 \partial z_2} \le 0$. 

\end{proof}

\section{Proofs for Section \ref{sec:profit}}

We prove Proposition \ref{thm:utility}, restated below. 
\utility*
\begin{proof}

Without loss of generality, we assume user vectors have unit norm $\|u_i\|$. Given an equilibrium $\mu$, we will construct an explicit vector $\p$ that generates positive profit. This proves that the equilibrium profit is positive because no vector can achieve higher than the equilibrium profit. The vector $\p$ is of the form $(Q \left(\max_{\p' \in \text{supp}(\mu)} ||\p'||\right) + \epsilon) \cdot u_{i^*}$ for some $i^* \in [1,N]$. 

Cluster the set of unit vectors $\p$ into $N$ groups $G_1, \ldots, G_N$, based on the user for whom they generate the lowest value. That is, each  vector $\p$ belongs to the group $G_i$ where $u_i = \argmin_{1\le i' \le N} \langle \p, u_{i'} \rangle$. This means that if all producers choose (unit vector) directions in $G_i$, then the maximum inferred user value for $u_i$ is
\begin{equation}
 \max_{1 \le j \le P} \langle \p_j, u_i\rangle \le \max_{\|\p\| \le 1} \min_{1 \le i\le N} \langle \p, {u_i} \rangle = Q. 
\end{equation}
Let $G_{i^*}$ be the group with highest probability of appearing in $\mu$, that is $i^* \in \argmax_{i} \mathbb{P}_{v \sim \mu}\left[ \frac{v}{||v||} \in G_i \right]$. 

Let $E$ be the event that all of the other $P-1$ producers choose directions in $G_{i^*}$. The event $E$ happens with probability at least $\mathbb{P}_{v \sim \mu}\left[ \frac{v}{||v||} \in G_{i^*} \right] \ge (1/N)^{P-1}$. Since the inferred user value is linear in the magnitude of the producer action, we see that the maximum possible inferred user value for user $u_i$ from the other producers is $Q \left(\max_{\p' \in \text{supp}(\mu)} ||\p'||\right)$. On the other hand, the action $\p$ results in inferred user value $(Q \left(\max_{\p' \in \text{supp}(\mu)} ||\p'||\right) + \epsilon)$ for $u_{i^*}$, so it wins $u_{i^*}$ with probability 1 on the event $E$. This means that the expected profit obtained by $\p$ is at most 
\[\left(\frac{1}{N}\right)^{P-1} -\left(Q \left(\max_{\p' \in \text{supp}(\mu)} ||\p'||\right) + \epsilon\right)^{\beta}. \]
Taking a limit as $\epsilon \rightarrow_{+} 0$, we obtain the profit can be set arbitrarily close to: 
\begin{equation}
    \label{eq:deviationprofit}
    \left(\frac{1}{N}\right)^{P-1} -\left(Q \left(\max_{\p' \in \text{supp}(\mu)} ||\p'||\right)\right)^{\beta}.
\end{equation}

It suffices to bound $\max_{\p' \in \text{supp}(\mu)} ||\p'||$. The action $\p'' \in \argmax_{\p' \in \text{supp}(\mu)} ||\p'||$ produces a profit of at most $N - \left(\max_{\p \in \text{supp}(\mu)} ||p||\right)^{\beta}$. Thus, $\left(\max_{\p \in \text{supp}(\mu)} ||p||\right)^{\beta} \le N$, so $\left(\max_{\p \in \text{supp}(\mu)} ||p||\right) \le N^{1/\beta}$.

Plugging this into \eqref{eq:deviationprofit}, we see that there exist actions that produces profit arbitrarily close to 
\[\left(\frac{1}{N}\right)^{P-1} -N Q^{\beta}.\]
Thus, a strictly positive profit will be obtained if:
\[Q < \left(\frac{1}{N}\right)^{P/\beta},\]
as desired.
\end{proof}

We prove Proposition \ref{prop:zeroutilitysinglegenre}, restated below. 
\zeroutilitysinglegenre*
\begin{proof}
Since $\mu$ is an equilibrium, all choices $\p$ in the support of $\mu$ achieve profit equal to the equilibrium profit. We apply Lemma \ref{lemma:cdf} to see that the cdf of $\mu$ is $F(p) = \min\left(1,\left(\frac{p^{\beta}}{N}\right)^{1/(P-1)}\right)$, which shows that $\p = 0$ is in $\text{supp}(\mu)$. For this choice of $\p$, the cost is $0$, but the producer also never wins any users, so the profit is also zero, as claimed. 
\end{proof}

\end{document}